\let\abbrev\nomenclature
\renewcommand{\nomname}{Abbreviations}
\newcommand{\Listofabbrev}{\printnomenclature}
\newcommand{\argmax}{\operatornamewithlimits{argmax}} 
\newcommand{\argmin}{\operatornamewithlimits{argmin}} 
\newcommand{\mini}{\operatornamewithlimits{min}}
\newcommand{\rank}{\text{rank}}
\newtheorem{theorem}{Theorem}[section]	
\newenvironment{definition}[1][Definition]{\begin{trivlist}
\item[\hskip \labelsep {\bfseries #1}]}{\end{trivlist}}
\newenvironment{proof}[1][Proof]{\begin{trivlist}
\item[\hskip \labelsep {\bfseries #1}]}{\end{trivlist}}
\providecommand{\shadeRow}{\rowcolor[rgb]{0.9, 0.9, 0.9}}
\DeclareMathOperator{\tr}{tr} 
\DeclareMathOperator{\diag}{diag}
\newcommand{\hr}{\sp\dagger} 
\newcommand{\eq}{\!=\!}
\newcommand\ST{\rule{0pt}{2.5em}} 
\newcommand{\V}[1]{\boldsymbol{#1}} 
\newcommand{\definedas}{\overset{\underset{\Delta}{}}{=}}
\newcommand{\req}{\overset{\underset{!}{}}{=}}
\newcommand{\E}{\text{E}}
\newcommand{\tn}{\tabularnewline} 
\pretocmd{\tableofcontents}{%
  \if@openright\cleardoublepage\else\clearpage\fi
  \pdfbookmark[0]{\contentsname}{toc}%
}{}{}%
\begin{document}


\premake{all}{ %
  \addtolength{\oddsidemargin}{5mm}
  \addtolength{\evensidemargin}{-5mm}}
\make[blank]{titlepage}
\make[blank]{description}
\make[blank]{authentication}
\chapter*{Acknowledgement}
First and above all, I thank God, the almighty for all His given blessings in my life and for granting me the capability to proceed successfully in my studies. I am heartily thankful to all those who supported me in any respect toward the completion of my Master thesis. I owe my deepest gratitude to my supervisor Alexander Linduska who supported me with his great guidance and rich knowledge while allowing me the room to work in my own way. His well-explained illustrations enabled me to develop a solid understanding of the subject. I sincerely appreciate his proficient review on the thesis and his valuable comments. Without his kind help in both the practical work as well as the documentation, this thesis would not have been possible. \\

It is an honor for me to acknowledge the German academic exchange service DAAD which offered me a scholarship  during the whole period of my master studies in the international master program "Communications Technology". I would also like to thank Ulm University and in particular the institute of Information Technology for providing all necessary facilities required by this project.\\

Last, but by no means least, I would like to express my deep thanks to my lovely parents Magda and Moustafa and my dear sister Mai for their personal support, continuous prayers and great patience at all times. Thanks to them for always being my main source of motivation for success. It is through the encouragement of my family that I travelled to Germany for my post graduate studies. Such a trip, which enriched my experience on the academic, cultural and personal level. 
%

\pagenumbering{roman} 
\tableofcontents

\chapter*{Notations}
\markboth{Notations}{Notations}
\addcontentsline{toc}{chapter}{Notations}
\label{chap:notations}
\renewcommand{\arraystretch}{1} 
\begin{table}[htp]
\begin{center}  
\begin{tabular}{p{5.2cm}p{10cm}}  
$\thicksim$ & is distributed as \\
$j=\sqrt{-1}$ & imaginary unit\\
$\Re\{x\}$ or $x^R$ & the real part of $x$\\
$\Im\{x\}$ or $x^I$ & the imaginary part of $x$\\
$|x|=\sqrt{(x^R)^2+(x^I)^2}$ & the norm of the complex scalar $x$\\
$x^*$ & the complex conjugate of $x$\\
$\lceil x \rceil$ & the smallest integer not less than x\\
$d_{Eu}(x,y)=(x-y)^*(x-y)$ &is the Euclidean distance between $x$ and $y$\\
$\V{A}$ & uppercase bold letters denote matrices\\
$\V{x}$ &lowercase bold letters denote vectors \\
$\V{A}_{M\!\times\! N}=\{a_{mn}\}$ & matrix $\V{A}$ with dimensions $M\!\times\! N$ and entries $a_{mn}$ on the $m$th row and the $n$th column\\
$\V{I}_{M}$ &an $M\!\times\! M$ identity matrix\\
$\V{0}_{M\!\times\!N}$ & all zero matrix of dimension $M\!\times\! N$\\
$\V{1}_{M\!\times\!N}$ & all ones matrix of dimension $M\!\times\! N$\\
$\det(\V{A})$ & the determinant of a square matrix $\V{A}$\\
$\rank(\V{A})$ & the rank of matrix $\V{A}$\\
tr$(\V{A}_{N\!\times\! N})=\sum \limits_{i=1}^{N}{a_{ii}}$ &the trace of a square matrix $\V{A}$ \\
$\V{A}^T$ & transpose of matrix $\V{A}$\\
$\V{A}\hr$ &complex conjugate transpose (Hermitian) of matrix $\V{A}$ \\
$\sigma_{m}(\V{A})$ &the $m^{\text{th}}$ singular value of matrix $\V{A}$\\
$\lambda_{m}(\V{A})$ &the $m^{\text{th}}$ eigenvalue of matrix $\V{A}$\\
$\|\V{x}\|^2=\V{x}\hr\V{x}=\sum\limits_{i}{|x_i|^2}$ & the square of the norm of vector $\V{x}$\\
$|\V{A}|^2$ & a square matrix defined by $\V{A}\hr\V{A}$ (not to be confused with the scalar $\|\V{A}\|_F^2$ defined next)\\
$ \left\|\V{A}\right\|_{F}^{2}\!=$tr$(\V{A}\hr \V{A})\!=$tr$(\V{A}\V{A}\hr)$&the squared Frobenius norm of a matrix $\V{A}$ \\
$\mathcal{CN}(0,1)$ & the complex normal distribution with zero mean and unit variance, where the real and imaginary components are independent and each has a variance $1/2$\\
$\mathbb{C}^{M\!\times\! N}$ &the set of complex matrices of dimension $M\!\times\!N$\\
$\mathbb{R}^{M\!\times\! N}$ &the set of real matrices of dimension $M\!\times\!N$\\
$\E[x]$ &statistical average of the random variable $x$
\end{tabular}
\end{center}
\end{table}
\newpage
\begin{table}[htp]
\begin{center}  
\begin{tabular}{p{5.2cm}p{10cm}}  
$\V{\mu_{x}}=\E[\V{x}]$ &the mean vector of the random vector $\V{x}$\\
$\V{\Lambda_{x}}=\E[(\V{x}-\V{\mu_{x}})(\V{x}-\V{\mu_{x}})\hr]$ & the covariance matrix of the random vector $\V{x}$\\
$\mathcal{CN}(\V{\mu_{x}}, \V{\Lambda_{x}})$ &circularly symmetric complex Gaussian distributed random vector $\V{x}$, with a mean vector $\V{\mu_{x}}$ and a covariance matrix $\V{\Lambda_{x}}$ \\
p$(A|B)$& denotes the probability that event A occurs given that event B happened.\\
$\text{p}_{X}(x)$ & denotes the probability that random variable $X$ takes value $x$.\\
\end{tabular}
\end{center}
\end{table}


\cleardoublepage
\phantomsection
\addcontentsline{toc}{chapter}{\nomname}
\Listofabbrev
\markboth{\nomname}{\nomname}
\clearpage
\pagenumbering{arabic}

\chapter{Introduction}
 \label{chap:Introduction}


Since wireless communications was born, several challenges face its progress including the limited available radio spectrum as well as the complicated nature
of the wireless environment being time-variant and continuously fading. Thus, successfully receiving the transmitted information is a non-trivial task. One way to restore the information at the receiver is to estimate the channel and then use it for subsequent detection. This is possible by transmitting informationless symbols that are known to the receiver. However, if the channel changes fast relative to the symbol rate, the channel estimation block will fail to track these changes which in turn spoils the information detection. Moreover, in multiple antenna systems, the channel between every transmit and receive antenna pair is to be estimated which increases the receiver complexity and the transmission overhead. All the addressed impracticalities necessitate the need of transmission schemes that do not require channel knowledge at the receiver, the so-called non-coherent schemes. Differential modulation is one such scheme.\\

This thesis starts by addressing differential modulation schemes in single carrier single antenna systems. Differential Phase Shift Keying (DPSK) is a well-known scheme that encodes the information on the phase difference between two successive transmit symbols. However, the constellation symbols of DPSK do not best utilize the complex space, leading to performance degradation as more bits are to be transmitted per symbol. At the same time, the limited available radio spectrum demands the need of more spectrally efficient schemes. To this end, we investigate the use of amplitude modulation in the differential domain through a scheme known as Differential Amplitude Phase Shift Keying (DAPSK). For further performance enhancement of differential systems, multiple symbols can be jointly decoded through a technique known as Multiple-Symbol Differential Detection (MSDD). This technique bridges the performance gap between coherent and non-coherent systems and removes the error floor associated with fast-fading channels.\\

The use of multiple antennas at the transmitter and/or the receiver is a promising solution to combat fading and improve the reliability of transmission. Such antennas create several links between the transmitter and the receiver reducing the probability of a simultaneous fade to all links. The resulting performance improvement (known as diversity) does not require additional bandwidth nor power, the two most precious resources in wireless communications. Achieving diversity using multiple antennas at the receiver (receive diversity) is proved to be a much simpler task than using multiple antennas at the transmitter (transmit diversity). However in downlink mobile communications, mounting multiple antennas on the receiving mobile handsets results in an increase in the size and cost of mobiles. This opposes the ongoing desire in making mobiles as low profile and cheap as possible. Furthermore, since there are thousands more mobile stations than base stations, it is economically better to place the complexity in the transmitting base stations and make the mobile terminals as simple as possible. This motivates the need of transmit diversity.\\

Achieving transmit diversity through multiple transmit antennas is known as Space-Time Coding. Such a technique attracted the research interest in the last decade. It is the goal of this thesis to investigate STC schemes in differential non-coherent systems. We start by investigating a class of ST codes known as Orthogonal Space-Time Block Codes (OSTBC). This class achieves good performance with linear decoding complexity. As an attempt to increase the data rate and improve the transmission quality, the orthogonality requirement of OSTBCs is relaxed leading to the so-called Quasi-Orthogonal STBCs (QOSTBC). For most QOSTBCs proposed in the literature, the rate and performance advantage comes at the expense of increased decoding complexity. Recently a new class of QOSTBCs known as Minimum Decoding Complexity QOSTBCs (MDC-QOSTBC) has proved to achieve such advantage with the same complexity requirement of OSTBCs. These codes have been used in coherent systems that assume perfect channel knowledge at the receiver. In this thesis, we propose the use of MDC-QOSTBC in differential non-coherent systems. The performance of the proposed scheme proved to be superior to differential OSTBC schemes especially for high spectral efficiencies.\\

The thesis is organized as follows. Chapter \ref{chap:DPSK_DAPSK} introduces the DAPSK scheme and derives the MSDD receiver metrics in differential single antenna systems. Then Chapter \ref{chap:Diversity_MIMO} starts the investigations on multiple antenna systems by explaining how receive diversity is achieved in coherent and non-coherent systems. It then addresses the difficulty of achieving transmit diversity. Chapter \ref{chap:STC} follows with the introduction of space-time codes with a thorough literature review. A detailed proof on the design criteria of ST codes in the non-coherent domain is provided together with a derivation of the Maximum Likelihood (ML) metric. In Chapter \ref{chap:DSTC_schemes}, several differential OSTBCs are investigated and their error performance is analyzed. Afterwards the theory of MDC-QOSTBCs is studied in Chapter \ref{chap:QOSTBCs}. Their use in the differential domain is evaluated and compared to the previously addressed schemes in terms of complexity and performance. The last chapter concludes the thesis and suggests open problems for future research.\\

Note that due to the different concepts addressed in single antenna and multiple antenna systems, in few cases the same symbols are used in both systems with different meanings. For the used notations, a common list is provided on page \pageref{chap:notations} and is valid for the entire thesis. For the purpose of aiding the understanding of the thesis content, some fundamentals of linear Algebra are summarized in Appendix \ref{App:math_formulas}. 
%
 
\chapter{Differential Modulation Techniques in Single Antenna Systems}
\label{chap:DPSK_DAPSK}

On addressing the problem of data detection at the receiver in case of wireless transmission over an unknown channel, one solution is to estimate the channel and use the estimate for subsequent detection. Channel estimation is a technique that sends training symbols that are known to the receiver, and uses the corresponding received symbols to estimate the channel coefficients. Such a technique assumes the channel is approximately constant over a relatively long time period. In this case the receiver will have full channel knowledge, and the detection scheme is denoted as \textit{coherent} detection. Having full Channel State Information (CSI) at the receiver is however not reasonable when the channel changes fast relative to the symbol rate. In this case, the validity of the estimates spans only over a few symbol periods, thus forcing the system to repeat the estimation more often. Moreover, channel estimation adds to the complexity of the receiver and the transmission overhead. Here comes the advantage of using non-coherent systems like differential modulation schemes.\\

This chapter addresses two types of differential modulation techniques in single antenna systems, namely Differential Phase Shift Keying (DPSK) and Differential Amplitude Phase Shift Keying (DAPSK). DPSK is a technique that modulates the information over the phase difference between two successive transmit symbols. Each transmit symbol is the product of the previous transmit symbol and the new information symbol. In DAPSK, part of the symbol bits is carried on the phase difference between two successive transmit symbols and the other part is carried on the amplitude ratio between both symbols. If the channel is considered constant over at least two successive symbols, then both symbols will suffer the same amplitude and/or phase distortion, hence the constellation translation and/or rotation caused by the channel can be removed by considering both symbols in the detection. This is the conventional non-coherent detection of DAPSK/DPSK.\\

One disadvantage of conventional non-coherent detection is that it results in a performance degradation of about $\unit[3]{dB}$ compared to coherent detection. One way to partially compensate for such loss is to increase the receiver memory, where instead of deciding on a symbol by symbol basis, we decide on a symbol sequence basis \cite{Divsalar94}. Such a technique is known as Multiple-Symbol Differential Detection (MSDD). This chapter starts by introducing the DAPSK modulation scheme in Section \ref{s:DAPSK}. Then Section \ref{s:MSDD} derives the MSDD ML decision metric  and another sub-optimum less complex metric for different channel conditions. The last section shows the error performance of the MSDD receiver using both DPSK and DAPSK modulation techniques. 	 
\section{DAPSK}
\label{s:DAPSK}
In DPSK, the transmit signal $s_{t,p}$ at time slot $t$ is constructed as
\begin{equation}
\label{eq:DPSK}
 s_{t,p}=v_{z_{t,p}}s_{t-1,p}\,
\end{equation}
where $z_{t,p}$ is the information integer generated by the  transmitter at time slot $t$. The subscript $p$ is used to signify that the equation describes phase modulation. $v_{z_{t,p}}$ is the corresponding information symbol drawn from a DPSK alphabet (same as PSK alphabet) defined as
\begin{equation}
\label{eq:PSK_alphabet}
\mathcal{A}_\text{DPSK}=\{v_d=e^{\frac{j2\pi d}{q_p}}\, ,\:\: d\in\{0,..,q_p-1\}\},
\end{equation}
where $q_p$ is the DPSK alphabet size (order) and $j$ is the imaginary unit defined as $\sqrt{-1}$. From (\ref{eq:DPSK}), the transmit signal at any time slot $t$ can be equivalently written as
\begin{equation}
 \label{eq:DPSK_2}
s_{t,p}=\Big(\prod\limits_{\kappa=1}^{t}{v_{z_\kappa,p}}\Big)s_{0,p},\:\:t=1,2,...,
\end{equation}
where $s_{0,p}$ is the very first transmit signal used to initiate the differential transmission. In fact this is the only informationless signal in the whole transmission and is chosen to be any arbitrary symbol in $\mathcal{A}_\text{DPSK}$ (usually $s_{0,p}\eq1$ is used). \\

Using only the phase of the signal for information transmission will cause  performance degradation as more bits are to be transmitted per second. Carrying some part of the information on the amplitude will result in a better distribution of the constellation points in the complex space, potentially leading to a better error performance. In the context of differential modulation, one technique that differentially carries information on both the amplitude and the phase is the DAPSK technique. \\

There are several constructions of DAPSK constellations. Here we consider a DAPSK constellation which can modulate the phase bits independently from the amplitude bits. In other words, an independent DAPSK constellation satisfies that for every amplitude level, one sees all possible phase levels and for every phase level, one sees all possible amplitude levels. This potentially may make it possible to perform independent demodulation for the information embedded in the amplitude and the phase simplifying the demodulation process. \\

For the purpose of differential encoding of the amplitude, one may define the DASK alphabet in a way that imitates the DPSK alphabet by defining the amplitude base $a$ to be analogous to the phasor base $e^{\frac{j2\pi}{q_p}}$. In DPSK the phasor base is the incremental step (or ratio) between the $q_p$ phase levels. In DASK, $a$ is also defined as the amplitude ratio between the amplitude levels. That is why it is commonly known as the ring ratio. In DPSK, the $d^{\text{th}}$ constellation point is the base phasor raised to power $d$. Similarly, in DASK the $d^{\text{th}}$ amplitude level is the amplitude base $a$ raised to power $d$. Thus, the DASK alphabet is defined as 
\begin{equation}
 \label{eq:DASK_alphabet}
\mathcal{A}_\text{DASK}=\{v_d=a^d\, ,\:\: d\in\{0,..,q_a-1\}\},
\end{equation}
where $q_a$ is the DASK alphabet size. In the independent constellation considered, the final DAPSK alphabet is having the form of concentric DPSK circles (all of the same order) with radii set by the DASK amplitude levels. The DAPSK alphabet size is then $q_{a,p}\eq q_aq_p$. Figure \ref{fig:64_DAPSK} shows the DAPSK constellation for a 64-DAPSK alphabet constructed as 16-DPSK and 4-DASK. \\
\begin{figure}[htp]
\centering
\scalebox{0.7}{\input{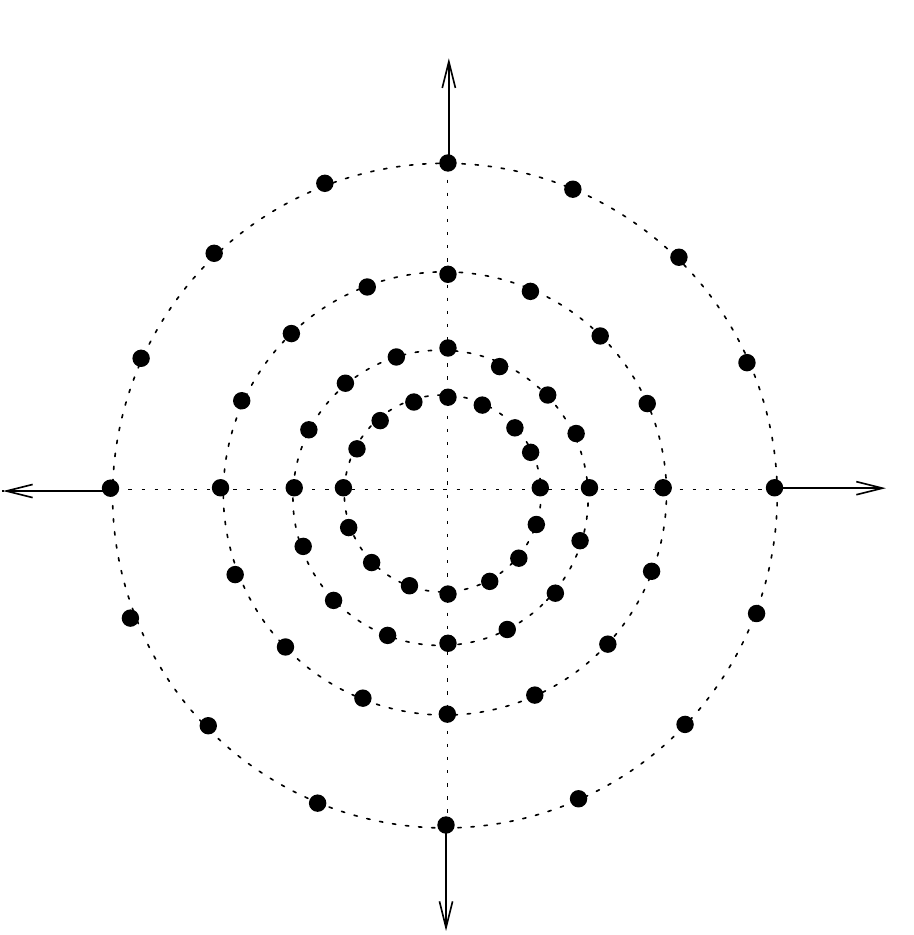_t}}
\caption{Constellation of the 64-DAPSK alphabet}
 \label{fig:64_DAPSK}
\end{figure}

The differential encoding of DAPSK is described in Figure \ref{fig:DAPSK_mod}. The user bit stream is divided into blocks of $n$ bits. Each block is divided into phase bits and amplitude bits. The phase bits select the phase information symbol using the DPSK alphabet defined in (\ref{eq:PSK_alphabet}). Then the chosen information symbol is differentially encoded using (\ref{eq:DPSK}) resulting in a DPSK signal. Due to the inherent \emph{group} nature of the DPSK alphabet, the product of any two elements in $\mathcal{A}_\text{DPSK}$ is also an element in $\mathcal{A}_\text{DPSK}$. In other words, the product of any two points in the DPSK constellation circle falls to a point on the same circle. This is because the product of two phasors is a phasor whose angle is the modulo addition of the angles of the two phasors with respect to $2\pi$. Therefore, the transmit DPSK signal $s_{t,p}$ is also element in $\mathcal{A}_{\text{DPSK}}$ $\forall \,t$.\\

\begin{figure}[htp]
\centering
\input{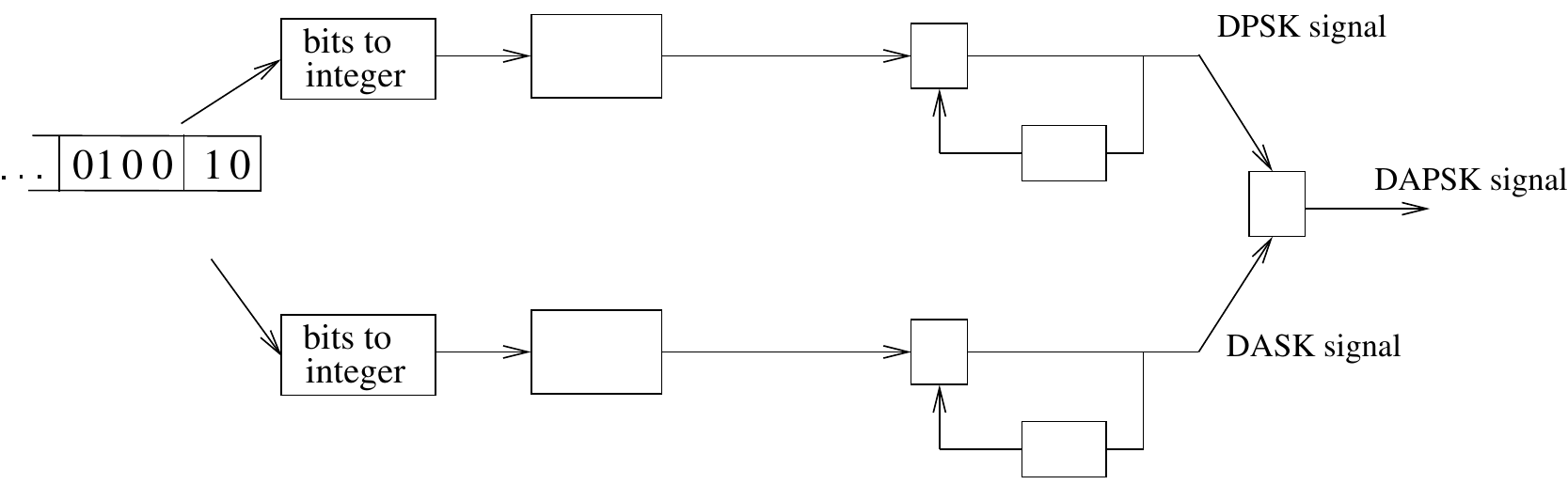_t}
\caption{A description of DAPSK modulation using an independent constellation}
 \label{fig:DAPSK_mod}
\end{figure}

Unlike the DPSK alphabet, the DASK alphabet does not inherently possess a group nature. This means if differential encoding of the amplitude just performs a multiplication operation of the information amplitude with the last transmitted amplitude, the transmit amplitude may increase or decrease indefinitely. To avoid this, we force the DASK alphabet to form a group under multiplication by defining the operator $\otimes$ as
\begin{equation}
 a^{d_1}\otimes a^{d_2}=a^{(d_1+d_2)\,\text{mod}\,q_a}.
\end{equation}
The modulo operation on the power of $a$ ensures that when  the operator $\otimes$ applies on members of $\mathcal{A}_{\text{DASK}}$, the outcome is also a member in the same alphabet. The power of $a$ belongs to a finite cyclic group $G=\{0,...,q_a-1\}$ of order $q_a$ and generator '1'. The group is defined by the operation $d_1\oplus d_2=(d_1+d_2)\text{mod}\,q_a$, where $d_1$ and $d_2$ $\in G$. Through group isomorphism, the group $G$ can be mapped to another group $a^G$ which forms the DASK alphabet. In this case, the differential encoding of the amplitude signal is done  similar to the phase signal as
\begin{equation}
\label{eq:DASK}
 s_{t,a}=v_{z_{t,a}}\otimes s_{t-1,a},
\end{equation}
where the subscript $a$ is used to represent the amplitude modulation. $z_{t,a}$ is the information integer of the amplitude bits, $v_{z_{t,a}}$ is drawn from the DASK alphabet in (\ref{eq:DASK_alphabet}) and $s_{t,a}$ is the transmitted amplitude at time slot $t$, and is also $\in$ $\mathcal{A}_{\text{DASK}}$. Finally, the total DAPSK signal is just the product of the amplitude and phase signals, i.e.
\begin{equation}
\label{eq:DAPSK}
 s_t=s_{t,a}\,s_{t,p}.
\end{equation}

\subsection{Optimal ring ratio and bits distribution}
\label{ss:Ring_ratio_bits_distribution}
In order to optimize the DAPSK constellation, we address the following two questions: 1) what is the optimum ring ratio $a$? 2) what is the optimum assignment of the $n$ bits to the amplitude and phase parts? Based on investigations, we provide answers to the addressed questions for several DAPSK modulation orders.\\

Consider the DAPSK alphabet as the one shown in Figure \ref{fig:64_DAPSK}. Increasing the ring ratio $a$ will set the PSK circles further apart. This at the beginning will have the effect of separating the symbols of the different circles, and therefore providing more robustness for the amplitude detection. However, --for a constant average energy of the constellation symbols-- the higher the ring ratio $a$ gets, the smaller will be the innermost circle. Hence the probability of mistaking the phase symbols on this circle increases and the phase error performance deteriorates. This indicates the existence of an optimal value of $a$ which results in the optimal \emph{overall} performance of the DAPSK modulation. \\

To see the effect of $a$ on the error performance, simulations have been made over an Additive White Gaussian Noise (AWGN) channel. To start with, consider a 16-DAPSK alphabet constructed from 8-DPSK and 2-DASK. The rate of phase and amplitude bit errors are both measured and are referred to as DPSK and DASK Bit Error Rate (BER). Figure \ref{fig:16_DAPSK_a_influence} shows the DPSK BER, the DASK BER and the total DAPSK BER at $E_b/N_0=\unit[18]{dB}$ \footnote{$E_b/N_0$ is the average energy per information bit to noise power spectral density ratio.} as a function of the ring ratio $a$. Clearly, the DPSK performance degrades as $a$ increases, whereas the DASK performance improves with increasing $a$ but then it deteriorates for large values of $a$. The overall DAPSK performance is dominated by the worse performance of both DPSK and DASK at any $a$. Namely, for small values of $a$, the DASK performance dominates and for large values of $a$, the DPSK performance dominates. In the considered modulation order, $a=2.1$ has shown to provide the best overall performance.\\
\begin{figure}
 \centering
\includegraphics[width=0.5\textwidth]{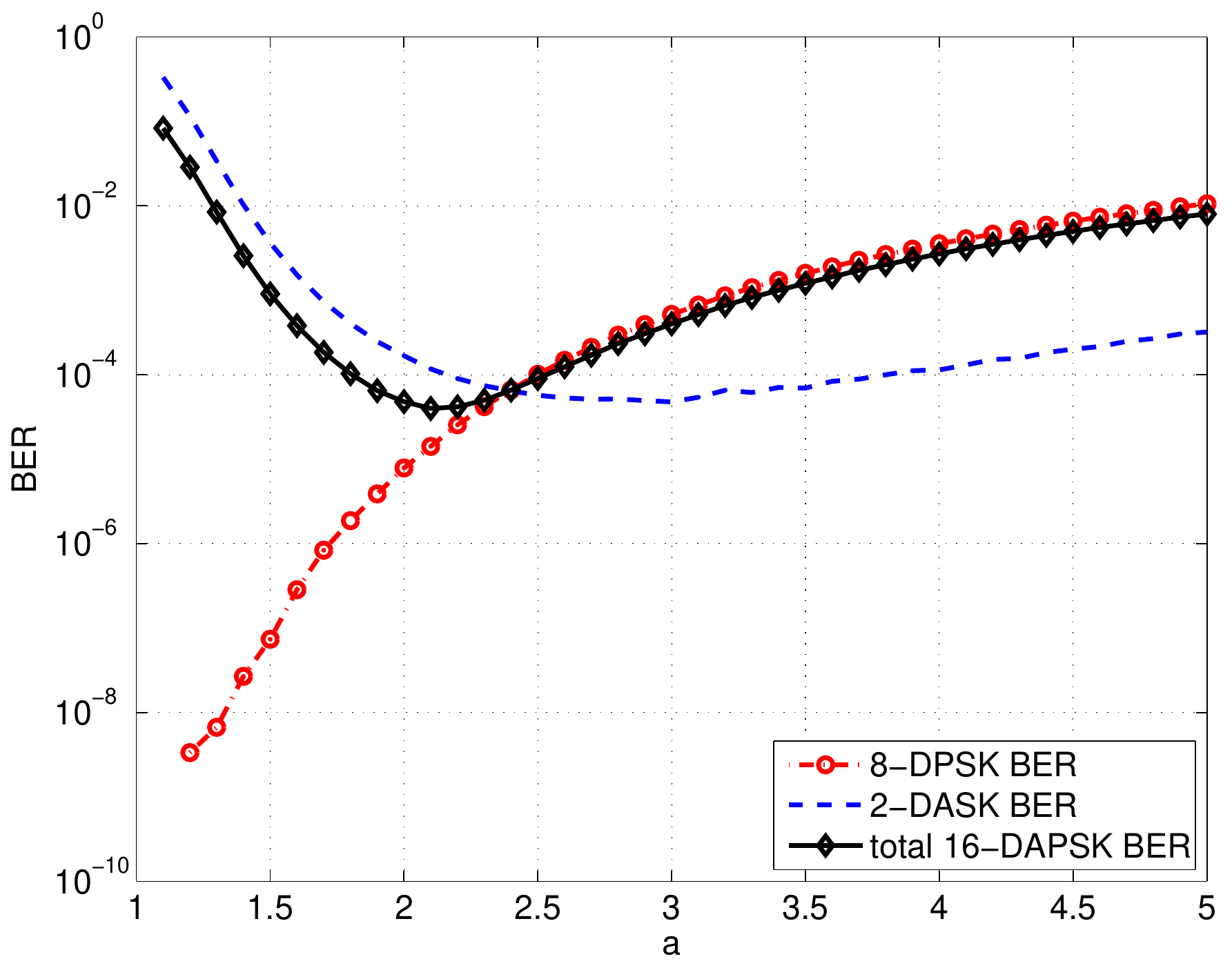}
\caption{Effect of ring ratio $a$ on the amplitude and phase errors}
\label{fig:16_DAPSK_a_influence}
\end{figure}

Another point of interest is how many bits to give to the phase alphabet and how many to the amplitude. As an initial guess, more bits are to be given to the phase than to the amplitude since the phase is more robust to channel influences. Simulations have been made for different DAPSK alphabet orders, and for each order all possible combinations of $q_a$ and $q_p$ are tested. One combination is given the notation $q_p$-DPSK/$q_a$-DASK. The simulations have been carried out at $E_b/N_0=\unit[18]{dB}$ over an AWGN channel and for different values of $a$ so as to search in parallel for the optimal $a$ of every modulation order. Figure \ref{fig:opt_bit_distribution_opt_a} shows the overall DAPSK BER performance results. As shown, for up to a modulation order of 32-DAPSK, a constellation with two amplitude levels has shown the best BER performance. However, for 64-DAPSK, a constellation with 4 amplitude levels and 16-DPSK provides better performance than a constellation with only 2 amplitude levels and 32-DPSK. Additionally the optimal ring ratio $a$ has shown to be dependent on the constellation order. Table \ref{tab:opt_DAPSK} summarizes the optimal DPSK/DASK combination as well as the optimal ring ratio for modulation orders of 8, 16, 32, and 64.
 \begin{table}[htp]
\begin{center}
\caption{Optimal ring ratio $a$ and optimal DPSK/DASK combination for different DAPSK modulation orders.}
\label{tab:opt_DAPSK}
\begin{tabular}{|m{2.5cm}|m{2.5cm}|m{2.5cm}|m{2.5cm}|m{2.5cm}|}
    \hline 
& 8-DAPSK & 16-DAPSK& 32-DAPSK & 64-DAPSK \tn \hline
Optimal DPSK/DASK combination & 4-DPSK/2-DASK & 8-DPSK/2-DASK & 16-DPSK/2-DASK & 16-DPSK/4-DASK \tn \hline
Optimal ring ratio $a$ & 2.7 & 2.1 & 1.5 & 1.4 \tn \hline
 \end{tabular}	
\end{center}
\end{table}

\begin{figure}[!htp]
 \centering
  \subfloat[8-DAPSK]{\label{subfig:opt_a_bit_distribution_8DAPSK}\includegraphics[width=0.49\textwidth]{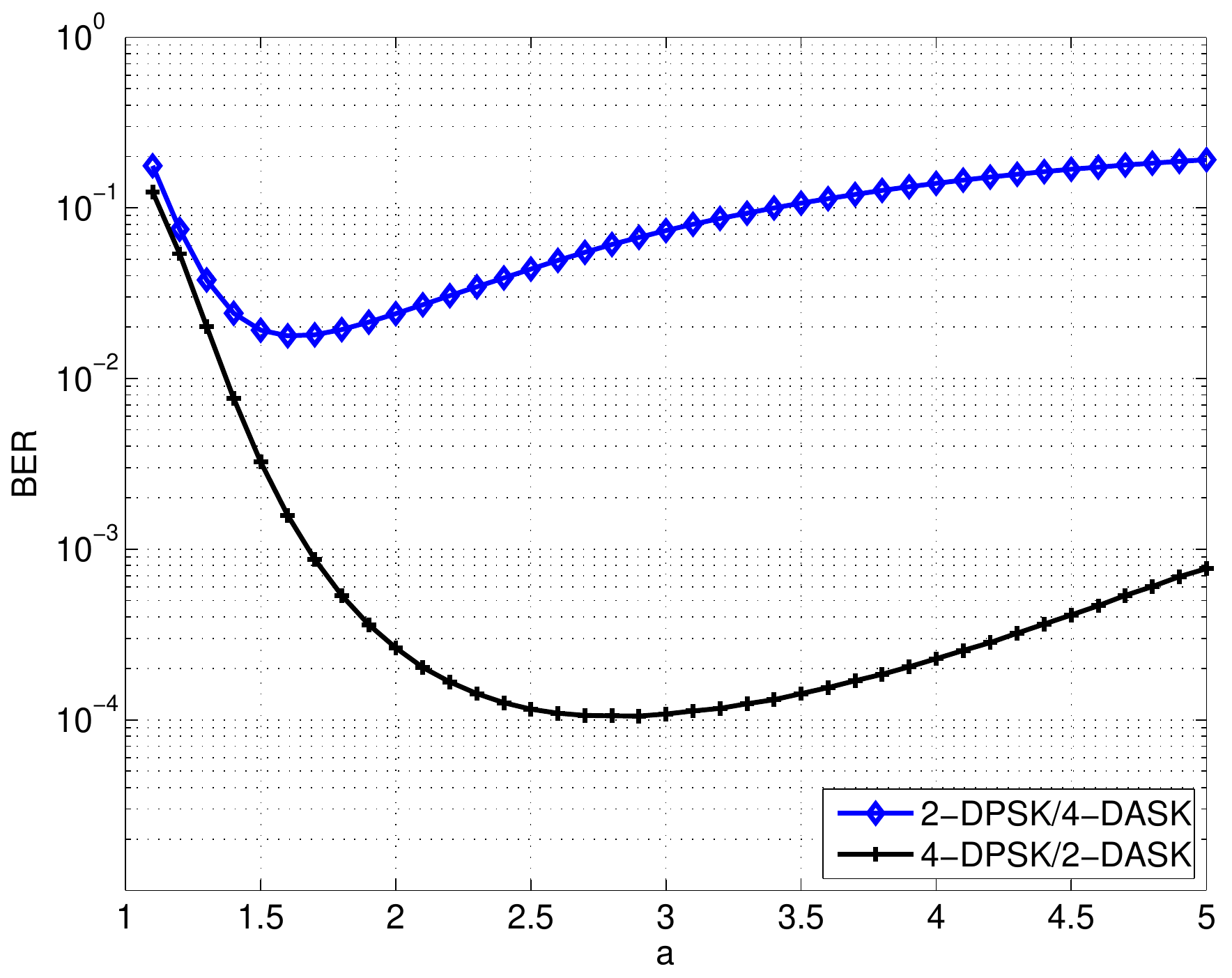}}
  \subfloat[16-DAPSK]{\label{subfig:opt_a_bit_distribution_16DAPSK}\includegraphics[width=0.49\textwidth]{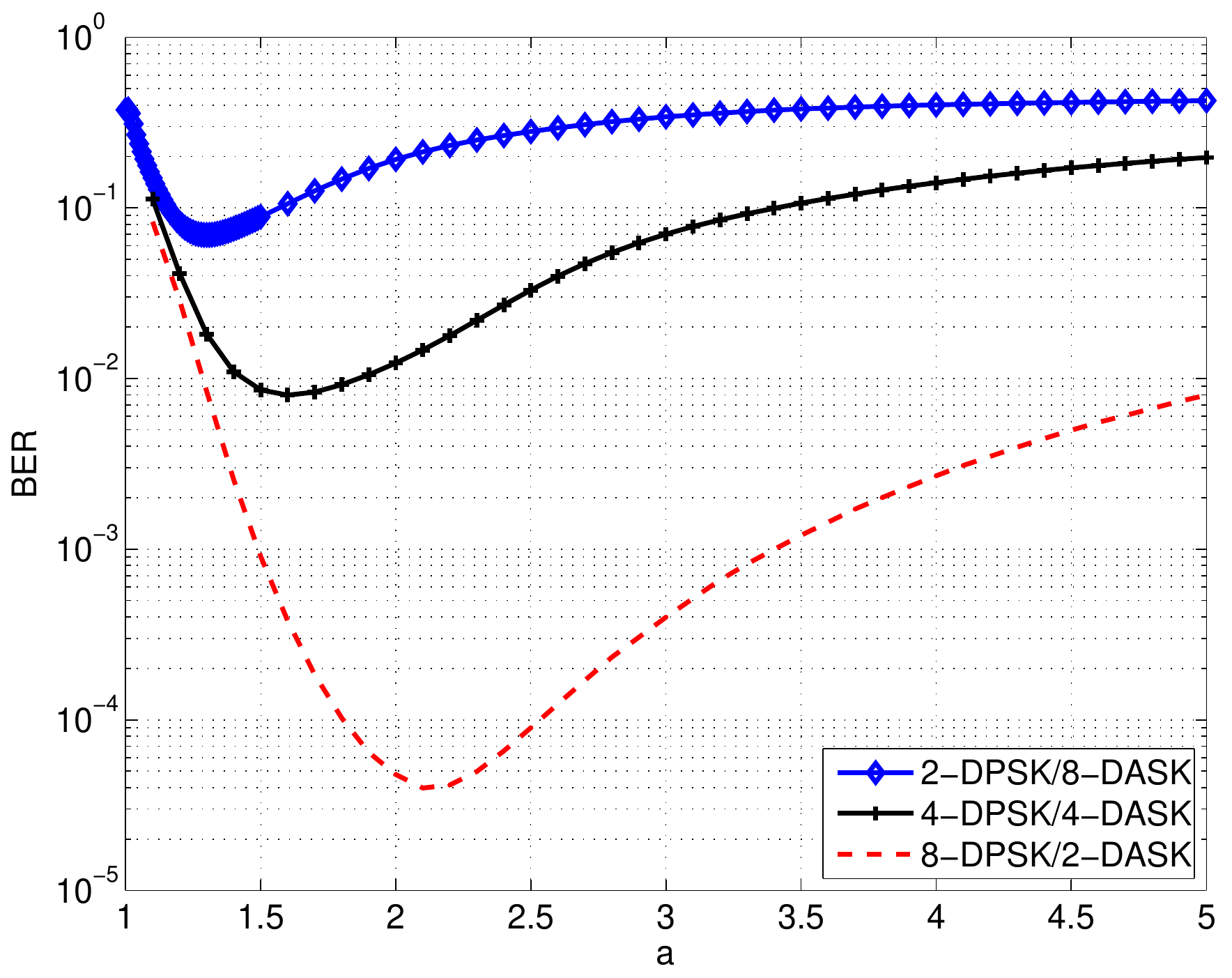}}\\
  \subfloat[32-DAPSK]{\label{subfig:opt_a_bit_distribution_32DAPSK}\includegraphics[width=0.49\textwidth]{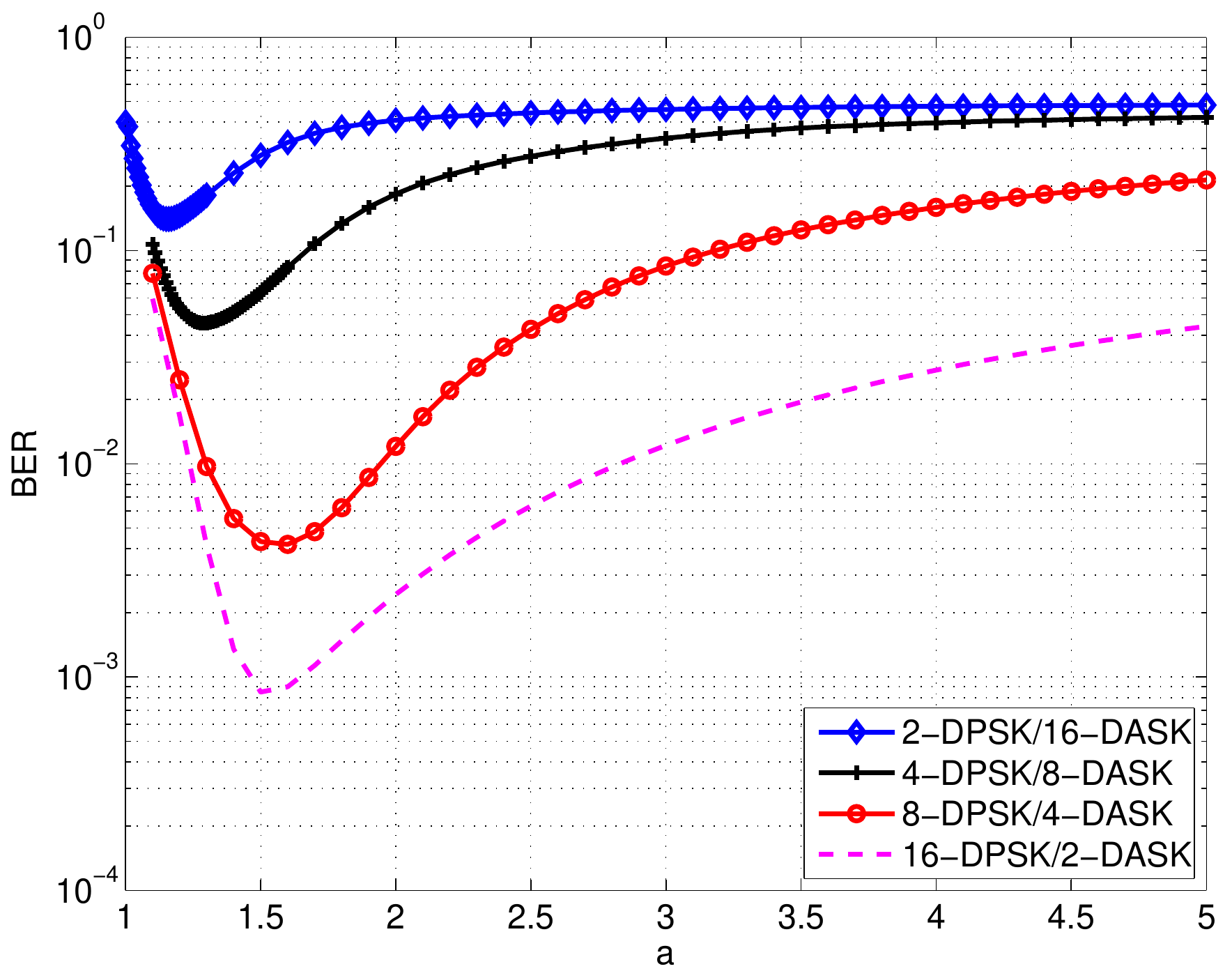}}
  \subfloat[64-DAPSK]{\label{subfig:opt_a_bit_distribution_64DAPSK}\includegraphics[width=0.49\textwidth]{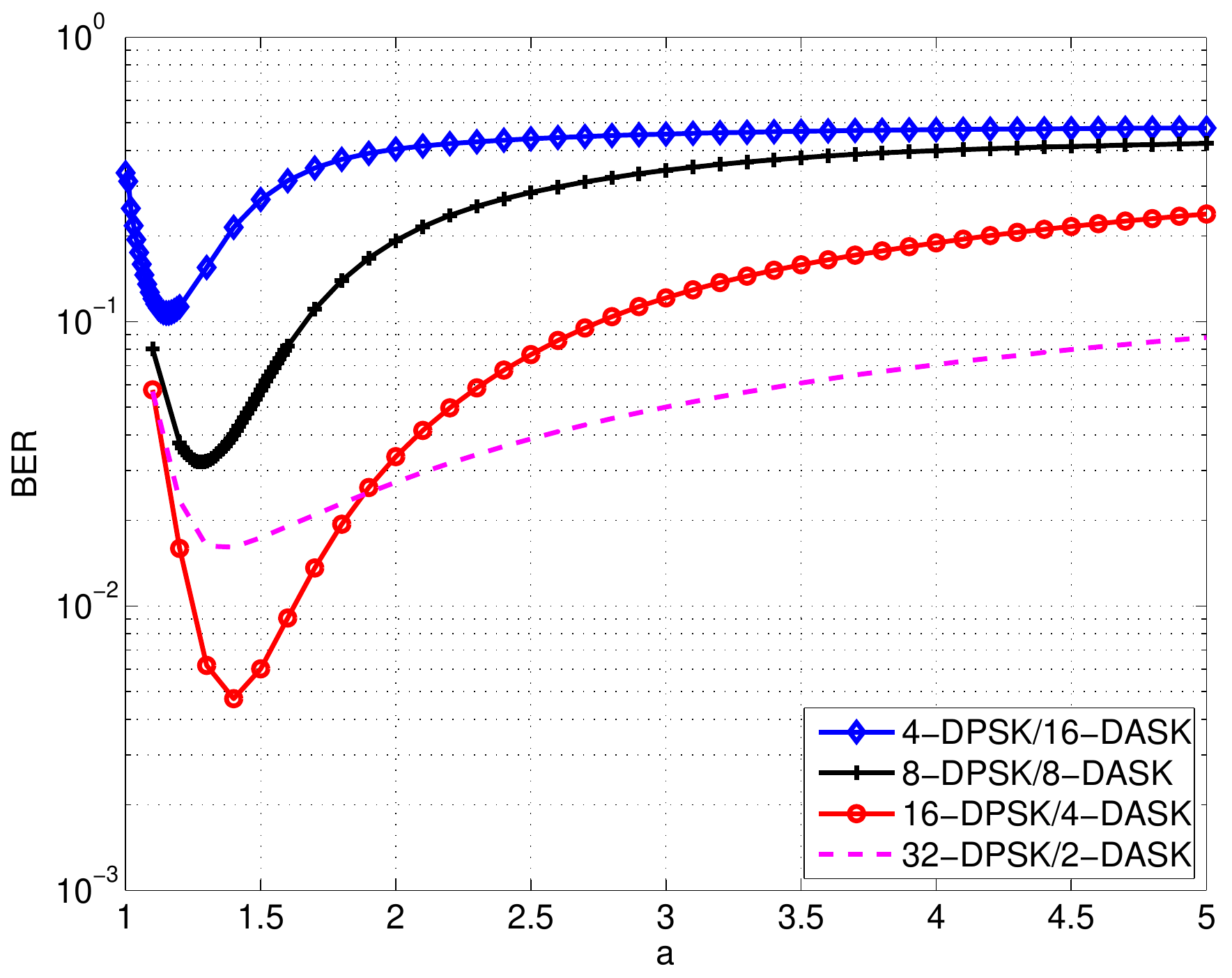}}
  \caption{BER curves at $E_b/N_0\eq\unit[18]{dB}$ showing the optimal ring ratio and optimal amplitude-phase bits distribution for different DAPSK alphabet orders}
  \label{fig:opt_bit_distribution_opt_a}
\end{figure}
\section{Multiple-Symbol Differential Detection}
\label{s:MSDD}
In general, differential modulation can be described in the form of (\ref{eq:DPSK_2}), where the transmit signal at time slot $t$ actually carries information of \emph{all} previous information symbols. Conventionally, only two successive received symbols are used to differentially decode one information symbol. In this case the modulation memory is not fully accounted for in the detection. This causes differential schemes to be inferior in performance to their coherent counterparts. To bridge the performance gap between coherent and differential non-coherent systems, more symbols are to be used for the detection resulting in the so-called Multiple-Symbol Differential Detection (MSDD).\\

In this technique, the receiver divides the entire received symbols into blocks of length $T$ with one overlapping symbol between every two successive blocks as shown in Figure \ref{fig:MSDD-explanation}. Such symbol overlap is necessary since the first symbol in each block is used as a reference for the whole block, and the last symbol is a reference for the next block and so on \cite{Fung92}. The decoder then works on every block and decides on $T-1$ information symbols. We will refer to $T$ as the observation window length. $T=2$ is the case of conventional differential detection.\\

\begin{figure}[htp]
\centering
\input{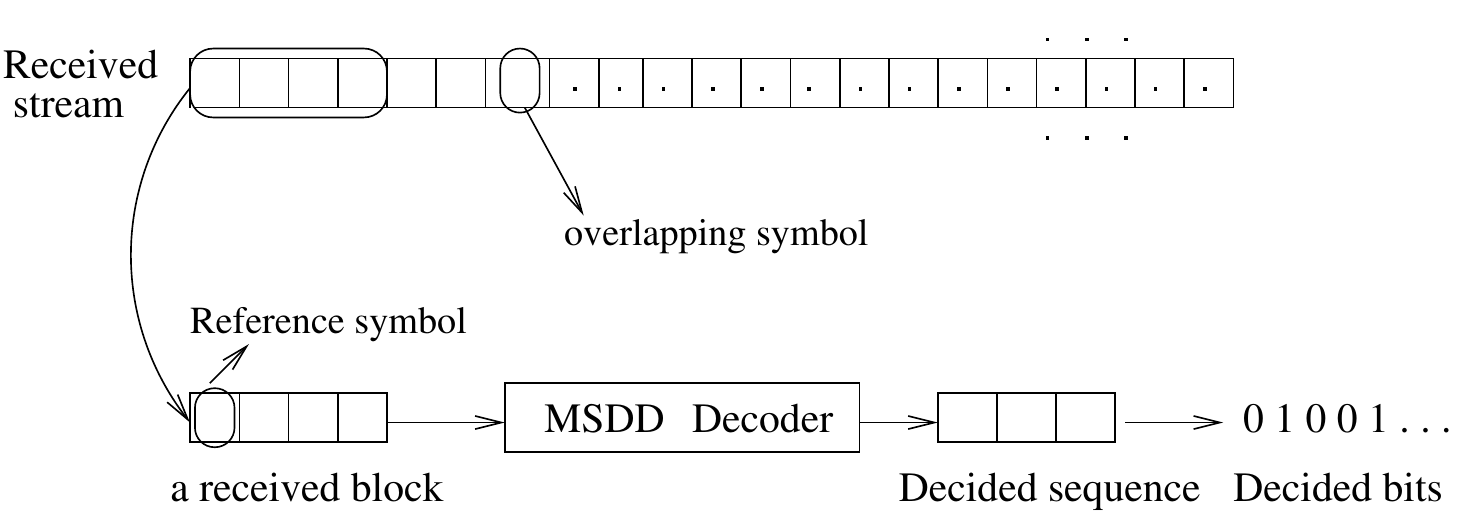_t}
\caption{Multiple-symbol differential detection at $T=4$}
 \label{fig:MSDD-explanation}
\end{figure}
In general the higher the value of $T$, the better the performance but also the higher the complexity. In the limiting case when $T$ tends to $\infty$, the error performance of an MSDD in an AWGN channel approaches that of a coherent detector \cite{Fung92}.\\

In this section a general framework is used to analytically derive receiver metrics for the MSDD technique. Based on the channel statistics and the differential technique used, different metrics are obtained. Three channel types are considered: AWGN, Rayleigh block fading and Rayleigh fast fading channel. In the Rayleigh block fading case, the channel is assumed to be constant over at least $T$ symbols, whereas Rayleigh fast fading channel is defined for the case when the channel varies within $T$ symbol transmissions with known statistical variation. Using the derived metrics, Matlab simulations are performed and error performance in the different cases are compared. Two types of receiver metrics are considered, the Maximum Likelihood (ML) metric  and the Generalized Likelihood Ratio Test (GLRT) metric. \\
\subsection{Derivation of ML metric for MSDD}
\label{ss:ML_metric_derivation}

Consider a wireless transmission medium where the data is modulated differentially and the baseband transmit complex signal $s_t$ is generated. The transmitted signal will be distorted by a multiplicative complex random variable $h_t$. The received signal $y_t$ is modeled as in Figure \ref{fig:Rayleigh_fading} by

\begin{equation}
\label{eq:Tx_Rx}
y_t=\sqrt{\rho}\,s_t\,h_t\,e^{j\Theta}+w_t,
\end{equation}
where $\rho$ is the mean signal power, and $h_t$ is the complex fading coefficient at time slot $t$ following some Gaussian distribution. The phase $\Theta$ is used to have a general channel model that includes the AWGN channel, in which case the channel coefficient $h_t$ is deterministic and the channel is modeled only by an unknown carrier phase $\Theta$ that is assumed to be uniformally distributed in the range $]-\pi,\pi]$. $w_t$ is a complex AWGN noise sample at time slot $t$ which is $\thicksim\mathcal{CN}(0,1)$, with the $w_t$ samples being uncorrelated in time.
\begin{figure}[htp]
\centering
\input{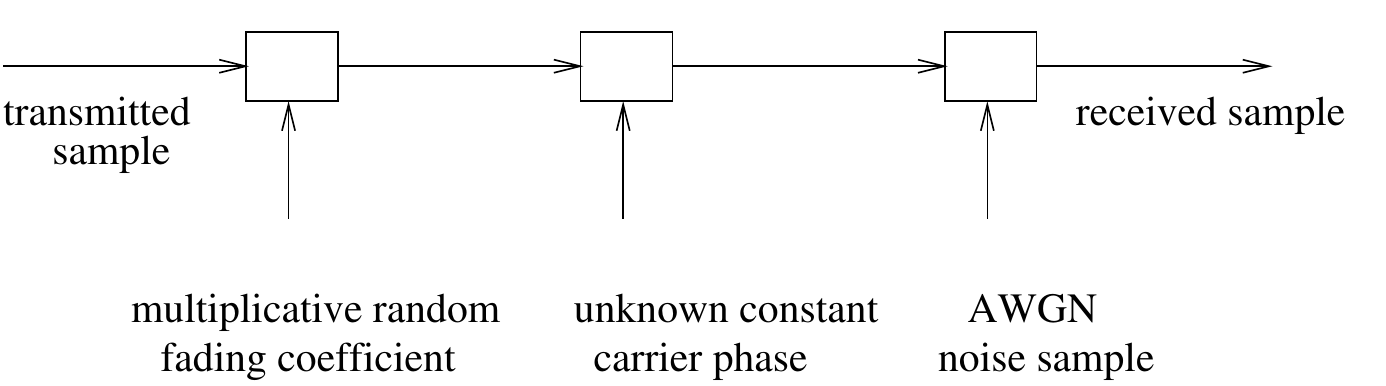_t}
\caption{A general model for AWGN and flat fading channels.}
\label{fig:Rayleigh_fading}
\end{figure}

In MSDD, the receiver buffers $T$ symbols in one observation block $\V{y}$ and works on the whole block in the decoding process. Hence (\ref{eq:Tx_Rx}) can be written in a matrix-vector form as 

\begin{equation}
\label{eq:Tx-Rx-expanded}
 \begin{bmatrix}
  y_0  \\
  \vdots \\
  \vdots \\
  y_{T-1}
 \end{bmatrix} = \sqrt{\rho} 
\begin{bmatrix}
s_0 & & & \\
& \ddots & & \text{\huge{0}} \\
\:\text{\huge{0}} & & \ddots & \\
& & & s_{T-1}
\end{bmatrix}
 \begin{bmatrix}
  h_0  \\
  \vdots \\
  \vdots \\
  h_{T-1}
 \end{bmatrix}\,e^{j\Theta} +
  \begin{bmatrix}
  w_0  \\
  \vdots \\
  \vdots \\
  w_{T-1}
 \end{bmatrix},
\end{equation}
and in a compact form as
\begin{equation}
\label{eq:Tx_Rx_vector}
\V{y}=\sqrt{\rho}\,\V{D_{s}}\,\V{h}\,e^{j\Theta}+\V{w},
\end{equation}
where $\V{D_{s}}$ is a diagonal matrix with the signal vector $\V{s}$ on the main diagonal. $\V{y}, \V{h}$ and $\V{w} \in \mathbb{C}^{T\!\times\! 1}$ and represent the received vector, the channel vector and the AWGN noise vector, respectively. $\V{h} \thicksim \mathcal{CN}(\V{\mu_{h}}, \V{\Lambda_{h}})$, where $\V{\mu_{h}} \in \mathbb{C}^{T\!\times\! 1}$ is the channel mean vector and $\V{\Lambda_{h}} \in \mathbb{R}^{T \!\times\! T}$ is the channel covariance matrix. At any time slot $t$, the channel power $\E[\left|h_t\right|^2]$ as well as the symbol power $\E[\left|s_t\right|^2]$ are normalized to 1. The noise vector $\V{w} \thicksim \mathcal{CN}(\V{0}_{T\!\times\! 1},\V{I}_{T})$. With all powers normalized to 1, $\rho$ represents the Signal to Noise Ratio (SNR) at the input of the receiver.\\

Our target is to derive the ML metric for the MSDD receiver, hence we need to get an expression for the conditional Probability Density Function (PDF) of the received vector $\V{y}$ under the condition that vector $\V{s}$ is sent and the carrier phase $\Theta$ is given, namely we need to derive $\text{p}(\V{y}|\V{s},\Theta)$. From (\ref{eq:Tx_Rx_vector}), since $\V{h}$ and $\V{w}$ are both complex Gaussian distributed, then $\V{y}$ follows the same distribution, and its conditional PDF is given by \cite{Steven_Kay93}

\begin{equation}
\label{eq:pr_s_theta}
\text{p}(\V{y}|\V{s},\Theta)= \frac{1}{\pi^{T} \det(\V{\Lambda}_{\V{y}|\V{s},\Theta})}\,e^{-(\V{y}-\V{\mu}_{\V{y}|\V{s},\Theta})\hr\,\V{\Lambda}_{\V{y}|\V{s},\Theta}^{-1}\,(\V{y}-\V{\mu}_{\V{y}|\V{s},\Theta})},
\end{equation}
where $\V{\mu}_{\V{y}|\V{s},\Theta}$ is the conditional mean vector of $\V{y}$ given $\V{s}$ and $\Theta$. Using (\ref{eq:Tx_Rx_vector}), $\V{\mu}_{\V{y}|\V{s},\Theta}$ becomes
\begin{equation}
\label{eq:mean_r}
\V{\mu}_{\V{y}|\V{s},\Theta} = \sqrt{\rho}\,\V{D_{s}}\,\V{\mu_{h}}\,e^{j\Theta},
 \end{equation}
and $\V{\Lambda}_{\V{y}|\V{s},\Theta}$ is the conditional covariance matrix of $\V{y}$ given $\V{s}$ and $\Theta$ and it can be calculated as
\begin{eqnarray}
\label{eq:cov_mat_r}
\V{\Lambda}_{\V{y}|\V{s},\Theta} &=& \E[(\V{y}-\V{\mu_{y}})(\V{y}-\V{\mu_{y}})\hr |\V{s},\Theta] \nonumber \\
& = & E\Big[\Big(\sqrt{\rho}\V{D_{s}}(\V{h}-\V{\mu_{h}})\,e^{j\Theta}+\V{w}\Big) \Big( \sqrt{\rho}\V{D_{s}}(\V{h}-\V{\mu_{h}})\,e^{j\Theta}+\V{w} \Big)\hr|\V{s},\Theta\Big] \nonumber \\
& = & E\Big[\Big(\sqrt{\rho}\V{D_{s}}(\V{h}-\V{\mu_{h}})\,e^{j\Theta}+\V{w}\Big) \Big( \sqrt{\rho}\,e^{-j\Theta}(\V{h}-\V{\mu_{h}})\hr\V{D_{s}}\hr+\V{w}\hr \Big)|\V{s},\Theta\Big] \nonumber \\
& = & \rho\,\V{D_{s}}\,\E[(\V{h}-\V{\mu_{h}})(\V{h}-\V{\mu_{h}})\hr]\,\V{D_{s}}\hr+ \E[\V{ww}\hr] \nonumber \\
&=& \rho\,\V{D_{s}}\,\V{\Lambda_{h}}\,\V{D_{s}}\hr+\V{I}_{T} = \V{\Lambda_{y|s}},
\end{eqnarray}
where we have used the fact that the channel $\V{h}$ and the noise $\V{w}$ are statistically independent. The conditional covariance matrix was found to be independent of $\Theta$. Now (\ref{eq:pr_s_theta}) can be written as 
\begin{equation*}
\text{p}(\V{y}|\V{s},\Theta)= \frac{1}{\pi^{T} \det(\V{\Lambda}_{\V{y}|\V{s}})}\,e^{-\V{y}\hr\,\V{\Lambda}_{\V{y|s}}^{-1}\,\V{y}\,+\,\V{y}\hr\,\V{\Lambda}_{\V{y|s}}^{-1}\,\V{\mu}_{\V{y|s},\Theta}\,+\,\V{\mu}_{\V{y|s},\Theta}\hr\,\V{\Lambda}_{\V{y|s}}^{-1}\,\V{y}\,-\,\V{\mu}_{\V{y|s},\Theta}\hr\,\V{\Lambda}_{\V{y|s}}^{-1}\,\V{\mu}_{\V{y|s},\Theta}}.
\end{equation*}
Using the identity $\V{y}\hr\,\V{\Lambda}_{\V{y|s}}^{-1}\,\V{\mu}_{\V{y|s},\Theta}\,+\,\V{\mu}_{\V{y|s},\Theta}\hr\,\V{\Lambda}_{\V{y|s}}^{-1}\,\V{y} = 2\,\Re\{\V{y}\hr\,\V{\Lambda}_{\V{y|s}}^{-1}\,\V{\mu}_{\V{y|s},\Theta}\}$ together with (\ref{eq:mean_r}) we get
\begin{equation*}
\label{eq:pr_s_derivation}
\text{p}(\V{y}|\V{s},\Theta) =\frac{1}{\pi^{T} \det(\V{\Lambda}_{\V{y}|\V{s}})}\,e^{-\V{y}\hr\,\V{\Lambda}_{\V{y|s}}^{-1}\,\V{y}\,-\rho\,\V{\mu_{h}}\hr\,\V{D_{s}}\hr\,\V{\Lambda}_{\V{y|s}}^{-1}\,\V{D_{s}}\,\V{\mu_{h}}\,+\,2\,\Re\{\sqrt{\rho}\,\V{y}\hr\,\V{\Lambda}_{\V{y|s}}^{-1}\,\V{D_{s}}\,\V{\mu_{h}}\,e^{j\Theta}\}}.
\end{equation*}
Using $\Re\{x\}=|x|\cos{\alpha}$, with $\alpha=\tan^{-1}{\frac{\Im\{x\}}{\Re\{x\}}}$, we can replace $\Re\{\,\sqrt{\rho}\,\V{y}\hr\,\V{\Lambda}_{\V{y|s}}^{-1}\,\V{D_{s}}\,\V{\mu_{h}}\,e^{j\Theta}\}$ by $|\,\sqrt{\rho}\,\V{y}\hr\,\V{\Lambda}_{\V{y|s}}^{-1}\,\V{D_{s}}\,\V{\mu_{h}}|\cos(\alpha +\Theta)$, where $\alpha=\tan^{-1}{\frac{\Im\{\V{y}\hr\,\V{\Lambda}_{\V{y|s}}^{-1}\,\V{D_{s}}\,\V{\mu_{h}}\}}{\Re\{\V{y}\hr\,\V{\Lambda}_{\V{y|s}}^{-1}\,\V{D_{s}}\,\V{\mu_{h}}\}}}$ resulting in
\begin{equation*}
\text{p}(\V{y}|\V{s},\Theta) =\frac{1}{\pi^{T} \det(\V{\Lambda}_{\V{y}|\V{s}})}\,e^{-\V{y}\hr\,\V{\Lambda}_{\V{y|s}}^{-1}\,\V{y}\,-\rho\,\V{\mu_{h}}\hr\,\V{D_{s}}\hr\,\V{\Lambda}_{\V{y|s}}^{-1}\,\V{D_{s}}\,\V{\mu_{h}}\,+\,2\,\sqrt{\rho}\,|\,\V{y}\hr\,\V{\Lambda}_{\V{y|s}}^{-1}\,\V{D_{s}}\,\V{\mu_{h}}|\cos(\alpha +\Theta)}.
\end{equation*}\\

Since the carrier phase $\Theta$ is assumed to be uniformally distributed in the range $]-\pi,\pi]$, then by averaging over $\Theta$ we get
\begin{eqnarray*}
\text{p}(\V{y|s})\! &=&\! \int_{\!-\infty}^{\infty}\!\text{p}(\V{y|s},\eta)\,\text{p}_{\Theta}(\eta)\,d\eta \\
\!&=&\! \frac{1}{\pi^{T} \det(\V{\Lambda}_{\V{y}|\V{s}})}\,e^{-\V{y}\hr\,\V{\Lambda}_{\V{y|s}}^{-1}\,\V{y}\,-\rho\,\V{\mu_{h}}\hr\,\V{D_{s}}\hr\,\V{\Lambda}_{\V{y|s}}^{-1}\,\V{D_{s}}\,\V{\mu_{h}}}.\,\frac{1}{2\pi}\,\int_{-\pi}^{\pi}\!e^{2\,\sqrt{\rho}\,|\,\V{y}\hr\,\V{\Lambda}_{\V{y|s}}^{-1}\,\V{D_{s}}\,\V{\mu_{h}}|\cos(\alpha +\eta)}\,d\eta.
\end{eqnarray*}
If we then use the definition of the zeroth order modified Bessel function of the first kind $I_0(z)=\frac{1}{\pi}\int_0^{\pi}\!e^{z\cos(\eta)}d\eta$, and the identity $\int_{-\pi}^{\pi}\!e^{z\cos(\alpha+\eta)}d\eta = \int_{-\pi}^{\pi}\!e^{z\cos(\eta)}d\eta$, we finally have

\begin{equation}
\label{eq:pr_s}
\boxed{\text{p}(\V{y|s})=\frac{1}{\pi^{T}\det(\V{\Lambda}_{\V{y}|\V{s}})}\,e^{-\V{y}\hr\,\V{\Lambda}_{\V{y|s}}^{-1}\,\V{y}\,-\rho\,\V{\mu_{h}}\hr\,\V{D_{s}}\hr\,\V{\Lambda}_{\V{y|s}}^{-1}\,\V{D_{s}}\,\V{\mu_{h}}}.I_0(2\,\sqrt{\rho}\,|\,\V{y}\hr\,\V{\Lambda}_{\V{y|s}}^{-1}\,\V{D_{s}}\,\V{\mu_{h}}|).}
\end{equation}

The above equation is the general form of the conditional probability density function of the MSDD detection which is aimed to be maximized in order to result in the most likely decision on the transmit symbol vector $\V{s}$. In the next three subsections, we will use (\ref{eq:pr_s}) and simplify it for the different channels considered and the different modulation techniques used.\\

\subsubsection{AWGN Channel}
In the AWGN case, the channel vector and its mean are simply the all ones vector $\V{h}=\V{1}_{T\!\times\! 1}$ and $\V{\mu_h}=\V{1}_{T\!\times\! 1}$. The covariance matrix $\V{\Lambda_{h}}$ is an all zero matrix $\V{0}_{T\!\times\! T}$ since $\V{h}$ is deterministic. From (\ref{eq:cov_mat_r}), $\V{\Lambda_{y|s}}=\V{I}_{T}$, and the conditional PDF in (\ref{eq:pr_s}) is simplified to
\begin{equation}
\label{eq:pr_s_AWGN}
\text{p}(\V{y|s})=\frac{1}{\pi^{T}}\,e^{-\V{y}\hr\,\V{y}\,-\,\rho\,\V{s}\hr\,\V{s}}\,I_0(2\,\sqrt{\rho}\,|\,\V{y}\hr\,\V{s}\,|),
\end{equation}
where $\V{D_s}\,\V{\mu_h}\!=\!\V{D_s}\,\V{1}_{T\!\times\!1}\!=\!\V{s}$ has been used. We note that $\V{y}\hr\,\V{y}$ is independent of $\V{s}$, and that $\V{s}\hr\,\V{s}=\|\V{s}\|^2$. Furthermore, since the natural logarithm function is a monotonically increasing function of its argument, hence maximizing $\text{p}(\V{y|s})$ is equivalent to maximizing $\ln(\text{p}(\V{y|s}))$. We then get the MSDD ML decision in the AWGN channel case as
\begin{equation}
\label{eq:ML_metric_AWGN}
\V{\hat{s}}_\text{ML}=\argmax\limits_{\V{s}^i}\:\ln\Big(I_0(2\,\sqrt{\rho}\,|\,\V{y}\hr\,\V{s}^i|)\Big)\,-\,\rho\|\V{s}^i\|^2,
\end{equation}
where --with a small abuse of the notation-- $\V{s}^i$ is the $i^{th}$ possible transmit vector, which will be termed as a candidate sequence.\\

For the special case of constant envelope modulation like in DPSK, $\|\V{s}^i\|^2$ is constant $\forall i$, so the second term can be removed. Moreover, the $\ln$ and the $I_0$ functions are both monotonically increasing functions of their arguments, hence they can be omitted in getting the maximum of the expression and the ML decision metric is further simplified to
\begin{equation}
\label{eq:ML_metric_AWGN_DPSK}
\V{\hat{s}}_\text{ML}=\argmax\limits_{\V{s}^{i}}\:|\V{y}\hr\V{s}^i|,
\end{equation}
which is the known correlation detection. In other words, the most likely transmit symbol vector when DPSK is used over an AWGN channel is the one that has the highest correlation with the received vector (the one that is as parallel as possible to $\V{y}$).\\

Note that the metric does not depend on the Euclidean distance between $\V{y}$ and $\V{s}$ as is the case in coherent detection, but rather it depends on the magnitude of their correlation. It is worth mentioning that gray coding which is known to be the optimal bit-to-symbol mapping in coherent detection might not be the optimal one here. This is due to the fact that in differential detection, the distance profile between $\V{y}$ and $\V{s}$ is not the Euclidean distance but rather a different one defined by (\ref{eq:ML_metric_AWGN}). It is however not an easy task to get the optimal bit-to-symbol mapping for differential non-coherent systems. To surmount this problem, one may use several different bit-to-symbol mappings during a transmission, so as to achieve an averaged performance of all mappings. However assigning several bit-to-symbol mappings will complicate the transmission process, so here only gray coding is used unless otherwise specified.
\subsubsection{Rayleigh Block Fading Channel}
Rayleigh fading model is described by (\ref{eq:Tx_Rx}) and in Figure \ref{fig:Rayleigh_fading} if we take $h_t\thicksim\mathcal{CN}(0,1)$. The model assumes no dominant Line of Sight (LOS) component, therefore the mean of the channel coefficient $h_t$ is zero at any time slot $t$. This leads to an all zero mean vector $\V{\mu_{h}}=\V{0}_{T\!\times\! 1}$. The word \textit{block} indicates that the channel coefficient remains constant over a long transmission period, hence the channel does not vary within the observation duration $T$ and the covariance matrix is an all ones matrix $\V{\Lambda_{h}}=\V{1}_{T\!\times\! T}$ since the channel coefficients are 100\% correlated. In such a special case, (\ref{eq:cov_mat_r}) is simplified to

\begin{equation}
\label{eq:lambda_RBF}
 \V{\Lambda_{y|s}}=\rho\,\V{D_{s}}\,\V{1}_{T\!\times\! T}\,\V{D_{s}}\hr+\V{I}_{T},
\end{equation}
where,
\begin{eqnarray*}
\V{D_{s}}\,\V{1}_{T\!\times\! T}\,\V{D_{s}}\hr &=&
\begin{bmatrix} 
s_0 & & & \\
& \ddots & & \text{\huge{0}} \\
\:\text{\huge{0}} & & \ddots & \\
& & & s_{T-1}
\end{bmatrix}
\begin{bmatrix} 
1 & 1& \cdots & 1\\
1& \ddots & & 1 \\
\vdots & & \ddots &\vdots \\
1& 1& \cdots & 1
\end{bmatrix}
\begin{bmatrix} 
s_0^* & & & \\
& \ddots & & \text{\huge{0}} \\
\:\text{\huge{0}} & & \ddots & \\
& & & s_{T-1}^*
\end{bmatrix}\\
&=&
\begin{bmatrix}
s_0s_0^* & s_0s_1^*& \cdots & s_0s_{T-1}^*\\
s_1s_0^*& \ddots & & s_1s_{T-1}^* \\
\vdots & & \ddots &\vdots \\
s_{T-1}s_0^*& s_{T-1}s_1^*& \cdots & s_{T-1}s_{T-1}^*
\end{bmatrix}
=\V{s}\,\V{s}\hr.
\end{eqnarray*}
Hence (\ref{eq:lambda_RBF}) is simplified to $\V{\Lambda_{y|s}}=\rho\,\V{s}\V{s}\hr+\V{I}_{T}$ and the general $\text{p}(\V{y|s})$ in (\ref{eq:pr_s}) can be written as
\begin{equation}
\label{eq:pr_s_RBF1}
\text{p}(\V{y|s})=\frac{1}{\pi^{T}\det(\V{I}_{T}+\rho\,\V{s}\V{s}\hr)}\,e^{-\V{y}\hr\,(\V{I}_{T}+\rho\,\V{s}\V{s}\hr)^{-1}\,\V{y}}.
\end{equation}
In the following, we will use two identities, namely Sylvester's determinant theorem 
\begin{equation}
\label{eq:det_identity}
\det(\V{I}_p+\V{A}_{p\!\times\! q}\,\V{B}_{q\!\times\! p})=\det(\V{I}_{q}+\V{B}_{q\!\times\! p}\,\V{A}_{p\!\times\! q}),
\end{equation}
and the matrix inversion lemma \cite{Steven_Kay93}
\begin{equation}
\label{eq:matrix_inversion_lemma}
(\V{A}+\V{B}\,\V{C}\,\V{D})^{-1}=\V{A}^{-1}-\V{A}^{-1}\,\V{B}(\V{C}^{-1}+\V{D}\V{A}^{-1}\V{B})^{-1}\,\V{D}\V{A}^{-1},
\end{equation}
with $\V{A} \in \mathbb{C}^{p\!\times\! p}$, $\V{B} \in \mathbb{C}^{p\!\times\! m}$, $\V{C} \in \mathbb{C}^{m\!\times\! m}$, and $\V{D} \in \mathbb{C}^{m\!\times\! p}$.
Using the above two identities, we can do the following replacements in (\ref{eq:pr_s_RBF1}),
\begin{equation*}
\det(\V{I}_{T}+\rho\,\V{s}\V{s}\hr)=\det(1+\rho\,\|\V{s}\|^2),
\end{equation*}
and
\begin{eqnarray*}
(\V{I}_{T}+\rho\V{s}\V{s}\hr)^{-1} &=& \V{I}_{T}-\rho\,\V{s}\,(1+\rho\,\V{s}\hr\V{s})^{-1}\V{s}\hr\\
&=& \V{I}_{T}-\frac{\rho\,\V{s}\V{s}\hr}{1+\rho\,\|\V{s}\|^2}.
\end{eqnarray*}
In getting the inverse we have used the following substitutions, $\V{A}=\V{I}_{T}$, $\V{B}=\rho\,\V{s}$, $\V{C}=1$, and $\V{D}=\V{s}\hr$.
Finally, $\text{p}(\V{y|s})$ is simplified to
\begin{equation}
\label{eq:pr_s_RBF2}
\text{p}(\V{y|s})=\frac{1}{\pi^{T}(1+\rho\,\|\V{s}\|^2)}\,e^{-\V{y}\hr\,(\V{I}_{T}-\frac{\rho}{1+\rho\,|\V{s}|^2}\,\V{s}\,\V{s}\hr)\,\V{y}},
\end{equation}
and the ML decision metric for the Rayleigh block fading case becomes 
\begin{equation}
\label{ML_metric_RBF}
\V{\hat{s}}_\text{ML}=\argmax\limits_{\V{s}^{i}}\:\frac{\rho}{1+\rho\,\|\V{s}^i\|^2}\,|\V{y}\hr\V{s}^i|^2\,-\,\ln(1+\rho\,\|\V{s}^i\|^2).
\end{equation}
For the special case of constant envelope modulation, the ML decision is further simplified to 
\begin{equation}
\label{eq:Ml_metric_RBF_DPSK}
\V{\hat{s}}_\text{ML}=\argmax\limits_{\V{s}^{i}}\:|\V{y}\hr\V{s}^i|^2 \equiv \argmax\limits_{\V{s}^{i}}\:|\V{y}\hr\V{s}^i|,
\end{equation}
which is the same as the ML metric for the AWGN channel in the constant envelope modulation case as shown in (\ref{eq:ML_metric_AWGN_DPSK}).

\subsubsection{Rayleigh Fast Fading Channel}
In the Rayleigh fast fading channel, the channel coefficients change within the observation interval with a variation described by the channel covariance matrix $\V{\Lambda_h}$. The assumed statistical behaviour of the variation follows the Jake's model, with the autocorrelation function defined by \cite{Proakis2000}
\begin{equation}
\label{eq:Jakes_ACF}
\phi(m)=J_0(2\pi mf_DT_s),
\end{equation}
where $J_0(\V{.})$ is the zeroth order Bessel function of the first kind, $f_D$ is the maximum Doppler frequency defined by $f_D=v/\lambda$, where $v$ is the velocity of a moving vehicle, and $\lambda$ is the center wavelength of the bandpass signal. $T_s$ is the symbol duration, hence $f_DT_S$ is the normalized Doppler frequency which is limited between $0$ and $1$. $f_DT_s=0$ describes the Rayleigh block fading case. The power density spectrum is the Jakes spectrum. As in Rayleigh block fading channel, the mean of the channel coefficient is zero and consequently $\V{\mu_{h}}=\V{0}_{T\!\times\! 1}$. The conditional probability density function $\text{p}(\V{y|s})$ of (\ref{eq:pr_s}) is simplified to
\begin{equation}
\label{eq:pr_s_RFF1}
\text{p}(\V{y|s})=\frac{1}{\pi^{T} \det(\V{\Lambda_{y|s}})}\,e^{-\V{y}\hr\V{\Lambda_{y|s}}^{-1}\V{y}}.
\end{equation}
To get the determinant and the inverse of $\V{\Lambda_{y|s}}$, we again use the definition of $\V{\Lambda_{y|s}}$ in (\ref{eq:cov_mat_r}) together with identity (\ref{eq:det_identity}) resulting in
\begin{equation*}
\det(\V{I}_{T}+\rho\V{D_s}\V{\Lambda_h}\V{D_s}\hr)=\det(\V{I}_{T}+\rho\V{D_s}\hr\V{D_s}\V{\Lambda_h})=\det(\V{I}_{T}+\rho|\V{D_s|^2\V{\Lambda_h}}),
\end{equation*}
and using the matrix inversion lemma of (\ref{eq:matrix_inversion_lemma})
\begin{equation*}
(\V{I}_{T}+\rho\V{D_s}\V{\Lambda_h}\V{D_s}\hr)^{-1}=\V{I}_{T}-\rho\V{D_s}(\V{\Lambda_h}^{-1}+\rho\V{D_s}\hr\V{D_s})^{-1}\V{D_s}\hr,
\end{equation*}
where we have used the following substitutions $\V{A}=\V{I}_{T}$, $\V{B}=\rho\,\V{D_s}$, $\V{C}=\V{\Lambda_h}$, and $\V{D}=\V{D_s}\hr$.
Substituting back into (\ref{eq:pr_s_RFF1}), we get 
\begin{eqnarray}
\label{pr_s_RFF2}
\text{p}(\V{y|s})&=&\frac{1}{\pi^{T} \det(\V{I}_{T}+\rho|\V{D_s|^2\V{\Lambda_h}})}\,e^{-\V{y}\hr\Big(\V{I}_{T}-\rho\V{D_s}(\V{\Lambda_h}^{-1}+\rho\V{D_s}\hr\V{D_s})^{-1}\V{D_s}\hr\Big)\V{y}}	\nonumber\\
&=& \frac{1}{\pi^{T} \det(\V{I}_{T}+\rho|\V{D_s}|^2\V{\Lambda_h})}\,e^{-\V{y}\hr\V{y}+\rho\,\V{y}\hr\V{D_s}(\V{\Lambda_h}^{-1}+\rho|\V{D_s}|^2)^{-1}\V{D_s}\hr\V{y}}.
\end{eqnarray}\\

The ML decision metric follows as 
\begin{equation}
\V{\hat{s}}_\text{ML}=\argmax\limits_{\V{s}^i}\:\,\rho\,\V{y}\hr\V{D}_{\V{s}^i}(\V{\Lambda_h}^{-1}+\rho|\V{D}_{\V{s}^i}|^2)^{-1}\V{D}_{\V{s}^i}\hr\V{y}-\ln(\det(\V{I}_{T}+\rho|\V{D}_{\V{s}^i}|^2\V{\Lambda_h})).
\label{eq:ML_metric_RFF}
\end{equation}
In the special case of constant envelope modulation like in DPSK, $\:|\V{D}_{\V{s}^i}|^2\!=\!\V{I}_{T} \: \forall{i}$ and the ML metric can be simplified to 
\begin{equation}
\label{eq:ML_metric_RFF_DPSK}
\V{\hat{s}}_\text{ML}=\argmax\limits_{\V{s}^i}\:\,\V{y}\hr\V{D}_{\V{s}^i}(\V{\Lambda_h}^{-1}+\rho\V{I}_{T})^{-1}\V{D}_{\V{s}^i}\hr\V{y}.
\end{equation}
Again using the matrix inversion lemma in (\ref{eq:matrix_inversion_lemma})
\begin{equation*}
(\V{\Lambda_h}^{-1}+\rho\V{I}_{T})^{-1}=\frac{1}{\rho}\V{I}_{T}-\frac{1}{\rho^2}(\V{\Lambda_h}+\frac{1}{\rho}\V{I}_{T})^{-1},
\end{equation*}
where we have used the following substitutions $\V{A}=\rho\V{I}_{T}$, $\V{B}=\V{I}_{T}$, $\V{C}=\V{\Lambda_h}^{-1}$, and $\V{D}=\V{I}_{T}$.
Substituting back in (\ref{eq:ML_metric_RFF_DPSK})
\begin{eqnarray}
\label{eq:ML_metric_RFF_DPSK_final}
\V{\hat{s}}_\text{ML}&=&\argmax\limits_{\V{s}^i}\:\,\V{y}\hr\V{D}_{\V{s}^i}\Big(\frac{1}{\rho}\V{I}_{T}-\frac{1}{\rho^2}(\V{\Lambda_h}+\frac{1}{\rho}\V{I}_{T})^{-1}\Big)\V{D}_{\V{s}^i}\hr\V{y} \nonumber \\
&=&\argmax\limits_{\V{s}^i}\:\,\frac{1}{\rho}\V{y}\hr\V{D}_{\V{s}^i}\V{D}_{\V{s}^i}\hr\V{y}-\frac{1}{\rho^2}\V{y}\hr\V{D}_{\V{s}^i}(\V{\Lambda_h}+\frac{1}{\rho}\V{I}_{T})^{-1}\V{D}_{\V{s}^i}\hr\V{y} \nonumber \\
&=&\argmin\limits_{\V{s}^i}\:\,\V{y}\hr\V{D}_{\V{s}^i}(\V{\Lambda_h}+\frac{1}{\rho}\V{I}_{T})^{-1}\V{D}_{\V{s}^i}\hr\V{y},
\end{eqnarray}
where the term with $\V{D}_{\V{s}^i}\V{D}_{\V{s}^i}\hr=|\V{D}_{\V{s}^i}|^2$ has been removed since it is constant $\forall i$.
A summary of the ML decision metrics in the different channels and in the non-constant and constant envelope modulation schemes is given  in Table \ref{tab:ML_metrics}. For space limitations, we used the abbreviations RBF for Rayleigh Block Fading channel and RFF for Rayleigh Fast Fading channel.

\renewcommand{\arraystretch}{1.5} 
\begin{table}[h]
\begin{center}
\caption{ML MSDD metrics in a single antenna system}
\label{tab:ML_metrics}
  \begin{tabular}{|m{1.4cm}|m{8cm}|m{6cm}|}
    \hline    
      \textbf{ML MSDD metric} $\V{\hat{s}}_\text{ML}$	 &  \textbf{non-constant envelope modulation e.~\!g DAPSK} &  \textbf{constant envelope modulation e~\!.g DPSK} \tn[5pt] \hline 
     
	  AWGN &  $\argmax\limits_{\V{s}^{i}}\:\ln\Big(I_0(2\,\sqrt{\rho}\,|\,\V{y}\hr\,\V{s}^i|)\Big)\,-\,\rho\|\V{s}^i\|^2$	&	$\argmax\limits_{\V{s}^{i}}\:|\V{y}\hr\V{s}^i|$	 \tn \hline
	  RBF  & $\argmax\limits_{\V{s}^{i}}\:\frac{\rho}{1+\rho\,\|\V{s}^i\|^2}\,|\V{y}\hr\V{s}^i|^2\,-\,\ln(1+\rho\,\|\V{s}^i\|^2)$				& $\argmax\limits_{\V{s}^{i}}\:|\V{y}\hr\V{s}^i|$			 \tn \hline
	  RFF& $\argmax\limits_{\V{s}^i}\:\,\rho\,\V{y}\hr\V{D}_{\V{s}^i}(\V{\Lambda_h}^{-1}+\rho|\V{D}_{\V{s}^i}|^2)^{-1}\V{D}_{\V{s}^i}\hr\V{y}-\ln(\det(\V{I}_{T}+\rho|\V{D}_{\V{s}^i}|^2\V{\Lambda_h}))$			& $\argmin\limits_{\V{s}^i}\:\,\V{y}\hr\V{D}_{\V{s}^i}(\V{\Lambda_h}+\frac{1}{\rho}\V{I}_{T})^{-1}\V{D}_{\V{s}^i}\hr\V{y}$	\tn \hline
  \end{tabular}	
\end{center}
\end{table}

 \subsection{Derivation of GLRT metric for MSDD}
\label{ss:GLRT}
The MSDD ML decision metrics derived for the non-constant envelope modulation like DAPSK requires the knowledge of the SNR ($\rho$) at the receiver. Additionally, the metric calculation itself looks complicated with the need of using the modified Bessel function in the AWGN channel case, and the $\ln$ and $\det$ functions in the Rayleigh fading case. This motivates the use of some sub-optimum metric which reduces the complexity of the receiver. To this end, we use the so-called Generalized Likelihood Ratio Test (GLRT) decision metric. The GLRT metric basically uses a coherent receiver metric and replaces the unknown channel with its ML estimate. In this section, we will provide a derivation of the GLRT metric for an MSDD receiver in the different channel conditions.\\
\subsubsection{GLRT metric for Rayleigh Fast Fading channel}
Starting from the channel model defined in (\ref{eq:Tx_Rx_vector}) and omitting $\Theta$ when considering a Rayleigh fading channel --since the channel coefficients are complex--, we have
\begin{equation}
\label{eq:Tx_Rx_vector_Rayleigh}
\V{y}=\sqrt{\rho}\,\V{D_{s}}\,\V{h}+\V{w}.
\end{equation}
An ML coherent receiver aims at maximizing the conditional PDF of receiving $\V{y}$ given that both the transmit signal $\V{s}$ and the channel vector $\V{h}$ are known. Namely,
\begin{equation}
 \V{\hat{s}}_\text{ML}=\argmax\limits_{\V{s}^{i}}\:\text{p}(\V{y}|\V{s}^i,\V{h}),
\end{equation}
where,
\begin{equation}
\label{eq:p_y_s_h_GLRT}
 \text{p}(\V{y}|\V{s},\V{h})=\frac{1}{\pi^{T}}e^{-(\V{y}-\sqrt{\rho}\V{D}_{\V{s}}\V{h})\hr(\V{y}-\sqrt{\rho}\V{D}_{\V{s}}\V{h})}.
\end{equation}
Define $\V{\hat{h}}$ as the ML Estimate (MLE) of the channel coefficients under the hypothesis that $\V{s}^i$ was sent. The GLRT metric is then defined as 
\begin{equation}
 \V{\hat{s}}_\text{GLRT}=\argmax\limits_{\V{s}^{i}}\:\text{p}(\V{y}|\V{s}^i,\hat{\V{h}}).
\end{equation}
Based on (\ref{eq:p_y_s_h_GLRT}), the optimal choice of $\V{\hat{h}}$ is the one that minimizes the exponent
\begin{equation*}
 J=(\V{y}-\sqrt{\rho}\V{D}_{\V{s}}\V{\hat{h}})\hr(\V{y}-\sqrt{\rho}\V{D}_{\V{s}}\V{\hat{h}}).
\end{equation*}
Optimizing $\V{\hat{h}}$ can be done by optimizing its entries $\hat{h}_t\:\forall\,t\in\{0,...,T-1\}$ element by element. To obtain the optimal $\hat{h}_t$, we differentiate $J$ with respect to $\hat{h}_t$ and set the derivative to zero. Namely,
\begin{equation*}
 \frac{\partial J}{\partial \hat{h}_t}=-\sqrt{\rho}\,s_t(y_t^*-\sqrt{\rho}s_t^*\hat{h}_t^*)\req 0,
\end{equation*}
making the optimal $\hat{h}_t$
\begin{equation*}
 \hat{h}_{t,\text{opt}}=\frac{y_t}{\sqrt{\rho}\,s_t}.
\end{equation*}
Therefore, the optimal MLE of the channel vector is 
\begin{equation}
\label{eq:opt_h_RFF}
 \V{\hat{h}}_{t,\text{opt}}=\frac{1}{\sqrt{\rho}}\V{D}_{\V{s}^i}^{-1}\V{y}.
\end{equation}
Substituting (\ref{eq:opt_h_RFF}) in the GLRT metric reduces it to
\begin{equation}
 \V{\hat{s}}_\text{GLRT}=\argmin\limits_{\V{s}^{i}}\:\|\V{y}-\sqrt{\rho}\V{D}_{\V{s}^i}\V{\hat{h}}_{t,\text{opt}}\|_F^2=\argmin\limits_{\V{s}^{i}}\:0.
\end{equation}
Therefore for a Rayleigh fast fading channel, the GLRT metric can not be used. One interpretation to this is that it is not possible to jointly decide on $T$ channel coefficients and $(T-1)$ information symbols having only $T$ observation variables $y_t$, $t=0,...,T-1$.
\subsubsection{GLRT metric for Rayleigh Block Fading and AWGN channels}
In the Rayleigh block fading channel case, the channel coefficient $h$ remains constant over the observation window of the MSDD. Therefore $\V{D}_{\V{s}}\V{h}$ can be replaced by $h\V{s}$, and the transmission equation can be written as
\begin{equation*}
 \V{y}=\sqrt{\rho}\,h\V{s}+\V{w}.
\end{equation*}
The above transmission equation can also describe an AWGN channel if $h$ was just a phasor. Therefore, the following derivation is valid for both the AWGN and the Rayleigh block fading channel. Similar to the previous section, the optimal choice of the MLE $\hat{h}$ is the one that minimizes
\begin{equation*}
 J=(\V{y}-\sqrt{\rho}\hat{h}\V{s}^i)\hr(\V{y}-\sqrt{\rho}\hat{h}\V{s}^i).
\end{equation*}
By differentiating $J$ and setting its derivative to zero, we get
\begin{equation*}
 \frac{\partial J}{\partial \hat{h}}=-\sqrt{\rho}\,(\V{y}\hr-\sqrt{\rho}\,\hat{h}^*\V{s}\sp{i\,\hr})\V{s}^i\req 0,
\end{equation*}
and the optimal MLE of $\hat{h}$ is
\begin{equation}
 \hat{h}_{\text{opt}}=\frac{\V{s}\sp{i\,\hr}\V{y}}{\sqrt{\rho}\,\|\V{s}^i\|^2}.
\end{equation}
Substituting $\hat{h}_{\text{opt}}$ in the GLRT metric reduces it to
\begin{eqnarray}
 \V{\hat{s}}_\text{GLRT}&=&\argmin\limits_{\V{s}^{i}}\:\|\V{y}-\sqrt{\rho}\frac{\V{s}\sp{i\,\hr}\V{y}}{\sqrt{\rho}\|\V{s}^i\|^2}\V{s}^i\|_F^2\nonumber\\
&=&\argmin\limits_{\V{s}^{i}}\:\|\V{y}\|^2-2\Re\{\frac{\V{y}\hr\V{s}\sp{i\,\hr}\V{y}\V{s}^i}{\|\V{s}^i\|^2}\}+\frac{\V{s}\sp{i\,\hr}\overbrace{\V{y}\hr\V{s}^i\V{s}\sp{i\,\hr}\V{y}}^{\text{scalar}}\V{s}^i}{\|\V{s}^i\|^4}\nonumber\\
&=&\argmax\limits_{\V{s}^{i}}\:2\frac{|\V{s}\sp{i\,\hr}\V{y}|^2}{\|\V{s}^i\|^2}-\frac{|\V{s}\sp{i\,\hr}\V{y}|^2}{\|\V{s}^i\|^2}\nonumber\\
&=&\argmax\limits_{\V{s}^{i}}\:\frac{|\V{s}\sp{i\,\hr}\V{y}|^2}{\|\V{s}^i\|^2}=\argmax\limits_{\V{s}^{i}}\:\frac{|\V{y}\hr\V{s}^i|^2}{\|\V{s}^i\|^2}.
\label{eq:GLRT_RBF_AWGN}
\end{eqnarray}\\

Interestingly, the GLRT MSDD metric is the same as the ML MSDD metric for a constant envelope modulation like DPSK in a Rayleigh block fading or an AWGN channel as shown in Table \ref{tab:ML_metrics}. For DAPSK, the GLRT metric in (\ref{eq:GLRT_RBF_AWGN}) is much simpler than the AWGN and Rayleigh block fading ML metrics shown in Table \ref{tab:ML_metrics} and does not need the knowledge of SNR ($\rho$).\\

Table \ref{tab:GLRT_metrics} summarizes the GLRT metrics for the different channel conditions using constant and non-constant envelope modulation techniques. RBF signifies the Rayleigh block fading channel and RFF signifies the Rayleigh fast fading channel. A general MSDD decoder which is valid for both the ML metric and the GLRT metric can be visualized as shown in Figure \ref{fig:metric_decoder}. We define $\xi_i$ as the argument of $\argmax$ or $\argmin$ for the $i^{\text{th}}$ candidate sequence $\V{s}^i$ in the MSDD metric. The cardinality of the candidate sequences set is defined as L.\\
\renewcommand{\arraystretch}{1.8} 
\begin{table}[!htp]
\begin{center}
\caption{GLRT MSDD metrics in a single antenna system}
\label{tab:GLRT_metrics}
  \begin{tabular}{|p{3cm}|p{5cm}|p{5cm}|}
    \hline    
      \textbf{GLRT MSDD metric} $\V{\hat{s}}_\text{GLRT}$&  \textbf{non-constant envelope modulation e.~\!g DAPSK} &  \textbf{constant envelope modulation e~\!.g DPSK} \tn[5pt] \hline 
       AWGN \& RBF & \centering $\argmax\limits_{\V{s}^{i}}\:\frac{|\V{y}\hr\V{s}^i|^2}{\|\V{s}^i\|^2}$	& \centering $\argmax\limits_{\V{s}^{i}}\:|\V{y}\hr\V{s}^i|$ \tn \hline
       RFF& \multicolumn{2}{c|}{Not possible} \tn \hline
  \end{tabular}
\end{center}
\end{table}

\begin{figure}[!ht]
  \centering
\scalebox{0.8}{\input{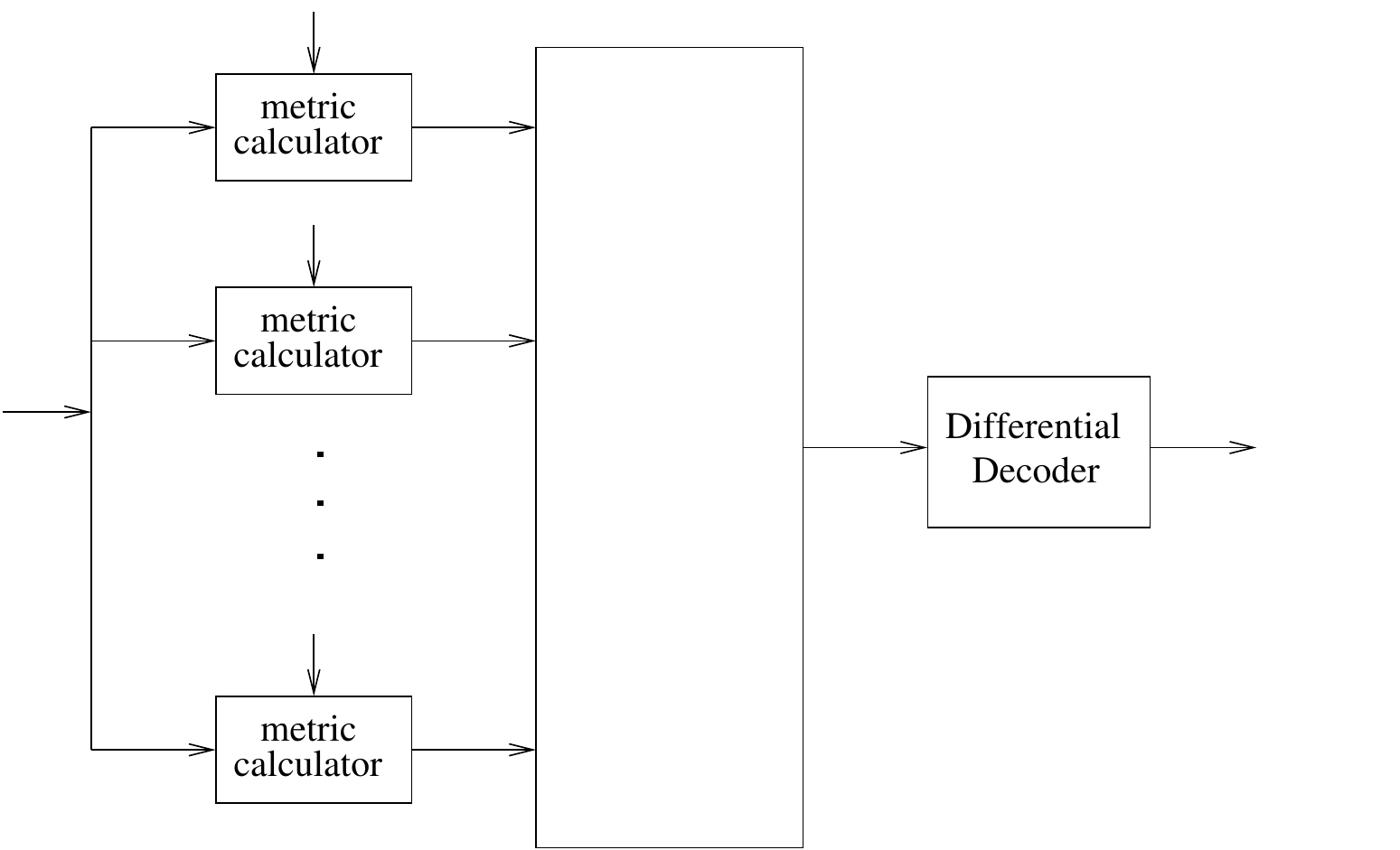_t}}
\caption{MSDD decoder}
\label{fig:metric_decoder}
\end{figure}
\section{Performance Analysis}
\label{s:Performance_MSDD}
This section shows the simulation results of the non-coherent MSDD receiver using both DPSK and DAPSK modulation techniques in different channel conditions. 
For DPSK, the ML and GLRT metrics for both AWGN and Rayleigh block fading channels are just the same, and in the fast fading case the GLRT metric is not usable. Therefore, in the following, DPSK uses by default the ML metric.\\

The decoding complexity of the receiver is mainly governed by the search space of the metric. For DPSK, the search space is $L_p=q_p^{T-1}$, whereas for DAPSK the search space is $L=(q_pq_a)^{T-1}$. Thus, the complexity increases exponentially with the window length. Therefore, the following simulations were done for up to a window length of $T=4$. Modulation orders of 8, 16, 32, and 64 are used. For DAPSK, the constellations use the optimal parameters shown in Table \ref{tab:opt_DAPSK}.\\

As an attempt to reduce the search space of the DAPSK scheme, we additionally propose a less complex detection approach that tries to exploit the independency of the amplitude and phase in the used DAPSK constellation. As shown in Figure \ref{fig:metric_indep_detection}, this technique first applies the received vector $\V{y}$ on the DPSK metric and decides on the phase sequence $\hat{\V{s}}_p$. Then the decided phase sequence is multiplied by all amplitude candidates $\V{s}_a^i$ getting $\V{s}^i=\hat{\V{s}}_p\,\V{s}_a^i$. Finally, $\V{s}^i$ is applied on the DAPSK metric to decide on the amplitude sequence $\hat{\V{s}}_a$. This will result in reducing the search space from $L\eq L_pL_a\eq(q_pq_a)^{T-1}$ to  $L\eq L_p+L_a\eq q_p^{T-1}+q_a^{T-1}$. \\

Using the simplified scheme the decision on the phase sequence is totally independent of the amplitude, whereas the decision on the amplitude sequence depends on the decided phase sequence. Thus, we call such a scheme quasi-independent detection. Also used is a detection scheme which directly applies the received vector $\V{y}$ on the DAPSK metric and tests all $(q_pq_a)^{T-1}$ candidates constructed as the product of all possible amplitude and phase sequence combinations. Such a scheme decides on the amplitude and phase sequences simultaneously and is referred to as combined detection.\\

\begin{figure}[htp]
  \centering
\scalebox{0.8}{\input{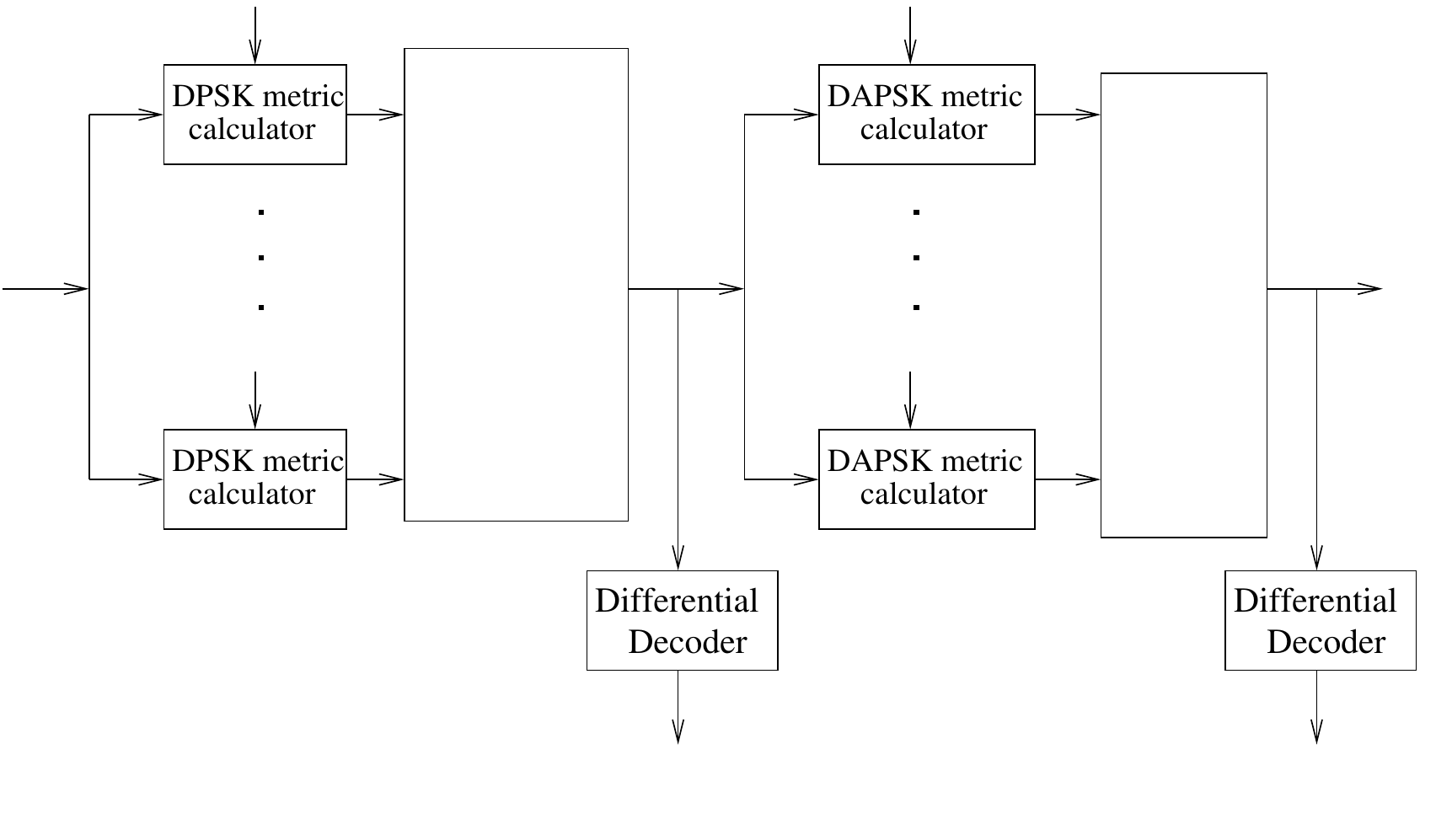_t}}
\caption{MSDD with quasi-independent detection for DAPSK modulation scheme.}
\label{fig:metric_indep_detection}
\end{figure}


For an AWGN channel, the modified Bessel function in the ML metric for DAPSK results in overflow of the metric values. Therefore, we approximate the Bessel function using
\begin{equation}
 I_{0}(z)\approxeq \frac{e^z}{\sqrt{2\pi z}},\hspace{0.5cm} z\gg\frac{1}{4} \hspace{0.7cm}\implies \ln(I_0(z))\approxeq z-\ln(\sqrt{2\pi z}),
\end{equation}
and the decision scheme is termed as approx. ML decision. Figure \ref{fig:DAPSK_simulation_summary} summarizes the different cases whose results will be shown in the rest of this section.
\begin{figure}[htp]
  \centering
\input{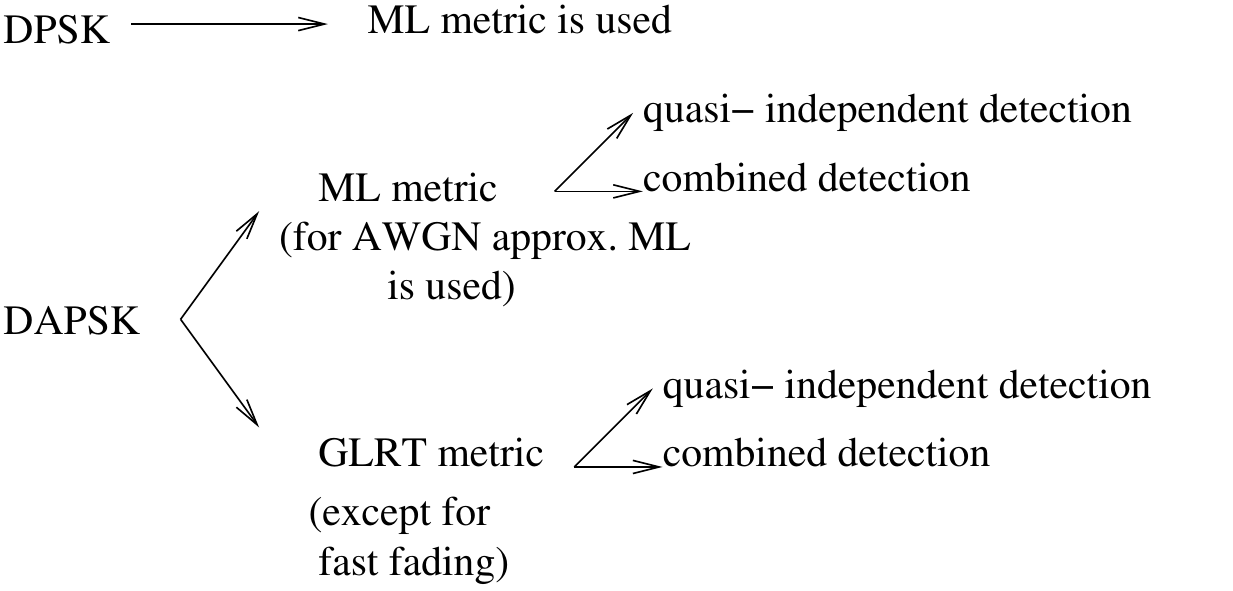_t}
\caption{Summary of the different cases considered for simulation results of DPSK and DAPSK with an MSDD receiver.}
\label{fig:DAPSK_simulation_summary}
\end{figure}

\subsection{AWGN}
\label{ss:results_AWGN}
\subsubsection{Quasi-independent vs. combined detection}
We first compare the quasi-independent scheme shown in Figure \ref{fig:metric_indep_detection} with the combined detection for both the GLRT and the ML metrics in an AWGN channel. Figure \ref{fig:AWGN_indep_vs_combined} shows that the two methods are identical. This is reasonable since the modulation of the amplitude and the phase bits is done independently, so there is no gain in decoding them jointly. Therefore, in the AWGN case, only results of the quasi-independent detection scheme will be shown. In the figure, arbitrary modulation orders are chosen since the effect is the same for all modulation orders. The figure also shows the effect of multiple-symbol detection. For example in the case of 32-DAPSK using the ML metric with a window length of 3 achieves about $\unit[1]{dB}$ gain compared to the conventional detection. Additionally, the \emph{increase} in the gain decreases with increasing $T$. Increasing $T$ above $4$ leads to only a marginal improvement which is not much worth the added complexity. 
\begin{figure}[!htp]
 \centering
  \subfloat[ML quasi-independent vs. combined for 32-DAPSK]{\label{subfig:32_DAPSK_AWGN_ML_indep_vs_combined}\includegraphics[width=0.49\textwidth]{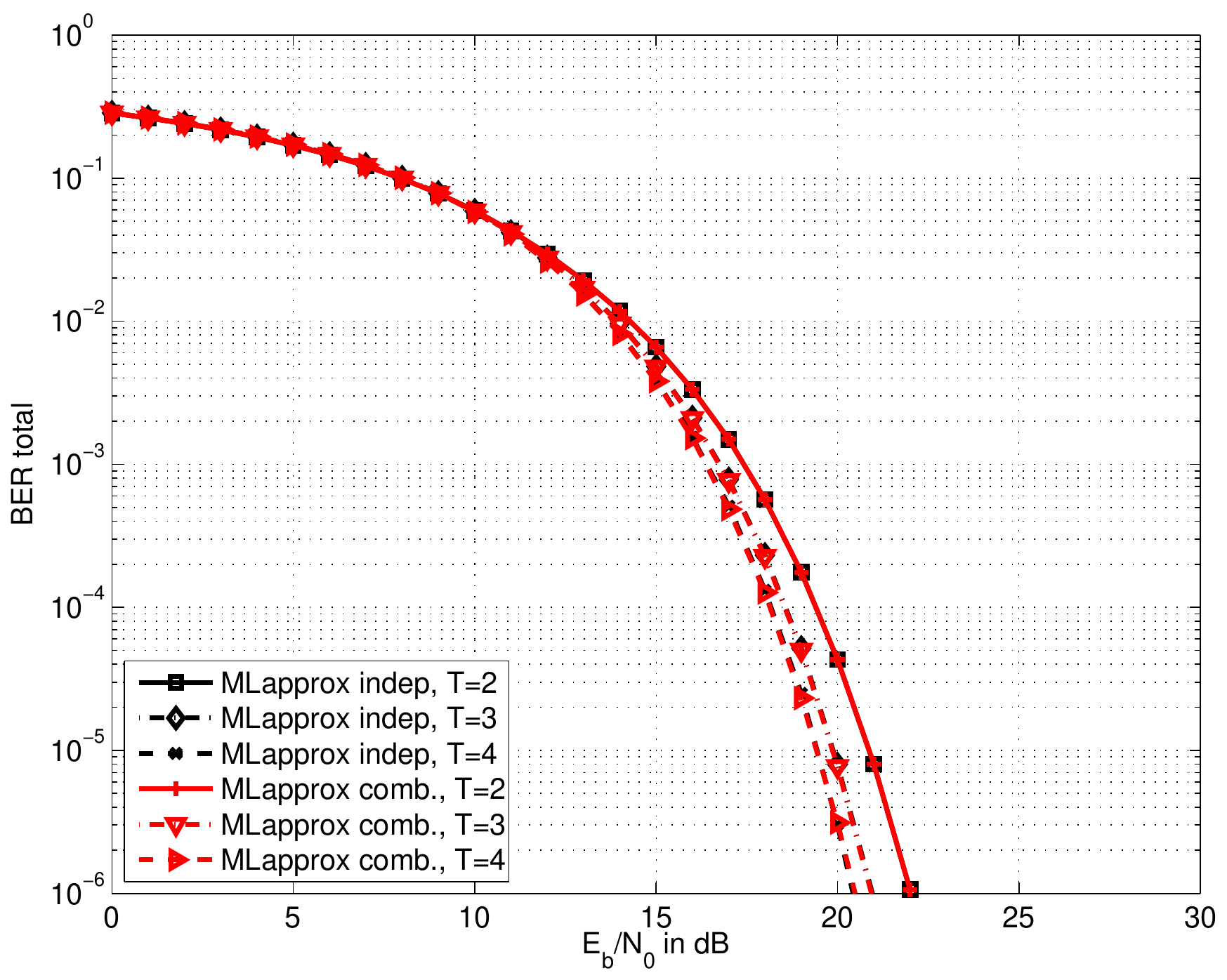}}
  \subfloat[GLRT quasi-independent vs. combined for 16-DAPSK]{\label{subfig:16_DAPSK_AWGN_GLRT_indep_vs_combined}\includegraphics[width=0.49\textwidth]{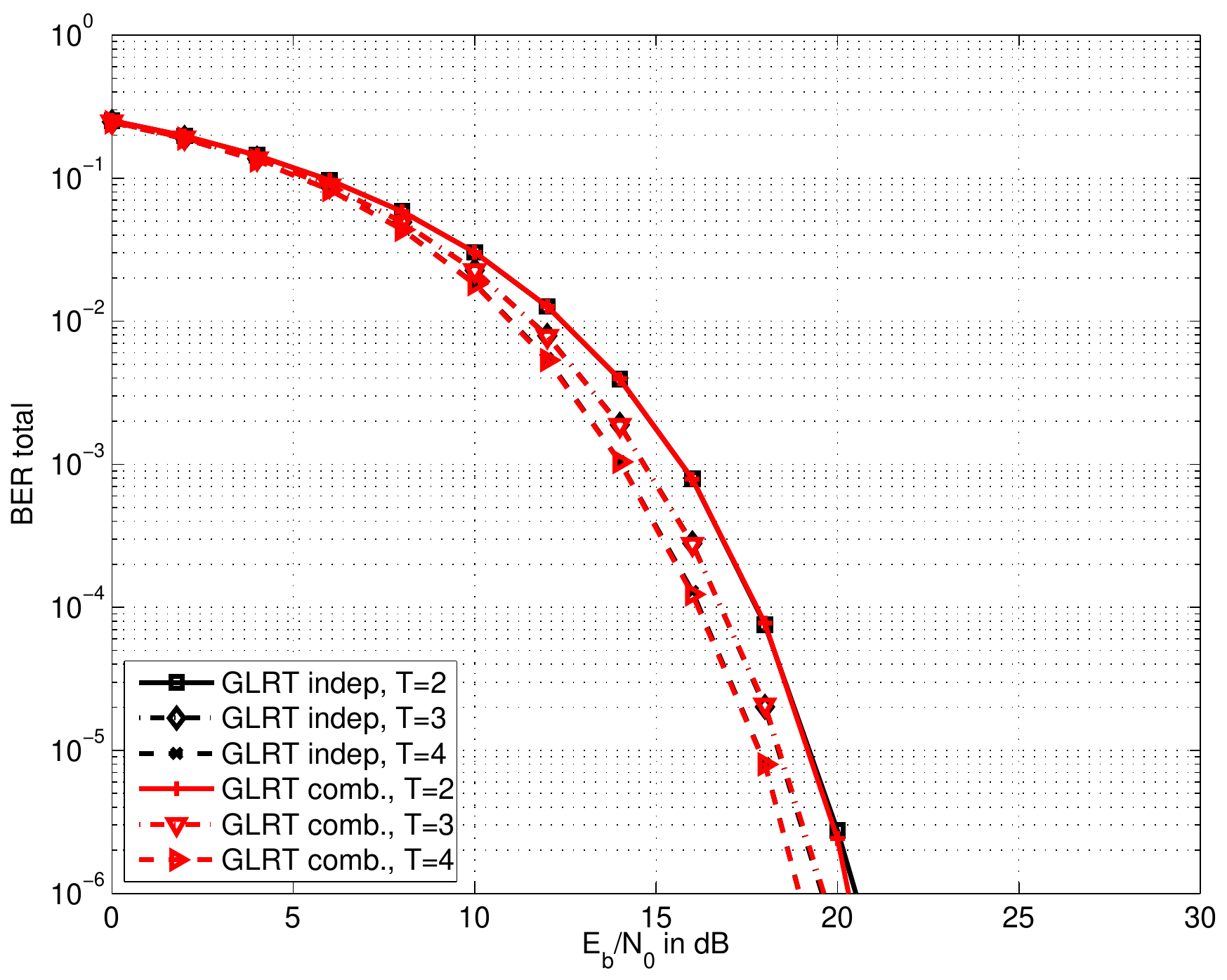}}
  \caption{BER performance of both quasi-independent detection and combined detection for DAPSK using ML and GLRT metrics over an AWGN channel.}
  \label{fig:AWGN_indep_vs_combined}
\end{figure}

\begin{figure}[!htp]
 \centering
  \subfloat[8-DPSK and 8-DAPSK]{\label{subfig:8_DAPSK_AWGN}\includegraphics[width=0.49\textwidth]{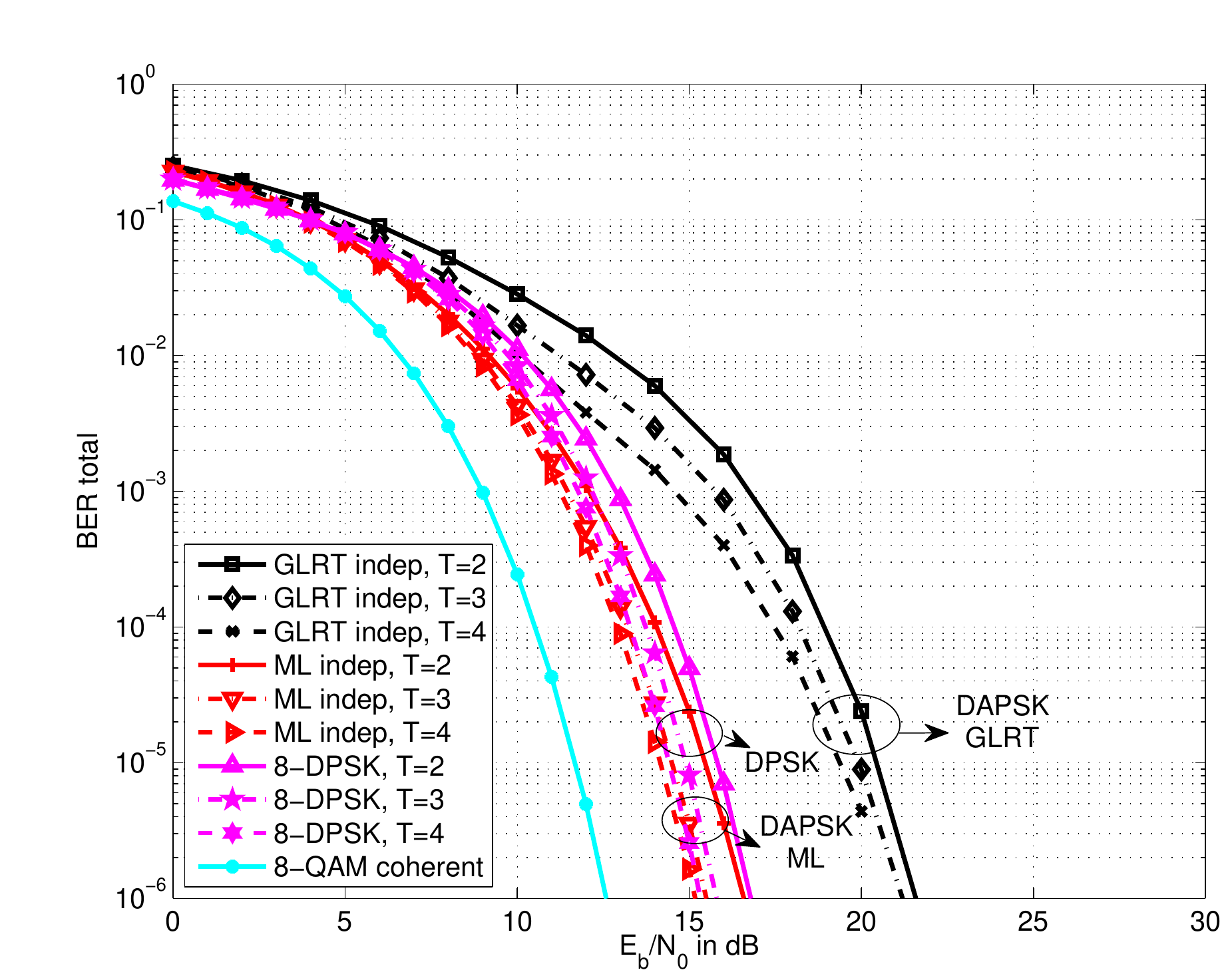}}
  \subfloat[16-DPSK and 16-DAPSK]{\label{subfig:16_DAPSK_AWGN}\includegraphics[width=0.49\textwidth]{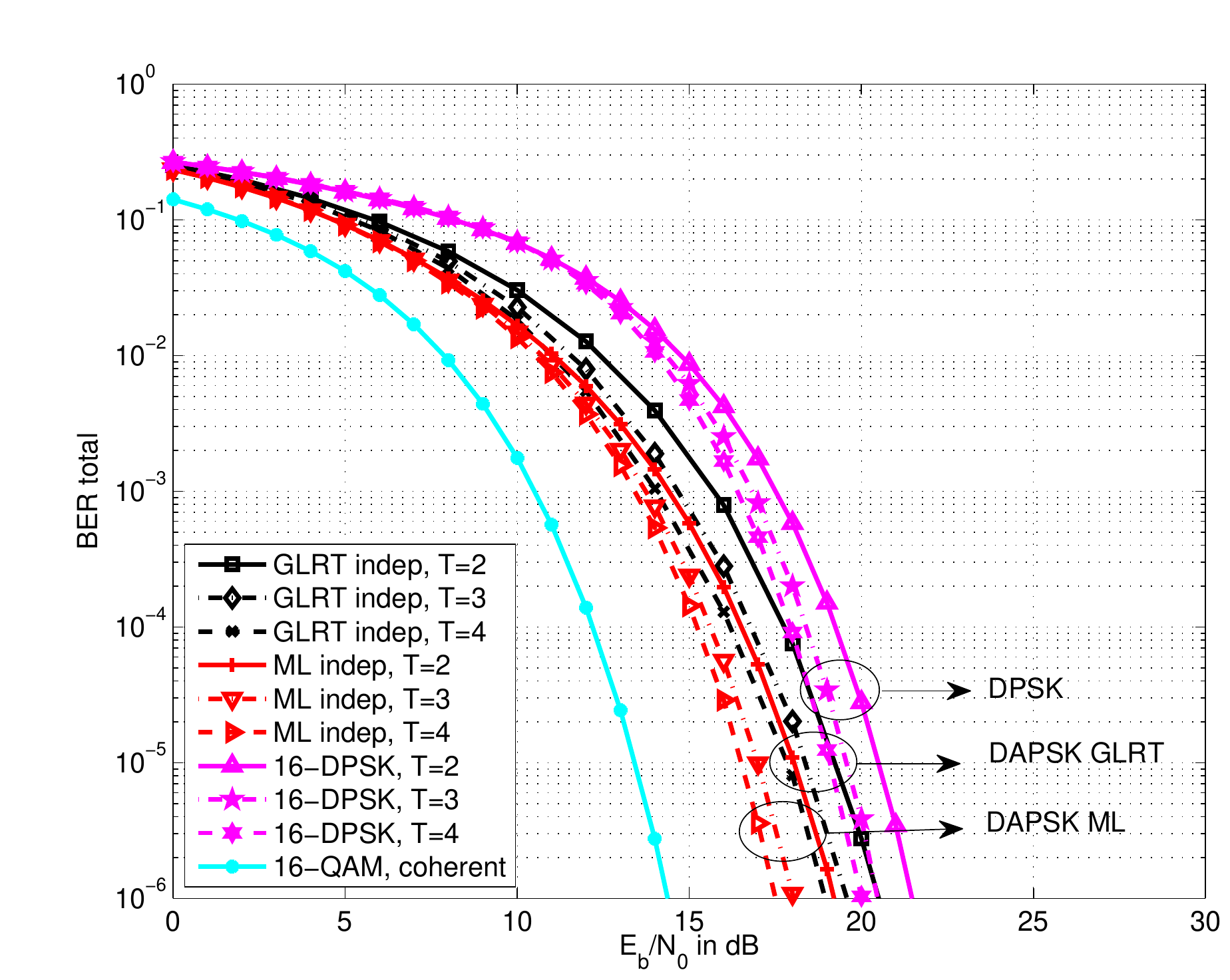}}\\
  \subfloat[32-DPSK and 32-DAPSK]{\label{subfig:32_DAPSK_AWGN}\includegraphics[width=0.49\textwidth]{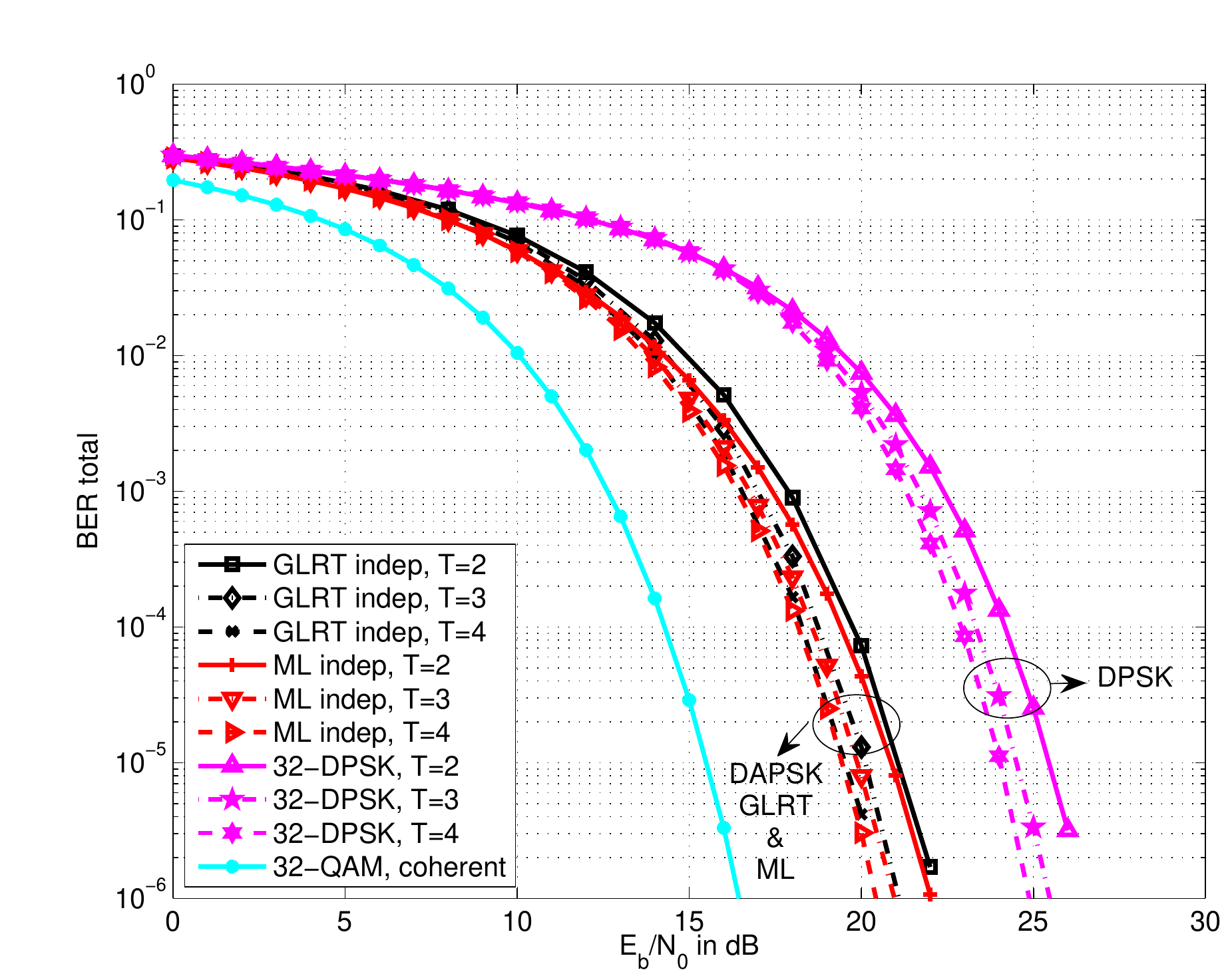}}
  \subfloat[64-DPSK and 64-DAPSK]{\label{subfig:64_DAPSK_AWGN}\includegraphics[width=0.49\textwidth]{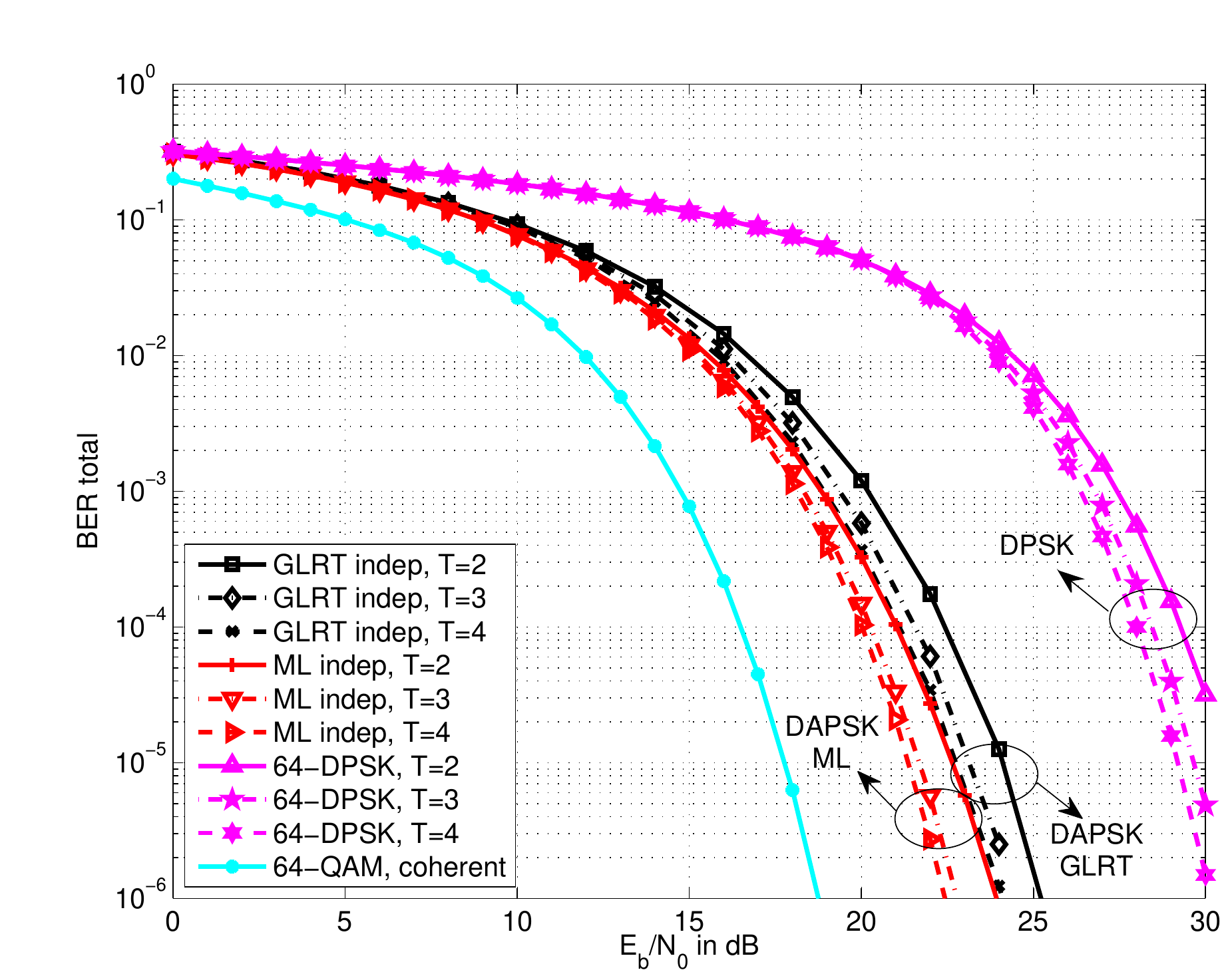}}
  \caption{BER curves of DPSK using ML metric and DAPSK using both GLRT and ML metrics over an AWGN channel.}
  \label{fig:DAPSK_AWGN}
\end{figure}

\subsubsection{ML vs. GLRT}
Figure \ref{fig:DAPSK_AWGN} compares the BER curves of different-order DAPSK schemes using the ML and the GLRT decision metrics over an AWGN channel. Also shown is the DPSK BER curves and the coherent QAM counterpart. For an alphabet order of 8, the GLRT performance is significantly worse than the ML performance. In fact the performance of DPSK with ML (same as GLRT) is better than that of DAPSK with a GLRT receiver. The ML DAPSK curves are a little bit better  than the ML DPSK ones. Also shown is the effect of increasing the window length $T$. For example a performance gain of about $\unit[1]{dB}$ is incurred going from $T\eq 2$ to $T\eq 4$ in an ML receiver. For a modulation order of 16, the gain of using DAPSK over DPSK with an ML receiver is about $\unit[3]{dB}$. For 16-DAPSK, there is still some SNR loss in using the sub-optimum GLRT metric compared to the ML decision, but the gap between both is much less than in the 8-DAPSK case.\\

For a modulation order of 32, the performance gap between the DPSK and the DAPSK increases further. Whereas the gap between the GLRT and the ML decision metrics is insignificant. When increasing the modulation order to 64, DPSK shows about $\unit[8]{dB}$ performance loss compared to ML DAPSK. The performance gap between the ML and the GLRT metrics increases again although it was gradually decreasing going from 8-DAPSK to 16-DAPSK to 32-DAPSK. The difference with 64-DAPSK is the use of four amplitude levels instead of two. Thus, the performance of the sub-optimal detector GLRT degrades as more amplitude levels are introduced.
\subsection{Rayleigh Block Fading}
\label{ss:results_RBF}
Just as in the AWGN channel, for a Rayleigh block fading channel, the combined detection has shown identically the same performance as the quasi-independent one. Therefore we only show the results of the quasi-independent detection. Figure \ref{fig:RBF_all} shows the results of the ML and the GLRT receivers for the DAPSK scheme. The DPSK  as well as the coherent QAM curves are also shown. Clearly, there is hardly any effect of multiple-symbol detection for a Rayleigh block fading channel. Such an effect has been incurred in all modulation orders, and is shown here for order 8 and 32. A possible explanation for this is that in the Rayleigh block fading channel, every possible channel realization can be viewed as an AWGN channel with SNR being the channel power of such a realization. That is the Rayleigh block fading performance can be viewed as averaging over the BER of different SNR values in an AWGN channel and the low SNR values dominate the performance.\\

Similar to the AWGN case, DAPSK shows better error rate compared to DPSK for a modulation order above 8.  For example, a gain of about $\unit[4]{dB}$ is achieved using 32-DAPSK over 32-DPSK.
\begin{figure}[!htp]
 \centering
  \subfloat[8-DPSK and 8-DAPSK]{\label{subfig:8_DAPSK_RBF_all}\includegraphics[width=0.49\textwidth]{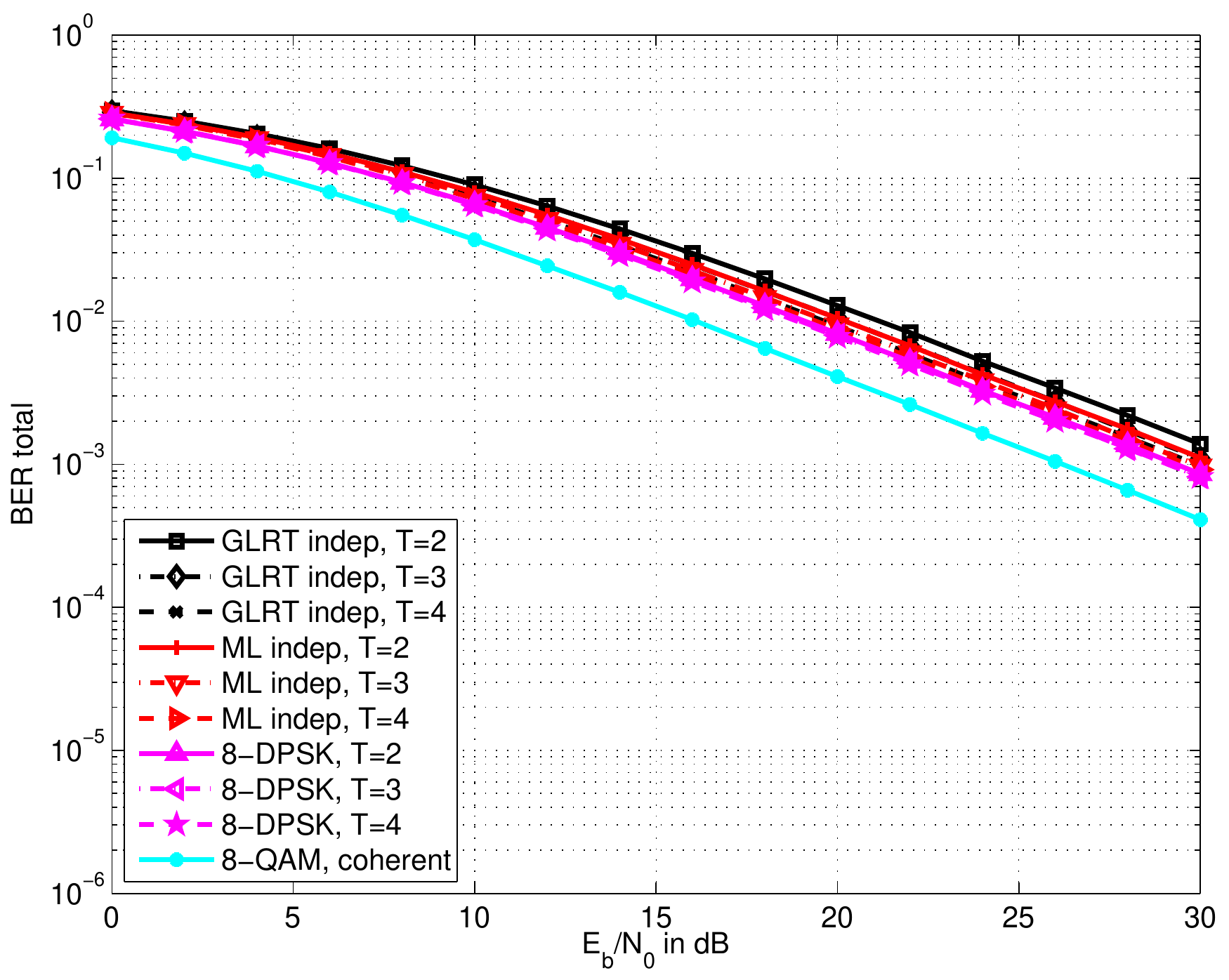}}
  \subfloat[32-DPSK and 32-DAPSK]{\label{subfig:32_DAPSK_RBF_all}\includegraphics[width=0.49\textwidth]{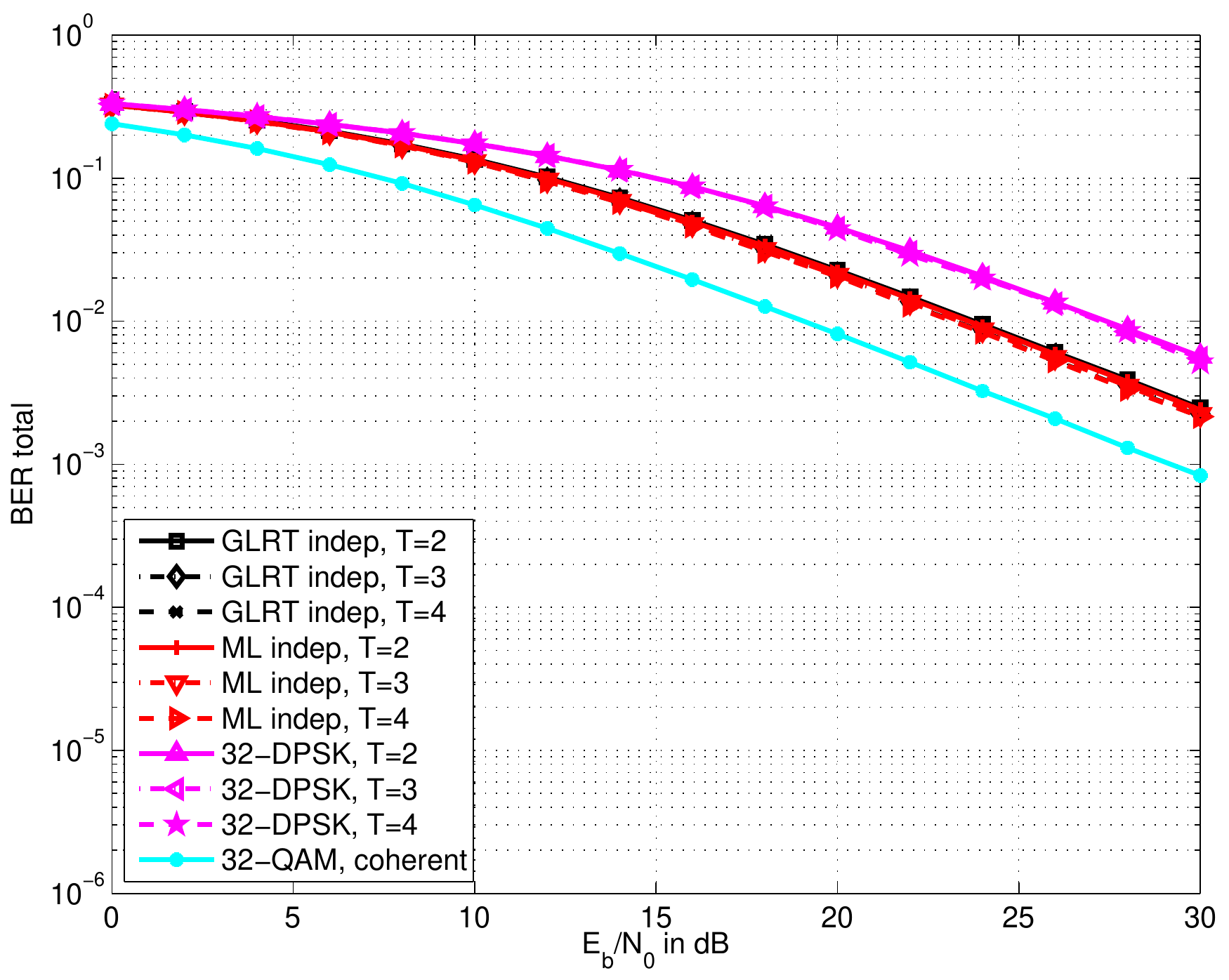}}
  \caption{BER performance of DPSK and DAPSK using quasi-independent ML and GLRT metrics over a Rayleigh block fading channel.}
  \label{fig:RBF_all}
\end{figure}

\subsection{Rayleigh Fast Fading}
\label{ss:results_RFF}
In a fast fading channel, we first investigate the performance of the ML metric with the quasi-independent detection. Figure \ref{fig:RFF_indep} shows the DPSK BER, the DASK BER and the overall DAPSK BER. As shown, the BER curves look unreasonable for a window length larger than 2. The DPSK BER curve indicates that the decision on the phase sequence is wrong which in turn affects the amplitude decision resulting in wrong DASK BER curve, and thus wrong overall performance. One interpretation for the failure of the DPSK metric in the fast fading case is that amplitude modulation destroys the assumed channel correlation for the phase modulation. In other words, the phase modulation encounters a different statistical channel covariance matrix which includes the statistics of the amplitude variation. It is not yet known how to modify the equivalent covariance matrix to circumvent this effect. So we reside to the combined detection in the fast fading case.\\
\begin{figure}[!htp]
 \centering
  \subfloat[DPSK performance]{\label{subfig:8DAPSK_RFF_fdTs002_MS_MLindep_vs_combined_BER_PSK}\includegraphics[width=0.49\textwidth]{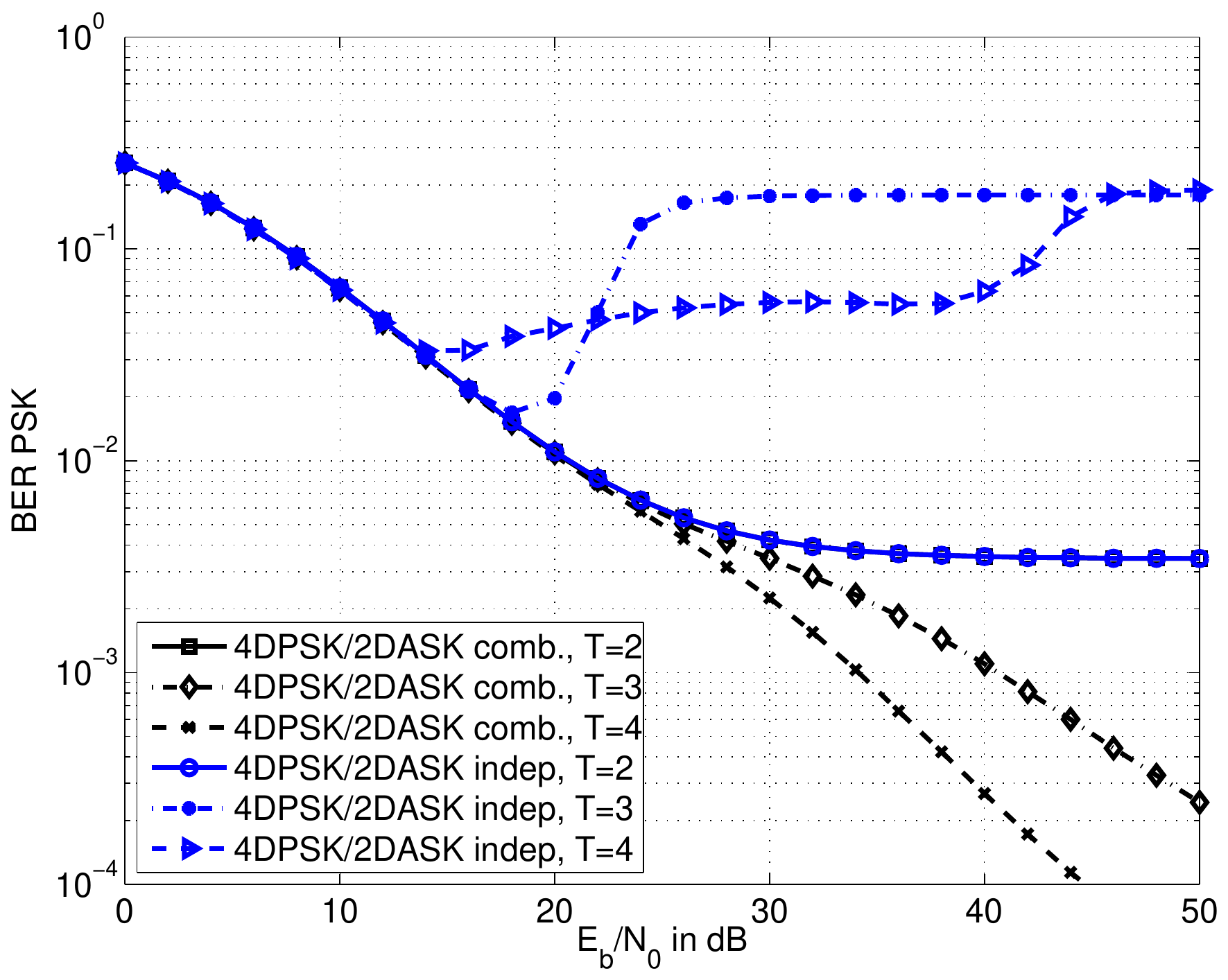}}
  \subfloat[DASK performance]{\label{subfig:8DAPSK_RFF_fdTs002_MS_MLindep_vs_combined_BER_ASK}\includegraphics[width=0.49\textwidth]{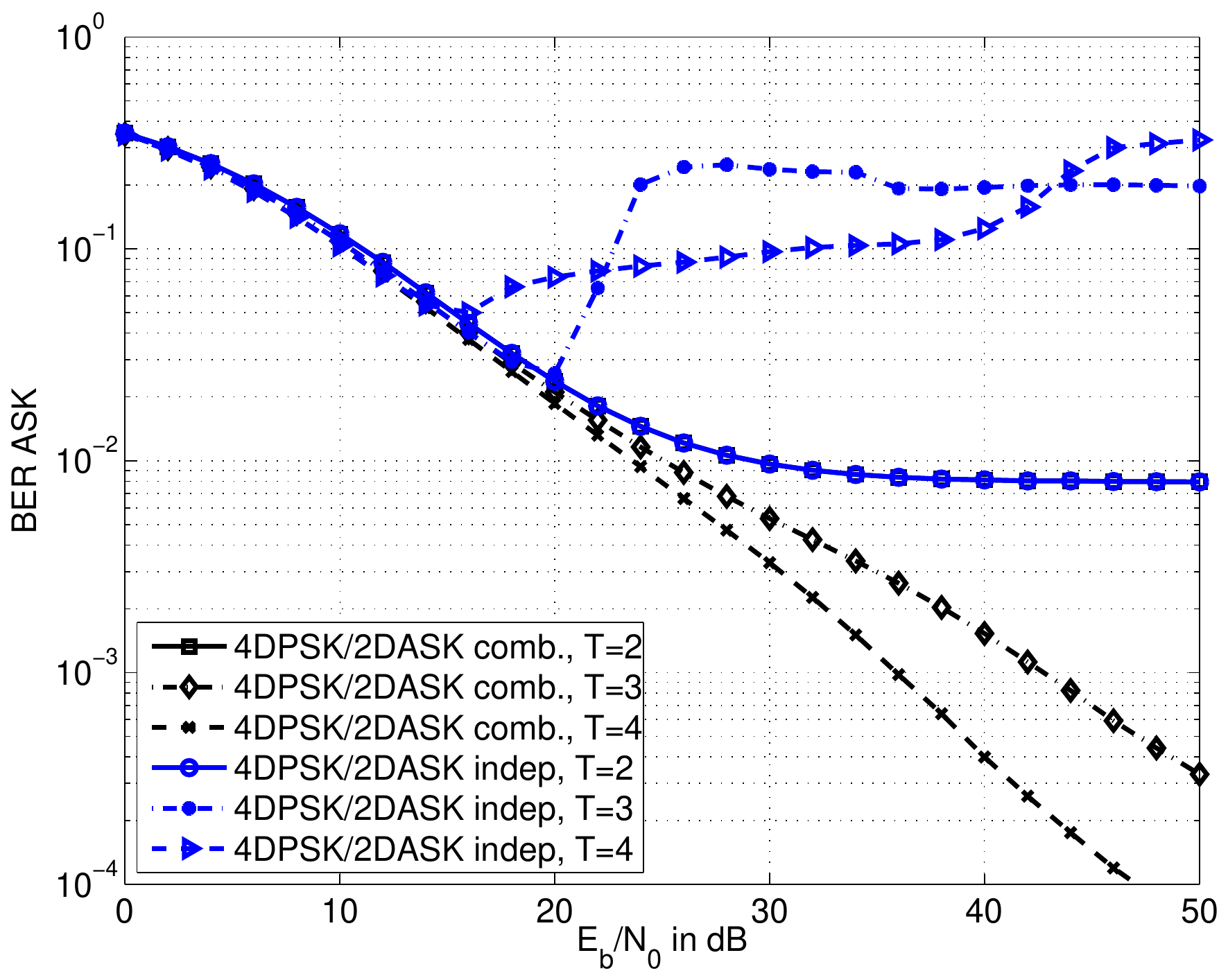}}\\
\subfloat[Total DAPSK performance]{\label{subfig:8DAPSK_RFF_fdTs002_MS_MLindep_vs_combined_BERtotal}\includegraphics[width=0.49\textwidth]{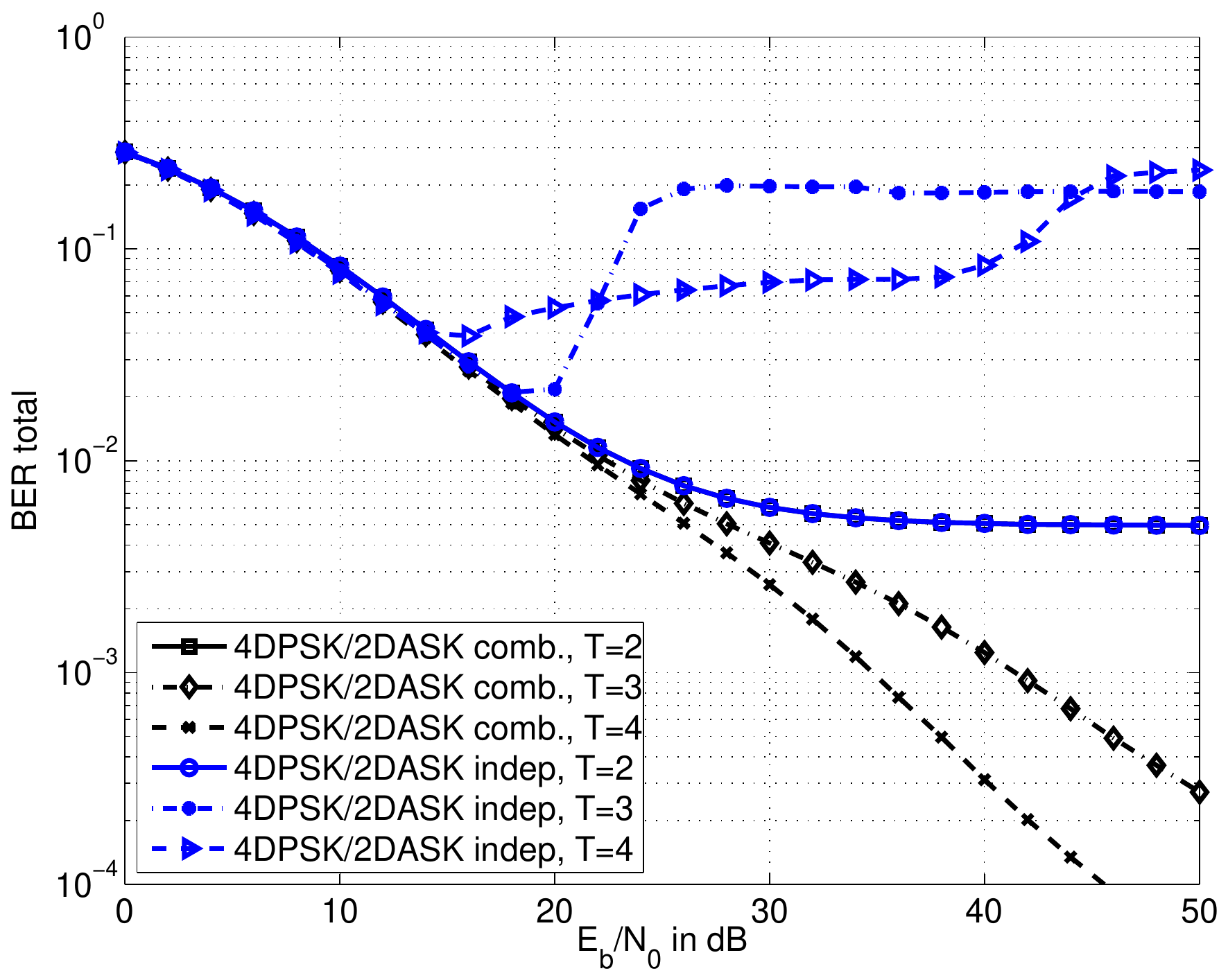}}
  \caption{BER performance of DAPSK using independent ML metric over a Rayleigh fast fading channel at $f_DT_s\eq0.02$.}
  \label{fig:RFF_indep}
\end{figure}

Figure  \ref{fig:RFF_combined} shows the BER curves of DPSK and DAPSK using an ML receiver over a fast varying channel with a normalized Doppler shift $f_DT_s$ of 0.02. Clearly MSDD removes the error floor caused by the channel variation. A main reason for the performance improvement of  MSDD compared to conventional differential detection in the fast fading case is that MSDD exploits the correlation between the channel distortions suffered by the successive symbols within one observation block \cite{Fung92}. The gain is still significant going from $T\eq3$ to $T\eq4$. This effect is incurred for all modulation orders and is shown here for two orders. Higher Doppler shifts for up to $f_DT_s\eq0.05$ have shown the same MSDD performance effect. \\

\begin{figure}[!htp]
 \centering
  \subfloat[16-DPSK and 16-DAPSK]{\label{subfig:16_DAPSK_RFF_fdTs_0_02}\includegraphics[width=0.49\textwidth]{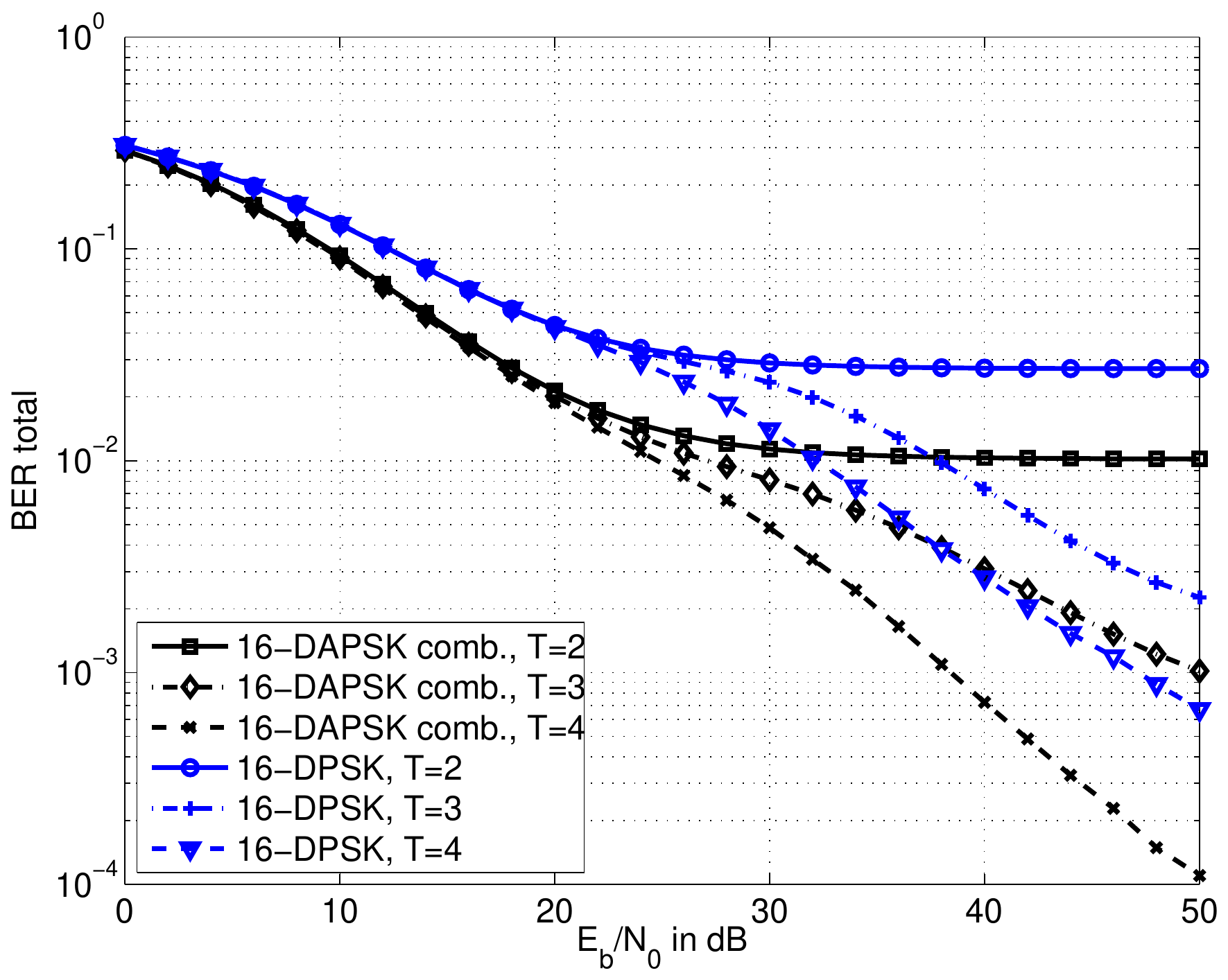}}
  \subfloat[32-DPSK and 32-DAPSK]{\label{subfig:32_DAPSK_RFF_fdTs_0_02}\includegraphics[width=0.49\textwidth]{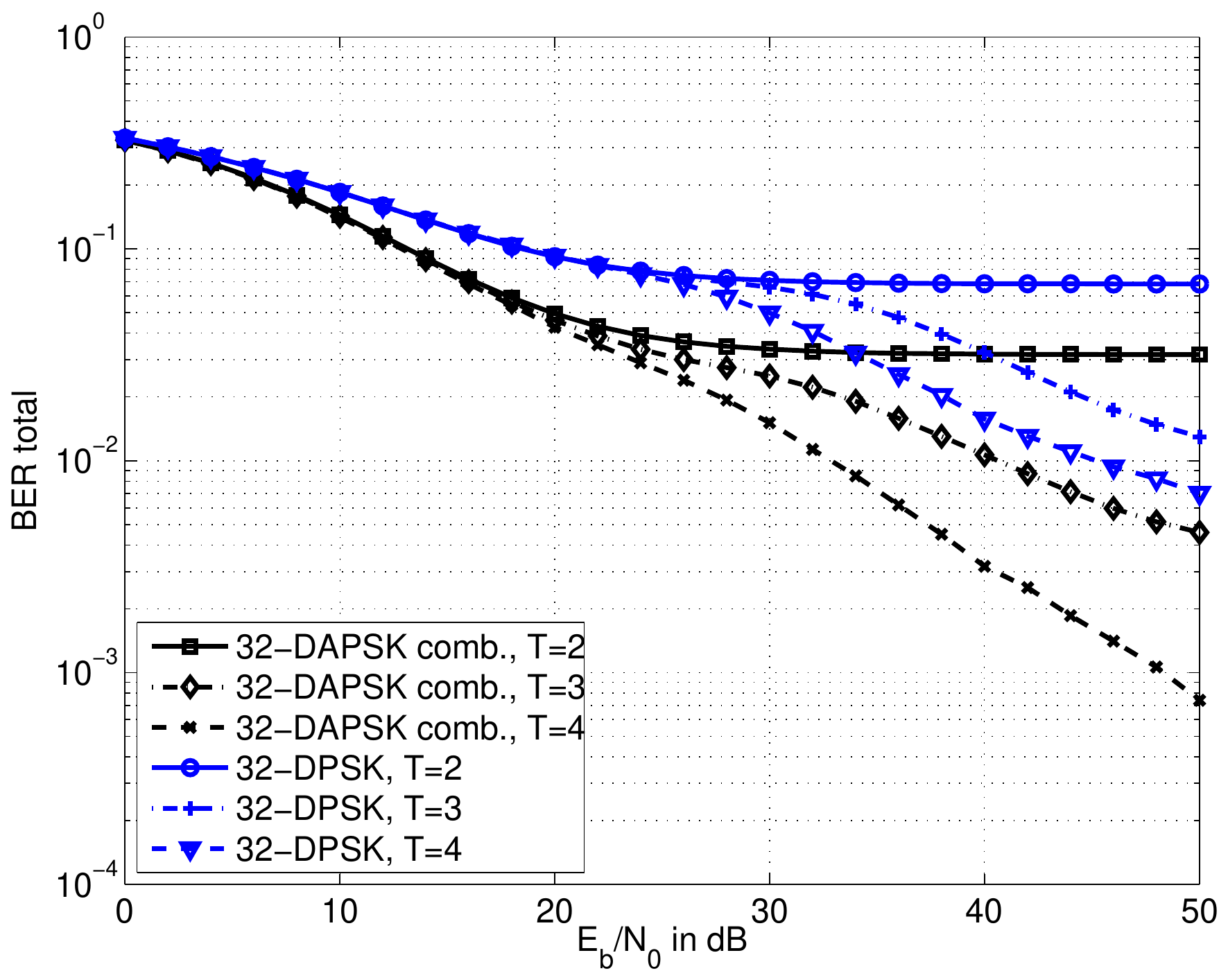}}
  \caption{BER performance of DPSK and DAPSK using combined ML metric over a Rayleigh fast fading channel at $f_DT_s\eq0.02$.}
  \label{fig:RFF_combined}
\end{figure}

In conclusion, this chapter has shown that the use of amplitude modulation in differential non-coherent systems through DAPSK improves the error rate performance compared to DPSK for alphabet size above 8. MSDD proved to remove the error floor associated with fast fading channels, and has also shown slight improvement in the AWGN channel, but almost no improvement in the Rayleigh block fading channel. The rest of the thesis will show the extension of differential modulation to systems employing more than one antenna at the transmitter and/or the receiver side to allow a higher degree of freedom for a potential performance improvement.

\abbrev{BER}{Bit Error Rate}
\abbrev{DPSK}{Differential Phase Shift Keying}
\abbrev{DASK}{Differential Amplitude Shift Keying}
\abbrev{DAPSK}{Differential Amplitude Phase Shift Keying}
\abbrev{MSDD}{Multiple-Symbol Differential Detection}
\abbrev{ML}{Maximum Likelihood}
\abbrev{GLRT}{Generalized Likelihood Ratio Test}
\abbrev{AWGN}{Additive White Gaussian Noise}
\abbrev{PDF}{Probability Density Function}
\abbrev{SNR}{Signal to Noise Ratio}
\abbrev{LOS}{Line of Sight}
\abbrev{RBF}{Rayleigh Block Fading}
\abbrev{RFF}{Rayleigh Fast Fading}
\abbrev{BER}{Bit Error Rate}
\abbrev{MLE}{Maximum Likelihood Estimate}
	


\chapter{Diversity Techniques in MIMO Systems}
\label{chap:Diversity_MIMO}

Several challenges face the progress of wireless communications
including the limited available radio spectrum as well as the complicated nature
of the wireless environment. On the
other hand, --as a natural progress of the technology-- several demands are
expected to be fulfilled in the near future, including higher data rates and
better Quality of Service (QoS). One way to face such challenges and meet the
current demands is to exploit more Degrees of Freedom (DoF) in the available
resources. The use of multiple antennas is a promising solution to the addressed
problem. Such antennas create several links between the transmitter (Tx) and the
receiver (Rx), resulting in performance improvement without the need of extra
bandwidth nor power  \cite{PNG2003}.\\

In Chapter \ref{chap:DPSK_DAPSK} we introduced differential modulation schemes
in single antenna systems, or so-called Single Input Single Output (SISO)
systems. In the rest of the thesis, the differential framework is extended to
multiple antenna systems. A system with multiple antennas at the Tx and the Rx
is referred to as a MIMO (Multiple Input Multiple Output) system. There exists
two types of gains that a MIMO system can provide, namely multiplexing gain and
diversity gain \cite{Zheng2003}. By multiplexing gain, we mean the increase in
the data rate achieved by utilizing the different spatial links to transmit
independent information streams in parallel. Whereas diversity gain is the gain achieved when the
receiver is provided with multiple independent copies of the same transmitted signal resulting in a
more reliable communication \cite{Bolcskei_Paulraj_2000}. With multiple antennas, this is possible by carrying the
same information over different independent antenna links. Studies in \cite{Zheng2003, Zheng2002}
by Zheng and Tse have shown that, there exists a fundamental trade off between both gains. In other
words, achieving higher diversity to combat fading comes at the price of
achieving lower spatial multiplexing, or equivalently lower data rate, and vice
versa.\\

When the goal is to achieve only multiplexing gain, then the main question is
how fast data can be transmitted, or equivalently what are the theoretical
capacity limits offered by a MIMO system. In \cite{Foschini98onlimits}, Foschini
proved the capacity limits in a Rayleigh fading channel that is known to the
receiver. However, when the goal is to combat fading in order to improve the transmission quality, then diversity gain is the advantage of interest.\\

There exist different types of diversity based on the communication resource
used for the repeated transmission. The same symbol can be transmitted over
several uncorrelated time slots, uncorrelated frequency bands, or independent
spatial paths, leading to time diversity, frequency diversity and spatial
diversity, respectively. The different portions of the used resource over which
the same symbol is transmitted can be defined as the diversity elements. In all
diversity types, the condition for achieving diversity is to ensure that the
diversity elements suffer independent fading, this reduces the chances of all
copies suffering a deep fade \cite{Janaswamy2001}. In MIMO systems, spatial
diversity is achieved with the diversity elements being multiple independent
links between the Tx and the Rx. Such links are known as diversity branches.
Unlike other diversity types, spatial diversity does not sacrifice bandwidth nor
power, the two most precious resources in wireless communications
\cite{PNG2003}. This makes MIMO systems attractive for next generation wireless
communications.\\

In this thesis, the main concern is on utilizing the spatial links of a MIMO
system to achieve diversity gain and hence improve the reliability of the
transmission. In this chapter, we first define the
channel model of a MIMO system in Section \ref{s:MIMO-channel_model}. Then
Section \ref{s:Rx-diversity} proceeds by explaining the diversity achieved by a
system that employs a single antenna at the transmitter and multiple antennas at
the receiver, the so-called SIMO (Single Input Multiple Output) system. The last
section introduces the possibility of achieving diversity in a MISO (Multiple
Input Single Output) system, where only the transmitter has multiple antennas.\\
 \section{MIMO Channel Model}
\label{s:MIMO-channel_model}
Consider a wireless communication system comprising $M$ antennas at the Tx and
$N$ antennas at the Rx (referred to as an $M\!\times\!N$ system) as shown in
Figure \ref{fig:MIMO_model}. Each Tx-Rx antenna pair is connected via a single non-line-of-sight
Rayleigh flat fading path. The different links are assumed to
fade independently from one another, and in general each link fades in a
time-varying manner. At each time slot, (also known as channel use) $M$ symbols
are emitted from the $M$ transmit antennas, transmitted over the different links
to all $N$ receive antennas, and finally corrupted by independent AWGN noise
added at each receive antenna. Such a system can be mathematically modeled by
\begin{equation}
 \left. y_{tn}=\sqrt{\rho\,}\sum \limits_{m=1}^{M}{s_{tm}h_{mn}^t+w_{tn}}
 \right. \hspace{1cm}\begin{aligned} &t=0,1,...\\
  &n=1,...,N
\end{aligned}	
\label{eq:MIMO_model_scalar}   
\end{equation}
where at time slot $t$, $y_{tn}$ is the received signal at receive antenna $n$,
$s_{tm}$ denotes the normalized symbol emitted by antenna $m$, and $h_{mn}^t$
represents the complex fading coefficient (or impulse response) between the
$m^{\text{th}}$ transmit antenna and the $n^{\text{th}}$ receive antenna. All
channel coefficients are $\mathcal{CN}(0,1)$ distributed and are assumed to be
statistically independent with respect to $m$ and $n$, $\forall$  $m=1,...,M$ and
$ n=1,...,N$. They are however in general \emph{dependent} with respect to $t$.
That means each channel impulse response in general varies with time in a
correlated manner. The received signal at receive antenna $n$ is then corrupted
by AWGN noise sample $w_{tn}$. All noise samples are drawn from a stochastic
Gaussian process with zero mean and unit power spectral density, i.e. $w_{tn}
\thicksim\mathcal{CN}(0,1)$ and are statistically independent among all receive
antennas and with time, i.e. with respect to (w.r.t.) $n$ and $t$. \\
\begin{figure}[!h]
\centering
\scalebox{0.8}{\input{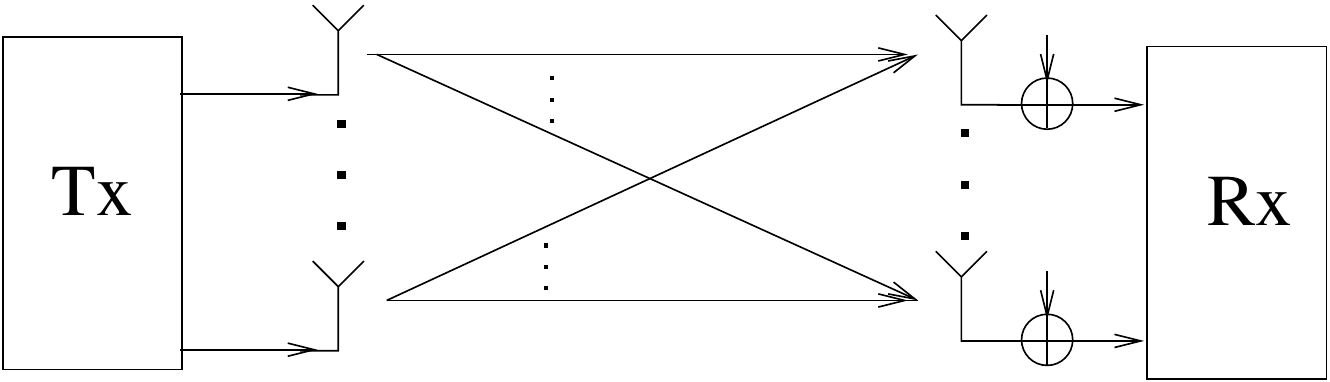_t}}
\caption{MIMO system model}
\label{fig:MIMO_model}
\end{figure}

The transmitted symbols $s_{tm}$ are normalized in a way to make the average
total symbol power transmitted from all $M$ antennas to be unity, i.e.
\begin{equation}
\E[\sum\limits_{m=1}^{M}{|s_{tm}|^2}]=1\:\:\forall \:t
\label{eq:power_st}
\end{equation}
this ensures that the total transmitted power per time slot is constant
regardless the number of transmit antennas $M$. This is necessary for fair
comparison between systems with different $M$. Furthermore the average channel
power is normalized to 1, i.e. $\E[|h_{mn}^t|^2]=1$ $\forall\, m,n,t$.
With the above normalizations of the symbol power, the channel power, and the
noise power spectral density, $\rho\,$ represents the average SNR at each
receive antenna per time slot \cite{Hochwald2000}.\\

Unless mentioned otherwise, the channel model considered in this work assumes
the channel coefficients $h_{mn}^t$ to be constant over one time frame, and varies
independently from one frame to the next. The frame length is on the order of
100 to 300 symbols. That is in the MIMO case, we assume the channel to be Rayleigh block fading, which is also known as quasi-static Rayleigh fading or
piece-wise constant fading.
\section{Receive Diversity}
\label{s:Rx-diversity}

In uplink\footnote{In mobile communications, uplink refers to the transmission
from the mobile station to the base station (mobile is Tx), while downlink
refers to the transmission from the base station to the mobile station (mobile
is Rx).} mobile communications, it is important to reduce the power consumption
of the mobile handsets in order to increase the battery lifetime. At the same
time, the transmission quality should not deteriorate. Using multiple receive
antennas in the base stations of mobile communications systems is a simple and efficient solution to compensate
for the low power transmission of the mobile stations by providing spatial
receive diversity \cite{Tarokh_STTC_98}. This section explains receive diversity
in both coherent and non-coherent systems.\\

Consider the channel model defined in Section \ref{s:MIMO-channel_model}, with
one transmit antenna ($M=1$) and N receive antennas. Assuming a quasi-static Rayleigh fading, (\ref{eq:MIMO_model_scalar}) reduces in the SIMO case to
\begin{equation}
 \left. y_{tn}=\sqrt{\rho\,}s_{t}h_{n}+w_{tn}
 \right. \hspace{1cm}\begin{aligned} &t=0,1,...\\
  &n=1,...,N
\end{aligned}	
\label{eq:SIMO_model_scalar}
\end{equation}
which can be written in a vector form as
\begin{equation}
 \left.\V{y}_t=\sqrt{\rho\,}s_{t}\V{h}+\V{w}_t
 \right. \hspace{1cm}t=0,1,...
\label{eq:SIMO_model_vector}
\end{equation}
where $\V{y}_t$, $\V{h}$, and $\V{w}_t$ are all $\in\mathbb{C}^{1\!\times\!
N}$ representing the vector of the received signals, the channel coefficients
and the noise samples across the $N$ receive antennas, respectively.\\

Consider first the case of coherent detection, where perfect channel knowledge
is assumed to be available at the receiver. At any time slot $t$, the $N$
different received signals $y_{tn}, \,n=1,...,N$, can be combined to make a more
reliable decision on the transmit symbol $s_t$. The optimal way to combine those
signals is by using a technique known as Maximum Ratio Combining (MRC) (also
known as Channel Matched Filter (CMF)). This technique was proved to maximize
the SNR at the MRC output. It performs a weighted sum of all received signals,
with the weights being the complex conjugate of the estimated channel
coefficients. Assuming perfect channel estimation, the combined signal
$\tilde{s_t}$ is 
\begin{eqnarray}
\tilde{s_t}&=&\V{y}_t\V{h}\hr=\sum \limits_{n=1}^{N}{h_{n}^*y_{tn}}=\sum
\limits_{n=1}^{N}{h_{n}^*(\sqrt{\rho\,}s_th_{n}+w_{tn})} \nonumber \\
&=&\sum\limits_{n=1}^{N}{\sqrt{\rho\,}s_t|h_{n}|^2+h_{n}^*w_{tn}}=\sqrt{\rho\,
}s_t\|\V{h}\|^2+\V{w}_t\V{h}\hr.
\label{eq:SIMO_coherent_estimate}
\end{eqnarray}
where $\tilde{s_t}$ serves as an estimate value of $s_t$. In the simple case of
PSK transmission with modulation alphabet $\mathcal{A_{\text{PSK}}}$, the ML
decision metric is based on the Euclidean distance between $\tilde{s_t}$ and the
symbol candidate $s_t^i$ \cite{Alamouti98}
\begin{equation}
\hat{s_t}=\argmin\limits_{s_t^i\in\mathcal{A_{\text{PSK}}}}
\:\,d_{Eu}^2(s_t^i,\tilde{s_t})
\label{eq:ML_metric_SIMO_coherent}
\end{equation}
As seen in (\ref{eq:SIMO_coherent_estimate}), the estimate $\tilde{s_t}$
contains the transmit symbol $s_t$ weighted by the channel power $\|\V{h}\|^2$
which is always positive. Hence the signal contributions from the different
paths superimpose constructively at the receiver. Whereas the noise samples in
$\V{w}_t$ are weighted by the corresponding channel coefficients in $\V{h}$.
Since the channel and the noise processes are independent, they in general add
destructively causing the noise to average out. Consequently, the signal power
is boosted and the noise power is reduced, resulting in a high SNR at the MRC
output.\\

Using similar arguments, receive diversity can be achieved with non-coherent
detection assuming no Channel State Information (CSI) at the Tx nor at the Rx.
To start with, consider the simple case of DPSK transmission, where the transmit
symbol $s_t$ is the product of the new information symbol $v_t$ and the previous
transmit symbol $s_{t-1}$, i.e.
\begin{equation}
s_t=v_ts_{t-1}.
\label{eq:fundamental_Diff_Transmission}	
\end{equation}
Assuming a quasi-static Rayleigh fading channel and using
(\ref{eq:SIMO_model_vector}), the received signals at time slots $t-1$ and $t$
over all receive antennas can be written as
\begin{eqnarray}
\V{y}_{t-1}=\sqrt{\rho\,}s_{t-1}\V{h}+\V{w}_{t-1}
\label{eq:y_t_1}\\
\V{y}_{t}=\sqrt{\rho\,}s_{t}\V{h}+\V{w}_{t-1},
\label{eq:y_t}
\end{eqnarray}
where all vectors are $\in\mathbb{C}^{1\!\times\! N}$. Using
(\ref{eq:fundamental_Diff_Transmission}) in (\ref{eq:y_t}),  and multiplying
(\ref{eq:y_t_1}) by $v_t$, we get
\begin{eqnarray*}
v_t\V{y}_{t-1}&=&\sqrt{\rho\,}v_ts_{t-1}\V{h}+v_t\V{w}_{t-1}\\
\V{y}_{t}&=&\sqrt{\rho\,}v_ts_{t-1}\V{h}+\V{w}_{t}.
\end{eqnarray*}
By subtracting the above two equations, the unknown channel vector $\V{h}$
vanishes and we get
\begin{equation}
v_t\V{y}_{t-1}-\V{y}_{t}= v_t\V{w}_{t-1}-\V{w}_t=-\sqrt{2}\V{w}'
\label{eq:PSK_Diff_Rx}
\end{equation}
which can be rewritten as
\begin{equation}
\V{y}_{t}= v_t\V{y}_{t-1} +\sqrt{2}\V{w}'.
\label{eq:PSK_Diff_Rx_2}
\end{equation}\\
Useful insights can be extracted from the last equation. Comparing
(\ref{eq:PSK_Diff_Rx_2}) with (\ref{eq:SIMO_model_vector}), we see that a
differential non-coherent system can be viewed as a coherent system where the information
symbol $v_t$ is transmitted over the known channel vector $\V{y}_{t-1}$ and
corrupted by an AWGN noise vector $\sqrt{2}\V{w}'$, where
$\V{w}'\V{\sim}\mathcal{CN}(\V{0},\V{I}_N)$. In other words, in differential detection,
the previously received signal vector $\V{y}_{t-1}$ is used as an estimate of
the scaled channel vector $\sqrt{\rho\,}\V{h}$. The factor $\sqrt{2}$ in the
noise term is used to emphasize that the resulting noise has twice as much
variance compared to the noise vectors $\V{w}_t$ or $\V{w}_{t-1}$. One may see
that with non-coherent detection, the noise from two time slots contribute to
the effective noise suffered by the information symbol. This corresponds to the
well-known $\unit[3]{dB}$ performance degradation in SNR as the price payed by
non-coherent detection compared to coherent detection \cite{Hochwald2000}. \\

In order to see how receive diversity can be achieved with non-coherent
detection, the ML decision metric is proved in the following. Using the above
described analogy with the coherent case, an ML decision on the transmit symbol
$v_t$ is the one that has the minimum power of the noise term in
(\ref{eq:PSK_Diff_Rx_2}) resulting in the following metric\footnote{Note that in
general, the ML decision maximizes the conditional PDF of the received signal
given a candidate symbol was transmitted. However the above simpler approach
also results in ML decision in the case of PSK modulated symbols. The
equivalence of both approaches in the constant-envelope case is proved in the discussion of (\ref{eq:x_hat_non_unitary}).}
\begin{equation*}
\hat{v_t}=\argmin\limits_{v^i\in\mathcal{A_{\text{PSK}}}}
\:\,\|v^i\V{y}_{t-1}-\V{y}_t\|_F^2=\argmin\limits_{v^i\in\mathcal{A_{\text{PSK}}}} \:\,
\tr\{(v^i\V{y}_{t-1}-\V{y}_t)(v^i\V{y}_{t-1}-\V{y}_t)\hr\}
\end{equation*}
\begin{eqnarray}
\hat{v_t}&=& \argmin\limits_{v^i\in\mathcal{A_{\text{PSK}}}} \:\,
\tr\{\underbrace{|v^i|^2\V{y}_{t-1}\V{y}_{t-1}\hr}_{\text{constant }\forall\,
v^i}-2\Re\{\tr\{v^i\underbrace{\V{y}_{t-1}\V{y}_t\hr}_{\tilde{v_t}}\}\}+\underbrace{\V
{y}_t\V{y}_t\hr}_{\text{independent of }v^i}\}\nonumber \\
&=& \argmax\limits_{v^i\in\mathcal{A_{\text{PSK}}}} \:\,
\Re\{v^i\tilde{v_t}\}
\label{eq:SIMO_PSK_ML_metric}
\end{eqnarray}
where we have used the fact that all symbols have the same amplitude, making
$|v^i|$ constant $\forall\,i$ and therefore can be omitted from the metric.
Some properties of the trace operator have been used. They are summarized in
Appendix \ref{App:math_formulas}, Section \ref{s:trace_properties}.
$\tilde{v_t}$ serves as an estimate of $v_t$. Using (\ref{eq:y_t_1}) and
(\ref{eq:y_t}), the estimate $\tilde{v_t}$ can be expanded to see the parameters
upon which it is based;
\begin{eqnarray}
\tilde{v_t}&=&\V{y}_{t-1}\V{y}_t\hr=(\sqrt{\rho\,}s_{t-1}\V{h}+\V{w}_{t-1})(\sqrt{\rho\,}v_ts_{t-1}\V{h}+\V{w}_{t
})\hr \nonumber \\
&=& \rho\, v_t^* |s_{t-1}|^2 \|\V{h}\|^2
+\underbrace{\sqrt{\rho\,}s_{t-1}\V{h}\V{w}_t\hr+\sqrt{\rho\,}v_t^*s_{t-1}^*\V{w
}_{t-1}\V{h}\hr+\V{w}_{t-1}\V{w}_t\hr}_{\text{noise terms with uncorrelated
products}} 
\label{eq:v_t_estimate}
\end{eqnarray}
Analogous to (\ref{eq:SIMO_coherent_estimate}), the estimate $\tilde{v_t}$
comprises a constructive addition of weighted copies of the information symbol $v_t$
with the weights being a scaled version of the channel coefficients' power of
the $N$ diversity branches. The rest of the terms in $\tilde{v_t}$ are all noise
terms. Each term is a product of uncorrelated quantities (the noise
$\V{w}_{t-1}$ and $\V{w}_t$ are uncorrelated with the channel $\V{h}$, and the
noise samples are also uncorrelated in the time dimension, i.e. $\V{w}_{t-1}$
and $\V{w}_t$ are uncorrelated). This is to say that the noise averages out.
Accordingly, the SNR at the receiver output is increased and receive diversity
is achieved.\\

Figure \ref{fig:SIMO} show the BER curves of the addressed SIMO diversity scheme with
coherent and non-coherent detection. The channel is assumed to be
quasi-static Rayleigh flat fading and the modulation alphabet considered is QPSK
(Quadrature PSK). In the coherent case, perfect channel knowledge is assumed at
the receiver. As shown in the figures, there is a clear gain achieved with
receive diversity. Note that the \emph{increase} in the gain is reduced as the
number of receive antennas increase. For example using non-coherent detection
and at a BER of $10^{-2}$, a system with two receive antennas achieves about
$\unit[8]{dB}$ gain in SNR compared to the SISO system, whereas the gain obtained
going from a $1\!\times\!2$ system to a $1\!\times\!3$ system is only $\unit[3.5]{dB}$.
 Figure \ref{fig:SIMO_coherent_vs_noncoherent} compares the error performance of coherent versus non-coherent reception
for the $1\!\times\!2$  and the $1\!\times\!3$ systems. As expected a $\unit[3]{dB}$ performance
degradation is incurred in the non-coherent case.

\begin{figure}[!ht]
 \centering
  \subfloat[SIMO with coherent detection]{\label{subfig:SIMO_coherent}\includegraphics[width=0.48\textwidth]{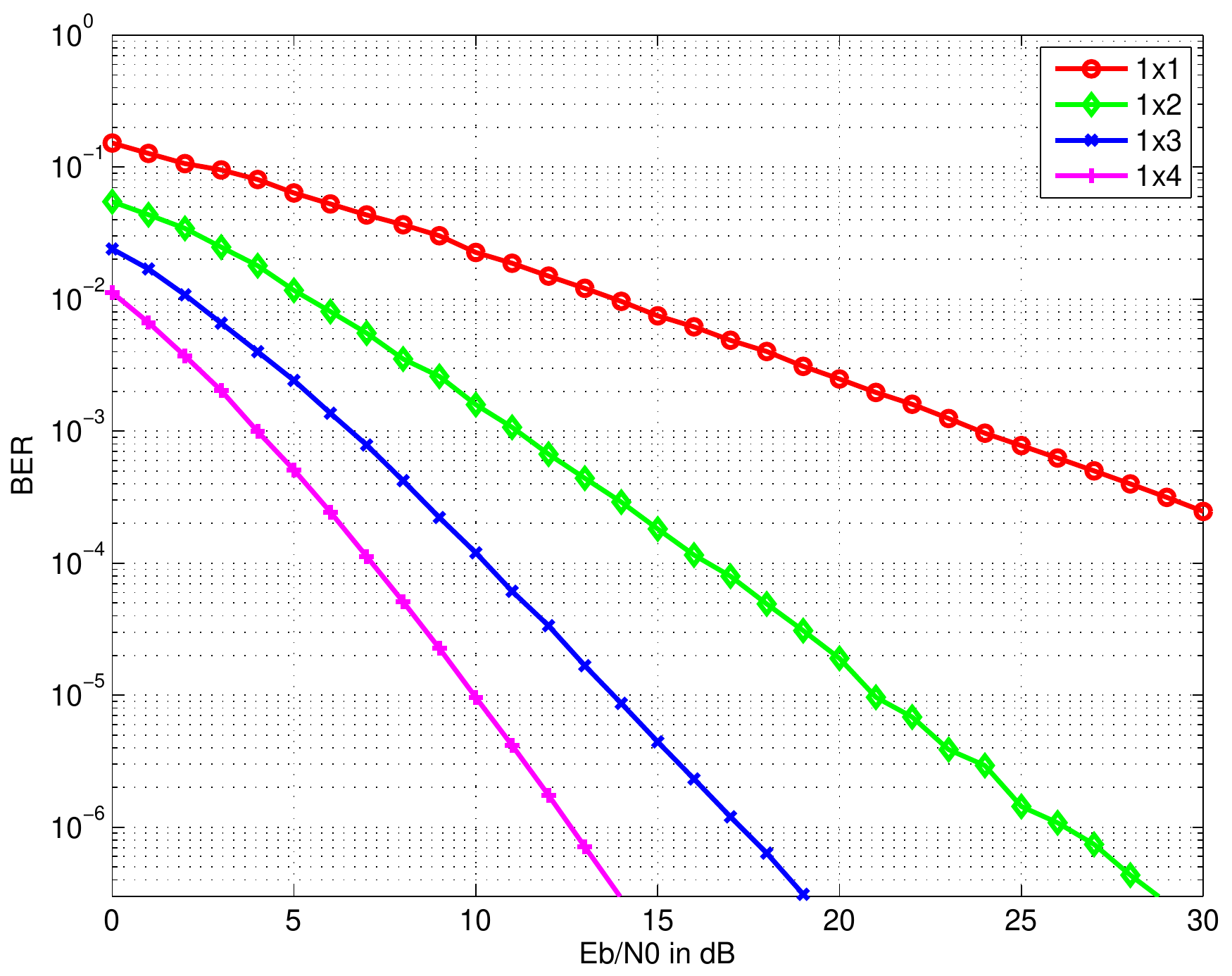}}
  \subfloat[SIMO with non-coherent detection]{\label{subfig:SIMO_non_coherent}\includegraphics[width=0.48\textwidth]{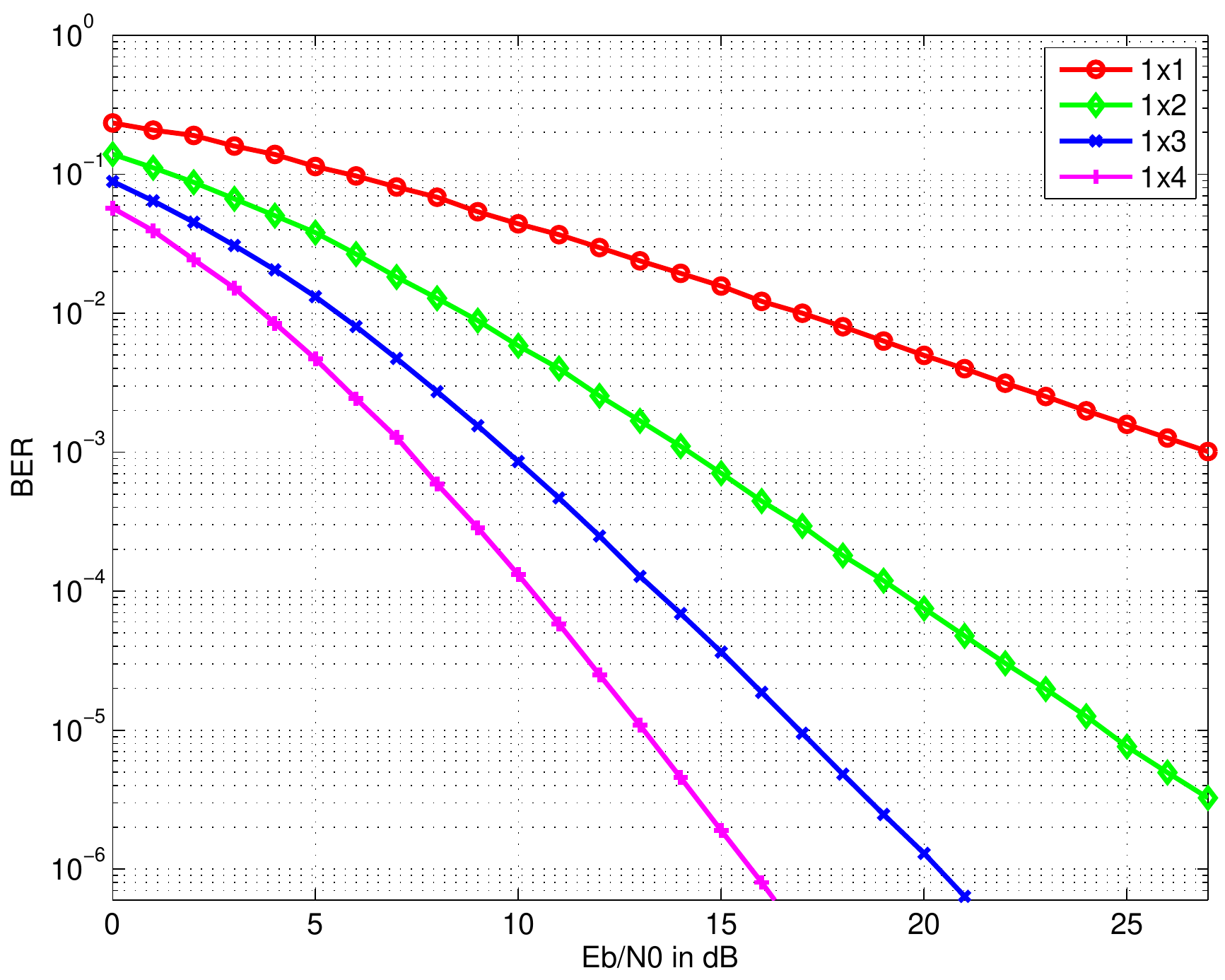}}\\
  \subfloat[Comparing coherent reception vs. differential non-coherent in SIMO systems]{\label{fig:SIMO_coherent_vs_noncoherent}\includegraphics[width=0.48\textwidth]{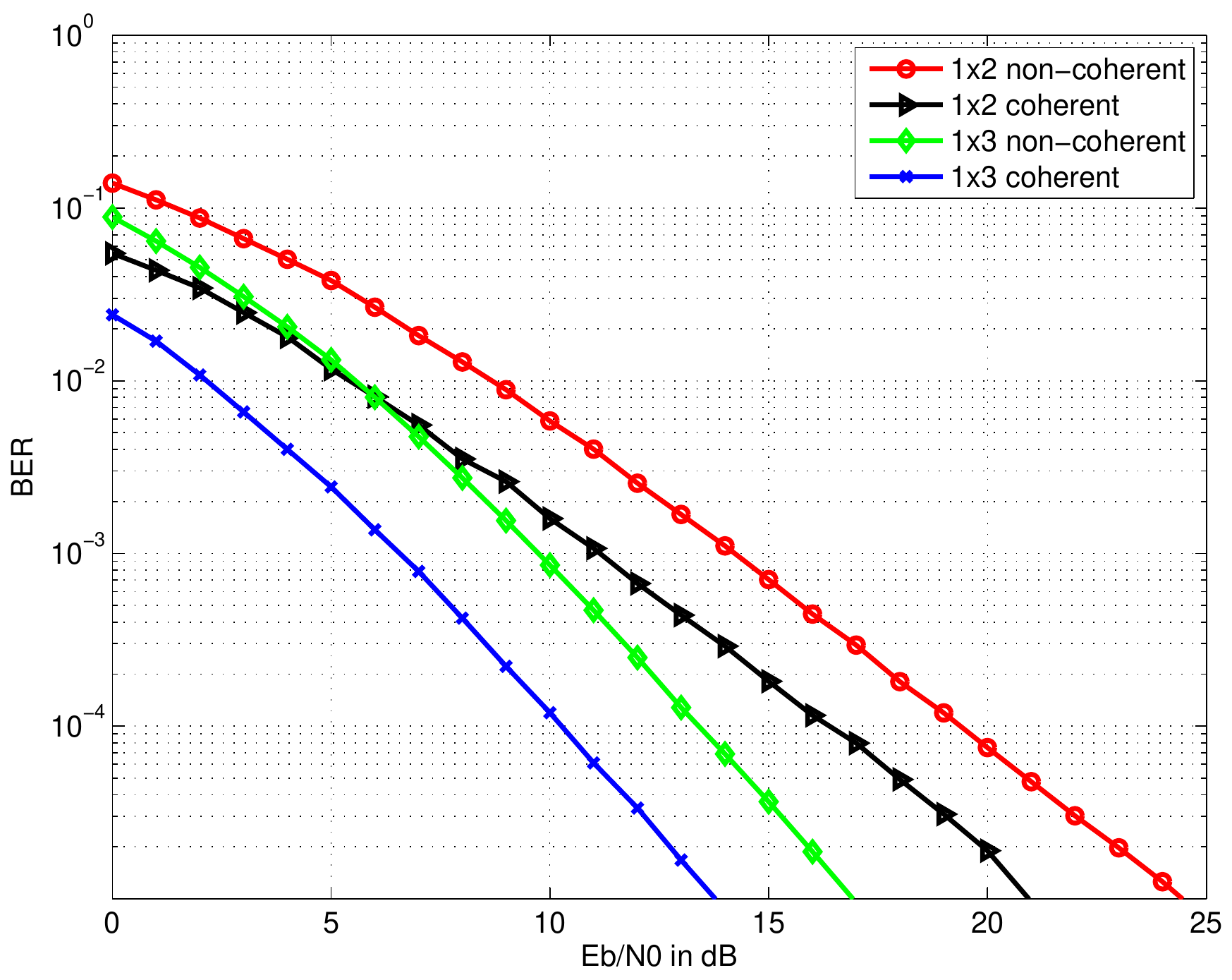}}
  \caption{Effect of receive diversity on the error performance in fading channels using 4-PSK symbols}
  \label{fig:SIMO}
\end{figure}
\section{Transmit Diversity}
\label{s:Tx-diversity}
Although receive diversity looks simple and with low complexity receivers, yet
it can be practically not suitable when the receiver is a mobile station. In
downlink mobile communications, mounting multiple antennas on the mobile handset
results in an increase in size and cost of the mobiles. This opposes the
ongoing desire in making mobiles as low profile and cheap as possible. Even
when using small-size antennas, the spacing between the different antennas
should be sufficiently large to ensure independently fading spatial branches and
therefore ensure diversity. As a rule of thumb, the minimum antenna spacing to
achieve this condition is $\lambda/2$, with $\lambda$ being the center
wavelength of the transmission. In a system operating at a \unit[1]{GHz} center
frequency, $\lambda/2$ translates to \unit[15]{cm} which is about double the
size of the current average-sized mobile sets. Furthermore, since there are
thousands more mobile stations than base stations, it is economically better to
place the complexity in the base stations and make the mobile terminals as
simple as possible \cite{Alamouti98}. All the addressed impracticalities in
using receive diversity in mobile downlink communications motivates the need of
transmit diversity.\\

Consider a MISO system with $M$ transmit antennas following the model described
in Section \ref{s:MIMO-channel_model} with one receive antenna. As an initial
guess of achieving transmit diversity, one might think of transmitting the same
symbol over all transmit antennas simultaneously. In this case the received
signal can be written as 
\begin{equation}
y_t=s_t\sum\limits_{m=1}^{M}{h_{tm}+w_t}
\label{eq:MISO_y_t}
\end{equation}
where $y_t$ is the only signal the receiver can use for detection. Unlike in
receive diversity, --where the receiver posses $N$ independently received
versions of the transmitted symbol $s_t$ and combines them to achieve
diversity-- in the transmit approach in question, it is the role of the channel
to combine the $M$ transmitted replicas of the symbol $s_t$ resulting in $y_t$.
By examining $y_t$ in (\ref{eq:MISO_y_t}), it is seen that the $M$ symbol
replicas are weighted by the corresponding $M$ channel coefficients and not by
the channel power as is the case with the estimate in Rx diversity. This indicates that the
different copies of the same symbol may add destructively, hence diversity is
not ensured. This can be viewed as an intentionally created multipath channel
where the symbol $s_t$ traverses $M$ paths to the receiver, leading to the
possibility of destructive interference.\\

In conclusion, achieving transmit diversity is not as easy as repeating the
transmission of the same symbol over multiple antennas, it however requires some
more complicated structure of the transmitted signal that guarantees diversity.
More formally, the transmitter should preprocess or precode the signal prior
to transmission to provide the receiver with \emph{independent} replicas of the
same transmit symbol \cite{PNG2003}. The previously investigated naive technique
failed to achieve transmit diversity using \emph{only} the spatial dimension.
This indicates that the spatial dimension alone is not sufficient, and another
dimension is needed for transmit diversity to be achieved. Either the frequency
dimension or the time dimension, together with the spatial dimension, can be
used for this purpose leading to schemes known as Space-Frequency Coding (SFC)
and Space-Time Coding (STC), respectively. \\

The choice of which space coding technique to use depends primarily on the
channel conditions. In multipath environments where the channel is frequency
selective, the delay spread of the channel can be exploited to achieve frequency
diversity \cite{PNG2003}. In such channels a common and simple way to counteract
impairments like Inter-Symbol Interference (ISI) is the use of multi-carrier
modulation such as Orthogonal Frequency Division Multiplexing (OFDM) techniques.
If the subcarriers (tones) are separated by more than the channel coherence
bandwidth, then the different bands experience independent fading and therefore
can be used together with multiple antennas to achieve frequency and spatial
diversity using space-frequency codes \cite{PNG2003,Bolcskei_Paulraj_2000}.\\

In the case of narrowband flat fading channels, single carrier systems are used.
In this case, there is no diversity in the frequency domain so the time domain
can be used instead. When multiple antennas are employed, spatial and temporal
diversity can be achieved using space-time codes. This thesis is concerned with
single carrier systems in flat fading channels, hence space-time coding is the
transmit diversity of interest. It is the role of the next chapter to lay the
foundation of space-time codes and present the literature review covered in this
regard.
\abbrev{Tx}{Transmitter}
\abbrev{Rx}{Receiver}
\abbrev{MIMO}{Multiple Input Multiple Output}
\abbrev{SIMO}{Single Input Multiple Output}
\abbrev{MISO}{Multiple Input Single Output}
\abbrev{QoS}{Quality of Service}
\abbrev{DoF}{Degrees of Freedom}
\abbrev{MRC}{Maximum Ratio Combining}
\abbrev{CMF}{Channel Matched Filer}
\abbrev{QPSK}{Quadrature Phase Shift Keying}
\abbrev{w.r.t.}{with respect to}
\abbrev{STC}{Space-Time Coding}
\abbrev{ISI}{Intersymbol Interference}
\abbrev{OFDM}{Orthogonal Frequency Division Multiplexing}

\chapter{Space-Time Coding}
\label{chap:STC}
In Chapter \ref{chap:Diversity_MIMO}, the possibility of achieving transmit diversity has been introduced. Space-time coding is considered as one of the most popular transmit diversity techniques in single carrier systems. This chapter starts with a literature review on space time codes in Section \ref{s:STC_Literature_Review}. The structure of ST codes is then defined in Section \ref{s:Code_Structure}. Next, the general ML decision metric for non-coherent detection of ST codes is derived in Section \ref{s:ML_STC_metric}. The last section derives the design criteria for ST codes in differential non-coherent systems and defines the notion of diversity order and coding gain.
\section{STC Literature Review}
\label{s:STC_Literature_Review}
It all started in 1998 when Tarokh et al. introduced in \cite{Tarokh_STTC_98} a space-time coding technique known as Space-Time Trellis Coding (STTC). Such a technique combines channel coding and transmit diversity to provide significant performance improvement. A major drawback of STTC is that the decoding complexity increases exponentially with the transmission rate \cite{OSTBC_Tarokh99}. In the same year, Alamouti presented a remarkable transmit diversity scheme using two transmit antennas \cite{Alamouti98}. Even though the scheme does not achieve as much gain as that achieved by STTC, its complexity is considerably lower since it only needs simple \emph{linear} processing at the receiver. Alamouti proved that the proposed scheme achieves the same diversity order\footnote{For the definition of diversity order, see Section \ref{s:Design_Criteria}.} as that achieved by MRC reception with one transmit and two receive antennas.\\

Alamouti's scheme was then considered as the foundation of a new class of space time codes named Space-Time Block Codes (STBC). The word \emph{block} is used to indicate that the data stream to be transmitted is encoded in blocks, which contain symbols distributed across space and time. One year later, Tarokh et al. extended Alamouti's scheme to arbitrary number of transmit antennas \cite{OSTBC_Tarokh99}. They defined the class of Orthogonal Space-Time Block Codes (OSTBC) where the data streams transmitted over the different transmit antennas within a block are mutually orthogonal. The work is based on the theory of orthogonal designs, first presented by Hurwitz and independently by Radon in 1922 \cite{Radon1922}. The scheme is remarkable in that it provides the maximum achievable diversity order (full diversity), and allows ML decoding algorithm with only linear processing at the receiver.\\

To eliminate the need of channel estimation and hence reduce the cost and complexity of the receivers, STC has been extended to include non-coherent reception. In the year 2000, Tarokh and Jafarkhani applied differential encoding and non-coherent detection to Alamouti's scheme with equal energy constellation \cite{Tarokh_DOSTBC_2000}. The scheme maintains the full diversity property and the low decoding complexity of Alamouti's scheme without the need of channel knowledge neither at the Tx nor at the Rx. In the same year, Hochwald and Marzetta proposed in \cite{Hochwald_Marzetta_2000} a differential space time technique that is based on unitary matrices. They coined it the term Differential Unitary Space-Time Modulation (DUSTM). They derived performance criteria, error bounds and ML decoder for DUSTM under no CSI, and additionally they make use of channel estimates when they are available. Independently, Hughes proposed in \cite{DSTM_Hughes2000} a similar non-coherent transmission scheme. The approach is valid for any number of transmit antennas. In \cite{Hochwald2000} Hochwald and Sweldens defined an example of DUSTM where any transmit block is a member of a group. This technique simplifies the transmission process and it eventually led to constellations of diagonal signals, where only one transmit antenna is operating at a time. The disadvantage of this group-based signals is the exponential increase in decoding complexity with the transmission rate.\\

As more bits are to be transmitted per channel use, the constellation points of $q$-ary PSK get closer resulting in performance degradation. In this case introducing amplitude modulation is expected to enhance the performance \cite{Proakis2000}. For the purpose of achieving higher spectral efficiency while maintaining good transmission quality, Tao introduced in \cite{Tao_2001} the use of QAM (Quadrature Amplitude Modulation) constellation with Differential OSTBCs (DOSTBCs). He derived the optimal and a near-optimal differential decoder with linear complexity. The approach outperforms DOSTBCs with PSK constellation, since with QAM modulation the constellation symbols are more efficiently separated. A similar approach has been adopted in \cite{Tarokh_QAM_2003} and \cite{Chen_QAM_2003}. In 2005, Bauch and Mengi have attempted in \cite{Bauch_SOD_QAM_2005} to extend the use of QAM symbols in DOSTBCs to include soft-output decoder which allows the scheme to be combined with outer error control coding.\\

To achieve higher transmission rates, several attempts have been made by relaxing the condition of orthogonality in STBCs leading to the so-called Quasi-Orthogonal STBCs (QOSTBC). This new class of STBCs has been proposed by Jafarkhani in \cite{QOSTBC_Jafarkhani_2000}. For most QOSTBCs used in the literature, achieving higher rate comes at the expense of higher decoding complexity. As an attempt to reduce the complexity of QOSTBCs, Yuen et al. proposed a new scheme in \cite{MDC_QOSTBC_2004} which they named Minimum Decoding Complexity QOSTBC (MDC-QOSTBC). 

The literature review presented in this section serves to give an idea of the research covered in STCs thus far. In the next two chapters a detailed description of most of the aforementioned techniques in STCs is provided, and a fair comparison between them will be concluded. 
\section{Code Structure}
\label{s:Code_Structure}
In this section we define the structure of space-time codes as well as the notion of code rate and spectral efficiency. The space-time codes considered in this thesis are of block type. Each block consists of signals transmitted over space ($M$ antennas) and time ($T$ time slots). A transmit block $\V{S}_{\tau}$ can be described in a matrix form as
\begin{figure}[!h]
\centering
\input{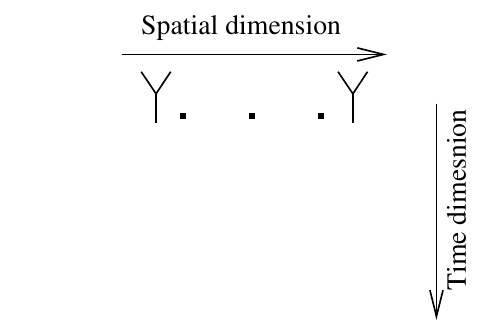_t}
\caption{ST transmit matrix structure}
\label{fig:STC_Tx_matrix}
\end{figure}

\hspace{-0.65cm}where $\tau$ is the transmit block index constituting $T$ time slots, and $s_{tm}$ is the signal transmitted in time slot $t$ over antenna $m$. This indicates that the horizontal dimension of matrix $\V{S}_\tau$ is the spatial dimension and the vertical dimension is the temporal (time) dimension. The $m^{\text{th}}$ column of matrix $\V{S}_\tau$ represents the signal transmitted over antenna $m$ as a function of time, while the $t^{\text{th}}$ row represents the transmitted signals over all $M$ antennas at time slot $t$.\\

In differential non-coherent systems, matrix $\V{S}_\tau$ is generated from a differential encoder whose inputs are in general the previously transmitted matrix $\V{S}_{\tau-1}$ or some function of it and the new information matrix $\V{V}_{z_\tau}$. Information matrices have the same \emph{structure} as transmit matrices in Figure \ref{fig:STC_Tx_matrix}. Any information matrix $\V{V}_{z_\tau}$ is an element in a space-time codebook $\Omega$ and is referred to as space-time codeword. The codebook $\Omega$ has cardinality $L$ and is defined as 

\begin{equation}
\label{eq:STC_book}
\Omega=\{\V{V}_l, l=0,...,L-1 ; \,\V{V}_l\in\mathbb{C}^{T\!\times\!M}\}.
\end{equation}

Figure \ref{fig:STC} shows two possibilities for constructing differential space-time codes. In Figure \ref{subfig:DSTM}, at block time index $\tau$, $\log_2L$ bits are buffered from the user's data, converted to their decimal equivalent $z_\tau$ and mapped by the space-time modulator to the information matrix $\V{V}_{z_\tau}$, where $\V{V}_{z_\tau}$ is a codeword in the space-time codebook $\Omega$, i.e. $z_\tau\in\{0,...,L-1\}$. $\V{V}_{z_\tau}$ together with the previously transmitted matrix $\V{S}_{\tau-1}$ are then fed to the differential encoder to generate the new transmit matrix $\V{S}_\tau$. Finally, $\V{S}_\tau$ is fed to the transmitter which transmits its inner symbols across the $M$ transmit antennas over $T$ time slots. This process repeats every $T$ time slots. The receiver may optionally employ $N$ receive antennas to additionally achieve receive diversity. Codes constructed from such an architecture will be referred to as Space-Time Modulation (STM) as they were first defined in \cite{Hochwald_Marzetta_2000}.
\begin{figure}[htp]
 \centering
  \subfloat[Differential Space-Time Modulation architecture]{\label{subfig:DSTM}\input{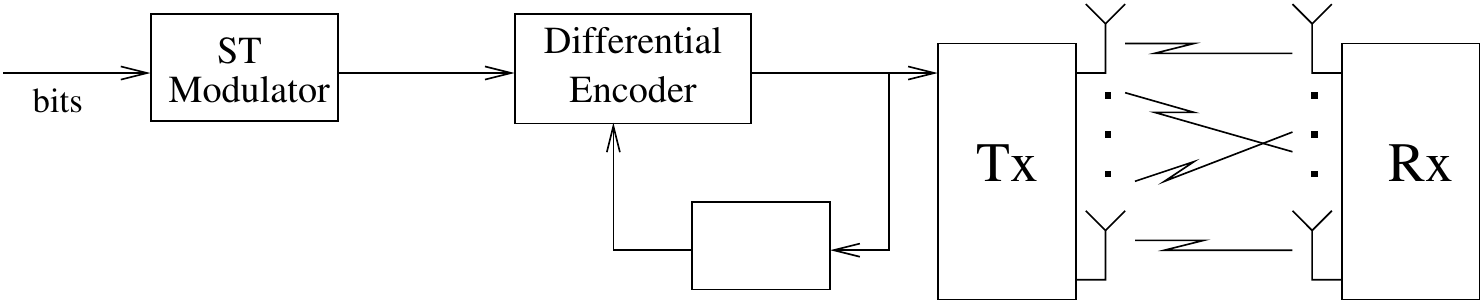_t}}\\\vspace{1cm}
  \subfloat[Differential Space-Time Block Codes architecture]{\label{subfig:DSTBC}\hspace{-1cm}\input{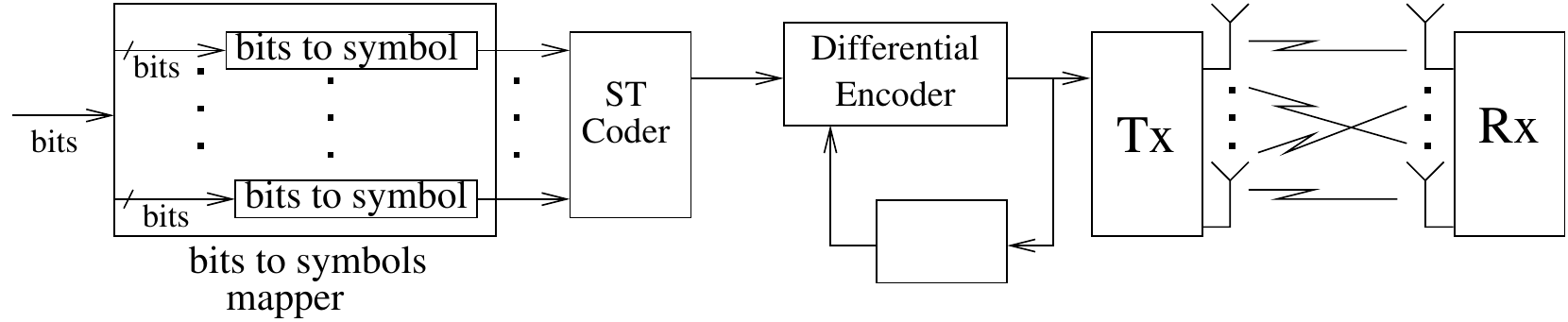_t}}
  \caption{Space-Time Coding architectures}
  \label{fig:STC}
\end{figure}\\

The second architecture of space-time codes is shown in Figure \ref{subfig:DSTBC}. Unlike STM, the construction of the information codeword $\V{V}_{z_\tau}$ is performed in two steps. First $\log_2L$ bits are buffered from the user's data and mapped to $K$ symbols  $x_1,...,x_K$. The symbols are in general drawn from different constellation alphabets of different size. Namely, $x_i$ is element in the constellation alphabet $\mathcal{A}_i$ whose size is $q_i$ $\forall i\in {1,...,K}$. The $K$ symbols are then fed to the space-time encoder which constructs the information matrix $\V{V}_{z_\tau}$, where $\V{V}_{z_\tau}$ is element from a ST codebook $\Omega$, i.e. $z_\tau\in\{0,...,L-1\}$. Hence it follows that the total number of bits per information matrix is 
\begin{equation}
\label{eq:bits_per_block_general}
\log_2L=\sum\limits_{i=1}^{K}{\log_2q_i},
\end{equation}
where $\log_2q_i$ is the number of bits carried by symbol $x_i$. In the special case when all symbols are drawn from the same alphabet $\mathcal{A}$ of size $q$, the total number of bits per information matrix is 
\begin{equation}
\label{eq:bits_per_block_special}
\log_2L=K\log_2q.
\end{equation}

After $\V{V}_{z_\tau}$ is generated, both architectures are just the same. They both generate the differentially modulated matrix $\V{S}_\tau$ through the differential encoder, and then the entries of matrix $\V{S}_\tau$ are transmitted by the $M$ antennas over $T$ time slots. Codes constructed form the second architecture will be referred to as Space-Time Block Codes\footnote{Note that both architectures defined here are block codes since the transmission in both cases is done block-wise. However the different terms STM and STBC are used for shortly referring to each architecture.} (STBC) as they are first defined in \cite{OSTBC_Tarokh99}. In short, if the bits are mapped directly to transmit matrices, then the scheme used is STM shown in Figure \ref{subfig:DSTM}, but if the bits are mapped first to symbols then to transmit matrices, then the scheme is considered as STBC shown in Figure \ref{subfig:DSTBC}.\\

For STBC codes, we define the code rate $R$ as the number of symbols transmitted per time slot\footnote{The code rate defined here should not be confused with the code rate defined in the context of channel coding. In this thesis, channel coding is not used and all transmission is done uncoded.}, i.e.
\begin{equation}
\label{eq:code_rate}
R=\frac{K}{T} \:\text{  symbols/time slot},
\end{equation} 
and for both STM and STBC codes, we define spectral efficiency $\eta$ as the number of bits transmitted per time slot, i.e.
\begin{equation}
\label{eq:spectral_efficiency_STM}
\eta=\frac{\log_2L}{T} \:\text{  bits/time slot},
\end{equation}
where $\log_2L$ is the number of bits transmitted per information matrix. Spectral efficiency is the term used to refer to the information rate that can be transmitted over a given bandwidth. It is therefore a measure of how efficiently a limited frequency band is utilized \cite{Digital_Wireless_comm1999}. Since the term time slot is essentially the same as channel use, the unit of spectral efficiency $\eta$ can be bits/channel use. Furthermore, since the spectral efficiency is the rate per unit bandwidth, another common unit is bits/s/Hz. All three units are used interchangeably. Now substituting (\ref{eq:bits_per_block_general}) in (\ref{eq:spectral_efficiency_STM}), the spectral efficiency in the case of STBCs is 
\begin{equation}
\label{eq:spectral_efficiency_STBC_general}
\eta\big|_{\text{STBC}}=\frac{\sum\limits_{i=1}^{K}{\log_2q_i}}{T} \text{  bits/time slot}.
\end{equation}
In the special case when all $K$ symbols are drawn from the same alphabet $\mathcal{A}$, the spectral efficiency of STBCs reduces to
\begin{equation}
\label{eq:spectral_efficiency_STBC_special}
\eta\big|_{\text{STBC}}=\frac{K\log_2q}{T} =R\log_2q\:\text{  bits/time slot}.
\end{equation}
\section{ML Decision Metric for ST Codes}
\label{s:ML_STC_metric}
After defining the code structure in the previous section, it is now possible to proceed to the non-coherent detection of space-time codes. This section provides a derivation of the ML decoding metric which is valid for STCs of any of the two architectures defined in the previous section in the piece-wise constant Rayleigh flat fading channel.\\

Using the $M\!\times\!N$ MIMO model defined in Section \ref{s:MIMO-channel_model}, the signal received at time slot $t$ by antenna $n$ is -as defined in (\ref{eq:MIMO_model_scalar})-
\begin{equation}
 \left. y_{tn}=\sqrt{\rho\,}\sum \limits_{m=1}^{M}{s_{tm}h_{mn}^t+w_{tn}},
 \right. \hspace{1cm}\begin{aligned} &t=0,1,...\\
 										   &n=1,...,N
 				\end{aligned}	
\label{eq:MIMO_model_scalar2}					   
\end{equation}
Now after defining the structure of the ST transmit signals in Figure \ref{fig:STC_Tx_matrix}, the received signals over the $N$ receive antennas during $T$ time slots can be written in a matrix form as
\begin{equation}
\label{eq:Y_MIMO}
\begin{bmatrix} 
y_{11} & \cdots & y_{1N}\\
\vdots & \ddots & \vdots \\
y_{T1} & \cdots & y_{TN}
\end{bmatrix}
=\sqrt{\rho}\begin{bmatrix} 
s_{11} & \cdots & s_{1M}\\
\vdots & \ddots & \vdots \\
s_{T1} & \cdots & s_{TM}
\end{bmatrix}
\begin{bmatrix} 
h_{11} & \cdots & h_{1N}\\
\vdots & \ddots & \vdots \\
h_{M1} & \cdots & h_{MN}
\end{bmatrix} +
\begin{bmatrix} 
w_{11} & \cdots & w_{1N}\\
\vdots & \ddots & \vdots \\
w_{T1} & \cdots & w_{TN}
\end{bmatrix}
\end{equation}
which can be written in a compact form as
\begin{equation}
\label{eq:Y_MIMO_STC}
\V{Y}_\tau=\sqrt{\rho}\V{S}_\tau\V{H}+\V{W}_\tau
\end{equation}
where $\tau$ is the time index of the transmit block constituting $T$ time slots. $\V{Y}_\tau$ is the $T\!\times\!N$ received matrix. $\V{S}_\tau$ is the $T\!\times\!M$ differentially encoded transmit matrix whose total power at any time slot $t$ is normalized to be one as mentioned in (\ref{eq:power_st}). $\V{H}$ is the $M\!\times\!N$ channel matrix whose entries $h_{mn}$ are all independent and identically distributed (i.i.d.) complex Gaussian random variables with zero mean and unit variance. All channel coefficients $h_{mn}$ are assumed to be constant within the $T$ time slots in a transmission block. Furthermore, assuming piece-wise constant Rayleigh flat fading channel, the coefficients $h_{mn}$ are constant within several successive blocks like for example 30$\sim$60 blocks. $\V{W}_\tau$ is the $T\!\times\!N$ noise matrix whose entries are all i.i.d. complex Gaussian samples with zero mean and unit variance. With the power normalizations of $\V{S}_\tau$, $\V{H}_\tau$ and $\V{W}_\tau$, $\rho$ is the expected SNR at each receive antenna. \\

Since the transmission is done differentially, this means that the information matrix $\V{V}_{z_\tau}$ is embedded in both transmit matrices $\V{S}_{\tau-1}$ and $\V{S}_\tau$ through some differential encoding function defined based on the ST coding scheme used. Hence both received signal matrices $\V{Y}_{\tau-1}$ and $\V{Y}_\tau$ are needed for the differential detection. Similar to (\ref{eq:Y_MIMO_STC}), the previously received matrix $\V{Y}_{\tau-1}$ is
\begin{equation}
\label{eq:Y_1_MIMO_STC}
\V{Y}_{\tau-1}=\sqrt{\rho}\V{S}_{\tau-1}\V{H}+\V{W}_{\tau-1}.
\end{equation}
Combining both (\ref{eq:Y_1_MIMO_STC}) and (\ref{eq:Y_MIMO_STC}) in one matrix, we define matrix $\bar{\V{Y}}_\tau$ as
\begin{eqnarray}
\label{eq:Y_bar}
\V{\bar{Y}}_\tau&=&
\begin{bmatrix} 
\V{Y}_{\tau-1}\\
\V{Y}_\tau\\
\end{bmatrix} =\sqrt{\rho}
\begin{bmatrix} 
\V{S}_{\tau-1}\\
\V{S}_\tau\\
\end{bmatrix}\V{H}+
\begin{bmatrix} 
\V{W}_{\tau-1}\\
\V{W}_\tau\\
\end{bmatrix}\nonumber \\
\V{\bar{Y}}_\tau&=&\sqrt{\rho}\V{\bar{S}}_\tau\V{H}+\V{\bar{W}}_\tau,
\end{eqnarray}
The bar sign is used to indicate the appending of two successive blocks making both $\V{\bar{Y}}_\tau$, $\V{\bar{W}}_\tau \in\mathbb{C}^{2T\!\times\!N}$ and $\V{\bar{S}}_\tau\,\in\mathbb{C}^{2T\!\times\!M}$.\\

To obtain a Maximum A Posteriori (MAP) decision on the information matrix $\V{V}_{z_\tau}$, one needs to maximize the conditional PDF of transmitting $\V{\bar{S}}^{(l)}$ given that $\V{\bar{Y}}_\tau$ is observed\footnote{We omit the subscript $\tau$ from $\bar{\V{S}}$ when considering a candidate transmit block since it doesn't dependent on time. For $\bar{\V{S}}$, the subscript $\tau$ is only meaningful in a transmission equation as in (\ref{eq:Y_bar}).}, namely $\text{p}(\V{\bar{S}}^{(l)}|\V{\bar{Y}}_\tau)$, where  $\V{\bar{S}}^{(l)}$ is function of  $\V{V}_l$ which is the $l^{\text{th}}$ possible information matrix drawn from the codebook $\Omega$ defined in (\ref{eq:STC_book}). Hence the MAP decision rule is
\begin{equation}
\label{eq:MAP_STC}
\hat{\V{V}}_{z_\tau}\Big|_{\text{MAP}}=\argmax\limits_{l\in\{0,..,L-1\}}\:\,\text{p}(\V{\bar{S}}^{(l)}|\V{\bar{Y}}_\tau).
\end{equation}
Using Bayes' rule, (\ref{eq:MAP_STC}) can be written as 
\begin{equation}
\label{eq:STC_Bayes}
\hat{\V{V}}_{z_\tau}\Big|_{\text{MAP}}=\argmax\limits_{l\in\{0,..,L-1\}}\:\,\frac{\text{p}(\V{\bar{Y}}_\tau|\V{\bar{S}}^{(l)})\text{p}(\V{\bar{S}}^{(l)})}{\text{p}(\V{\bar{Y}}_\tau)}.
\end{equation}\\

If all $L$ possible information matrices $\V{V}_l$ are equiprobable, then all transmit matrices $\V{\bar{S}}^{(l)}$ are also equiprobable. In this case $\text{p}(\V{\bar{S}}^{(l)})$ is constant $\forall\, l$. Furthermore, $\text{p}(\V{\bar{Y}}_\tau)$ is independent of $l$. Hence, maximizing $\text{p}(\V{\bar{S}}^{(l)}|\V{\bar{Y}}_\tau)$ is in this case equivalent to maximizing $\text{p}(\V{\bar{Y}}_\tau|\V{\bar{S}}^{(l)})$ leading to the ML decision metric
\begin{equation}
\label{eq:ML_STC}
\hat{\V{V}}_{z_\tau}\Big|_{\text{ML}}=\argmax\limits_{l\in\{0,..,L-1\}}\:\,\text{p}(\V{\bar{Y}}_\tau|\V{\bar{S}}^{(l)}).
\end{equation}
Since the natural logarithm is a monotonically increasing function of its argument, then the ML decision metric can be equivalently written as
\begin{equation}
\label{eq:ML_STC2}
\hat{\V{V}}_{z_\tau}\Big|_{\text{ML}}=\argmax\limits_{l\in\{0,..,L-1\}}\:\,\ln\{\text{p}(\V{\bar{Y}}_\tau|\V{\bar{S}}^{(l)})\}.
\end{equation}

For a given transmit matrix, the received signals over the $N$ receive antennas are all independent from each other due to the independency of the fading of all links, so it follows\footnote{Outside the decision metrics, $(l)$ is omitted from $\V{\bar{S}}$ for simplicity}
\begin{equation}
\label{eq:P_Y_p_y}
\text{p}(\V{\bar{Y}}_\tau|\V{\bar{S}})=\prod_{n=1}^{N}{\text{p}(\V{\bar{y}}_{\tau n}|\V{\bar{S}})}
\end{equation}
where $\V{\bar{y}}_{\tau n}$ is the signal vector that carries information about $\V{V}_{z_\tau}$ and is received by the $n^{\text{th}}$ receive antenna in $2T$ time slots. Similar to (\ref{eq:Y_bar}), $\V{\bar{y}}_{\tau n}$ can be written as
\begin{equation}
\label{eq:y_n_bar}
\V{\bar{y}}_{\tau n}=\sqrt{\rho}\V{\bar{S}}_\tau\V{h}_n+\V{\bar{w}}_{\tau n}
\end{equation}
where $\V{\bar{y}}_{\tau n}$ is the $n^{\text{th}}$ column of $\V{\bar{Y}}_\tau$ and therefore $\in\mathbb{C}^{2T\!\times\!1}$, the transmit matrix $\V{\bar{S}}_\tau\,\!\in\!\mathbb{C}^{2T\!\times\!M}$, the channel vector $\V{h}_n\!\in\!\mathbb{C}^{M\!\times\!1}$ and represents the $n^{\text{th}}$ column of $\V{H}$, and the noise vector $\V{\bar{w}}_{\tau n} \!\in\!\mathbb{C}^{2T\!\times\!1}$ and represents the $n^{\text{th}}$ column of $\V{\bar{W}}_\tau$.\\

Since both $\V{h}_n$ and $\V{\bar{w}}_{\tau n}$ are complex Gaussian distributed, then for a given transmit matrix $\V{\bar{S}}$, the received vector $\V{\bar{y}}_{\tau n}$ follows the same distribution, namely
\begin{equation}
\label{eq:y_S}
\V{\bar{y}}_{\tau n}|\V{\bar{S}}\thicksim\mathcal{CN}(\V{\mu}_{\V{\bar{y}}_{\tau n}|\V{\bar{S}}},\V{\Lambda}_{\V{\bar{y}}_{\tau n}|\V{\bar{S}}}),
\end{equation}
where $\V{\mu}_{\V{\bar{y}}_{\tau n}|\V{\bar{S}}}$ and $\V{\Lambda}_{\V{\bar{y}}_{\tau n}|\V{\bar{S}}}$ represent the mean vector and the covariance matrix of $\V{\bar{y}}_{\tau n}|\V{\bar{S}}$, respectively. The multivariate conditional PDF of receiving vector $\V{\bar{y}}_{\tau n}$ given that $\V{\bar{S}}$ has been transmitted is given by \cite{Steven_Kay93}
\begin{equation}
\label{eq:pdf_y_S}
\text{p}(\V{\bar{y}}_{\tau n}|\V{\bar{S}})=\frac{1}{\pi^{2T}\det(\V{\Lambda}_{\V{\bar{y}}_{\tau n}|\V{\bar{S}}})}e^{-(\V{\bar{y}}_{\tau n}-\V{\mu}_{\V{\bar{y}}_{\tau n}|\V{\bar{S}}})\hr\V{\Lambda}_{\V{\bar{y}}_{\tau n}|\V{\bar{S}}}^{-1}(\V{\bar{y}}_{\tau n}-\V{\mu}_{\V{\bar{y}}_{\tau n}|\V{\bar{S}}})}.
\end{equation}\\

To obtain an expression for $\V{\mu}_{\V{\bar{y}}_{\tau n}|\V{\bar{S}}}$ and $\V{\Lambda}_{\V{\bar{y}}_{\tau n}|\V{\bar{S}}}$ in terms of the transmit matrix $\V{\bar{S}}$, consider first the statistics of the channel vector $\V{h}_n$. The channel coefficients are all independent complex Gaussian distributed with zero mean and unit variance, hence
\begin{eqnarray}
\label{eq:h_statistics}
&\E[h_{mn}]=0 & \hspace{1cm}\text{(zero mean)}\nonumber\\
&\E[|h_{mn}|^2]=1 &  \hspace{1cm} \text{(unit variance)}\\
&\E[h_{m_1n}h_{m_2n}^*]=0&  \hspace{1cm}
\begin{aligned} &m_1\neq m_2\in\{1,...,M\}\\
 										   &n\in\{1,...,N\}\nonumber
 				\end{aligned}	
\end{eqnarray}

Consequently, the channel vector $\V{h}_n$ has a mean vector $\V{\mu}_{\V{h}_n}=\V{0}_{M\!\times\!1}$, and a covariance matrix
\begin{eqnarray}
\label{eq:cov_mat_h}
\V{\Lambda}_{\V{h}_n}&=&\E[(\V{h}_n-\V{\mu}_{\V{h}_n})(\V{h}_n-\V{\mu}_{\V{h}_n})\hr]=\E[\V{h}_n\V{h}_n\hr]\nonumber\\
&=&\E[\begin{bmatrix}
h_{1n}\\
\vdots\\
h_{Mn}
\end{bmatrix}
\begin{bmatrix}
h_{1n}^*\cdots h_{Mn}^*
\end{bmatrix}]=
\begin{bmatrix}
\E[|h_{1n}|^2] & & \text{\huge{0}} \\
& \ddots &  \\
\:\text{\huge{0}}& &\E[|h_{Mn}|^2]
\end{bmatrix}=\V{I}_{M}
\end{eqnarray}
Consequently,
\begin{equation}
\label{eq:h_n_distribution}
\V{h}_n\thicksim\mathcal{CN}(\V{0}_{M\!\times\!1},\V{I}_{M}).
\end{equation}\\

Accordingly, using (\ref{eq:y_n_bar}), $\V{\bar{y}}_{\tau n}|\V{\bar{S}}$ has the following statistical parameters
\begin{eqnarray}
\label{eq:mean_y_S}
\V{\mu}_{\V{\bar{y}}_{\tau n}|\V{\bar{S}}}&=&\sqrt{\rho}\V{\bar{S}}\underbrace{\E[\V{h}_n]}_{\V{0}_{M\!\times\!1}}+\underbrace{\E[\V{\bar{w}}_{\tau n}]}_{\V{0}_{2T\!\times\!1}}=\V{0}_{2T\!\times\!1}\\
\label{eq:cov_y_S}
\V{\Lambda}_{\V{\bar{y}}_{\tau n}|\V{\bar{S}}}&=&\E[(\V{\bar{y}}_{\tau n}-\V{\mu}_{\V{\bar{y}}_{\tau n}|\V{\bar{S}}})(\V{\bar{y}}_{\tau n}-\V{\mu}_{\V{\bar{y}}_{\tau n}|\V{\bar{S}}})\hr]=\E[\V{\bar{y}}_{\tau n}\V{\bar{y}}_{\tau n}\hr]\nonumber\\
&=&\E[(\sqrt{\rho}\V{\bar{S}}\V{h}_n+\V{\bar{w}}_{\tau n})(\sqrt{\rho}\V{\bar{S}}\V{h}_n+\V{\bar{w}}_{\tau n})\hr]\nonumber \\
&=&\rho\V{\bar{S}} \cdot \underbrace{\E[\V{h}_n\V{h}_n\hr]}_{=\V{I}_M\text{ using (\ref{eq:cov_mat_h})}}\cdot\V{\bar{S}}\hr + \sqrt{\rho}\V{\bar{S}} \E[\underbrace{\V{h}_n\V{\bar{w}}_{\tau n}\hr}_{\text{independent}}]+\sqrt{\rho}\E[\underbrace{\V{\bar{w}}_{\tau n}\V{h}_n\hr}_{\text{independent}}]\V{\bar{S}}\hr+\underbrace{\E[\V{\bar{w}}_{\tau n}\V{\bar{w}}_{\tau n}\hr]}_{=\V{I}_{2T}}\nonumber\\
&=&\rho\V{\bar{S}} \V{\bar{S}}\hr +\V{I}_{2T}=\V{\Lambda}_{\V{\bar{y}}_{\tau}|\V{\bar{S}}},
\end{eqnarray}
which are the same $\forall\,n$, therefore the received signals at all receive antennas expectedly follow the same distribution, namely
\begin{equation}
\label{eq:y_S_statistics}
\V{\bar{y}}_{\tau n}|\V{\bar{S}}\thicksim\mathcal{CN}(\V{0}_{2T\!\times\!1},\,\rho\V{\bar{S}} \V{\bar{S}}\hr +\V{I}_{2T}).
\end{equation}
Substituting (\ref{eq:mean_y_S}) in (\ref{eq:pdf_y_S}), then $\text{p}(\V{\bar{Y}}_\tau|\V{\bar{S}})$ in (\ref{eq:P_Y_p_y}) reduces to
\begin{equation}
\label{eq:P_Y_S}
\text{p}(\V{\bar{Y}}_\tau|\V{\bar{S}})=\frac{1}{\pi^{2TN}\det^N(\V{\Lambda}_{\V{\bar{y}}_{\tau}|\V{\bar{S}}})}e^{-\sum\limits_{n=1}^{N}\V{\bar{y}}_{\tau n}\hr\V{\Lambda}_{\V{\bar{y}}_{\tau}|\V{\bar{S}}}^{-1}\V{\bar{y}}_{\tau n}},
\end{equation}
where the magnitude of the exponent can be rewritten as 
\begin{eqnarray}
\label{eq:exponent_in P_Y_S}
\sum\limits_{n=1}^{N}\V{\bar{y}}_{\tau n}\hr\V{\Lambda}_{\V{\bar{y}}_{\tau}|\V{\bar{S}}}^{-1}\V{\bar{y}}_{\tau n}&=&\sum\limits_{n=1}^{N}
{\left[
y_{1,n}^* \cdots y_{2T,n}^*
\right]
\left[
\V{\Lambda}_{\V{\bar{y}}_{\tau}|\V{\bar{S}}}
\right]^{-1}
\begin{bmatrix}
y_{1,n}\\ \vdots \\y_{2T,n}
\end{bmatrix}}\nonumber\\
&=&\tr\{\begin{bmatrix}
y_{1,1}^*&\cdots&y_{2T,1}^*\\
\vdots&\ddots&\vdots\\
y_{1,N}^*&\cdots&y_{2T,N}^*
\end{bmatrix}
\left[
\V{\Lambda}_{\V{\bar{y}}_{\tau }|\V{\bar{S}}}
\right]^{-1}
\begin{bmatrix}
y_{1,1}&\cdots&y_{1,N}\\
\vdots&\ddots&\vdots\\
y_{2T,1}&\cdots&y_{2T,N}
\end{bmatrix}
\}\nonumber\\
&=&\tr\left\{\V{\bar{Y}}_\tau\hr\V{\Lambda}_{\V{\bar{y}}_{\tau}|\V{\bar{S}}}^{-1}\V{\bar{Y}}_\tau\right\}
\end{eqnarray}
Substituting (\ref{eq:exponent_in P_Y_S}) into (\ref{eq:P_Y_S}), we get
\begin{equation}
\label{eq:P_Y_S_general}\text{p}(\V{\bar{Y}}_\tau|\V{\bar{S}})=\frac{1}{\pi^{2TN}\det^N(\V{\Lambda}_{\V{\bar{y}}_{\tau}|\V{\bar{S}}} )}e^{-\tr\{\V{\bar{Y}}_\tau\hr\V{\Lambda}_{\V{\bar{y}}_{\tau}|\V{\bar{S}}}^{-1}\V{\bar{Y}}_\tau\}}
\end{equation}
If we then replace $\V{\Lambda}_{\V{\bar{y}}_{\tau}|\V{\bar{S}}}$ by (\ref{eq:cov_y_S}), $\text{p}(\V{\bar{Y}}_\tau|\V{\bar{S}})$ becomes
\begin{equation}
 \label{eq:P_Y_S_general_useful}\text{p}(\V{\bar{Y}}_\tau|\V{\bar{S}})=\frac{1}{\pi^{2TN}\det^N(\V{I}_{2T}+\rho\V{\bar{S}} \V{\bar{S}}\hr)}e^{-\tr\{\V{\bar{Y}}_\tau\hr(\V{I}_{2T}+\rho\V{\bar{S}} \V{\bar{S}}\hr)^{-1}\V{\bar{Y}}_\tau\}}
\end{equation}
A very interesting observation in (\ref{eq:P_Y_S_general_useful}) is that $\text{p}(\V{\bar{Y}}_\tau|\V{\bar{S}})$ does not change if $\V{\bar{S}}$ is right multiplied by any arbitrary $M\!\times\!M$ unitary matrix $\V{\Phi}$. This is because 
\begin{equation*}
(\V{\bar{S}}\V{\Phi})(\V{\bar{S}}\V{\Phi})\hr\eq \V{\bar{S}}\V{\Phi}\V{\Phi}\hr\V{\bar{S}}\hr\eq \V{\bar{S}}\V{\bar{S}}\hr.
\end{equation*}
In other words, $\V{\bar{S}}$ and $\V{\bar{S}}\V{\Phi}$ are indistinguishable to a non-coherent receiver.\\

Using (\ref{eq:cov_y_S}), $\det(\V{\Lambda}_{\V{\bar{y}}_{\tau}|\V{\bar{S}}})$ and $\V{\Lambda}_{\V{\bar{y}}_{\tau}|\V{\bar{S}}}^{-1}$ can be simplified using Sylvester's determinant theorem in (\ref{eq:det_identity}) and the matrix inversion lemma in (\ref{eq:matrix_inversion_lemma}), respectively to
\begin{eqnarray}
\label{eq:det_Lambda_STC}
\det(\V{\Lambda}_{\V{\bar{y}}_{\tau}|\V{\bar{S}}})&=&\det(\V{I}_{2T}+ \rho\V{\bar{S}} \V{\bar{S}}\hr)\nonumber\\
&=&\det(\V{I}_M+ \rho\V{\bar{S}}\hr \V{\bar{S}})\\
\label{eq:inv_Lambda_STC}
\V{\Lambda}_{\V{\bar{y}}_{\tau}|\V{\bar{S}}}^{-1}&=&(\V{I}_{2T}+ \rho\V{\bar{S}} \V{\bar{S}}\hr)^{-1}\nonumber \\
&=&\V{I}_{2T}-\rho\V{\bar{S}}(\V{I}_{M}+ \rho\V{\bar{S}}\hr \V{\bar{S}})^{-1}\V{\bar{S}}\hr
\end{eqnarray}
where in the matrix inversion lemma, the substitutions $\V{A}=\V{I}_{2T}$, $\V{B}=\rho\,\V{\bar{S}}$, $\V{C}=\V{I}_M$, and $\V{D}=\V{\bar{S}}\hr$ have been made. Substituting (\ref{eq:det_Lambda_STC}) and (\ref{eq:inv_Lambda_STC}) back in (\ref{eq:P_Y_S_general}), we get
\begin{equation}
\label{eq:P_Y_S_general2}
\text{p}(\V{\bar{Y}}_\tau|\V{\bar{S}})=\frac{1}{\pi^{2TN}\det^N(\V{I}_M+ \rho\V{\bar{S}}\hr \V{\bar{S}} )}e^{-\tr\{\V{\bar{Y}}_\tau\hr(\V{I}_{2T}-\rho\V{\bar{S}}(\V{I}_{M}+ \rho\V{\bar{S}}\hr \V{\bar{S}})^{-1}\V{\bar{S}}\hr)\V{\bar{Y}}_\tau\}}
\end{equation}
Hence the ML decision metric in (\ref{eq:ML_STC}) reduces to 
\begin{eqnarray}
\label{eq:ML_STC_general}
\hat{\V{V}}_{z_\tau}\Big|_{\text{ML}}&=&\argmax\limits_{l\in\{0,..,L-1\}}\:\,\frac{e^{-\tr\{\V{\bar{Y}}_\tau\hr\V{\bar{Y}}_\tau-\rho\V{\bar{Y}}_\tau\hr\V{\bar{S}}^{(l)}(\V{I}_{M}+ \rho\V{\bar{S}}\sp{(l)\dagger} \V{\bar{S}}^{(l)})^{-1}\V{\bar{S}}\sp{(l)\dagger}\V{\bar{Y}}_\tau\}}}{\pi^{2TN}\det^N(\V{I}_M+ \rho\V{\bar{S}}\sp{(l)\dagger} \V{\bar{S}}^{(l)})}\nonumber\\\ST
&=&\argmax\limits_{l\in\{0,..,L-1\}}\:\,\frac{e^{\tr\{\rho\V{\bar{Y}}_\tau\hr\V{\bar{S}}^{(l)}(\V{I}_{M}+ \rho\V{\bar{S}}\sp{(l)\dagger} \V{\bar{S}}^{(l)})^{-1}\V{\bar{S}}\sp{(l)\dagger}\V{\bar{Y}}_\tau\}}}{\det^N(\V{I}_M+ \rho\V{\bar{S}}\sp{(l)\dagger} \V{\bar{S}}^{(l)})}
\end{eqnarray}
Taking the natural logarithm of (\ref{eq:ML_STC_general}) and substituting it back in (\ref{eq:ML_STC2}), the ML decision metric reduces finally to
\begin{equation}
\label{eq:ML_STC_general2}
\boxed{\hat{\V{V}}_{z_\tau}\Big|_{\text{ML}}=\argmax\limits_{l\in\{0,..,L-1\}}\:\,\tr\{\rho\V{\bar{Y}}_\tau\hr\V{\bar{S}}^{(l)}(\V{I}_{M}+ \rho\V{\bar{S}}\sp{(l)\dagger} \V{\bar{S}}^{(l)})^{-1}\V{\bar{S}}\sp{(l)\dagger}\V{\bar{Y}}_\tau\}-N\ln\{\det(\V{I}_M+ \rho\V{\bar{S}}\sp{(l)\dagger} \V{\bar{S}}^{(l)})\}.}
\end{equation}
Since no restrictions on the codebook $\Omega$  have been assumed, (\ref{eq:ML_STC_general2}) is the general ML decision metric for non-coherent detection which applies in the case of piece-wise constant Rayleigh flat fading channels with independent coefficients for all differential ST codes considered in this thesis. \\

Consider next the special case of ST codes with unitary information matrices, i.e. $\V{V}_{z_\tau}\hr\V{V}_{z_\tau}=\V{I}_{M}$. This means that all columns of $\V{V}_{z_\tau}$ are orthogonal, and every column is of unit norm. In the unitary transmission, the differential encoding function is defined as 
\begin{equation}
\label{eq:Diff_encoding_unitary}
\V{S}_\tau=\V{V}_{z_\tau}\V{S}_{\tau-1},
\end{equation}
which is similar to the differential encoding of PSK signals in the SISO system. The above relation assumes $T=M$ for the dimensions in the matrix multiplication above to fit. This means $\V{V}_{z_\tau}$ and $\V{S}_\tau$ are square matrices $\in\mathbb{C}^{M\!\times\!M}\:\forall\,\tau$. The possibility of transmitting rectangular matrices with differential encoding remains an open question. The problem is that differential encoding in most cases includes a multiplication operation of the previously transmitted matrix and the new information matrix, which is not easily realizable with rectangular matrices.  Hence all the transmit matrices considered all over this thesis will be square matrices.\\

Since the product of two unitary matrices yields another unitary matrix, then if any information matrix $\V{V}_{z_\tau}$ is unitary, and the initial transmit matrix $\V{S}_0$ is also unitary, then it follows from (\ref{eq:Diff_encoding_unitary}) that all transmit matrices $\V{S}_\tau$ are also unitary $\forall\,\tau$, i.e. $\V{S}_\tau\hr\V{S}_\tau=\V{I}_M$. In words, this means that the data streams transmitted
over the different transmit antennas within a block are mutually orthogonal. Consequently, 
\begin{equation}
\label{eq:Unitary_S_bar}
\V{\bar{S}}\hr\V{\bar{S}}=[\V{S}_{\tau-1}\hr\, \V{S}_{\tau}\hr]\begin{bmatrix}\V{S}_{\tau-1}\\\V{S}_\tau\end{bmatrix}=\V{S}_{\tau-1}\hr\V{S}_{\tau-1}+\V{S}_\tau\hr\V{S}_\tau=2\V{I}_M
\end{equation}
Using (\ref{eq:Unitary_S_bar}), the determinant and the inverse terms in (\ref{eq:ML_STC_general2}) reduce to 
\begin{eqnarray}
\label{eq:det_inv_unitary}
\det(\V{I}_M+ \rho\V{\bar{S}}\sp{(l)\dagger} \V{\bar{S}}^{(l)})&=&\det(\V{I}_M+ \rho\!\times\!2\V{I}_M)\nonumber\\
&=&\det((1+2\rho)\V{I}_M)=(1+2\rho)^M\\\ST
(\V{I}_M+ \rho\V{\bar{S}}\sp{(l)\dagger} \V{\bar{S}}^{(l)})^{-1}&=&((1+2\rho)\V{I}_M)^{-1}=\frac{1}{1+2\rho}\V{I}_M
\end{eqnarray}
This makes the second term in (\ref{eq:ML_STC_general2}) constant for all $l$, and therefore irrelevant for the decision. The ML metric reduces to 
\begin{eqnarray}
\label{eq:ML_Unitary_STC_two_block}
\hat{\V{V}}_{z_\tau}\Big|_{\text{ML}}&=&\argmax\limits_{l\in\{0,..,L-1\}}\:\,\tr\{\frac{\rho}{1+2\rho}\V{\bar{Y}}_\tau\hr\V{\bar{S}}^{(l)}\V{\bar{S}}\sp{(l)\dagger}\V{\bar{Y}}_\tau\}\nonumber\\
&=&\argmax\limits_{l\in\{0,..,L-1\}}\:\,\tr\{\V{\bar{Y}}_\tau\hr\V{\bar{S}}^{(l)}\V{\bar{S}}\sp{(l)\dagger}\V{\bar{Y}}_\tau\}\nonumber\\
&=&\argmax\limits_{l\in\{0,..,L-1\}}\:\,\|\V{\bar{Y}}_\tau\hr\V{\bar{S}}^{(l)}\|_F^2.
\end{eqnarray}
where $\V{\bar{S}}$ is
\begin{equation}
\label{eq:S_bar_unitary}
\V{\bar{S}}= \begin{bmatrix}\V{S}_{\tau-1}\\\V{S}_\tau\end{bmatrix}=\begin{bmatrix}\V{S}_{\tau-1}\\\V{V}_{z_\tau}\V{S}_{\tau-1}\end{bmatrix}=\begin{bmatrix}\V{I}_M\\\V{V}_{z_\tau}\end{bmatrix}\V{S}_{\tau-1}.
\end{equation}
Recall from the discussion on (\ref{eq:P_Y_S_general_useful}) that multiplying $\V{\bar{S}}$ by a unitary matrix from the right does not change the ML receiver metric. Therefore the transmit two-block matrices $\textstyle\begin{bmatrix}\V{I}_M\\\V{V}_l\end{bmatrix}$ and $\textstyle\begin{bmatrix}\V{I}_M\\\V{V}_l\end{bmatrix}\V{S}_{\tau-1}$ are indistinguishable to the receiver.
As a result, the transmit candidate matrix can be written in a canonical form as
\begin{equation}
 \V{S}^{(l)}\equiv\begin{bmatrix}\V{I}_M\\\V{V}_l\end{bmatrix}
\end{equation}
and the ML metric in (\ref{eq:ML_Unitary_STC_two_block}) reduces to,
\begin{eqnarray}
\label{eq:ML_unitray}
\hat{\V{V}}_{z_\tau}\Big|_{\text{ML}}&=&\argmax\limits_{l\in\{0,..,L-1\}}\:\,\|[\V{Y}_{\tau-1}\hr \,\V{Y}_\tau\hr]\begin{bmatrix}\V{I}_M\\\V{V}_l\end{bmatrix}\|_F^2\nonumber\\
&=&\argmax\limits_{l\in\{0,..,L-1\}}\:\,\|\V{Y}_{\tau-1}\hr +\V{Y}_\tau\hr\V{V}_l\|_F^2\nonumber\\
&=&\argmax\limits_{l\in\{0,..,L-1\}}\:\,\|\V{Y}_{\tau-1}+\V{V}_l\hr\V{Y}_\tau\|_F^2
\end{eqnarray}
which can be alternatively written as 
\begin{eqnarray*}
\hat{\V{V}}_{z_\tau}\Big|_{\text{ML}}&=&\argmax\limits_{l\in\{0,..,L-1\}}\:\,\tr\{(\V{Y}_{\tau-1}+\V{V}_l\hr\V{Y}_\tau)\hr(\V{Y}_{\tau-1}+\V{V}_l\hr\V{Y}_\tau)\}\\
&=&\argmax\limits_{l\in\{0,..,L-1\}}\:\,\tr\{\underbrace{\V{Y}_{\tau-1}\hr\V{Y}_{\tau-1}}_{\text{indep. of }l}+\V{Y}_{\tau-1}\hr\V{V}_l\hr\V{Y}_{\tau}+\V{Y}_{\tau}\hr\V{V}_l\V{Y}_{\tau-1}+\underbrace{\V{Y}_{\tau}\hr\overbrace{\V{V}_l\V{V}_l\hr}^{\eq\V{I}_M}\V{Y}_{\tau}}_{\text{indep of }l}\\
&=&\argmax\limits_{l\in\{0,..,L-1\}}\:\,\tr(\V{Y}_{\tau-1}\hr\V{V}_l\hr\V{Y}_{\tau}+\V{Y}_{\tau}\hr\V{V}_l\V{Y}_{\tau-1})
\end{eqnarray*}
and the ML metric for unitary ST codes in the case of piece-wise constant Rayleigh flat fading channels with independent coefficients reduces finally to
\begin{equation}
\label{eq:ML_unitary2}
 \boxed{\hat{\V{V}}_{z_\tau}\Big|_{\text{ML}}=\argmax\limits_{l\in\{0,..,L-1\}}\:\,\Re\{\tr(\V{V}_l\V{Y}_{\tau-1}\V{Y}_{\tau}\hr)\}}
\end{equation}


\section{Design Criteria}
\label{s:Design_Criteria}
In this section we derive the design criteria for space-time codes from which we extract the notion of diversity order and coding gain. The criteria used to design a ST codebook $\Omega$ is based on minimizing the worst Pair-wise Error Probability (PEP) between two codewords in the codebook. In \cite{Hochwald_Marzetta_2000}, Hochwald and Marzetta proved the Chernoff upper bound of the PEP between two different transmit matrices $\V{\bar{S}}^{(l)}$ and $\V{\bar{S}}^{(l')}$ in the case of unitary ST codes. The PEP between $\V{\bar{S}}^{(l)}$ and $\V{\bar{S}}^{(l')}$or equivalently between codewords $\V{V}_l$ and $\V{V}_{l'}$ is the probability of mistaking $\V{V}_l$ for $\V{V}_{l'}$ or vice versa\footnote{In \cite{Hochwald_Marzetta_2000}, it was proved that $\text{PEP}(l,l')\eq \text{PEP}(l',l)$} and is defined in the case of unitary transmission as
\begin{eqnarray}
\label{eq:PEP_l_l_dash}
\text{PEP}(l,l')&=&\text{p}(\hat{\V{V}}=\V{V}_{l'}|\:\V{V}_l \text{ is transmitted})\nonumber\\
&=&\text{p}(\text{ML metric for }\V{V}_{l'}>\text{ML metric for }\V{V}_l|\:\V{V}_l)\nonumber\\
&\overset{\underset{\text{from  (\ref{eq:ML_unitray})}}{}}{=}&\text{p}(\|\V{Y}_{\tau-1}+\V{V}_{l'}\hr\V{Y}_\tau\|_F^2>\|\V{Y}_{\tau-1}+\V{V}_{l}\hr\V{Y}_\tau\|_F^2\:|\:\V{V}_l).
\end{eqnarray}
The Chernoff upper bound of PEP$(l,l')$ is derived in \cite[eq. (B.11)]{Hochwald_Marzetta_2000} and modified in \cite[eq. (7)]{Liang_Xia_2002} with the same notations used in this thesis as
\begin{equation}
\label{eq:PEP_l_ldash2}
\text{PEP}(l,l')\leq\frac{1}{2}\prod\limits_{m=1}^{M}[1+\frac{\rho^2}{4(1+2\rho)}\sigma_m^2(\V{V}_l-\V{V}_{l'})]^{-N}
\end{equation}
where $\sigma_m(\V{A})$ is the $m^{\text{th}}$ singular value of matrix $\V{A}$. For further understanding of singular values and singular value decomposition (SVD), refer to Section \ref{ss:SVD}. Singular values and eigenvalues can be related using
\begin{equation}
\label{eq:SV_EV}
\sigma_m^2(\V{A})=\lambda_m(\V{A}\hr\V{A}),
\end{equation}
where $\lambda_m(\V{B})$ is the $m^{\text{th}}$ eigenvalue of matrix $\V{B}$. For further understanding of eigenvalues and eigenvalue decomposition, refer to Section \ref{ss:EVD}. The above relation is shown in (\ref{eq:EV_SV2}), theorem \ref{theo:EV_SV2}. Hence $\sigma_m^2(\V{V}_l-\V{V}_{l'})$ can be replaced by
\begin{eqnarray}
\label{eq:SV_EV2}
\sigma_m^2(\V{V}_l-\V{V}_{l'})&=&\lambda_m((\V{V}_l-\V{V}_{l'})\hr(\V{V}_l-\V{V}_{l'}))\nonumber\\ &=&\lambda_m(\V{D}_{ll'}\hr\V{D}_{ll'})\overset{\underset{\Delta}{}}{=}\lambda_{ll'm},
\end{eqnarray}
where $\V{D}_{ll'}$ is defined as the distance matrix between the two codewords $\V{V}_l$ and $\V{V}_{l'}$, and $\lambda_{ll'm}$ is the $m^{\text{th}}$ eigenvalue of the squared distance matrix $\V{D}_{ll'}\hr\V{D}_{ll'}$. Consequently PEP$(l,l')$ in (\ref{eq:PEP_l_ldash2}) can be rewritten as
\begin{eqnarray}
\label{eq:PEP_l_l_dash3}
\text{PEP}(l,l')&\leq&\underbrace{\frac{1}{2}}_{\text{constant}}\prod\limits_{m=1}^{M}[1+\underbrace{\frac{\rho^2}{4(1+2\rho)}}_{f}\lambda_{ll'm}]^{-N}\nonumber\\
&\leq&\prod\limits_{m=1}^{M}[1+f\lambda_{ll'm}]^{-N}
\end{eqnarray}
where the common factor $\nicefrac{1}{2}$ is constant w.r.t. the codewords $\V{V}_l$ and $\V{V}_{l'}$, so it does not affect the design criterion and can be omitted and since it is less than $1$ the inequality still holds. The Right-Hand Side (RHS) of (\ref{eq:PEP_l_l_dash3}) can be decomposed to
\begin{equation}
\label{eq:PEP_l_l_dash4}
\text{PEP}(l,l')\leq\Big[\frac{1}{(1+f\lambda_{ll'1})......(1+f\lambda_{ll'M})}\Big]^{N}.
\end{equation}
If $\V{D}_{ll'}\hr\V{D}_{ll'}$ has rank $r_{ll'}\leq M$, i.e.\footnote{rank($\V{A}\hr\V{A}$)=rank($\V{A}$), as shown in (\ref{eq:rank_AA_A}), theorem \ref{theo:rank_AA_A}.}
\begin{equation}
\label{eq:rank_dist_matrix}
r_{ll'}=\rank(\V{D}_{ll'}\hr\V{D}_{ll'})=\rank(\V{D}_{ll'}),
\end{equation}
then only $r_{ll'}$ eigenvalues are non-zero (refer to theorem \ref{theo:rank_EV}), making (\ref{eq:PEP_l_l_dash4}) reduce to
\begin{eqnarray}
\label{eq:PEP_l_l_dash5}
\text{PEP}(l,l')&\leq&\Big[\frac{1}{(1+f\lambda_{ll'1})......(1+f\lambda_{ll'r_{ll'}})}\Big]^{N}\nonumber\\
&\leq&\Big[\frac{1}{1+f\sum\limits_{i=1}^{r_{ll'}}{\lambda_{ll'i}}+f^2\sum\limits_{i=1, j=1}^{r_{ll'}}{\lambda_{ll'i}\lambda_{ll'j}}+......+f^{r_{ll'}}\prod\limits_{i=1}^{r_{ll'}}{\lambda_{ll'i}}}\Big]^{N},
\end{eqnarray}
where $f=\frac{\rho^2}{4(1+2\rho)}$. In the low SNR $(\rho)$ range, $f^{\xi}$ decreases with increasing $\xi$, so the higher order terms in the denominator of (\ref{eq:PEP_l_l_dash5}) are insignificant and only the first two terms are dominant making PEP$(l,l')$ reduce to
\begin{equation}
\label{eq:PEP_l_l_dash_low_SNR}
\text{PEP}(l,l')\Big|_{\text{Low SNR}}\leq[1+f\sum\limits_{i=1}^{r_{ll'}}{\lambda_{ll'i}}]^{-N}
\end{equation}
Note that the inequality still holds since ignoring the insignificant terms in the RHS enlarges its value.\\

The design criteria is based on the worst PEP among all codeword pairs $\V{V}_l$ and $\V{V}_{l'}$ $\forall\,l\neq l'\in\{0,...,L-1\}$. The worst pair has the maximum PEP and therefore the minimum $\sum\limits_{i=1}^{r_{ll'}}{\lambda_{ll'i}}$, which is defined as the diversity sum $\delta$
\begin{equation}
\label{eq:diversity_sum}
\delta=\mini\limits_{l\neq l'\in\{0,...,L-1\}}\:\,\sum\limits_{i=1}^{r_{ll'}}{\lambda_{ll'i}}.
\end{equation}
In conclusion, the design criterion for ST codebooks in the low SNR range is to maximize the diversity sum defined in (\ref{eq:diversity_sum}).\\

On the contrary, in the high SNR range, $f\eq \frac{\rho^2}{4(1+2\rho)}\!\simeq\!\frac{\rho^2}{8\rho}\eq \frac{\rho}{8}$. Therefore $f^{\xi}$ increases with increasing $\xi$, so the lower order terms in the denominator of (\ref{eq:PEP_l_l_dash5}) are insignificant and only the last term is the dominant one making the upper bound of PEP$(l,l')$ reduce to
\begin{eqnarray}
\label{eq:PEP_l_l_dash_high_SNR}
\text{PEP}(l,l')\Big|_{\text{High SNR}}&\leq &[f^{r_{ll'}}\prod\limits_{i=1}^{r_{ll'}}\lambda_{ll'i}]^{-N}\nonumber\\
&\leq&\bigg[\Big(\frac{\rho}{8}\Big)^{r_{ll'}}\prod\limits_{i=1}^{r_{ll'}}{\lambda_{ll'i}}\bigg]^{-N}\nonumber\\
&\leq&\bigg[\frac{\rho}{8}\Big(\prod\limits_{i=1}^{r_{ll'}}{\lambda_{ll'i}}\Big)^{\frac{1}{r_{ll'}}}\bigg]^{-r_{ll'}N}
\end{eqnarray}
Taking $10\log_{10}$ of (\ref{eq:PEP_l_l_dash_high_SNR}), the PEP in the logarithmic scale (in dB) is
\begin{eqnarray}
\label{eq:PEP_l_l_dash_high_SNR_log}
10\log_{10}\text{PEP}(l,l')\Big|_{\text{High SNR}}\!&\!\leq\! &\!-r_{ll'}N\Big(10\log_{10}\frac{\rho}{8}+10\log_{10}\Big(\prod\limits_{i=1}^{r_{ll'}}{\lambda_{ll'i}}\Big)^{\frac{1}{r_{ll'}}}\Big)\nonumber\\
\!&\!\leq\!&\!\underbrace{-r_{ll'}N}_{a}\Big(\underbrace{10\log_{10}\frac{\rho}{4}}_{x} +\underbrace{10\log_{10}\frac{1}{2}}_{\unit[-3]{dB}} +\underbrace{10\log_{10}\Big(\prod\limits_{i=1}{\lambda_{ll'i}}\Big)^{\frac{1}{r_{ll'}}}}_{b}\Big)\nonumber\\
\!&\!\leq\!&\!a(x\unit[-3]{dB}+b)
\end{eqnarray}
Useful insights can be extracted from the above relation. First, $\nicefrac{\rho}{8}$ has been split into $\nicefrac{\rho}{4}$ and $\nicefrac{1}{2}$ because when (\ref{eq:PEP_l_l_dash_high_SNR}) is compared to the corresponding PEP$(l,l')$ in the high SNR range for \emph{coherent} systems as in \cite[p.132]{Jankiraman2004}, \cite[p.116, eq.(6.11)]{PNG2003}, and \cite[p.31, eq.(3.4)]{Lusina2003}, the only difference is an extra $\nicefrac{1}{2}$ factor in our (non-coherent) case. Such a factor translates to the well-known $\unit[3]{dB}$ loss in SNR experienced by non-coherent systems compared to coherent ones. Other than the half factor, PEP$(l,l')$ for unknown-channel systems has the same form as that for known-channel systems. This leads to the important conclusion that a good design for a known-channel system is also good for an unknown channel system in the case of differential unitary transmission. Consequently the design criterion for both systems in the unitary case is expected to be the same.\\

\begin{figure}%
\centering
\input{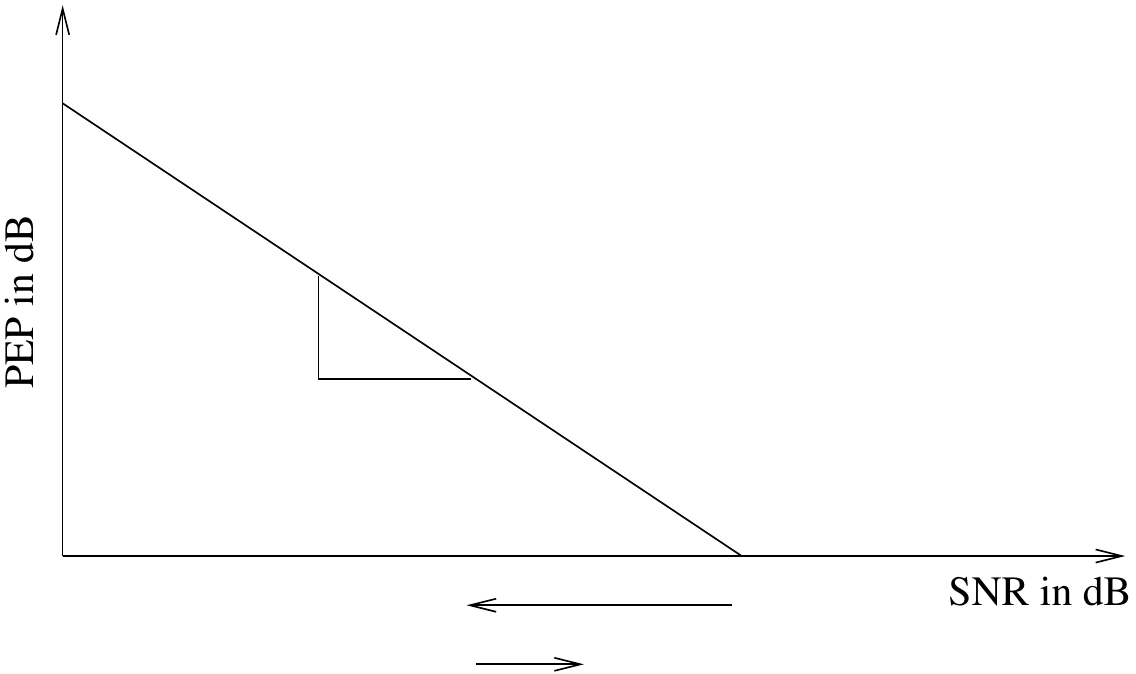_t}
\caption{PEP$(l,l')$ in dB vs SNR in $M\!\times\!N$ system with Differential ST coding}%
\label{fig:SER_SNR_STBC}%
\end{figure}

Investigating (\ref{eq:PEP_l_l_dash_high_SNR_log}) further, the upper bound of the PEP$(l,l')$ in the logarithmic scale takes the form of $y\eq a(x+b-\unit[3]{dB})$, where $y$ is PEP$(l,l')$ in dB and $x$ is a scaled version of SNR in dB. This  relation is illustrated in Figure \ref{fig:SER_SNR_STBC}. $a$ is the slope of the curve, specifically the magnitude of the slope of the PEP$(l,l')$ vs SNR curve in the logarithmic scale is $r_{ll'}N$. The slope of the error rate curve is a measure of how fast the error performance improves with SNR. For the design criterion, consider the worst codeword pair which leads to the least steepness of the PEP curve, and therefore to the lowest absolute slope. Such a codeword pair has the minimum rank of the distance matrix $\V{D}_{ll'}$ or equivalently of $\V{D}_{ll'}\hr\V{D}_{ll'}$ among all codeword pairs (i.e. $\min r_{ll'}\:\forall\:l\neq l'\in\{0,...,L-1\}$). Based on this we define the Diversity Order (DO) as the smallest absolute slope of the PEP vs SNR curve in the logarithmic scale. Namely, 
\begin{equation}
\label{eq:DO}
\text{DO}=\mini\limits_{l\neq l'\in\{0,..,L-1\}}\:\,r_{ll'}N.
\end{equation}
Two codebooks having the same DO will have parallel error rate curves. The gain achieved due to the increase of the slope of the error rate curve is known as the diversity gain. One design criterion for ST codes in the high SNR regime is to maximize the DO in order to optimize the spatial diversity. Such a design criterion is known as the \emph{rank criterion} since it is based on the minimum rank of the difference matrices. A codebook whose \emph{all} codeword pairs have distance matrices of maximum rank (i.e. $(r_{ll'}\eq M)\:\ \forall\:\: l\neq l'\in\{0,...,L-1\}$) achieves the maximum diversity order of $MN$ which is the total number of diversity branches of an $M\!\times\!N$ MIMO system. Such a code is said to have achieved full diversity. \\

Consider again the relation $y\eq a(x+b-\unit[3]{dB})$, one can see that $y\eq 0$ when $x\eq -b+\unit[3]{dB}$. This means that due to ST coding, the logarithmic PEP curve is shifted horizontally to the left by $b$ which indicates performance improvement, and due to non-coherent detection, the curve is shifted $\unit[3]{dB}$ to the right. The worst codeword pair can alternatively be defined as the one that has the smallest value of $b$, which in the linear scale is defined as the coding gain (CG), namely
\begin{equation}	
\label{eq:CG}
\text{CG}=\mini\limits_{l\neq l'\in\{0,..,L-1\}}\:\,\Big(\prod\limits_{i=1}^{r_{ll'}}{\lambda_{ll'i}}\Big)^{\frac{1}{r_{ll'}}}.
\end{equation}
Such a gain defines the second design criterion for space-time codes in the high SNR range. Namely, a ST code can be designed to maximize the coding gain. It is termed as coding gain since it is analogous to the coding gain achieved with channel coding as it is a \emph{horizontal} shift to the left of the error rate curve in the logarithmic scale. As seen in (\ref{eq:CG}), CG is proportional to the smallest product of the non-zero eigenvalues of matrix $\V{D}_{ll'}\hr\V{D}_{ll'}$ over all codeword pairs. Such a product is the reason why CG is sometimes referred to as diversity product as in \cite{Hochwald2000}. The term diversity product is also analogous to diversity sum defined in (\ref{eq:diversity_sum}), which is the measure used in the design criterion for ST codes in the low SNR regime.\\

Having defined the diversity gain and the coding gain, the two gains have different effects on the error rate curve. The diversity gain is the result of the increase in the slope of the error rate curve defined by the diversity order, and therefore the SNR improvement due to diversity gain increases with SNR. Whereas the coding gain is the horizontal shift of the error rate curve, and therefore the SNR improvement due to coding gain remains constant with increasing SNR. Figure \ref{fig:DG_vs_CG} illustrates the difference between diversity gain and coding gain.\\
\begin{figure}
\centering
 \input{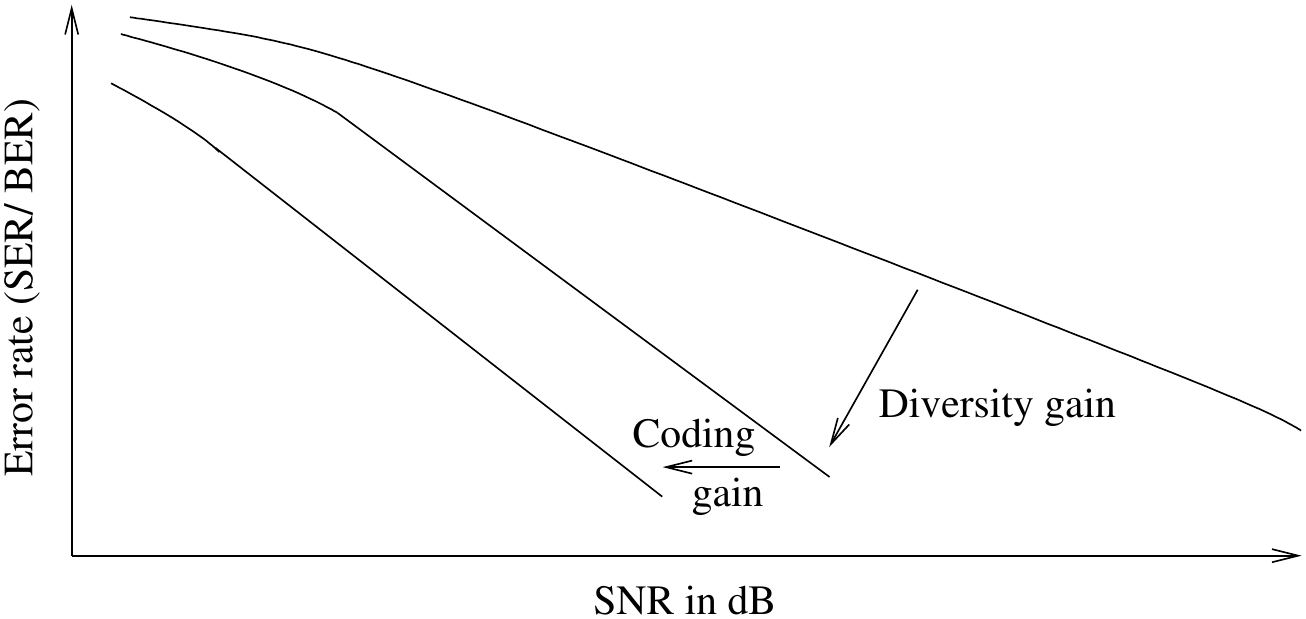_t}
\caption{A schematic illustrating the difference between diversity gain and coding gain.}
\label{fig:DG_vs_CG}
\end{figure}

In the case of codes that achieve full diversity (i.e. $\!\text{DO}\eq MN$), all distance matrices of all codeword pairs are of full rank. In this case the only design criterion is maximizing the coding gain which then includes the minimum product of \emph{all} eigenvalues of matrix $\V{D}_{ll'}\hr\V{D}_{ll'}$. This is the same as the minimum determinant of the same matrix (see theorem \ref{theo:det_EV}), i.e.
\begin{eqnarray}
\label{eq:CG_full_diversity}
\text{CG}\Big|_{\text{DO=}MN}\!&=&\!\mini\limits_{l\in\{0,..,L-1\}}\:\,\Big(\prod\limits_{i=1}^{M}{\lambda_{ll'i}}\Big)^{\frac{1}{M}}\nonumber\\
&=&\mini\limits_{l\in\{0,..,L-1\}}\:\,\Big[\det(\V{D}_{ll'}\hr\V{D}_{ll'})\Big]^{\frac{1}{M}}
\end{eqnarray}
For this reason, the second design criterion is known as the \emph{determinant criterion}, which can be restated as follows; design a ST code that maximizes the minimum determinant of the squared distance matrix $\V{D}_{ll'}\hr\V{D}_{ll'}$. Note also that if an arbitrary codebook achieves a non-zero determinant for all squared distance matrices, it means that all distance matrices are of full rank and therefore the code achieves full diversity. Therefore, evaluating the minimum determinant in (\ref{eq:CG_full_diversity}) allows us to see whether the full diversity condition is satisfied, and also to know how much coding gain can be achieved.\\

In conclusion, this section has derived the design criteria for ST codes in both the low and the high SNR range based on minimizing the worst pair-wise error probability. To improve the performance in the low SNR range, the ST code should be designed to maximize the diversity sum. If however the high SNR range is the operating range, then two design criteria have been defined. The first is the rank criterion which aims at maximizing the minimum absolute slope of the error rate curve by optimizing the diversity order of the code. The second criterion is the determinant criterion whose goal is to maximize the minimum determinant of the squared distance matrix in order to increase the horizontal left shift of the error rate curve, and therefore optimize the coding gain.


\abbrev{STTC}{Space-Time Trellis Coding}
\abbrev{STBC}{Space-Time Block Code}
\abbrev{OSTBC}{Orthogonal Space-Time Block Code}
\abbrev{DOSTBC}{Differential Orthogonal Space-Time Block Code}
\abbrev{i.i.d.}{independent and identically distributed}
\abbrev{PEP}{Pair-wise Error Probability}
\abbrev{RHS}{Right-Hand Side}
\abbrev{DO}{Diversity Order}

\chapter{Orthogonal Space-Time Schemes}
\label{chap:DSTC_schemes}
In Chapter \ref{chap:STC}, we laid the foundation of space-time coding in differential systems. A general form for the code structure, the design criteria and the ML decision metric have been defined. In this chapter, we show several realizations of some STC schemes. Specifically, the \emph{orthogonal} STC schemes described in the literature review in Section \ref{s:STC_Literature_Review} will be unfolded here.  The chapter describes two classes of ST schemes, namely Unitary Space-Time Modulation (USTM), and Orthogonal Space-Time Block Codes (OSTBCs). Furthermore, the different schemes are compared in terms of complexity and error performance.

\section{Differential Unitary Space-Time Modulation}
\label{s:DSTM}
Hochwald and Marzetta proposed in \cite{Hochwald_Marzetta_2000} a modulation scheme for multiple transmit antenna systems which they named Differential Unitary Space-Time Modulation (DUSTM). The scheme is well-suited for Rayleigh flat fading environments when neither the transmitter nor the receiver knows the channel coefficients. In \cite{Hochwald_USTM_design_2000}, Hochwald et al. showed a systematic approach to design unitary space-time signals. Then in \cite{Hochwald2000}, Hochwald and Sweldens proposed one simple design for DUSTM that led eventually to constellations of the so-called diagonal signals, where only one transmit antenna is active at a time. This section will explain such a scheme in detail.\\

Recall the MIMO channel model described in Section \ref{s:MIMO-channel_model}, where the transmitter is equipped with $M$ transmit antennas and the receiver with $N$ receive antennas and each antenna pair is connected through a piece-wise constant Rayleigh flat fading channel. Such a system can be described as in (\ref{eq:Y_MIMO_STC}) by (restated here for convenience)
\begin{equation}
\label{eq:Y_MIMO_STC2}
\V{Y}_\tau=\sqrt{\rho}\V{S}_\tau\V{H}+\V{W}_\tau
\end{equation}
where at block index $\tau$, $\V{Y}_\tau$ is the $T\!\times\! N$ received matrix, $\V{S}_\tau$ is the $T\!\times\! M$ differentially encoded transmit matrix, $\V{W}_\tau$ is the $T\!\times\! N$ noise matrix and $\V{H}$ is the $M\!\times\! N$ channel matrix in some transmission time frame (a frame spans multiple transmit blocks during which the channel matrix is assumed constant). Recall also that the transmit matrices considered are square matrices, i.e. ($T\eq  M$). The transmission scheme of DUSTM follows the STC architecture shown in Figure \ref{subfig:DSTM}, where the information bits are directly mapped to information matrices.\\

As the name DUSTM suggests, the transmit signal matrices considered in this scheme are unitary. Explicitly stated, all information matrices $\V{V}_{l}$ are unitary matrices, i.e. $\V{V}_l\hr\V{V}_l\eq\V{I}_M\:\forall\, l\in\{0,..,L-1\}$, and using the differential transmission equation
\begin{equation}
\label{eq:Diff_encoding_unitary2}
\V{S}_\tau=\V{V}_{z_\tau}\V{S}_{\tau-1},
\end{equation}
if the initial transmit matrix $\V{S}_0$ is any arbitrary unitary matrix, then it follows that all transmit matrices $\V{S}_\tau$ are also unitary, i.e. $\V{S}_\tau\hr\V{S}_\tau\eq\V{I}_M\:\forall\,\tau$.  The unitary condition ensures that the transmit symbol stream over the $M$ antennas are mutually orthogonal, which has the advantage of simplifying the demodulation metric as shown in (\ref{eq:ML_unitray}) to
\begin{equation}
\label{eq:ML_Unitary_STC2}
 \hat{\V{V}}_{z_\tau}\Big|_{\text{ML}}=\argmax\limits_{l\in\{0,..,L-1\}}\:\,\|\V{Y}_{\tau-1}+\V{V}_l\hr\V{Y}_\tau\|_F^2.
\end{equation}
\subsection{DUSTM vs. DPSK}
\label{ss:DUSTM_DPSK}
Up to this point, a clear similarity between the DUSTM scheme and the DPSK single antenna scheme can be inferred. Table \ref{tab:DPSK_vs_DUSTM} summarizes such a similarity.
\renewcommand{\arraystretch}{1.5} 
\begin{table}[!ht]
\begin{center}
\caption{Comparison between DPSK modulation scheme for single antenna system and DUSTM modulation scheme for MIMO systems}
\label{tab:DPSK_vs_DUSTM}
  \begin{tabular}{|m{3cm}|m{5.2cm}|m{6cm}|}
    \hline    
      \textbf{Parameter} 	 &  \centering\textbf{DPSK} &  \centering\textbf{DUSTM} \tn[5pt] \hline 
	  Information integer sequence & \multicolumn{2}{m{10cm}|}{\centering $z_1, z_2, ..., z_t, ...$} \tn \hline
	  Information symbol sequence &  \vspace{-1cm}\begin{eqnarray*} & v_{z_1}, v_{z_2}, ...,v_{z_t},... \\ & v_{l}=e^{\frac{j2\pi l}{L}},\\ & |v_l|=1\:\forall\, l=\{0,..,L-1\}\end{eqnarray*}\vspace{-1cm} & \vspace{-1cm}\begin{eqnarray*} & \V{V}_{z_1}, \V{V}_{z_2}, ..., \V{V}_{z_\tau},... \\ & \hspace{-0.5cm} \V{V}_l \text{ is an }M\!\times\!M \text{ unitary matrix,} \\ & \hspace{-0.3cm}\text{i.e. } \V{V}_l\hr\V{V}_l=\V{I}_M\:\forall\, l=\{0,..,L-1\}\end{eqnarray*} \vspace{-1cm} \tn \hline
	  Transmit symbol sequence  & \vspace{-1cm}\begin{eqnarray*} & s_1, s_2, ..., s_t \\  & s_t=v_{z_t}s_{t-1},\:\: t=1,2,.... \\ & s_0=1 \end{eqnarray*}\vspace{-1cm} & \vspace{-1cm}\begin{eqnarray*} & \V{S}_1, \V{S}_2, ..., \V{S}_\tau \\ & \V{S}_\tau=\V{V}_{z_\tau}\V{S}_{\tau-1},\:\: \tau=1,2,.... \\ & \V{S}_0=\V{I}_M \end{eqnarray*}\vspace{-1cm} \tn \hline
	  $l^{\text{th}}$ candidate transmit matrix in a canonical form & \vspace{-1cm}\begin{equation*}\bar{\V{s}}^{(l)}=\begin{bmatrix} 1\\e^{\frac{j2\pi l}{L}}\end{bmatrix}\end{equation*}\vspace{-1cm} & \vspace{-1cm}\begin{equation*}\bar{\V{S}}^{(l)}=\begin{bmatrix}\V{I}_M\\\V{V}_l\end{bmatrix}\end{equation*}\vspace{-1cm}\tn\hline
	  Received symbols for differential detection	 &\vspace{-1cm}\begin{equation*}\bar{y}_t =\begin{bmatrix} y_{t -1} \\ \ y_t \end{bmatrix}\end{equation*}\vspace{-1cm} &  \vspace{-1cm}\begin{equation*}\V{\bar{Y}}_\tau =\begin{bmatrix} \V{Y}_{\tau -1} \\ \V{Y}_{\tau} \end{bmatrix}\end{equation*}\vspace{-1cm} \tn \hline
	  ML metric in Rayleigh piece-wise constant fading channel & \vspace{-0.95cm}\begin{equation*}\begin{aligned}\hspace{-0.15cm}&\hat{v}_t\Big|_{\text{ML}}=\argmax\limits_{l\in\{0,..,L-1\}}\:\,\|\bar{\V{y}}_t\hr \bar{\V{s}}^{(l)}\|_F \\\!&\eq\!\argmax\limits_{l\in\{0,..,L-1\}}\:\,\|[y_{t-1}^*\, y_t^*]\begin{bmatrix} 1\\ v_l \end{bmatrix}\|_F \\ \!&\eq\!\argmax\limits_{l\in\{0,..,L-1\}}\:\,|y_{t-1}^*+ y_t^*v_l|\end{aligned}\end{equation*}\vspace{-0.7cm} & \vspace{-0.9cm}\begin{equation*}\begin{aligned}\hspace{-0.2cm}&\hat{\V{V}}_\tau\Big|_{\text{ML}}\!=\argmax\limits_{l\in\{0,..,L-1\}}\:\,\|\bar{\V{Y}}_\tau\hr \bar{\V{S}}^{(l)}\|_F \\\!&\eq \!\argmax\limits_{l\in\{0,..,L-1\}}\:\,\|[\V{Y}_{\tau-1}\hr \V{Y}_\tau\hr]\begin{bmatrix}\V{I}_M \\\V{V}_l\end{bmatrix}\|_F \\ \!&\eq \!\argmax\limits_{l\in\{0,..,L-1\}}\:\,\|\V{Y}_{\tau-1}\hr+ \V{Y}_\tau\hr\,\V{V}_l\|_F\end{aligned}\end{equation*} \vspace{-0.7cm}\tn \hline
\end{tabular}	
\end{center}
\end{table}
The table shows that DUSTM is a direct extension to DPSK, by extending the scalar elements to $M\!\times\!M$ matrices.\\

In DPSK, any transmit symbol can be written as
\begin{equation}
\label{eq:s_t_DPSK}
 s_t=\Big(\prod\limits_{\kappa=1}^{t}{v_{z_\kappa}}\Big)s_0,\:\:t=1,2,...
\end{equation}
with all $v_{z_\kappa}$ drawn from the alphabet
\begin{equation}
\label{eq:PSK_Alphabet}
\mathcal{A}_\text{PSK}=\{v_l=e^{\frac{j2\pi l}{L}}\:\forall\, l\in\{0,..,L-1\}\}.
\end{equation}
Due to the inherent \emph{group} nature of the PSK alphabet, the product of any two elements in $\mathcal{A}_\text{PSK}$ is also an element in $\mathcal{A}_\text{PSK}$. In other words, the product of any two points in the PSK constellation circle falls to a point on the same circle. This is because the product of two phasors is a phasor whose angle is the modulo addition of the angels of the two phasors with respect to $2\pi$. If additionally the initial transmit symbol $s_0$ is also $\in\mathcal{A}_\text{PSK}$ (e.g. $s_0=1$), then it follows from (\ref{eq:s_t_DPSK}) that any transmit symbol $s_t$ is element in $\mathcal{A}_\text{PSK}$. Consequently, the construction of $s_t$ from $s_{t-1}$ doesn't really require multiplying $v_{z_t}$ by $s_{t-1}$, but it rather suffices to add the angles of their phasors.
\subsection{DUSTM Codebook Design}
\label{ss:DUSTM_design}
The group structure of DPSK is however not inherent in the information matrices $\V{V}_l$ of the DUSTM scheme. It should rather be imposed on the design of the codebook. Consider the codebook $\Omega$ defined as
\begin{equation}
 \Omega=\{\V{V}_0,...,\V{V}_{L-1}\}.
\end{equation}
In order for the codebook $\Omega$ to form a finite group under multiplication, four conditions must be satisfied. Namely, enclosure, associativity, existence of an identity element and existence of an inverse element for every element in the set. Enclosure is the condition that the product of any two elements in the set is also an element in the same set. That is, for any $l,\,l'\in\{0,..,L-1\}$, it is required that 
\begin{equation}
 \V{V}_l\V{V}_{l'}=\V{V}_{l''}
\end{equation}
for some $l''\in\{0,..,L-1\}$. Similar to the inherent modulo operation in DPSK transmission, the operator $\oplus$ can be defined for DUSTM scheme to operate on the indices of the group members as
\begin{equation}
\label{eq:modulo_operator}
 l''=l\oplus l'.
\end{equation}
And $\V{V}_{l''}$ can be constructed as 
\begin{equation}
 \V{V}_{l''}=\V{V}_{l\oplus l'}.
\end{equation}\\

The existence of an identity element can be satisfied by making the identity matrix $\V{I}_M$ a member in the group. For example, let $\V{V}_0=\V{I}_M$. By imposing the conditions of enclosure and the existence of an identity element, every element will automatically have an inverse in the group. To see this, consider some element $\V{V}_l$ in $\Omega$, then from enclosure, there exists a group member $\V{V}_{l'}$ such that
\begin{equation*}
 \V{V}_l\V{V}_{l'}=\V{V}_0=\V{I}_M
\end{equation*}
therefore $\V{V}_l^{-1}\eq \V{V}_{l'}$. Finally, the associativity condition follows directly from the associativity of matrix multiplication (i.e. $(\V{V}_l\V{V}_{l''})\V{V}_{l'''}=\V{V}_l(\V{V}_{l''}\V{V}_{l'''})$). If additionally the initial transmit matrix $\V{S}_0$ is chosen to be a group member (e.g. $\V{S}_0=\V{I}_M$), then it follows from the differential transmission equation in (\ref{eq:Diff_encoding_unitary2}) that all transmit matrices $\V{S}_\tau$ are element in the group $\forall\:\tau$. In this case the differential transmission can be described as follows;
if at block index $\tau-1$, the transmit matrix $\V{S}_{\tau-1}$ is the group member with index $x_{\tau-1}$, i.e. $\V{S}_{\tau-1}\eq \V{V}_{x_{\tau-1}}$, and if the new information matrix to be encoded is $\V{V}_{z_\tau}$, then the new transmit matrix $\V{S}_\tau$ is
\begin{equation}
\V{S}_\tau=\V{V}_{z_\tau}\V{S}_{\tau-1}=\V{V}_{z_\tau}\V{V}_{x_{\tau-1}}=\V{V}_{x_\tau}
\end{equation}
where,
\begin{equation}
\label{eq:modulo_encoding}
 x_\tau=z_\tau\oplus x_{\tau-1}.
\end{equation}
This shows the major advantage of finite group constellations, where the transmitter never needs to explicitly multiply matrices for differential encoding, it rather requires only the \emph{index} of the previously transmit matrix ($x_{\tau-1}$) and that of the new information matrix ($z_{\tau}$) to compute the new index $x_\tau$ using a lookup table. Therefore the group structure simplifies the transmitter's role significantly.\\

Requiring an additional constraint on the group $\Omega$ to satisfy commutativity will further simplify the transmission process. Commutativity is the condition that all elements in the group commute, namely
\begin{equation}
\V{V}_l\V{V}_{l'}=\V{V}_{l'}\V{V}_l\:\:\:\forall\:l,l'\in\{0,..,L-1\}.
\end{equation}
If commutativity is satisfied then the group is called an abelian or a commutative group. The advantage of imposing  the commutative property can be described as follows. Since the group members $\V{V}_0,...,\V{V}_{L-1}$ are unitary, they can be eigendecomposed as
\begin{equation*}
 \V{V}_l=\V{X}_l\V{\Lambda}_l\V{X}_l^{-1},
\end{equation*}
where $\V{X}_l$ is the matrix of eigenvectors of $\V{V}_l$ which is also unitary and $\V{\Lambda}_l$ is the matrix of eigenvalues. Now since $\V{V}_0,...,\V{V}_{L-1}$ commute, they share the same set of eigenvectors, namely
 \begin{equation}
  \V{X}_0=\V{X}_1=...=\V{X}_{L-1}\overset{\underset{\Delta}{}}{=}\V{X},
 \end{equation}
and therefore $\V{V}_l=\V{X}\V{\Lambda}_l\V{X}^{-1}$ $\forall\:l\in\{0,..,L-1\}$. Furthermore, premultiplying or postmultiplying all constellation members by unitary matrices doesn't change the properties of the codebook in terms of error performance. So if all group members are premultiplied by the unitary matrix $\V{X}^{-1}$ and postmultiplied by the unitary matrix $\V{X}$, then such a transform results in an equivalent group with the group members
\begin{equation}
 \V{V}_l\longrightarrow\V{X}^{-1}\V{V}_l\V{X}=\V{X}^{-1}(\V{X}\V{\Lambda}_l\V{X}^{-1})\V{X}=\V{\Lambda}_l,
\end{equation}
which are diagonal matrices $\forall\:l$. In other words, imposing the commutative property on the group means that we can restrict ourselves to codebooks with diagonal matrices having the form
\begin{equation}
\V{V}_l=
 \begin{bmatrix}
\lambda_{l1}& & \text{\huge{0}} \\
& \ddots &  \\
\:\text{\huge{0}}& &\lambda_{lM}
\end{bmatrix}.
\end{equation}
Since the transmit matrices $\V{S}_\tau$ are also elements in the group, then they also have a diagonal structure implying that only one antenna is active at a time. For this reason, the transmit signals are termed as \emph{diagonal} signals.\\

One simple way of constructing an abelian group is to make it cyclic, meaning that $\V{V}_l$ is constructed as
\begin{equation}
\label{eq:v_l_G_l}
 \V{V}_l=\V{G}^{l}\:\:\:\forall\:l\in\{0,..,L-1\},
\end{equation}
where $\V{G}$ is defined as the generator matrix of the cyclic codebook $\Omega$ since it can be used to generate all codewords. In this case, 
\begin{equation}
\label{eq:G_l1_l2}
 \V{V}_{l}\V{V}_{l'}=\V{G}^{l}\V{G}^{l'}=\V{G}^{(l+l')},
\end{equation}
and since $\Omega$ is a finite group of length $L$, all its members (including $\V{G}$) must be $L^{\text{th}}$ root of unity, i.e. $\V{V}_l^L=\V{I}_M\:\:\forall\:l\in\{0,..,L-1\}$. Therefore, $\V{G}^{(l+l')}$ in (\ref{eq:G_l1_l2}) is the same as $\V{G}^{(l+l')\,\text{mod}\, L}$. Consequently, due to the cyclic property of the code, the index operator $\oplus$ in (\ref{eq:modulo_operator}) becomes
\begin{equation}
\label{eq:modulo_addition_cyclic}
 l''=(l+l')\,\text{mod}\,L.
\end{equation}
As a result, the transmitter does not even need a lookup table to calculate the resulting index of the next transmit member, it only performs the modulo addition operation defined in (\ref{eq:modulo_addition_cyclic}). This is analogous to the modulo addition of the angles of the phasors in DPSK as shown
\begin{eqnarray*}
 v_lv_{l'}=&\Big(e^{\frac{j2\pi}{L}}\Big)^l\Big(e^{\frac{j2\pi}{L}}\Big)^{l'}&=\Big(e^{\frac{j2\pi}{L}}\Big)^{(l+l')\,\text{mod}\,L}\\
\V{V}_l\V{V}_{l'}=&\V{G}^l\V{G}^{l'}&=\V{G}^{(l+l')\,\text{mod}\,L}
\end{eqnarray*}
Therefore the generator matrix $\V{G}$ is analogous to the basic DPSK phasor $e^{\frac{j2\pi}{L}}$. Hence, the generator matrix can be defined as 
\begin{equation}
\left. \V{G}=
 \begin{bmatrix}
e^{\frac{j2\pi}{L}u_1}& & \text{\huge{0}} \\
& \ddots &  \\
\:\text{\huge{0}}& &e^{\frac{j2\pi}{L}u_M}
\end{bmatrix}
\right.  \hspace{1cm}\begin{aligned} &u_{m}\in\{0,...,L-1\},\\
&m=1,...,M.\end{aligned}
\label{eq:generator_matrix}
\end{equation}
which is obviously an $L^{\text{th}}$ root of unity. In the special case when $\textstyle M=1$, the matrix collapses to one exponential which is the phasor base of the DPSK alphabet. $\V{G}$ is the building block used to construct the whole codebook using (\ref{eq:v_l_G_l}). Therefore the design of the codebook is based on the design of the generator matrix $\V{G}$ which is solely based on the design of the exponents $u_1,...,u_M$. Such exponents can be combined in one vector $\V{u}$ as
\begin{equation}
 \V{u}=[u_1,...,u_M].
\end{equation}

The question now is what is the optimal choice of vector $\V{u}$ that optimizes the error performance of the DUSTM scheme? To answer this question, we need to refer back to the design criteria defined in Section \ref{s:Design_Criteria} which was derived for a unitary STC scheme. Let the high SNR range be the operating range of interest, in this case one needs to check whether or not the full diversity condition is satisfied and how much coding gain can be achieved. According to the discussion on (\ref{eq:CG_full_diversity}), the distance matrix $\V{D}_{ll'}=\V{V}_{l'}-\V{V}_l$ needs to be investigated. When the codebook $\Omega$ is a group, it suffices (without loss of generality) to consider the distance matrix between any two different codewords, for example between $\V{V}_{l'}\eq \V{V}_0\eq \V{I}_M$ and $\V{V}_l\eq \V{G}^{l}$ for any $l\in\{1,...,L-1\}$. Therefore,
\begin{equation}
 \V{D}_{l0}=\V{I}_M-\V{G}^{l}= \begin{bmatrix}
1-e^{\frac{j2\pi l}{L}u_1} & & \text{\huge{0}} \\
& \ddots &  \\
\:\text{\huge{0}}& & 1-e^{\frac{j2\pi l}{L}u_M}
\end{bmatrix}, \:\:\:\forall\, l\in\{1,...,L-1\}.
\end{equation}

Since $l>0$, all diagonal elements in $\V{D}_{l0}$ are non-zero. Therefore matrix $\V{D}_{l0}$ is of full rank. Due to the group structure, this is also valid for all distance matrices $\V{D}_{ll'}\:\forall\:l\neq l'\in\{0,..,L-1\}$. Therefore the DUSTM with the cyclic group structure achieves full diversity. Hence, the design criterion of interest is maximizing the coding gain. For this we need to calculate the eigenvalues of the squared distance matrix $\V{D}_{l0}\hr\V{D}_{l0}$ which is the same as the square of the singular values of the distance matrix $\V{D}_{l0}$ using theorem \ref{theo:EV_SV2}. Namely,
\begin{eqnarray}
\label{eq:eig_val_Dl0}
 \lambda_m(\V{D}_{l0}\hr\V{D}_{l0})&=&\sigma_m^2(\V{D}_{l0})\nonumber\\
&=&\sigma_m^2(\V{I}_M-\V{G}^{l})\nonumber\\
&=& (1-e^{\frac{j2\pi l}{L}u_m})^*(1-e^{\frac{j2\pi l}{L}u_m})\nonumber\\
&=& 1-e^{\frac{j2\pi l}{L}u_m}-e^{-\frac{j2\pi l}{L}u_m} +1\nonumber\\
&=& 2-2\cos(\frac{2\pi lu_m}{L})\nonumber\\
&=& 4\sin^2(\frac{\pi lu_m}{L})
\end{eqnarray}
Therefore the coding gain in (\ref{eq:CG_full_diversity}) reduces to
\begin{eqnarray}
 \text{CG}&=&\mini\limits_{l=1,...,L-1}\:\,\Big(\prod\limits_{m=1}^{M}{\lambda_{l0m}}\Big)^{\frac{1}{M}}\nonumber\\
&=& \mini\limits_{l=1,...,L-1}\:\,\Big(\prod\limits_{m=1}^{M}{4\sin^2(\frac{\pi l}{L}u_m)}\Big)^{\frac{1}{M}}
\end{eqnarray}
To maximize the coding gain, it is required to search for the optimal vector $\V{u}$ that satisfies
\begin{equation}
\label{eq:design_criterion_DUSTM}
 \V{u}_{\text{opt}}=\argmax\limits_{\V{u}=[u_1,...,u_M]}\:\,\mini\limits_{l=1,...,L-1}\Big(\prod\limits_{m=1}^{M}{\sin^2(\frac{\pi l}{L}u_m)}\Big)^{\frac{1}{M}}
\end{equation}
\renewcommand{\arraystretch}{1} 
\begin{table}
\begin{center}
\caption{Optimal $\V{u}$ vectors for the DUSTM scheme with diagonal signals. \\See \cite[Table I]{Hochwald2000}}
\label{tab:opt_u}
  \begin{tabular}{|m{1cm}|m{1cm}|m{2cm}|m{5cm}|}
    \hline    
      $M$	 &  $\eta$ & $L=2^{\eta M}$ & $\V{u}=[u_1,...,u_M]$ \tn[5pt] \hline 
	1 & 1 & 2& [1] \tn \hline
	2 & 1 & 4& [1 1] \tn \hline
	3 & 1 & 8& [1 1 3] \tn \hline
	4 & 1 & 16& [1 3 5 7] \tn \hline
	5 & 1 & 32& [1 5 7 9 11] \tn \hline \hline
	1 & 2 & 4&[1] \tn \hline
	2 & 2 & 16&[1 7] \tn \hline
	3 & 2 & 64&[1 11 27] \tn \hline
	4 & 2 & 256&[1 25 97 107] \tn \hline
	5 & 2 & 1024&[1 157 283 415 487] \tn \hline
\end{tabular}	
\end{center}
\end{table}
Solving (\ref{eq:design_criterion_DUSTM}) analytically is cumbersome, therefore the authors in \cite{Hochwald2000} performed exhaustive computer searches trying to find optimal $\V{u}$ vectors for different number of transmit antennas $M$ and different spectral efficiencies $\eta$ in bits/channel use. The optimal $\V{u}$ vectors for $\eta=1,2$ and $M=1,...,5$ are published in \cite[Table I]{Hochwald2000} and shown here in Table \ref{tab:opt_u}. From the definition of spectral efficiency in (\ref{eq:spectral_efficiency_STM}), the codebook size is $L\eq 2^{\eta M}$, i.e. every information matrix carries $\eta M$ bits.
However, which bit sequence is to be assigned to which constellation matrix for achieving optimal performance is not intuitive to see. The following analysis uses the code properties to infer a good bit-to-matrix mapping. As argued in \cite{Hochwald2000}, restricting $u_1,...,u_M$ to be relatively prime to $L$ doesn't change the code properties. Since the $L$ values considered here are all even, therefore $u_1,...,u_M$ can be made all odd  (see Table \ref{tab:opt_u}). Now, consider matrix $\V{V}_{L/2}$, its $m^{\text{th}}$ element is
\begin{equation}
 e^{j\frac{2\pi l}{L}u_m}=e^{j\frac{2\pi L/2}{L}u_m}=e^{j\pi u_m}=-1, \:\forall\: m=1,...,M.
\end{equation}
Therefore $\V{V}_{L/2}\eq -\V{I}_M\eq -\V{V}_0$. In this case, using (\ref{eq:eig_val_Dl0}), the distance matrix $\V{D}_{\frac{L}{2}0}$ has squared singular values 
\begin{equation}
\sigma_m^2(\V{D}_{\frac{L}{2}0})=4\sin^2(\frac{\pi Lu_m}{2L})=4\sin^2(\frac{\pi}{2} u_m)=4,
\end{equation}
which is the maximum possible. This shows that matrices $\V{V}_0$ and $\V{V}_{L/2}$ are maximally separated. Due to the group structure, this is also the case for all matrix pairs $\V{V}_l$ and $\V{V}_{l+L/2}$ $\forall\:l=0,...,L-1$. Therefore matrices separated by $L/2$ should be assigned bit sequences with maximum hamming distance, i.e. complementary bit assignment. An example for such bit-to-matrix assignment in the case of a codebook of size $L\eq 8$ is shown in Table \ref{tab:bit_map_L_8}.
\begin{table}[!ht]
\begin{center}
\caption{The bit-to-matrix assignment for a codebook of size $L\eq 8$}
\label{tab:bit_map_L_8}
\begin{tabular}{|m{1cm}|m{1cm}|m{2cm}|m{2cm}|}
    \hline 
$l$ & $ l+\frac{L}{2}$ &$\V{V}_{l}$ & $\V{V}_{l+\frac{L}{2}}$ \tn \hline
0 &4 &000 & 111 \tn \hline
1 &5 &001 & 110 \tn \hline
2 &6 &010 &101 \tn \hline
3 &7 &011 &100 \tn \hline
 \end{tabular}	
\end{center}
\end{table}
\vspace{-0.5cm}
\subsection{DUSTM Performance Analysis}
\label{ss:DUSTM_simulations}
Simulations have been made over a piece-wise constant Rayleigh flat fading channel with the transmission scheme summarized in Figure \ref{fig:DUSTM_transmission}. The optimal $ \V{u}$ vectors in table \ref{tab:opt_u} are used to construct the generator matrix $\V{G}$. The receiver uses the decision metric of unitary transmission in (\ref{eq:ML_Unitary_STC2}).\\

\begin{figure}[!ht]
 \centering
\input{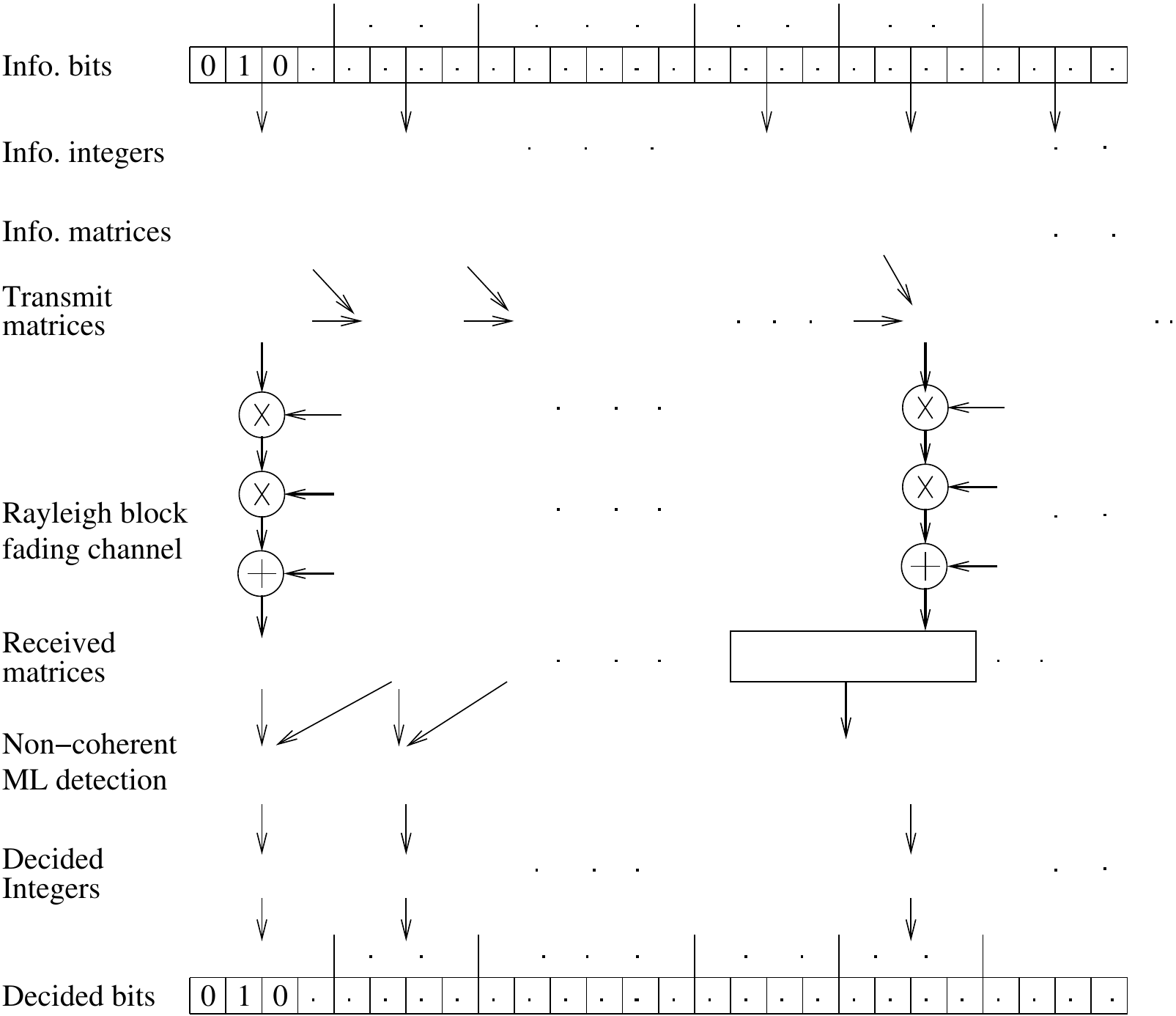_t}
\caption{Transmission scheme of DUSTM with group cyclic code.}
\label{fig:DUSTM_transmission}
\end{figure}

The resulting BER curves for $M\eq 1,...,5\,$ transmit antennas and at $\eta=1,2$ bits/s/Hz are shown in Figure \ref{fig:DUSTM_BER_curves}.  We see that increasing the number of transmit antennas improves the performance only in the high SNR range, whereas in the low SNR range the performance is degraded with increasing $M$. This behaviour is expected due to the fact that the scheme considered is designed to enhance the error performance in the high SNR range. For example, if 2 bits/channel use are to be transmitted, then using the DUSTM diagonal scheme defined here is only meaningful for SNR $>$ 15 dB. \\

It is also clear that the slope of the curves increases with increasing $M$ resulting in diversity gain. However, the \emph{increase} in diversity gain itself decreases with increasing $M$. For example, going from $M\eq  1$ to $M\eq  2$, we gain a lot, but then the gain keeps decreasing until it does not make sense to increase $M$ anymore. This effect is shown when comparing the $M=4$ and $M=5$ curves at rate $\eta=1$, there is only a marginal insignificant improvement showing that the gain saturates at $M=4$ transmit antennas in case of transmitting 1 bit/channel use.\\
  
\begin{figure}[htp]
\centering
\subfloat[at $\eta\eq 1$ bit/channel use]{\label{subfig:DUSTM_R_1}\includegraphics[width=0.7\textwidth]{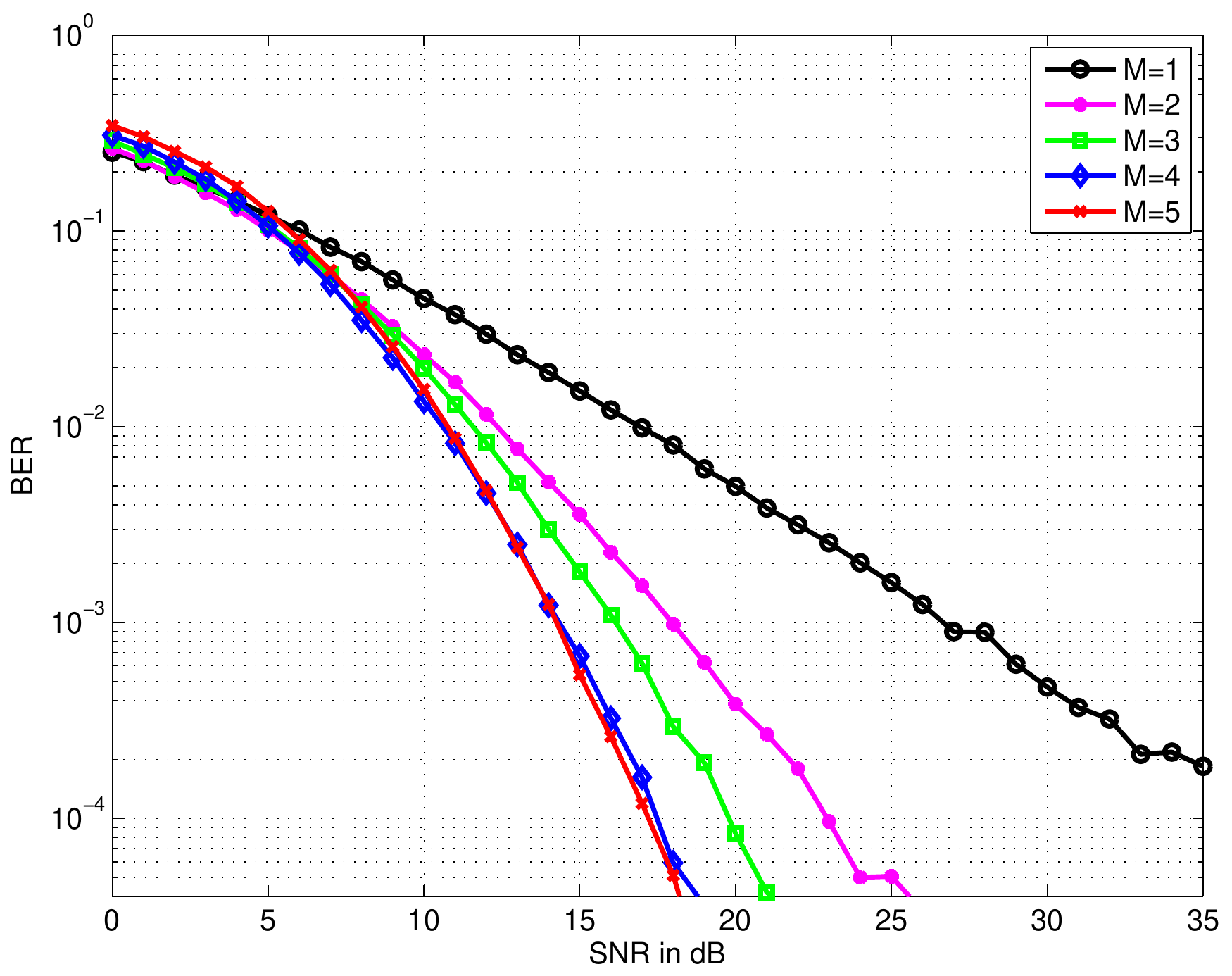}}\\
\subfloat[at $\eta\eq 2$ bits/channel use]{\label{subfig:DUSTM_R_2}\includegraphics[width=0.7\textwidth]{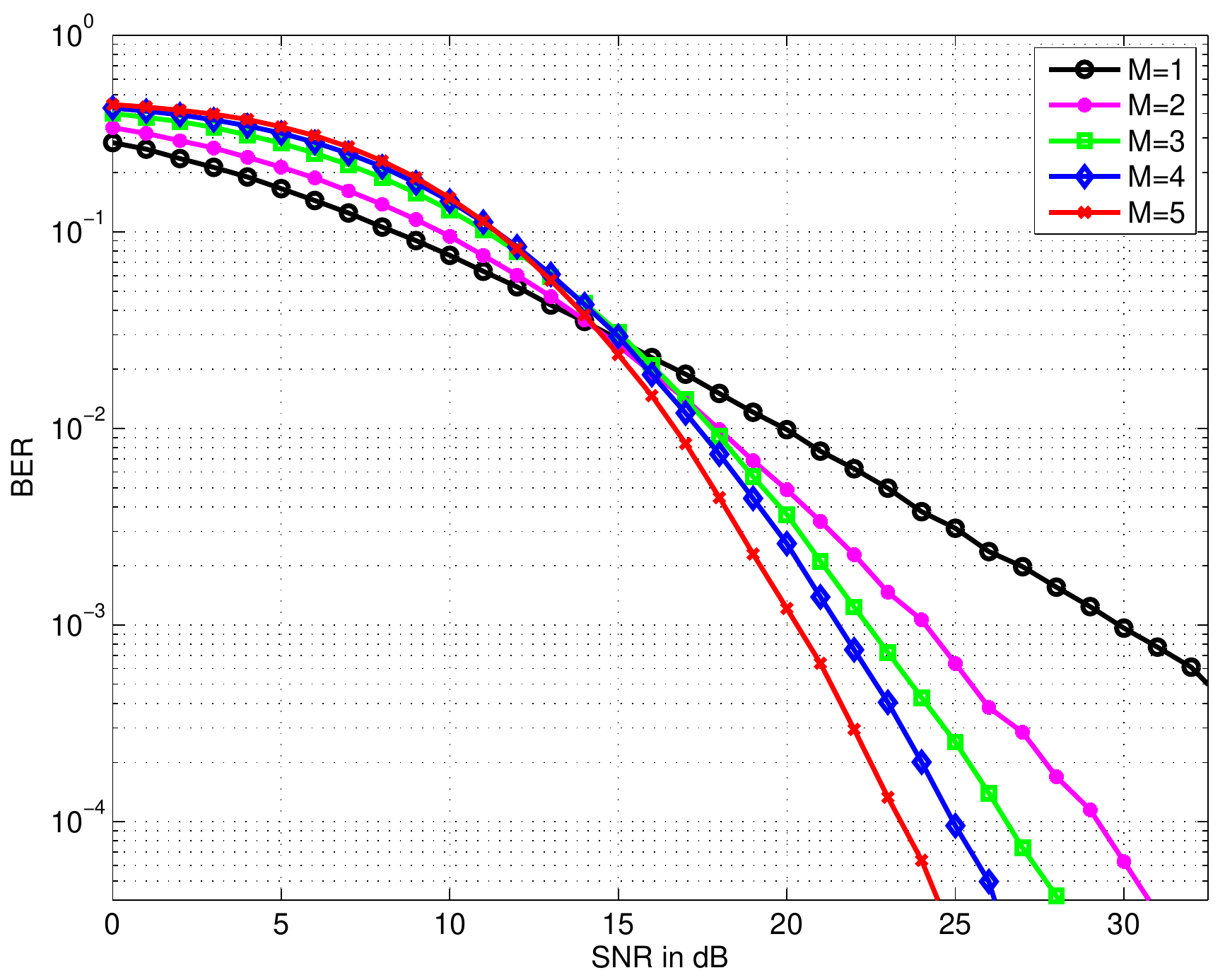}}
\caption{BER vs SNR ($\rho$) performance of the DUSTM scheme with diagonal cyclic design for $M$=1 to 5 transmit antennas and $N$=1 receive antenna}
 \label{fig:DUSTM_BER_curves}
\end{figure}

In conclusion, DUSTM is one possible way of realizing transmit diversity for non-coherent systems. It basically extends the differential single-antenna modulation scheme DPSK. Forcing the constellation to form a group dispenses the need of matrix multiplication for differential encoding,  which simplifies the transmitter's role. Although the design can be theoretically extended to any $\eta$ and $M$, it requires exhaustive computer searches to find optimal $\V{u}$ vectors for a certain $\eta$ and $M$, and therefore practically speaking the code is not easily extendable. A major disadvantage of the DUSTM scheme is that the receiver's complexity increases exponentially with $\eta$ and $M$. This is because the search space of the metric in (\ref{eq:ML_Unitary_STC2}) is the codebook size $L=2^{\eta M}$. \\

Furthermore, imposing the transmit matrices to have a diagonal structure has some practical perspective which is worth considering.  By definition of the transmit power normalization defined in (\ref{eq:power_st}), the total power transmitted by all transmit antennas is constant at any time slot. For diagonal signals, this means that  only one antenna delivers this total power every time slot. This makes the amplifiers connected to the transmit antennas deliver $M$ times the power they would otherwise deliver if all antennas were simultaneously operating. This in turn requires the amplifiers to have a large linear range making them more expensive to realize.  To avoid such a practical constraint, one may force a non-diagonal constellation using the transform $\V{V}_l\longrightarrow \V{U}\V{V}_l\V{U}\hr$ with any unitary matrix $\V{U}$. This will have the effect of distributing the transmit power over all $M$ antennas, leading to the use of cheap amplifiers without affecting the error performance.\\

\section{Orthogonal Space-Time Block Codes}
\label{s:OSTBC}
The previous section has shown one possible scheme for achieving transmit diversity by using unitary matrices that belong to a finite group code. Since the scheme maps bits directly to matrices (see the STM architecture in Figure \ref{subfig:DSTM}), the receiver needs to check all possible matrices belonging to the codebook to decide on the information bits. This results in an exponential increase in complexity with the spectral efficiency and the number of transmit antennas, making the scheme unfavourable for practical systems. This motivates the idea of converting bits first to symbols and then to matrices (see the STBC architecture in Figure \ref{subfig:DSTBC}) aiming at performing symbol based decoding rather than matrix based decoding. \\

The first space-time block coding scheme that adopted a bit-to-symbol-to-matrix mapping was proposed by Alamouti in \cite{Alamouti98}. 
The scheme was designed for two transmit antennas, and proved  to provide the same diversity order as that achieved by the MRC $1\!\times\!2$ SIMO system, i.e. a diversity order of 2. The scheme is remarkable in that it requires only \emph{linear} processing at the receiver, which is of substantially less complexity compared to that required by the STM scheme presented in Section \ref{s:DSTM}. Owing to its implementation simplicity and good performance, Alamouti's scheme has been adopted in 3G mobile technology standards like CDMA2000 and W-CDMA as well as in WIMAX technology which allows wireless broadband Internet access. In fact, Alamouti's two transmit antenna scheme is considered as a special case of a class of space-time codes known as Orthogonal Space-Time Block Codes (OSTBCs) which is defined in \cite{OSTBC_Tarokh99} for arbitrary number of transmit antennas.\\

This section starts by introducing OSTBCs through Alamouti's scheme and then shows the extension of OSTBCs to more than two transmit antennas. The section includes the case when information symbols are drawn from an equal-energy constellation (PSK symbols) resulting in the so-called unitary OSTBCs, as well as the case of using non-constant envelope constellations (like QAM alphabet) resulting in non-unitary OSTBCs. In each case, the differential non-coherent receiver decision metric is derived and the error performance is analyzed.
\subsection{Unitary OSTBCs}
\label{ss:USTBC}
The scheme proposed by Alamouti considers a MISO communication system with two transmit antennas and one receive antenna in an environment modeled by flat Rayleigh fading channel. The scheme was first designed with the assumption of perfect channel knowledge at the receiver and therefore the transmit symbols are the same as the information symbols and the detection is done coherently. After introducing OSTBCs in the coherent domain, we will show the possibility of using OSTBCs in non-coherent systems.
\subsubsection{Code Construction}
\label{sss:Unitary_code_construction}
Alamouti's scheme can be basically described as follows: Two information symbols $x_1$ and $x_2$ are buffered by the transmitter and transmitted in two time slots in the following manner.
In the first time slot, symbol $x_1$ is transmitted over the first antenna and simultaneously symbol $x_2$ is transmitted over the second antenna. In the second time slot, signal $-x_2^*$ is transmitted by the first antenna and $x_1^*$ is transmitted by the second antenna. In this case the transmit matrix $\V{S}$ described in Figure \ref{fig:STC_Tx_matrix} will have the form;
\begin{equation}
\label{eq:Alamouti}
\V{S}=\frac{1}{\sqrt{2}}
 \begin{bmatrix}
  x_1 & x_2\\
-x_2^*& x_1^*
 \end{bmatrix}.
\end{equation}\\

The transmit streams over the two antennas are mutually orthogonal since the columns of $\V{S}$ are orthogonal. Furthermore, Alamouti considered the case when symbols $x_1$ and $x_2$ are drawn from a constant envelope constellation (PSK) making the power of all antenna streams (columns of $\V{S}$) constant. Assuming a unit energy constellation (i.e. $|x|^2\eq1\, \forall\,x\in$ PSK alphabet), matrix $\V{S}$ is made unitary by including the factor $\nicefrac{1}{\sqrt{2}}$  making $\V{S}\hr\V{S}\eq\frac{|x_1|^2+|x_2|^2}{2}\V{I}_{2}\eq\V{I}_{2}$, hence the term unitary transmission. This subsection includes only the case of unitary signal constellation, namely using PSK symbol alphabet.\\

Since two independent symbols are transmitted over two time slots, Alamouti's scheme achieves full rate (i.e. in (\ref{eq:code_rate}) $R\eq 1$). It has been proved by Tarokh et al. in \cite{OSTBC_Tarokh99} that Alamouti's scheme is unique in that it is the only square OSTBC that achieves full rate using arbitrary complex signal constellation. They also proved that using real signal constellation such as PAM (Pulse Amplitude Modulation), full-rate OSTBCs with any number of transmit antennas exist. Real signal transmission is however not of much interest since in practical systems, the bandwidth needs to be best utilized by making best use of the complex space. For this reason, a study in \cite{OSTBC_Rate_upper_bounds} provides upper bounds on the achievable code rate for OSTBCs with more than two transmit antennas using arbitrary \emph{complex} signal constellation. The authors proved that the code rate of complex OSTBCs for three or more transmit antennas is upper-bounded by \nicefrac{3}{4}. Furthermore, they conjectured a tighter upper bound for the code rate of an OSTBC with $M$ transmit antennas to be
\begin{equation}
\label{eq:code_rate_conjecture}
 R\leq\frac{\left\lceil\frac{M}{2}\right\rceil+1}{2\vspace{0.2mm}\left\lceil\frac{M}{2}\right\rceil}.
\end{equation}\\

In the literature, several codes for up to $M\eq 5$ transmit antennas achieve the upper bound in (\ref{eq:code_rate_conjecture}) with equality \cite{OSTBC_Rate_upper_bounds}. For example, in the four transmit antenna system, an example of an OSTBC which achieves a code rate of $\nicefrac{3}{4}$ (the upper bound in (\ref{eq:code_rate_conjecture}) at $M\eq 4$) is a code proposed by Tirkkonen and Hottinen in \cite[eq. B.27]{Tirkkonen_Hottinen_2003} whose code matrix has the form
\begin{equation}
\label{eq:T_H}
\V{S}=\frac{1}{\sqrt{3}}
 \begin{bmatrix}
  x_1 & x_2 & x_3 & 0\\
-x_2^*& x_1^* & 0& -x_3\\
-x_3^* & 0& x_1^* & x_2\\
0&x_3^*&-x_2^* &x_1
 \end{bmatrix}.
\end{equation}
where $x_1$, $x_2$, and $x_3$ are independent complex information symbols, assumed here to be drawn from a unit energy PSK alphabet. This code matrix will be shortly referred to as T-H code. As shown, only three symbols are transmitted in four time slots resulting in a code rate of $\nicefrac{3}{4}$. Clearly matrix $\V{S}$ is unitary since its rows and also its columns are orthonormal.\\

Having shown examples for OSTBCs in two and four transmit antenna systems, this subsection continues by describing OSTBCs with an arbitrary number of transmit antennas $M$. Furthermore, STBCs are in general easily extendable to additionally employ multiple receive antennas. For example, Alamouti has shown in \cite{Alamouti98} that the extension from a $2\!\times\!1$ system to a $2\!\times\!2$ system is straightforward. For the sake of generality, the systems considered here are $M\!\times\! N$ MIMO systems.\\

Although OSTBCs were first proposed in coherent systems, they are also applicable for non-coherent systems. This will be proved in this subsection, where non-coherent detection is realized through the use of differential encoding. In \cite{Tarokh_DOSTBC_2000}, Tarokh and Jafarkhani presented a differential detection scheme for  Alamouti's code. For precise use of terms, (\ref{eq:Alamouti}) and (\ref{eq:T_H}) will in general be referred to as code matrices, which are transmit matrices (denoted by $\V{S}$) for coherent systems, but information matrices (denoted by $\V{V}$) for differential non-coherent systems. Since non-coherent systems are of interest as far as this thesis is concerned, in the following, the code matrices will be given the notation $\V{V}$. Furthermore, in the context of differential encoding, we restrict ourselves only to the number of transmit antennas where a square code matrix exist (i.e. $T\eq M$).\\

In general, an STBC encodes $K$ complex symbols $x_1,...,x_K$ by including linear combinations of $\pm x_1,...,\pm x_K$ and their conjugates $\pm x_1^*,...,\pm x_K^*$ in the code matrix. This can be expressed in a compact form as
\begin{equation}
\label{eq:dispersive_form}
 \V{V}=
\frac{1}{\sqrt{p}}\sum\limits_{i=1}^{K}{\V{A}_ix_i+\V{B}_ix_i^*},
\end{equation}
where $\V{A}_i$ and $\V{B}_i\,\forall i\in\{1,...,K\}$ are known as dispersion matrices since they disperse the symbols over the transmit antennas.  $p$ is a normalization factor used to make $\V{V}\hr\V{V}\eq\V{V}\V{V}\hr\eq\V{I}_M$. This also satisfies the power constraint in (\ref{eq:power_st}). As has been shown, $p\eq2$ in Alamouti's code in (\ref{eq:Alamouti}) and $p\eq3$ in T-H code matrix in (\ref{eq:T_H}).\\

Using the dispersive form in (\ref{eq:dispersive_form}), Alamouti's code matrix can be written as
\begin{equation*}
 \V{V}=\frac{1}{\sqrt{2}}\bigg(x_1\begin{bmatrix}1&0\\0&0\end{bmatrix}+x_1^*\begin{bmatrix}0&0\\0&1\end{bmatrix}+x_2\begin{bmatrix}0&1\\0&0\end{bmatrix}+x_2^*\begin{bmatrix}0&0\\-1&0\end{bmatrix}\bigg),
\end{equation*}
i.e. the dispersion matrices are
\begin{equation}
 \V{A}_1=\begin{bmatrix}1&0\\0&0\end{bmatrix},\,
\V{B}_1=\begin{bmatrix}0&0\\0&1\end{bmatrix},\,
\V{A}_2=\begin{bmatrix}0&1\\0&0\end{bmatrix},\,
\V{B}_2=\begin{bmatrix}0&0\\-1&0\end{bmatrix}.
\end{equation}
For the T-H four-antenna code matrix defined in (\ref{eq:T_H}), the dispersion matrices are
\begin{eqnarray}
\V{A}_1=\begin{bmatrix}1&0&0&0\\0&0&0&0\\0&0&0&0\\0&0&0&1\end{bmatrix},\,
\V{B}_1\hspace{-2mm}&=\hspace{-2mm}&\begin{bmatrix}0&0&0&0\\0&1&0&0\\0&0&1&0\\0&0&0&0\end{bmatrix},\,
\V{A}_2=\begin{bmatrix}0&1&0&0\\0&0&0&0\\0&0&0&1\\0&0&0&0\end{bmatrix},\,
\V{B}_2=\begin{bmatrix}0&0&0&0\\-1&0&0&0\\0&0&0&0\\0&0&-1&0\end{bmatrix},\nonumber\\
\V{A}_3\hspace{-2mm}&=\hspace{-2mm}&\begin{bmatrix}0&0&1&0\\0&0&0&-1\\0&0&0&0\\0&0&0&0\end{bmatrix},
\V{B}_3=\begin{bmatrix}0&0&0&0\\0&0&0&0\\-1&0&0&0\\0&1&0&0\end{bmatrix}.
\end{eqnarray}\\

Another alternative dispersive form for code matrices is defined based on the real and imaginary parts of the symbols instead of the symbols and their complex conjugate. The dispersion matrices of the alternative form are denoted as $\V{U}_i$ and $\V{Q}_i$ and matrix $\V{V}$ can be written as
\begin{equation}
\label{eq:dispersive_form2}
 \V{V}=
\frac{1}{\sqrt{p}}\sum\limits_{i=1}^{K}{\V{U}_ix_i^R+j\V{Q}_ix_i^I}
\end{equation}
where $x_i^R$ and $x_i^I$ are the real and the imaginary parts of symbol $x_i$, respectively. While the dispersion formations in (\ref{eq:dispersive_form}) and (\ref{eq:dispersive_form2}) are general for any STBC, the dispersion matrices for \emph{orthogonal} STBCs exhibit several special properties. The following properties hold for the dispersion matrices $\V{U}_i$ and $\V{Q}_i$ of any OSTBC \cite{Ganesan_Stoica_2001}.
\begin{equation}
\begin{aligned}
\label{eq:OSTBC_disp_matrices_properties}
\text{(i)}\,\,&\V{U}_i\hr\V{U}_i=\V{I}_M, \: \V{Q}_i\hr\V{Q}_i=\V{I}_M & 1\leq i\leq K\\
\text{(ii)}\,\,&\V{U}_i\hr\V{U}_d=-\V{U}_d\hr\V{U}_i, \:\: \V{Q}_i\hr\V{Q}_d=-\V{Q}_d\hr\V{Q}_i &  \:\:1\leq i\neq d\leq K\\
\text{(iii)}\,\,&\V{U}_i\hr\V{Q}_d=\V{Q}_d\hr\V{U}_i & 1\leq i,d\leq  K
\end{aligned}
\end{equation}
The above properties can be used to show that any OSTBC achieves full diversity. For a detailed proof of this, refer to Section \ref{s:full_diversity_OSTBCs}.\\

If all possible combinations of the $K$ symbols from their respective alphabets are substituted in the code matrix, we result in all possible code matrices $\V{V}_l \,\,\,\forall\, l\in\{0,...,L-1\}$ that construct a codebook $\Omega$. Based on the code construction, the total number of code matrices $L$ is $\prod\limits_{i=1}^{K}{q_i}$, where $q_i$ is the alphabet size of $\mathcal{A}_i$ from which symbol $x_i$ is drawn. This implies that the $K$ symbols in general belong to different alphabets. Since the information matrices $\V{V}_l$ are unitary $\forall\, l$, the differential encoding equation can be defined as in (\ref{eq:Diff_encoding_unitary2}) in Section \ref{s:DSTM}. Namely, if at block index $\tau-1$, $\V{S}_{\tau-1}$ is transmitted, then at the next block index $\tau$, the transmit matrix $\V{S}_\tau$ 
is constructed as
\begin{equation}
\label{eq:Diff_encoding_unitary_OSTBC}
\V{S}_\tau=\V{V}_{z_\tau}\V{S}_{\tau-1},
\end{equation}
where $z_\tau$ is the information integer corresponding to the $\sum\limits_{i=1}^{K}{\log_2q_i}$ information bits buffered by the transmitter at block index $\tau$. Based on this, the code matrix with index $z_\tau$ is drawn from the codebook $\Omega$ to form the information matrix  $\V{V}_{z_\tau}$. In STBCs, this is equivalent to saying that the $K$ generated symbols $x_1,...,x_K$ populate the code matrix forming the information matrix $\V{V}_{z_\tau}$. Since all information matrices $\V{V}_l$ are unitary matrices $\forall \,l\in\{0,...,L-1\}$, then by initializing the transmission with a unitary matrix $\V{S}_0$, it follows directly that all transmit matrices $\V{S}_{\tau}$ are unitary $\forall\,\tau$. Unlike the STM based on finite group codes defined in Section \ref{s:DSTM}, the transmit matrices in STBCs in general do \emph{not} belong to a finite group. \\
\subsubsection{ML Differential Decoder}
\label{sss:Unitary_ML_metric}
In the following we derive the ML receiver metric for non-coherent $M\!\times\!N$ MIMO systems that use unitary OSTBCs and experience quasi-static Rayleigh flat fading channel. In Section \ref{s:ML_STC_metric}, the ML decision metric for the special case of unitary transmission in the same channel conditions has been derived in (\ref{eq:ML_unitary2}) to be
\begin{equation*}
 \hat{\V{V}}_{z_\tau}\Big|_{\text{ML}}=\argmax\limits_{\V{V}_l\in\Omega}\:\,\Re\{\tr(\V{V}_l\V{Y}_{\tau-1}\V{Y}_{\tau}\hr)\}
\end{equation*}
where the decision is based on two consecutive received matrices $\V{Y}_{\tau-1}$ and $\V{Y}_{\tau}$. This metric was used in the STM scheme, where all possible candidate matrices $\V{V}_l$ in the codebook $\Omega$ are examined to decide on the most probable information matrix. Now in STBCs, the symbol-to-matrix mapping using the dispersive construction in (\ref{eq:dispersive_form}) can be used to modify the decision metric to
\begin{equation}
 \hat{\V{V}}_{z_\tau}\Big|_{\text{ML}}=\argmax\limits_{x_i\in\mathcal{A}_i}\:\,\Re\{\tr\Big((\frac{1}{\sqrt{p}}\sum\limits_{i=1}^{K}{\V{A}_ix_i+\V{B}_ix_i^*})\V{Y}_{\tau-1}\V{Y}_{\tau}\hr\Big)\},
\end{equation}
where $x_i$ is a candidate in the alphabet $\mathcal{A}_i$ from which the $i^{\text{th}}$ symbol in the code matrix is drawn. Since $\Re\{.\}$, $\tr\{.\}$, and $\sum\{.\}$ are linear operators, they are interchangeable and the metric can be further simplified to
\begin{equation}
\label{eq:ML_unitary_OSTBC}
 \hat{\V{V}}_{z_\tau}\Big|_{\text{ML}}=\argmax\limits_{x_i\in\mathcal{A}_i}\:\,\sum\limits_{i=1}^{K}{\Re\{\tr(\V{A}_i\V{Y}_{\tau-1}\V{Y}_{\tau}\hr x_i +\V{B}_i\V{Y}_{\tau-1}\V{Y}_{\tau}\hr x_i^*)\}}
\end{equation}
Taking the complex conjugate of any of the two terms in (\ref{eq:ML_unitary_OSTBC}) will not change the metric since at the end, only the real part is considered. Hence the metric can be rewritten as
\begin{eqnarray}
\label{eq:ML_unitary_OSTBC2}
 \hat{\V{V}}_{z_\tau}\Big|_{\text{ML}}&=&\argmax\limits_{x_i\in\mathcal{A}_i}\:\,\sum\limits_{i=1}^{K}{\Re\{\tr(\V{A}_i\V{Y}_{\tau-1}\V{Y}_{\tau}\hr\, x_i +\V{B}_i^*\V{Y}_{\tau-1}^*\V{Y}_{\tau}^T\, x_i)\}}\nonumber\\
&=&\argmax\limits_{x_i\in\mathcal{A}_i}\:\,\sum\limits_{i=1}^{K}{\Re\{\underbrace{\tr(\V{A}_i\V{Y}_{\tau-1}\V{Y}_{\tau}\hr +\V{B}_i^*\V{Y}_{\tau-1}^*\V{Y}_{\tau}^T)}_{\tilde{x}_i}x_i\}}\nonumber\\
&=&\argmax\limits_{x_i\in\mathcal{A}_i}\:\,\sum\limits_{i=1}^{K}{\Re\{\tilde{x}_ix_i\}}
\end{eqnarray}
The metric in (\ref{eq:ML_unitary_OSTBC2}) is therefore splittable among the $K$ symbols $x_1,...,x_K$. And the final differential non-coherent ML decision metric for unitary OSTBCs is
\begin{equation}
\label{eq:Fast_ML_Unitary_OSTBC}
 \boxed{\hat{x}_i\Big|_{\text{ML}}=\argmax\limits_{x_i\in\mathcal{A}_i}\:\,\Re\{\tilde{x}_ix_i\}.}
\end{equation}
This is the most remarkable advantage of unitary OSTBCs, where the ML decision on the different data symbols can be decoupled resulting in a significant reduction in complexity. Namely, the search space for the $K$ symbols is $\sum_{i=1}^{K}{q_i}$ instead of $\prod_{i=1}^{K}{q_i}$. If all symbols are drawn from the same alphabet of size $q$, then the search space complexity is $qK$ rather than $q^K$, i.e. the complexity increases only linearly with $K$ rather than exponentially. A decoder like in (\ref{eq:Fast_ML_Unitary_OSTBC}) decides on one complex symbol at a time, and therefore it is said to perform Single Complex Symbol Decoding (SCSD).  Since the $K$ symbols are independent, they can be decoded in parallel resulting in further reduction in the decoding delay.  Owing to such complexity and delay reduction, the metric in (\ref{eq:Fast_ML_Unitary_OSTBC}) is sometimes termed as fast ML decoding.\\

Figure \ref{fig:DSTBC_SCSD} describes the transmission of a differential STBC with $M$ transmit antennas and $N$ receive antennas. The block diagram is a more detailed description of the STBC architecture shown in Figure \ref{subfig:DSTBC}, with the receiver performing SCSD. All $h_{mn}\:\forall\,m=1,...,M$ and $n=1,...,N$ are independent quasi-static flat Rayleigh fading channels. The ST coder block populates the code matrix $\V{V}_{z_\tau}$ with the $K$ symbols $x_1,...,x_K$, and therefore $v_{tm},\,t=\{1,...,T\},$ and $\,m=\{1,...,M\}$ depend on the code matrix used. In unitary transmission, $\mathcal{A}_i$ for $i=1,...,K$ are PSK alphabets.\\
\renewcommand{\arraystretch}{0.8} 
\setlength{\arraycolsep}{1.5pt}
\begin{figure}
 \centering
\hspace{-1.3cm}
\input{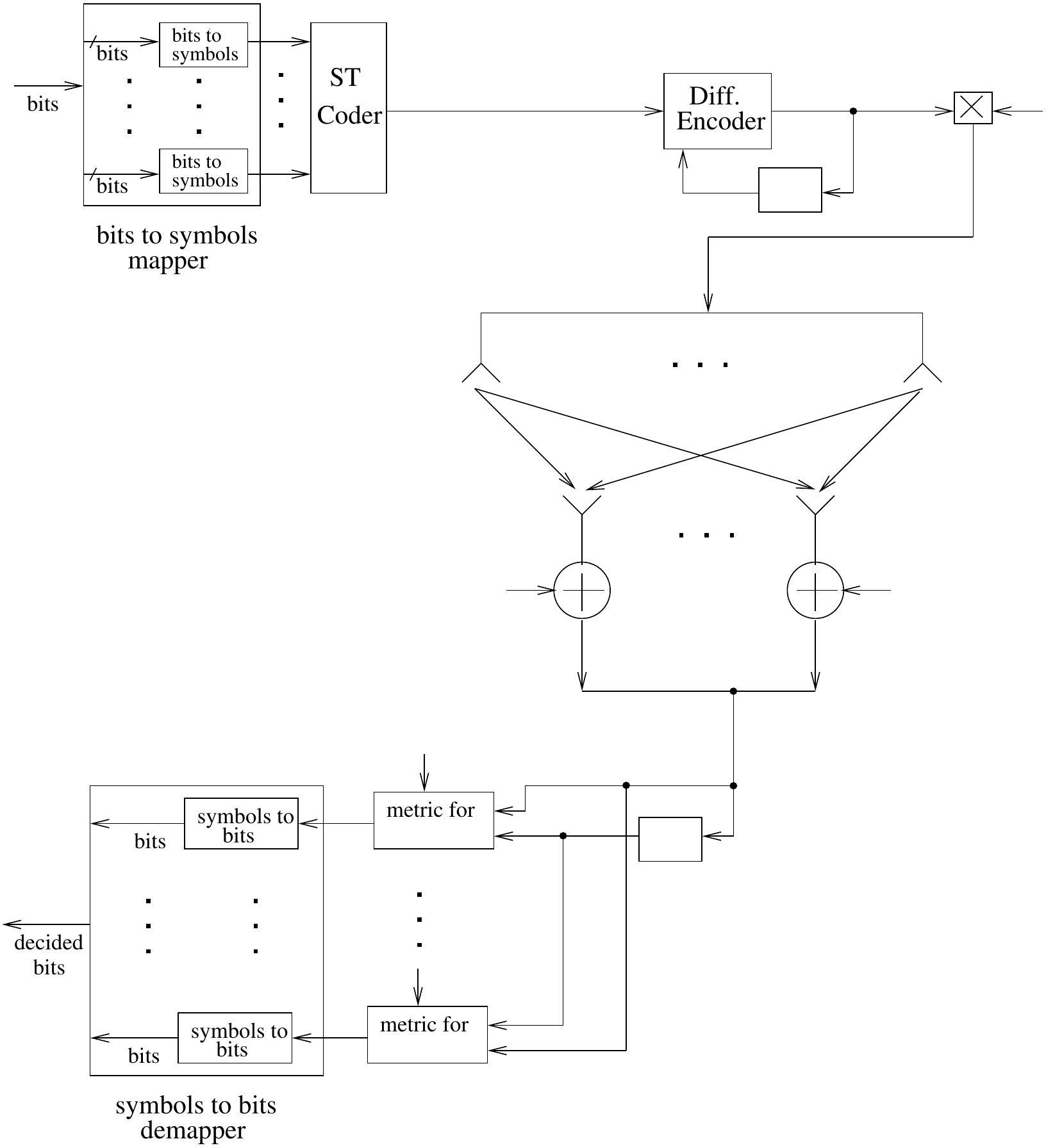_t}
\caption{A block diagram showing the differential transmission of an STBC scheme over an $M\!\times\!N$ MIMO system with a non-coherent receiver that performs SCSD.}
\label{fig:DSTBC_SCSD}
\end{figure}
\renewcommand{\arraystretch}{1.2} 
\setlength{\arraycolsep}{5pt}

\subsubsection{Performance Analysis}
\label{sss:Unitary_performance_analysis}
Since all transmit diversity schemes covered in this thesis are easily extendable to include multiple receive antennas, the receiver metrics are always derived for a MIMO system. However, for the sake of performance comparison between the different schemes, it is enough to use only one receive antenna in system simulations. Monte Carlo simulations have been carried out over a Rayleigh block fading channel using the differential encoding in (\ref{eq:Diff_encoding_unitary_OSTBC}) and the non-coherent receiver metric in (\ref{eq:Fast_ML_Unitary_OSTBC}).\\

Figure \ref{fig:M_2_unitary_2_bits_s_Hz} compares the BER performance of differential non-coherent $2\!\times\!1$ systems using  Alamouti's scheme with QPSK alphabet for all symbols, and using the cyclic group DUSTM scheme at a transmission rate of 2 bits/s/Hz. Clearly Alamouti's scheme outperforms the cyclic DUSTM scheme by about $\unit[3]{dB}$. Also included is the error performance of the $1\!\times\!2$ SIMO non-coherent system described in section \ref{s:Rx-diversity}. All curves are parallel in the high SNR range since they have the same diversity order of 2. Recall that both OSTBCs and DUSTM achieve full diversity, i.e DO$\eq MN\eq2$ and the SIMO system achieves a diversity order of $N=2$.\\

\begin{figure}[!htp]
\centering
\includegraphics[scale=0.4]{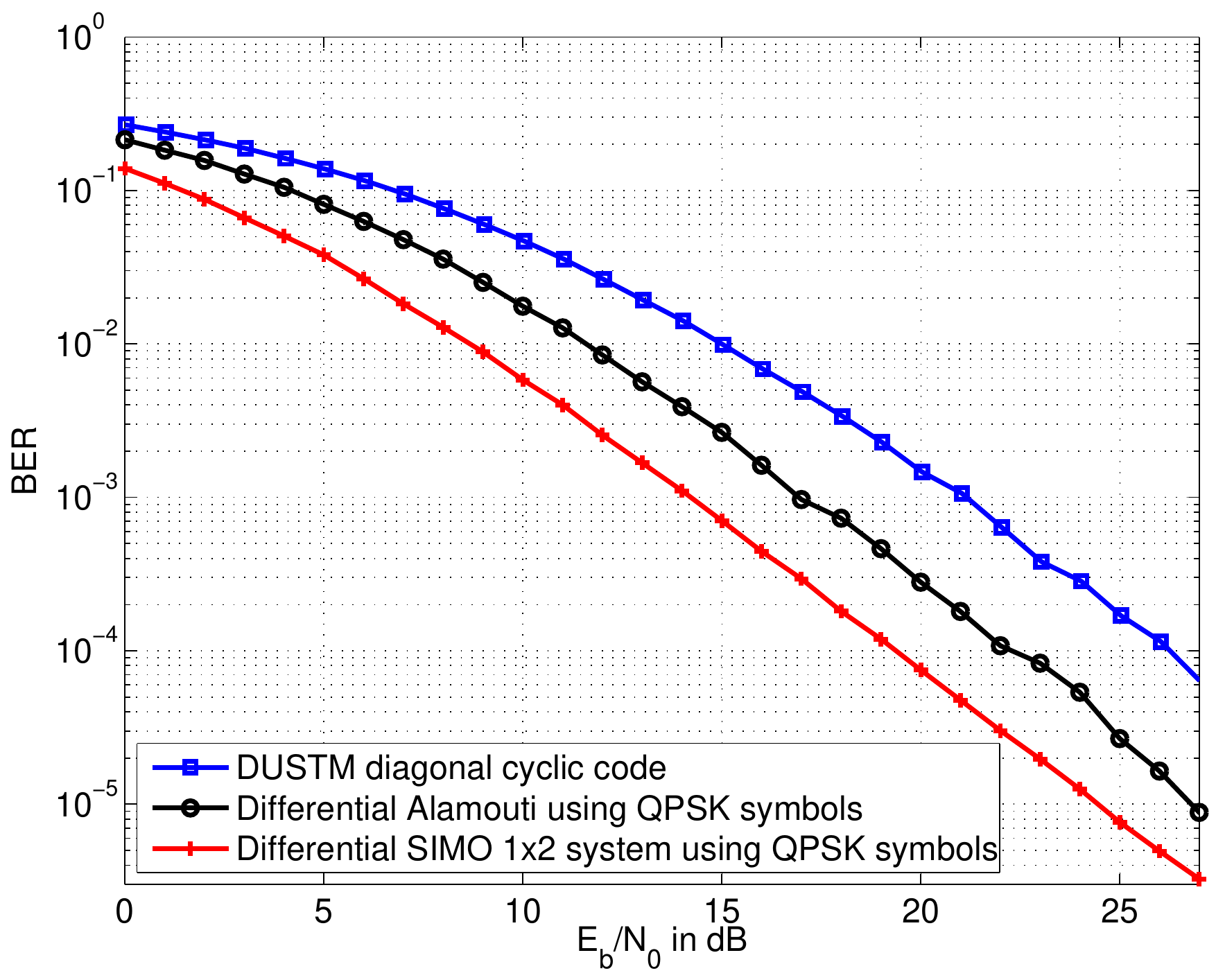} 
\caption{Comparing the 2$\times$1 DUSTM cyclic group scheme with differential Alamouti's scheme that uses PSK symbols at 2 bits/s/Hz. Additionally included is the error performance of the 1$\times$2 differential SIMO scheme.}
  \label{fig:M_2_unitary_2_bits_s_Hz}
\end{figure}

Observe that Alamouti's scheme suffers a $\unit[3]{dB}$ performance loss compared to the $1\!\times\!2$ SIMO system. As explained in \cite[section V.A]{Alamouti98}, this effect is due to the fact that, for the MISO system to transmit the same total power as that transmitted by the SIMO system, the total power in the MISO case is divided over the $M$ transmit antennas reducing the energy allocated to each symbol by a factor of $M$. This translates to a performance penalty of $\unit[10\log_{10}M]{dB}$, i.e. $\unit[3]{dB}$ in the two transmit antenna case. Therefore, the $1\!\times\!M$ SIMO error rate curve can be used to set an upper bound on the best achievable performance of an $M\!\times\!1$ transmit diversity system that achieves the same spectral efficiency. In other words, the best an $M\!\times\!1$  transmit diversity system can do is to be worse than the corresponding $1\!\times\!M$ receive diversity system by $\unit[10\log_{10}M]{dB}$.\\

For a transmission rate of $\unit[4]{bits/s/Hz}$, Alamouti's scheme is used with 16-PSK symbols and compared to a cyclic group code that achieves the same transmission rate using the generator matrix $\V{G}\eq\diag\{e^{\frac{j2\pi}{256}},\,e^{\frac{j2\pi75}{256}}\}$ \cite{Hassibi_Hochwald_2001}. Alamouti's scheme achieves an SNR advantage of about $\unit[5]{dB}$ as shown in Figure \ref{fig:M_2_unitary_4_bits_s_Hz}.\\
\begin{figure}[!htp]
\centering
\includegraphics[scale=0.4]{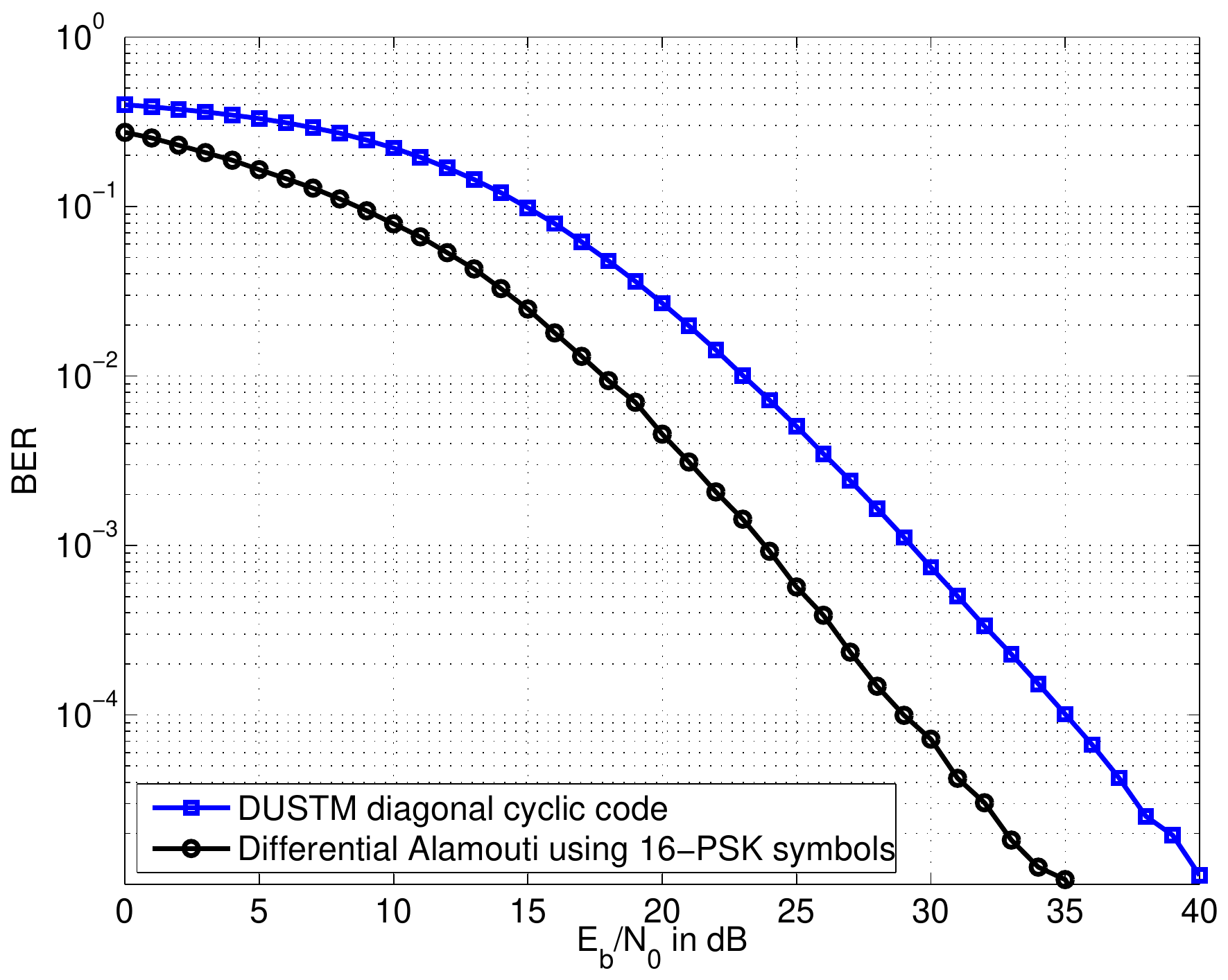} 
\caption{Comparing the 2$\times$1 DUSTM cyclic group scheme with differential Alamouti's scheme that uses PSK symbols at 4 bits/s/Hz.}
  \label{fig:M_2_unitary_4_bits_s_Hz}
\end{figure}

\begin{figure}[!htp]
 \centering
  \subfloat[$\eta=2$ bits/s/Hz]{\label{subfig:M_4_unitary_2_bits_s_Hz}\includegraphics[width=0.5\textwidth]{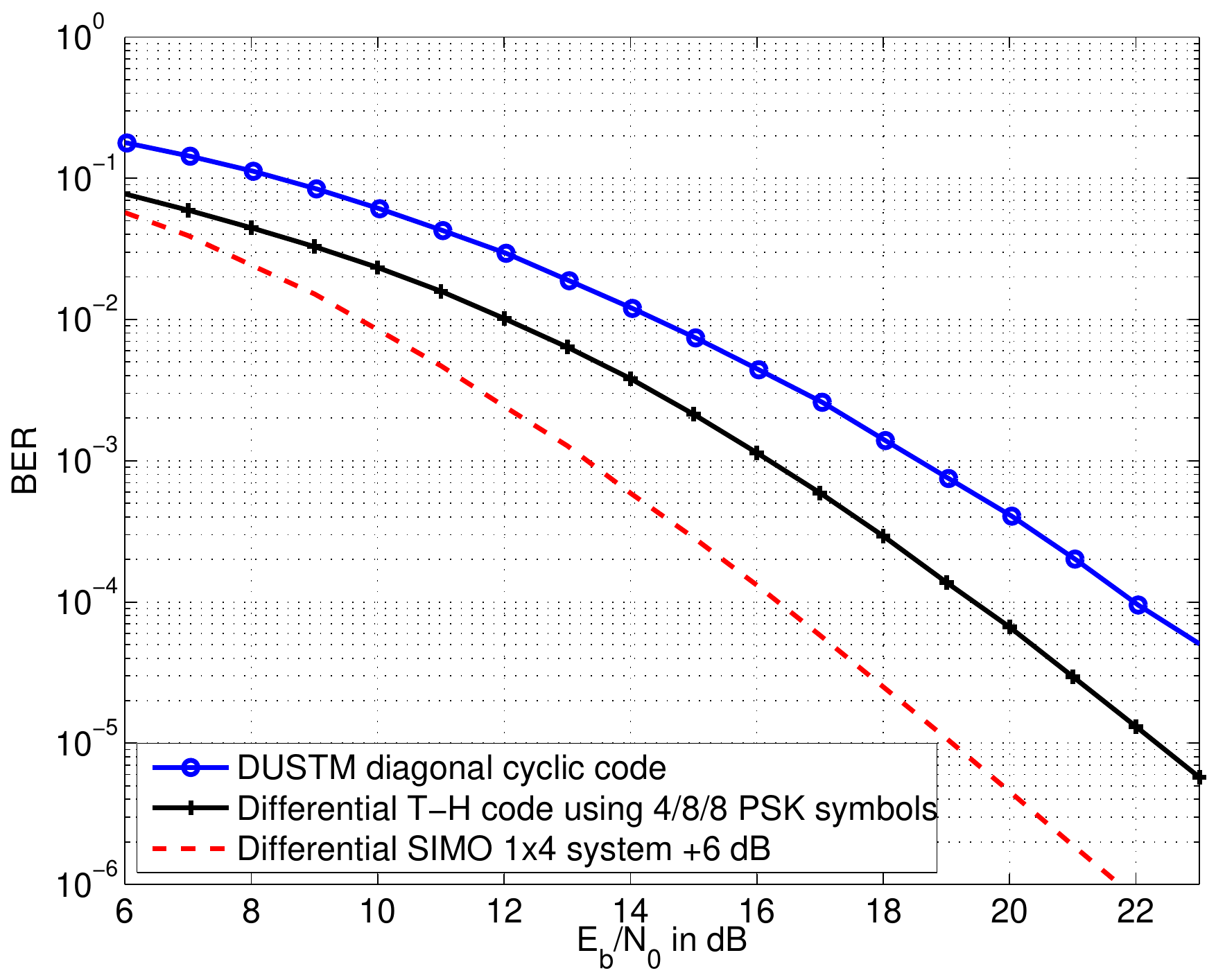}}\\
  \subfloat[$\eta=3$ bits/s/Hz]{\label{subfig:M_4_unitary_3_bits_s_Hz}\includegraphics[width=0.49\textwidth]{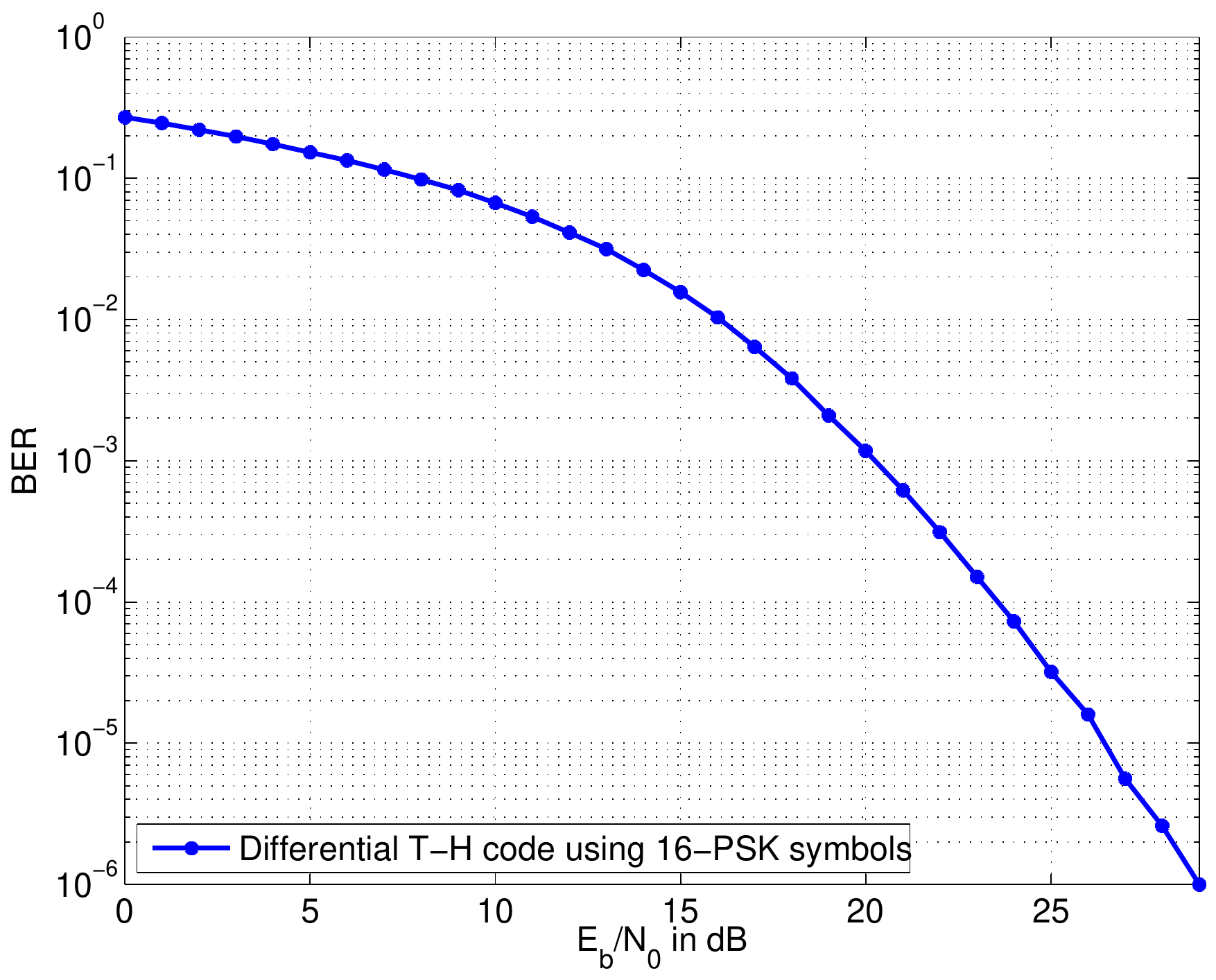}}
  \subfloat[$\eta=4$ bits/s/Hz]{\label{subfig:M_4_unitary_4_bits_s_Hz}\includegraphics[width=0.49\textwidth]{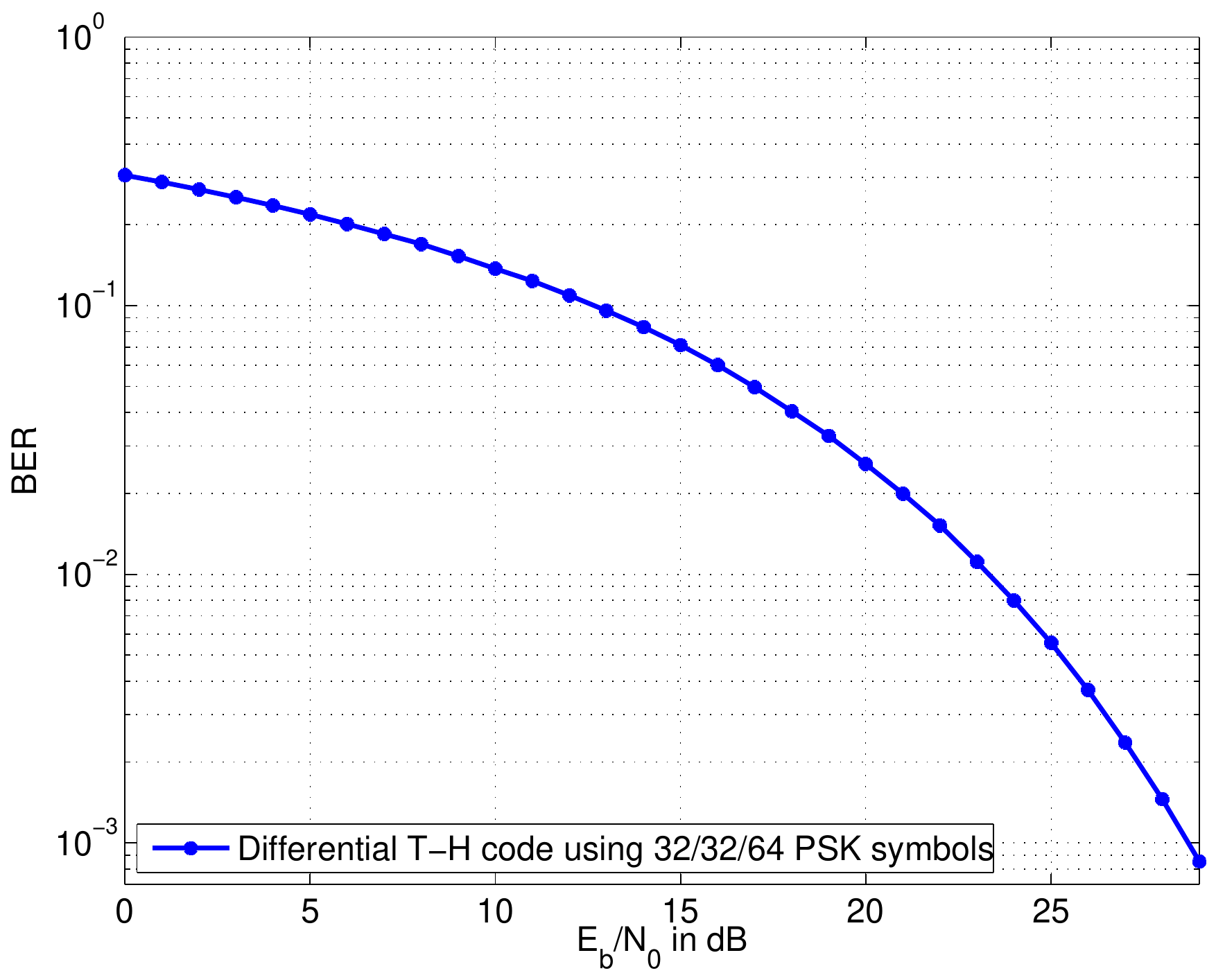}}
  \caption{BER curves for a differential $4\!\times\!1$ system that uses unitary OSTBC T-H code matrix in (\ref{eq:T_H}) with PSK symbols}
  \label{fig:T_H_BER}
\end{figure}

For four transmit antenna systems, the T-H code matrix in (\ref{eq:T_H}) is used. Since the code rate is $\nicefrac{3}{4}$, in order to achieve a transmission rate of $\unit[2]{bits/s/Hz}$ or $\unit[4]{bits/s/Hz}$, one needs to use unequal alphabet size. To achieve a spectral efficiency of $\eta$, one needs to satisfy
\begin{equation}
\eta=\frac{n_1+n_2+n_3}{4}
\end{equation}
where $n_i$ is the number of bits carried by symbol $x_i$ in (\ref{eq:T_H}). In general, the highest order alphabet is the one that dominates the performance. Therefore $n_1,\,n_2,$ and $n_3$ should be made as close as possible. For example to achieve $\unit[2]{bits/s/Hz}$, the combination $n_1\eq2,\,n_2\eq3,\,n_3\eq3$ is chosen, and is denoted as $4/8/8$ PSK alphabet. Similarly for the $\unit[4]{bits/s/Hz}$, $32/32/64$ PSK alphabet is used. For a transmission rate of $\unit[3]{bits/s/Hz}$, 16 PSK alphabet can be used for all symbols. The BER curves for the different transmission rates using the T-H code in $4\!\times\!1$ differential non-coherent systems are shown in Figure \ref{fig:T_H_BER}. For $\unit[2]{bits/s/Hz}$, the cyclic DUSTM code achieves about $\unit[2.25]{dB}$ performance loss compared to the $4/8/8$ PSK OSTBC. The BER curve of the SIMO $1\!\times\!4$ system at $\unit[2]{bits/s/Hz}$ is included after shifting it to the right by $10\log_{10}(4)=\unit[6]{dB}$ to show the best performance a $4\!\times\!1$ system can achieve. Again all curves are parallel since they all have the same diversity order of 4.


\subsection{Non-unitary OSTBCs}
\label{ss:Non_UOSTBCs}
Restricting the use of OSTBCs to unitary transmission using PSK alphabets will lead to performance degradation as more bits are to be transmitted per channel use. Just as in the SISO case, the higher the alphabet order of PSK, the more constellation points are placed on the same circle and therefore the higher the probability of decoding error. Allowing information bits to be carried over the amplitude in addition to the phase of the transmit signals will result in a more efficient utilization of the complex space. This will potentially lead to performance enhancement. Tao and Cheng proposed in \cite{Tao_2001} a differential encoding scheme for OSTBCs using QAM alphabets. This subsection shows their approach for differential encoding of non-unitary OSTBCs. The ML non-coherent metric is derived, together with another sub-optimum non-coherent metric that has a significantly reduced complexity. Finally, the error performance will be compared to unitary OSTBCs.\\
\subsubsection{Code Construction}
\label{sss:Non_Unitary_code_construction}
Consider the case when the information symbols are drawn from a non-unitary constellation. Since for any OSTBC, all $K$ symbols appear on all columns of the code matrix (like in (\ref{eq:Alamouti}) and (\ref{eq:T_H})) \footnote{In fact, the existence of every information symbol on all  columns of the information matrix is a necessary but insufficient condition for achieving full diversity. Since OSTBCs achieve full diversity, this condition is satisfied.}, all columns carry the same power and the code matrix $\V{V}_{z_\tau}$ at the $\tau^{\text{th}}$ block index satisfies
\begin{equation}
\label{eq:v_hr_V_non_unitary}
 \V{V}_{z_\tau}\hr\V{V}_{z_\tau}=\frac{\sum\limits_{i=1}^{K}|x_i|^2}{p}\V{I}_{M}\definedas a_\tau^2\,\V{I}_M.
\end{equation}
Here symbols $x_i$ are the symbols drawn at time index $\tau$, and $a_\tau$ is defined as the amplitude of matrix $\V{V}_{z_\tau}$. In this case, matrix $\V{V}_{z_\tau}$ is considered to be \emph{scaled} unitary, since all columns are mutually orthogonal, and all of them have the same power ($a_\tau^2$), but this power in general varies with $\tau$.\\

In the context of differential encoding, one must ensure that the power of the transmit matrices does not increase or decrease indefinitely. For this to be satisfied, some power normalization should be considered. The following differential encoding equation shows one way to satisfy such power constraint
\begin{equation}
\label{eq:Diff_encoding_non_unitary}
 \V{S}_\tau=\frac{\V{V}_{z_\tau}\V{S}_{\tau-1}}{a_{\tau-1}}.
\end{equation}
Compared to the differential encoding of unitary OSTBCs in (\ref{eq:Diff_encoding_unitary_OSTBC}), in non-unitary OSTBCs we need to divide by the amplitude of the previously transmitted matrix. This division removes the effect of the power of the previously transmitted block so that only the power of $\V{V}_{z_\tau}$ contributes to the power of $\V{S}_\tau$. If the transmission is initiated by an identity matrix, then by writing down the first few transmit blocks, we get
\begin{eqnarray*}
\V{S}_{0}=&\V{I}_M&\implies\V{S}_0\hr\V{S}_0=\V{I}_M\:\implies a_0=1;\\
\V{S}_1 =&\frac{\V{V}_{z_1}\V{S}_0}{a_0}&\implies\V{S}_1\hr\V{S}_1=\V{V}_{z_1}\hr\V{V}_{z_1}=a_1^2\V{I}_M\\
\V{S}_2 =&\frac{\V{V}_{z_2}\V{S}_1}{a_1}&\implies\V{S}_2\hr\V{S}_2=\frac{\V{S}_1\hr\V{V}_{z_2}\hr}{a_1}\frac{\V{V}_{z_2}\V{S}_1}{a_1}=\frac{a_2^2}{a_1^2}\V{S}_1\hr\V{S}_1=\frac{a_2^2}{a_1^2}\,a_1^2\V{I}_M=a_2^2\V{I}_M=\V{V}_{z_2}\hr\V{V}_{z_2}.
\end{eqnarray*}
Therefore at any time index $\tau$, the power of the transmit matrix $\V{S}_\tau$ is the same as the power of the information matrix $\V{V}_{z_\tau}$, namely
\begin{equation}
 \V{S}_\tau\hr\V{S}_\tau=\V{V}_{z_\tau}\hr\V{V}_{z_\tau}=a_\tau^2\V{I}_M,\:\:\:\:\forall\:\tau.
\label{eq:S_hr_S_V_hr_V}
\end{equation}\\

In the transmission equation $\V{Y}_{\tau}=\sqrt{\rho}\V{S}_{\tau}\V{H}+\V{W}_\tau$, in order to ensure that $\rho$ is the average SNR at each receive antenna per time slot, we need to satisfy the energy constraint $\E[\sum\limits_{m=1}^{M}{|s_{tm}|^2}]=1\:\:\forall \:t$ defined in (\ref{eq:power_st}). This constraint can be stated as: The average power of every \emph{row} in $\V{S}_\tau$ is required to be 1. Alamouti's code matrix in (\ref{eq:Alamouti}) and the T-H code matrix in (\ref{eq:T_H}) satisfy $\V{V}_{z_\tau}\hr\V{V}_{z_\tau}\eq\V{V}_{z_\tau}\V{V}_{z_\tau}\hr\eq a_{\tau}^2\V{I}_M$, and using (\ref{eq:S_hr_S_V_hr_V}), also $\V{S}_{\tau}\hr\V{S}_{\tau}\eq\V{S}_{\tau}\V{S}_{\tau}\hr\eq a_{\tau}^2\V{I}_M$. In other words, the power of all columns of $\V{S}_{\tau}$ is the same as the power of all rows of $\V{S}$ is $a_{\tau}^2$. Therefore satisfying the power constraint in (\ref{eq:power_st}) is equivalent to satisfying 
$\E[\V{V}_{z_\tau}\hr\V{V}_{z_\tau}]=\V{I}_M\:\forall\,\tau$. This is equivalent to satisfying $\E[a_{\tau}^2]\eq1\:\forall\,\tau$. Assuming all alphabets to be of unit \emph{average} energy (i.e. $\E[|x_i|^2]=1\:\forall\,i=1,...,K$) and using (\ref{eq:v_hr_V_non_unitary}), the energy constraint reduces to
\begin{equation}
 \E[a_{\tau}^2]\req 1 \implies \frac{\E\Big[\sum\limits_{i=1}^{K}{|x_i|^2}\Big]}{p}=\frac{K\E[|x_i|^2]}{p}\req 1 \implies p\req K.
\label{eq:p_eq_K}
\end{equation}

\subsubsection{ML Differential Decoder}
\label{sss:Non_Unitary_ML_metric}
To derive the non-coherent ML decision metric for non-unitary OSTBCs, we start from the general form of the non-coherent ML metric for any STC scheme in (\ref{eq:ML_STC_general2}), namely
\begin{equation}
\label{eq:ML_STC_general2_repeated}
\hat{\V{V}}_{z_\tau}\Big|_{\text{ML}}=\argmax\limits_{l=\{0,...,L-1\}}\:\,\tr\{\rho\V{\bar{Y}}_\tau\hr\V{\bar{S}}^{(l)}(\V{I}_{M}+ \rho\V{\bar{S}}\sp{(l)\dagger} \V{\bar{S}}^{(l)})^{-1}\V{\bar{S}}\sp{(l)\dagger}\V{\bar{Y}}_\tau\}-N\ln\{\det(\V{I}_M+ \rho\V{\bar{S}}\sp{(l)\dagger} \V{\bar{S}}^{(l)})\},
\end{equation}
where $\rho$ is the SNR at each receive antenna, $\V{\bar{Y}}_\tau=\begin{bmatrix}\V{Y}_{\tau-1}\\ \V{Y}_{\tau}\end{bmatrix}$ and $\V{\bar{S}}^{(l)}$ is the $l^{\text{th}}$ possible transmit two-block matrix. Based on the differential encoding defined in (\ref{eq:Diff_encoding_non_unitary}), $\V{\bar{S}}^{(l)}$ can be written as
\renewcommand{\arraystretch}{1.4} 
\begin{equation*}
\V{\bar{S}}^{(l)}= \begin{bmatrix}\V{S}_{\tau-1}\\\V{S}_\tau^{(l)}\end{bmatrix}=\begin{bmatrix}\V{S}_{\tau-1}\\\frac{\V{V}_{l}\V{S}_{\tau-1}}{a_{\tau-1}}\end{bmatrix}=\begin{bmatrix}a_{\tau-1}\V{I}_M\\\V{V}_{l}\end{bmatrix}\frac{\V{S}_{\tau-1}}{a_{\tau-1}}.
\end{equation*}
\renewcommand{\arraystretch}{1.2}
Recall from the discussion on (\ref{eq:P_Y_S_general_useful}) that multiplying $\V{\bar{S}}^{(l)}$ by a unitary matrix from the right does not change the non-coherent ML receiver metric. Therefore, the transmit two-block matrices $\begin{bmatrix}a_{\tau-1}\V{I}_M\\\V{V}_{l}\end{bmatrix}\frac{\V{S}_{\tau-1}}{a_{\tau-1}}$ and $\textstyle\begin{bmatrix}a_{\tau-1}\V{I}_M\\\V{V}_{l}\end{bmatrix}$ are indistinguishable to the receiver, since $\frac{\V{S}_{\tau-1}}{a_{\tau-1}}$ is a unitary matrix $\forall\,\tau$.
As a result, the transmit candidate matrix can be written in a canonical form as
\begin{equation}
\label{eq:S_l_non_unitary_canonical}
 \V{\bar{S}}^{(l)}\equiv\begin{bmatrix}a_{\tau-1}\V{I}_M\\\V{V}_{l}\end{bmatrix}.
\end{equation}
Using (\ref{eq:S_l_non_unitary_canonical}), the argument of the inverse and the determinant terms in (\ref{eq:ML_STC_general2_repeated}) becomes
\begin{eqnarray}
\label{eq:det_inv_argument}
\V{I}_M+ \rho\V{\bar{S}}\sp{(l)\dagger} \V{\bar{S}}^{(l)}&=& \V{I}_M+ \rho \begin{bmatrix}a_{\tau-1}\V{I}_M &\V{V}_{l}\hr\end{bmatrix}\begin{bmatrix}a_{\tau-1}\V{I}_M\\\V{V}_{l}\end{bmatrix}\nonumber\\
&=& \V{I}_M+ \rho(a_{\tau-1}^2+a_{\tau}^2(l))\V{I}_M\nonumber\\
&=&\Big(1+\rho(a_{\tau-1}^2+a_{\tau}^2(l))\Big)\V{I}_M,
\end{eqnarray}
where we have used $(l)$ in $a_{\tau}^2(l)\V{I}_M$ to emphasize that it is dependent on the candidate code matrix $\V{V}_l$, i.e. $\V{V}_l\hr\V{V}_l\eq a_{\tau}^2(l)$.
The determinant of (\ref{eq:det_inv_argument}) is 
\begin{equation*}
 \det\Big((1+\rho(a_{\tau-1}^2+a_{\tau}^2(l)))\V{I}_M\Big)=\Big(1+\rho(a_{\tau-1}^2+a_{\tau}^2(l))\Big)^M,
\end{equation*}
and the inverse is
\begin{equation*}
 \Big((1+\rho(a_{\tau-1}^2+a_{\tau}^2(l)))\V{I}_M\Big)^{-1}=\Big(1+\rho(a_{\tau-1}^2+a_{\tau}^2(l))\Big)^{-1}\V{I}_M.
\end{equation*}
Therefore, the ML metric in (\ref{eq:ML_STC_general2_repeated}) reduces to
\begin{equation}
 \hat{\V{V}}_{z_\tau}\Big|_{\text{ML}}=\argmax\limits_{\V{V}_l\in\Omega}\:\,\frac{\rho}{1+\rho(a_{\tau-1}^2+a_{\tau}^2(l))}\tr\{\V{\bar{Y}}_\tau\hr\V{\bar{S}}^{(l)}\V{\bar{S}}\sp{(l)\dagger}\V{\bar{Y}}_\tau\}-MN\ln\{1+\rho(a_{\tau-1}^2+a_{\tau}^2(l))\},
\end{equation}
where,
\begin{eqnarray}
 \tr\{\V{\bar{Y}}_\tau\hr\V{\bar{S}}^{(l)}\V{\bar{S}}\sp{(l)\dagger}\V{\bar{Y}}_\tau\}&=&\|\V{\bar{S}}\sp{(l)\dagger}\V{\bar{Y}}_\tau\|_F^2\nonumber\\
&=&\|\begin{bmatrix}a_{\tau-1}\V{I}_M & \V{V}_l\hr\end{bmatrix}\begin{bmatrix}\V{Y}_{\tau-1}\\\V{Y}_{\tau}\end{bmatrix}\|_F^2\nonumber\\
&=&\|a_{\tau-1}\V{Y}_{\tau-1}+\V{V}_l\hr\V{Y}_\tau\|_F^2,
\end{eqnarray}
and the non-coherent ML metric for differential non-unitary OSTBCs reduces finally to
\begin{equation}
\label{eq:ML_metric_non_unitary}
 \boxed{\hat{\V{V}}_{z_\tau}\Big|_{\text{ML}}=\argmax\limits_{\V{V}_l\in\Omega}\:\,\frac{\rho}{1+\rho(a_{\tau-1}^2+a_{\tau}^2(l))}\|a_{\tau-1}\V{Y}_{\tau-1}+\V{V}_l\hr\V{Y}_\tau\|_F^2-MN\ln\{1+\rho(a_{\tau-1}^2+a_{\tau}^2(l))\}.}
\end{equation}\\

As shown in (\ref{eq:ML_metric_non_unitary}), the ML differential decision metric for non-unitary OSTBCs requires symbols $x_1,...,x_K$ embedded in $\V{V}_l$ to be jointly detected. Losing the symbol decoupling advantage leads to an exponential increase in the receiver complexity with the number of encoded symbols. To alleviate this problem, one needs to consider some sub-optimum decision technique that maintains the symbol decoupling advantage and achieves an acceptable performance. For this purpose, the authors in \cite{Tao_2001} used another metric which they named a near-optimal metric as it provides a performance very close to the ML performance.
\subsubsection{Near-Optimal Differential Decoder}
Based on the Rayleigh block fading channel model, two consecutive received matrices at block indices $\tau-1$ and $\tau$ can be written as
\begin{eqnarray}
\label{eq:Y_t_1}
\V{Y}_{\tau-1}&=&\sqrt{\rho}\V{S}_{\tau-1}\V{H}+\V{W}_{\tau-1}\\
\V{Y}_{\tau}&=&\sqrt{\rho}\V{S}_{\tau}\V{H}+\V{W}_{\tau}.
\label{eq:Y_t}
\end{eqnarray}
Using the differential encoding equation in (\ref{eq:Diff_encoding_non_unitary}), (\ref{eq:Y_t}) can be rewritten as
\begin{equation}
 \V{Y}_{\tau}=\sqrt{\rho}\frac{\V{V}_{z_\tau}\V{S}_{\tau-1}}{a_{\tau-1}}\V{H}+\V{W}_{\tau}.
\label{eq:Y_t2}
\end{equation}
Multiplying (\ref{eq:Y_t_1}) from the left by $\frac{\V{V}_{z_\tau}}{a_{\tau-1}}$, then subtracting the resulting equation from (\ref{eq:Y_t2}), we get
\begin{equation}
 \V{Y}_{\tau}=a_{\tau-1}^{-1}\V{V}_{z_\tau}\V{Y}_{\tau-1}+\underbrace{\V{W}_{\tau}-a_{\tau-1}^{-1}\V{V}_{z_\tau}\V{W}_{\tau-1}}_{\definedas \sqrt{1+a_{\tau-1}^{-2}a_{\tau}^2}\V{W}_{\tau}'}
\label{eq:fund_receiver_eqn}
\end{equation}
where $\V{W}_\tau'$ is an equivalent noise matrix whose elements are i.i.d. complex Gaussian random variables with zero mean and variance $\nicefrac{1}{2}$ per dimension. (\ref{eq:fund_receiver_eqn}) looks as if $\V{V}_{z_\tau}$ was transmitted over a known channel matrix $a_{\tau-1}^{-1}\V{Y}_{\tau-1}$, and then corrupted by an AWGN noise with variance $1+a_{\tau-1}^{-2}a_{\tau}^2$. Since $a_{\tau-1}$ and $a_{\tau}$ are on average the same, it follows that the noise variance is on average 2, i.e. twice as much as the variance of the actual channel noise. This doubling of noise translates to to the well-known $\unit[3]{dB}$ performance degradation of non-coherent systems compared to coherent ones. \\

If we neglect the dependency of the noise variance on the transmitted signals, we can transform our problem to a virtual coherent system which decides on the candidate matrix that suffers the least the noise variance. This arrives us to the following near-optimal differential decoder
\begin{eqnarray}
\hat{\V{V}}_{z_\tau}&=&\argmin\limits_{\V{V}_l\in\Omega}\:\,\|\V{Y}_{\tau}-a_{\tau-1}^{-1}\V{V}_{l}\V{Y}_{\tau-1}\|_F^2\label{eq:sub_opt_Diff_Decoder}\\
&=&\argmin\limits_{\V{V}_l\in\Omega}\:\,\tr\{(\V{Y}_{\tau}-a_{\tau-1}^{-1}\V{V}_{l}\V{Y}_{\tau-1})(\V{Y}_{\tau}\hr-a_{\tau-1}^{-1}\V{Y}_{\tau-1}\hr\V{V}_l\hr)\}\nonumber\\
&=&\argmin\limits_{\V{V}_l\in\Omega}\:\,\tr\{\underbrace{\V{Y}_\tau\V{Y}_\tau\hr}_{\text{indep. of }l}-a_{\tau-1}^{-1}\V{V}_{l}\V{Y}_{\tau-1}\V{Y}_{\tau}\hr-a_{\tau-1}^{-1}\V{Y}_\tau\V{Y}_{\tau-1}\hr\V{V}_l\hr+a_{\tau-1}^{-2}\V{V}_{l}\V{Y}_{\tau-1}\V{Y}_{\tau-1}\hr\V{V}_{l}\hr\}\nonumber\\
&=&\argmax\limits_{\V{V}_l\in\Omega}\:\,a_{\tau-1}^{-1}\tr\{\V{V}_{l}\V{Y}_{\tau-1}\V{Y}_{\tau}\hr+\V{Y}_\tau\V{Y}_{\tau-1}\hr\V{V}_l\hr\}-a_{\tau-1}^{-2}\tr\{\underbrace{\V{V}_l\hr\V{V}_{l}}_{a_{\tau}^2(l)\V{I}_M}\V{Y}_{\tau-1}\V{Y}_{\tau-1}\hr\}\nonumber\\
&=&\argmax\limits_{\V{V}_l\in\Omega}\:\,\frac{2}{a_{\tau-1}}\Re\{\tr(\V{V}_l\V{Y}_{\tau-1}\V{Y}_\tau\hr)\}-\frac{a_{\tau}^2(l)}{a_{\tau-1}^2}\underbrace{\tr\{\V{Y}_{\tau-1}\V{Y}_{\tau-1}\hr\}}_{\tilde{y}}\nonumber
\end{eqnarray}
Using the dispersive form of $\V{V}_l$ in (\ref{eq:dispersive_form}), the near-optimal metric becomes
\begin{eqnarray*}
 \hat{\V{V}}_{z_\tau}&=&\argmax\limits_{x_i\in\mathcal{A}_i}\:\,\frac{2}{a_{\tau-1}}\Re\{\tr\Big((\frac{1}{\sqrt{K}}\sum\limits_{i=1}^{K}{\V{A}_ix_i+\V{B}_ix_i^*})\V{Y}_{\tau-1}\V{Y}_\tau\hr\Big)\}-\frac{a_{\tau}^2(l)}{a_{\tau-1}^2}\tilde{y}\\
&=&\argmax\limits_{x_i\in\mathcal{A}_i}\:\,\frac{2}{\sqrt{K}}\sum\limits_{i=1}^{K}{\Re\{\tr\Big(\V{A}_i\V{Y}_{\tau-1}\V{Y}_\tau\hr\, x_i+\V{B}_i\V{Y}_{\tau-1}\V{Y}_{\tau}\hr\,x_i^*\Big)\}}-\frac{a_{\tau}^2(l)}{a_{\tau-1}}\tilde{y}\\
&=&\argmax\limits_{x_i\in\mathcal{A}_i}\:\,\sum\limits_{i=1}^{K}{\frac{2}{\sqrt{K}}{\Re\{\underbrace{\tr\Big(\V{A}_i\V{Y}_{\tau-1}\V{Y}_\tau\hr+\V{B}_i^*\V{Y}_{\tau-1}^*\V{Y}_{\tau}^T\Big)}_{\tilde{x}_i}x_i\}}}-\frac{\tilde{y}}{a_{\tau-1}}\frac{\sum\limits_{i=1}^{K}{|x_i|^2}}{K}
\end{eqnarray*}
\begin{eqnarray*}
\hat{\V{V}}_{z_\tau}&=&\sum\limits_{i=1}^{K}{\argmax\limits_{x_i\in\mathcal{A}_i}\:\,\Re\{\tilde{x}_ix_i\}-\frac{\tilde{y}}{2a_{\tau-1}\sqrt{K}}|x_i|^2}
\end{eqnarray*}
Therefore, the near-optimal decoder decouples the data symbols $x_1,...,x_K$ resulting in linear complexity, and the decision metric for symbol $x_i$ finally becomes
\begin{equation}
\label{eq:x_hat_non_unitary}
 \boxed{\hat{x}_i=\argmax\limits_{x_i\in\mathcal{A}_i}\:\,\Re\{\tilde{x}_ix_i\}-\frac{\tilde{y}}{2a_{\tau-1}\sqrt{K}}|x_i|^2}
\end{equation}\\

In \cite{Tao_2001}, the authors showed that the performance of the near-optimal decoder in (\ref{eq:x_hat_non_unitary}) is only marginally worse --by about $\unit[0.3]{dB}$-- than  that of the optimal decoder in (\ref{eq:ML_metric_non_unitary}). Note also that the amplitude $a_{\tau-1}$ of the previously transmitted matrix appears in both the optimal and the near-optimal metrics. This indicates the possibility of error propagation, because if $\V{V}_{z_{\tau-1}}$ is wrongly decided, $a_{\tau-1}$ will also be wrong. However, the authors in \cite{Tao_2001} showed that the error rate curves resulting from the assumption of perfect knowledge of $a_{\tau-1}$ are almost the same as when no knowledge of $a_{\tau-1}$ is assumed, which indicates no error propagation. This was concluded for the optimal metric, and we assume the same behaviour for the near-optimal decoder.\\

Interestingly, if the symbols were drawn from a constant-energy alphabet, the metric in (\ref{eq:x_hat_non_unitary}) will be the same as the non-coherent \emph{Maximum Likelihood} metric for unitary OSTBCs in (\ref{eq:Fast_ML_Unitary_OSTBC}). In other words, the non-coherent ML metric for unitary OSTBCs can be also reached by which decides on the candidate matrix that suffers the least the noise variance in the virtual coherent system in (\ref{eq:fund_receiver_eqn}) with $a_\tau\eq 1\,\forall\,\tau$. \\

Owing to its significant reduction in decoding complexity, the near-optimal metric is the one we consider by default when differentially decoding non-unitary OSTBCs. The sub-optimal decoder achieves SCSD and the transmission can also be described by Figure \ref{fig:DSTBC_SCSD}, with $\mathcal{A}_i$ being the non-unitary alphabet from which symbol $x_i$ is drawn. In general, any non-unitary constellation can be used. Here, we study the use of QAM alphabets. In the case of \emph{rectangular} QAM, the real and the imaginary components of the symbols can be independently encoded, since rectangular QAM can be viewed as two independent PAM constellations. We may utilize this advantage by further splitting the decision on the real and the imaginary components of the symbols, reducing the decoder metric of (\ref{eq:x_hat_non_unitary}) to
\begin{equation}
\label{eq:decoder_rect_QAM}
\begin{aligned}
\hat{x}_i^R&=&\argmax\limits_{x_i^R\in\mathcal{A}_{\text{PAM}_i}}\:\,\tilde{x}_i^Rx_i^R-\frac{\tilde{y}}{2a_{\tau-1}\sqrt{K}}|x_i^R|^2\\
\hat{x}_i^I&=&\argmax\limits_{x_i^I\in\mathcal{A}_{\text{PAM}_i}}\:\,-\tilde{x}_i^Ix_i^I-\frac{\tilde{y}}{2a_{\tau-1}\sqrt{K}}|x_i^I|^2,
\end{aligned}
\end{equation}
where $\tilde{x}_i=\tilde{x}_i^R+j\tilde{x}_i^I$ and $x_i=x_i^R+jx_i^I$. Using the above metric for rectangular QAM constellations reduces the decoding complexity further. This is because the search space of one dimensional PAM alphabet is square root of the corresponding two dimensional QAM. For example in 16-QAM alphabet, the search space is reduced from $16$ candidates to $2\times\sqrt{16}\eq 8$ candidates, and in 64-QAM, the search space is reduced from $64$ to $2\times\sqrt{64}\eq 16$ candidates. A decoder like in (\ref{eq:decoder_rect_QAM}) that decides on one-dimensional component of each symbol independently will be referred to as performing Single Real Symbol Decoding (SRSD).\\
\subsubsection{Performance Analysis}
Simulations have been done for $2\!\times\!1$ and $4\!\times\!1$ systems using Alamouti's code matrix in (\ref{eq:Alamouti}) and the T-H code matrix in (\ref{eq:T_H}), respectively. The channel assumed is again a flat Rayleigh block fading channel with independent coefficients. The systems use the differential encoding equation in (\ref{eq:Diff_encoding_non_unitary}) and the near-optimal differential decoder in (\ref{eq:x_hat_non_unitary}), except for rectangular QAM alphabets where metric (\ref{eq:decoder_rect_QAM}) is used. \\

For QAM alphabets whose size has integer square roots, rectangular constellation is used (like 16-QAM and 64-QAM). Otherwise circular constellation is used (like 8-QAM and 32-QAM). Rectangular constellation can be easily defined, however for circular constellation we have the advantage of more freedom in placing the constellation points. To make use of this advantage, optimization for the 8-QAM constellation is carried out. We started with the constellation shown on the left of Figure \ref{fig:Opt_8_QAM_OSTBC_constellation} by defining two 4-PSK circles with different radii. We define two optimization parameters, namely $\theta$ which is the relative angle of rotation between the two circles and $a$ which is the ratio between the circles radii. Then error rate simulations were carried out at $E_b/N_0$ of $\unit[14]{dB}$ (a point in the high SNR range) for different $a$ and $\theta$. Figure \ref{fig:Opt_8_QAM_OSTBC_SER} shows that optimally the inner circle should be rotated by $45^0$ relative to the outer circle for all values of $a$ considered. For $a\eq1.6$, the SER is lowest. The optimized constellation is shown on the right of Figure \ref{fig:Opt_8_QAM_OSTBC_constellation}.\\

\begin{figure}[htp]
\vspace{-1.9cm}
\centering
\input{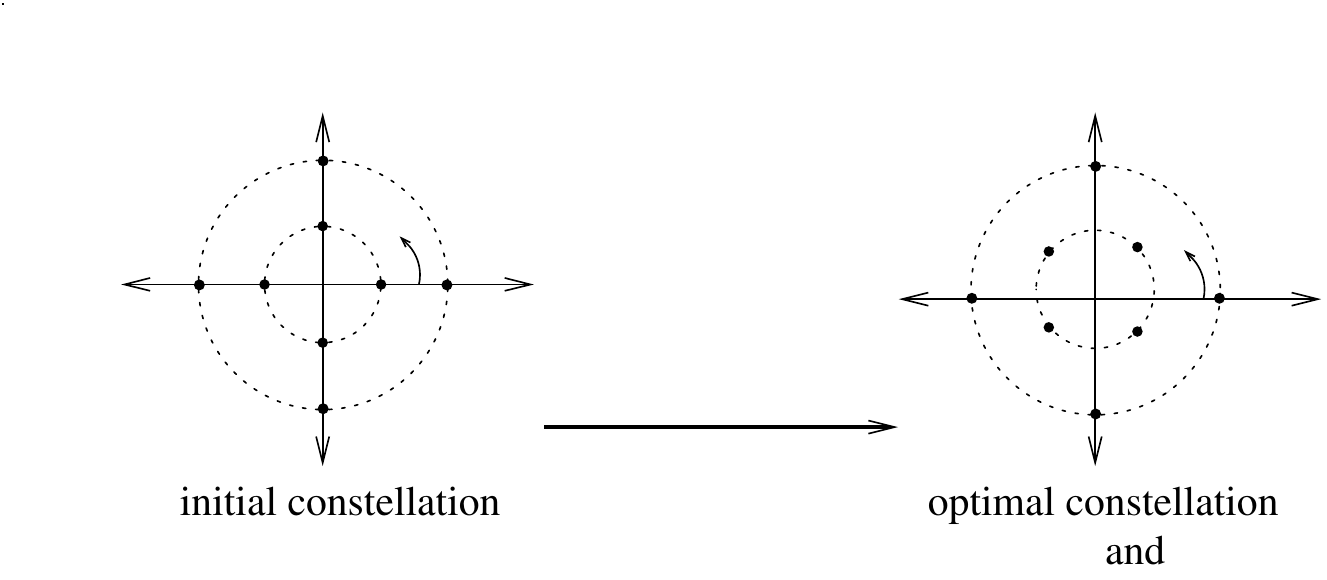_t}
\caption{ Optimization for use of 8-QAM with OSTBCs}
  \label{fig:Opt_8_QAM_OSTBC_constellation}
\end{figure}

\begin{figure}[htp]
\centering
\includegraphics[scale=0.3]{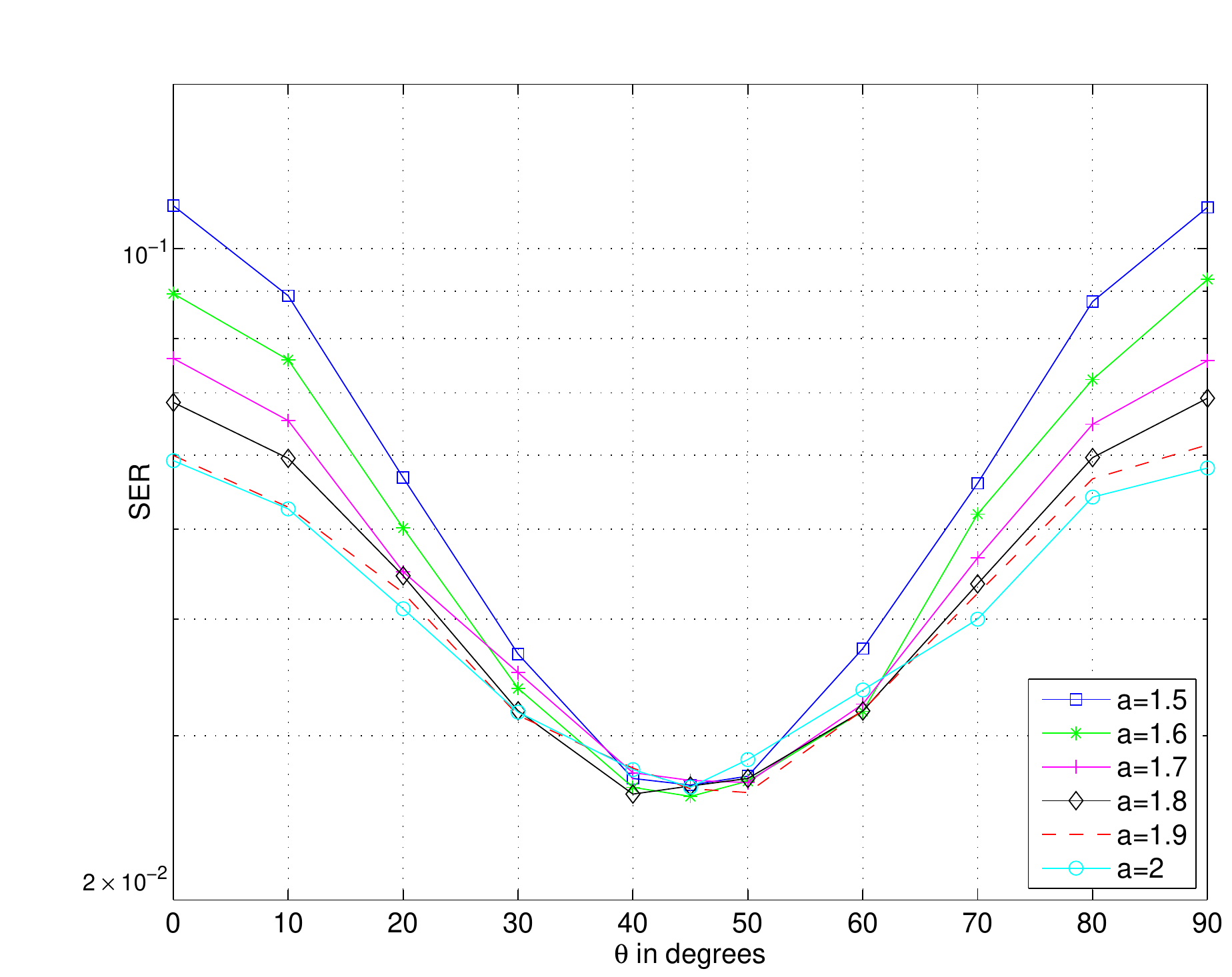}
\caption{SER vs angle of rotation for an 8-QAM constellation used with OSTBCs.}
  \label{fig:Opt_8_QAM_OSTBC_SER}
\end{figure}

Figure \ref{fig:M_2_non_unitary_4_bits_s_Hz} compares the BER curves of Alamouti's code using PSK vs. QAM symbols. At a transmission rate of $\unit[3]{bits/s/Hz}$, there is almost no difference in performance of the 8-QAM compared to 8-PSK. At a transmission rate of $\unit[4]{bits/s/Hz}$, the use of 16-QAM results in an SNR advantage of about $\unit[2]{dB}$ compared to 16-PSK. Moreover, the search space for 16-QAM is 8 candidates per complex symbol using the metric in (\ref{eq:decoder_rect_QAM}), whereas for 16-PSK all 16 candidates must be tested. For $4\!\times\!1$ systems, Figure \ref{fig:M_4_non_unitary} shows the BER curves at different transmission rates. At $\unit[2]{bits/s/Hz}$, the alphabet size combination 4/8/8 is chosen. As shown, using 4/8/8 QAM results in only $\unit[0.5]{dB}$ gain compared to 4/8/8 PSK. Also shown is the BER curve of the $1\!\times\!4$ non-coherent SIMO system after shifting it $\unit[6]{dB}$ to the right to account for the power division loss of the $4\!\times\!1$ MISO system. Clearly, using OSTBC with QAM alphabet is still far worse than the reference SIMO curve. For higher order constellation, the performance of PSK alphabet deteriorates significantly compared to QAM as shown for the transmission rates of $\unit[3]{bits/s/Hz}$ and $\unit[4]{bits/s/Hz}$.\\

Note that the curves of the OSTBC using QAM are parallel to those that use PSK constellation. This indicates that also non-unitary non-coherent OSTBCs achieve full diversity. This however was not explicitly proved in the literature. In coherent systems, the full diversity condition  based on the rank criteria was proved to apply for any STBC. Whereas in the differential non-coherent case, this was analytically proved true only for unitary transmission as derived in section \ref{s:Design_Criteria}. Based on our results, we conjecture that the rank criteria is still the diversity criteria for non-coherent non-unitary OSTBCs.\\

In conclusion, this chapter covered two classes of STCs, namely USTM and OSTBCs, in the differential domain. The USTM first looked appealing, since the transmitter's role is simplified. However, due to the direct mapping of bits to matrices, the receiver complexity increases exponentially with the number of transmit antennas and the spectral efficiency. On the other hand, OSTBCs map bits first to symbols and then to orthogonal code matrices. If the symbols belong to PSK alphabets, the code matrices are unitary and the ML metric performs SCSD. If they belong to QAM alphabets, the code matrices are scaled unitary and a near-optimal metric is used to achieve SCSD. DUSTM is inferior to all OSTBCs in terms of complexity and error performance for all transmission rates. The advantage of using QAM relative to PSK alphabets in OSTBCs is significant only for alphabet size larger than 8. \\

Although OSTBCs with QAM alphabets showed the best performance of the aforementioned schemes, still their performance is worse than the reference (shifted) SIMO curves. This indicates the possibility of existence of other schemes that make better use of the available resources to improve the reliability of transmission and increase the data rate. One such scheme is the so-called Quasi Orthogonal Space-Time Block Codes (QOSTBCs), which relaxes the condition of orthogonality of the code matrices to enhance the code rate. It is the role of the next chapter to motivate the use of QOSTBCs, and show how much advantage these codes may achieve.

\begin{figure}[htp]
\centering
\subfloat[$\eta=3$ bits/s/Hz]{\label{subfig:M_2_non_unitary_3_bits_s_Hz}\includegraphics[width=0.45\textwidth]{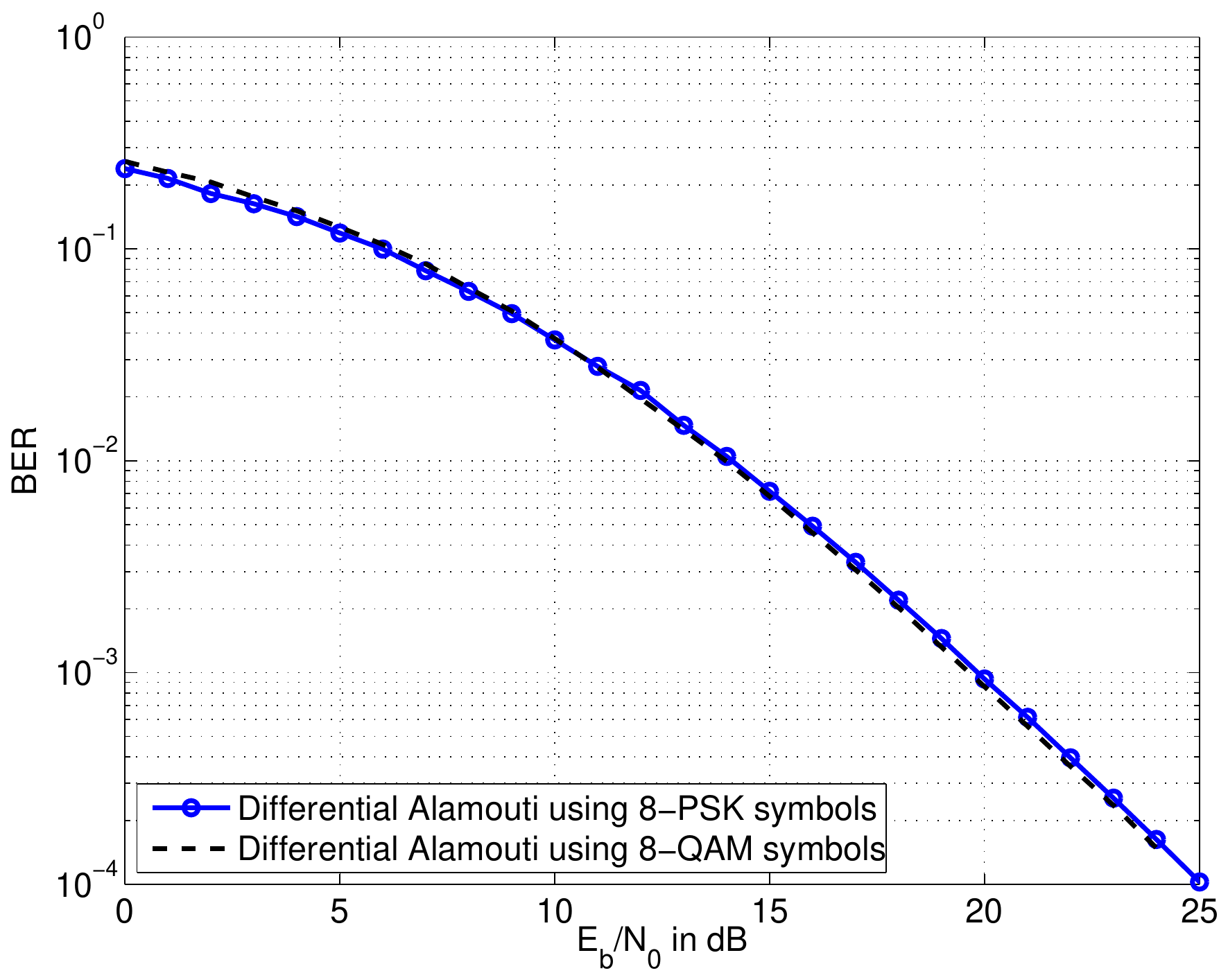}}
\subfloat[$\eta=4$ bits/s/Hz]{\label{subfig:M_2_non_unitary_4_bits_s_Hz}\includegraphics[width=0.45\textwidth]{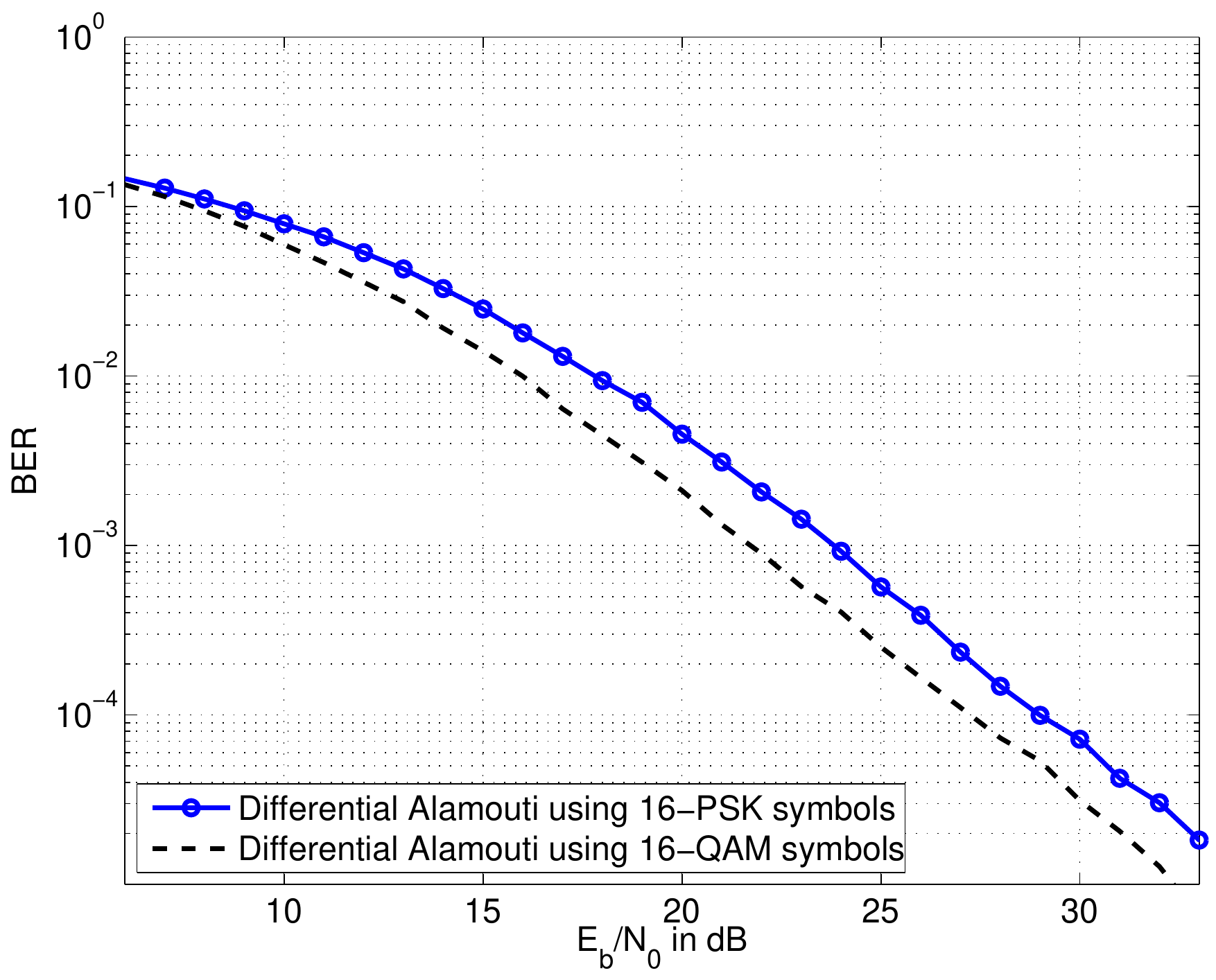}}
\caption{BER curves for $2\!\times\!1$ differential OSTBC systems that use Alamouti's code matrix in (\ref{eq:Alamouti}) using ML decoder in case of PSK symbols and near-optimal decoder in case of QAM symbols.}
  \label{fig:M_2_non_unitary_4_bits_s_Hz}
\end{figure}

\begin{figure}[htp]
 \centering
  \subfloat[$\eta=2$ bits/s/Hz]{\label{subfig:M_4_non_unitary_2_bits_s_Hz}\includegraphics[width=0.43\textwidth]{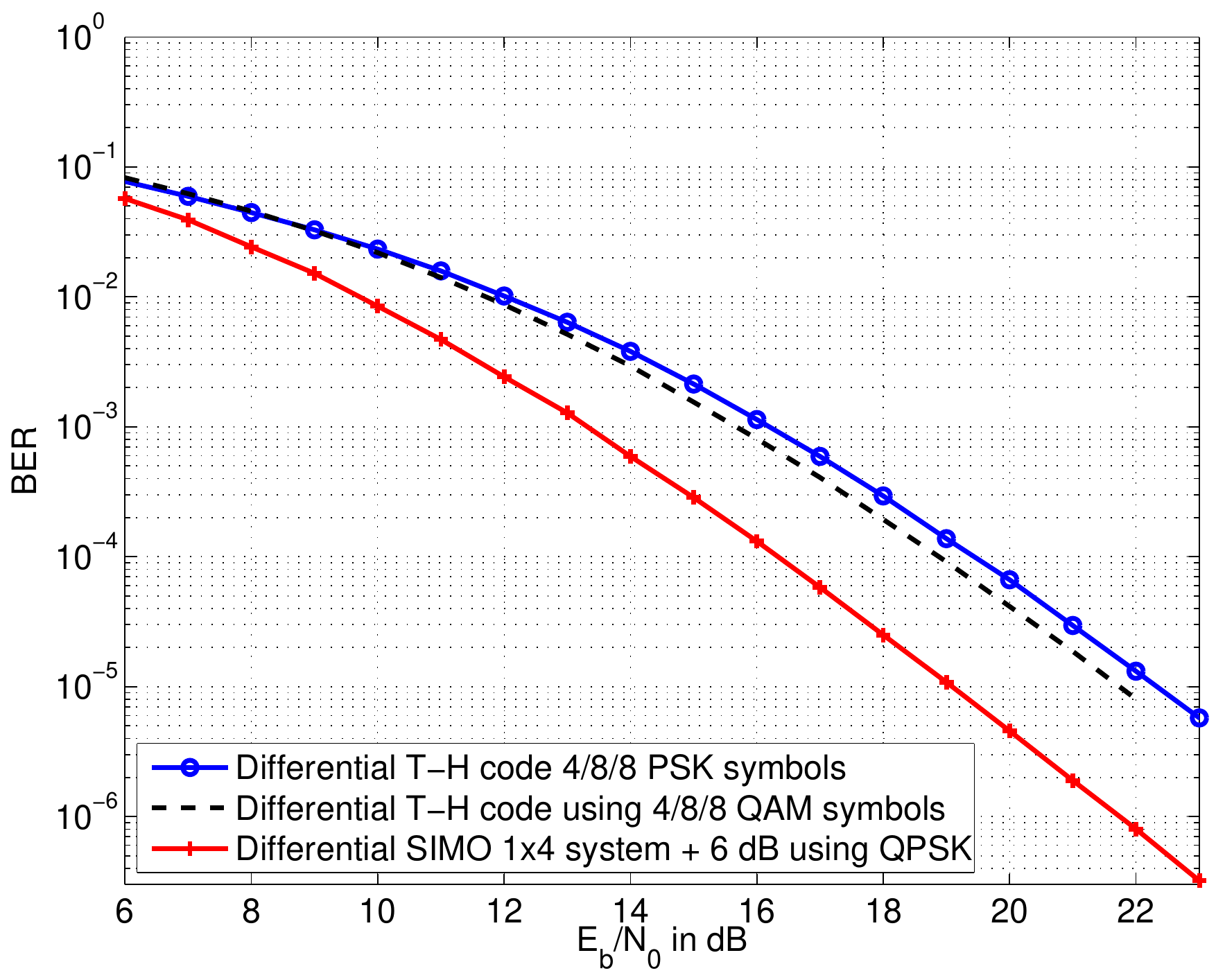}}\\
  \subfloat[$\eta=3$ bits/s/Hz]{\label{subfig:M_4_non_unitary_3_bits_s_Hz}\includegraphics[width=0.43\textwidth]{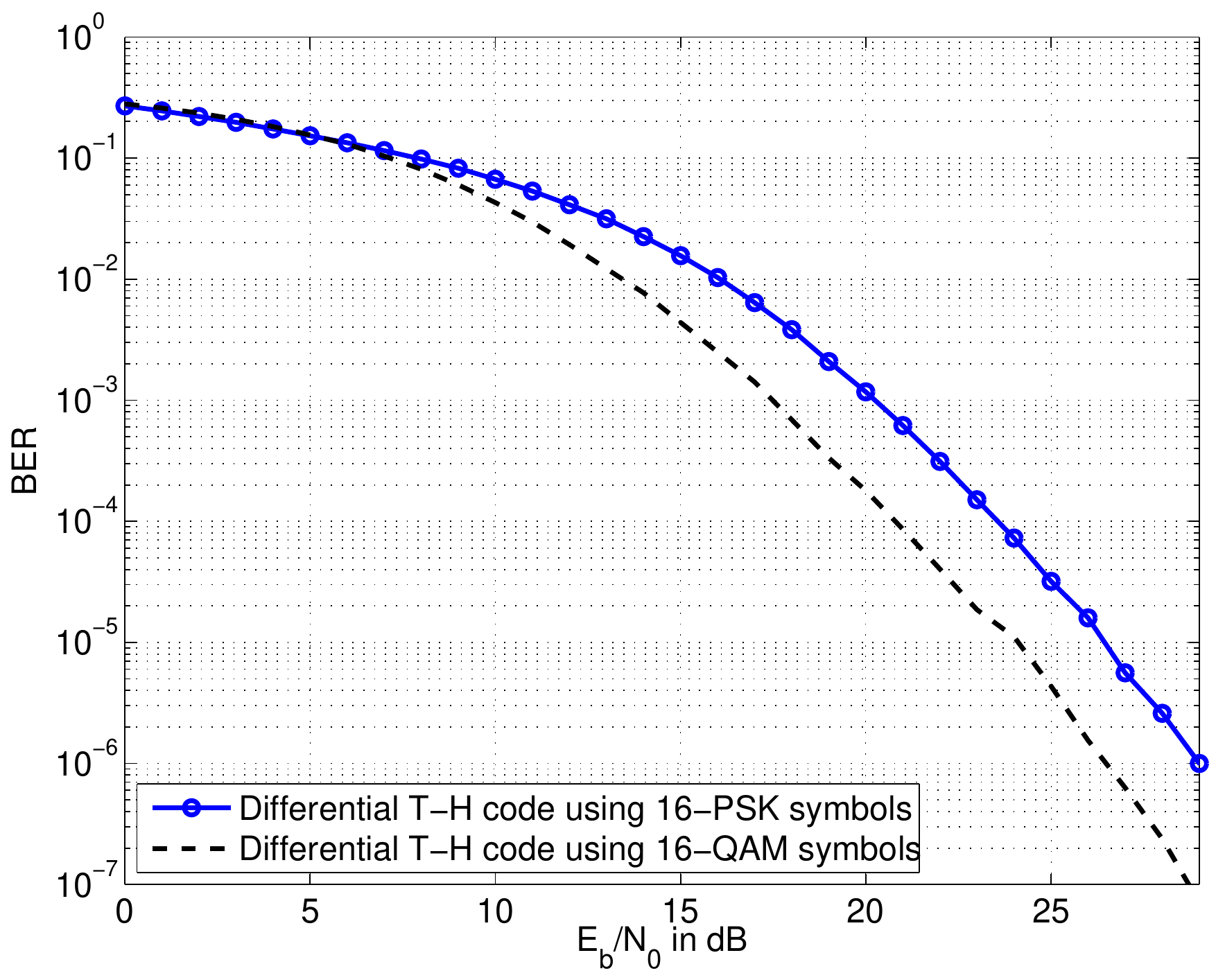}}
  \subfloat[$\eta=4$ bits/s/Hz]{\label{subfig:M_4_non_unitary_4_bits_s_Hz}\includegraphics[width=0.43\textwidth]{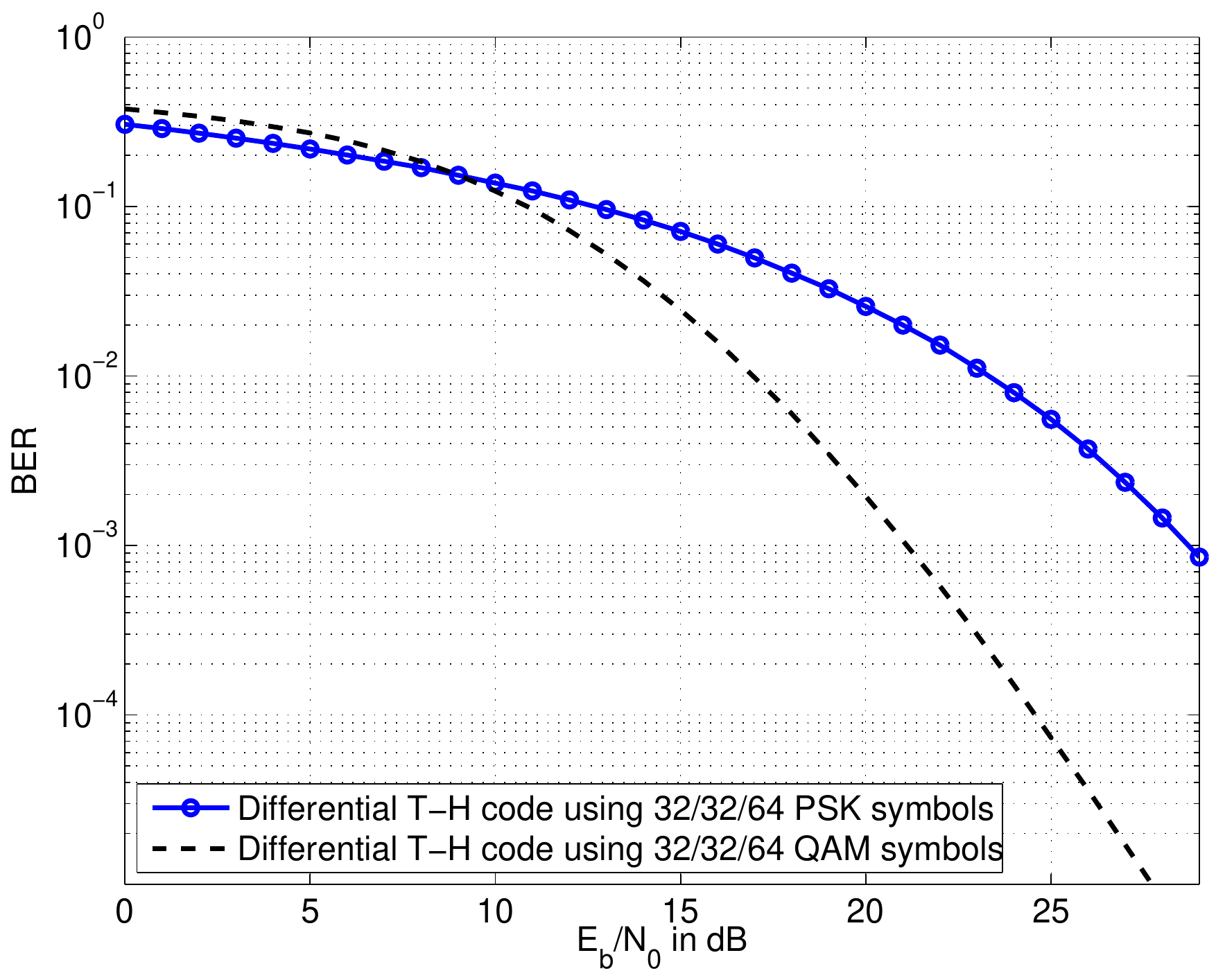}}
  \caption{BER curves for $4\!\times\!1$ differential OSTBC systems that use T-H code matrix in (\ref{eq:T_H}) using ML decoder in case of PSK symbols and near-optimal decoder in case of QAM symbols.}
  \label{fig:M_4_non_unitary}
\end{figure}


\abbrev{WIMAX}{Worldwide Interoperability for Microwave Access}
\abbrev{CDMA}{Code Division Multiple Access}
\abbrev{W-CDMA}{Wideband Code Division Multiple Access}
\abbrev{SCSD}{Single Complex Symbol Decoding}
\abbrev{SRSD}{Single Real Symbol Decoding}
\abbrev{T-H}{Tirkkonen and Hottinen}

\abbrev{DUSTM}{Differential Unitary Space-Time Modulation}
\abbrev{PAM}{Pulse Amplitude Modulation}

\chapter{Quasi Orthogonal Space-Time Block Codes}
\label{chap:QOSTBCs}
In the previous chapter, the class of orthogonal STBCs has shown to be an attractive transmit diversity approach due to its low decoding complexity and its ability to achieve full transmit diversity  for any number of transmitting antennas and any complex signal constellation. However, providing full code rate with complex OSTBCs is not possible for more than two transmit antennas. One way to achieve higher rates with STBCs is to relax the orthogonality condition of OSTBCs arriving at a new class of STCs known as Quasi-Orthogonal Space-Time Block Codes (QOSTBC). This new class was first presented by Jafarkhani in \cite{Jafarkhani_QOSTBC_2001} and independently by  Tirkkonen et al. in \cite{T_H_QOSTBC_2000}.\\

Unlike in OSTBCs, in QOSTBCs not all columns of the code matrix are orthogonal. The main idea of QOSTBCs lies in dividing the columns of the code matrix into different groups. Columns of different groups are mutually orthogonal, whereas columns within same group are not. This construction allows a higher code rate than that achieved by OSTBCs for the same number of transmit antennas. For most QOSTBCs presented in the literature, the improvement in the code rate comes at the expense of a higher decoding complexity. For four transmit antennas, the QOSTBCs proposed in \cite{Jafarkhani_QOSTBC_2001} and \cite{T_H_QOSTBC_2000} decode two complex symbols jointly, the so-called pair-wise complex symbol decoding. Additionally, these codes do not achieve full diversity. Later on, studies like in \cite{Xia_QOSTBCs_2002} and \cite{Papadias_QOSTBC_2003} showed the possibility of achieving full diversity for QOSTBCs by rotating the constellation. These codes however still require the joint detection of two complex symbols.\\

In 2004, Yuen et al. proposed in \cite{MDC_QOSTBC_2004} a construction of QOSTBCs that is based on interleaving the real and imaginary components of the symbols allowing the decoder to decouple all symbols achieving SCSD. For this reason, such codes were named Minimum Decoding Complexity QOSTBCs (MDC-QOSTBCs). With these codes, full diversity is also achievable with appropriate constellation rotation. In \cite{Yuen_MDC_QOSTBC_2005}, the same authors have shown a systematic approach for designing MDC-QOSTBCs by extending OSTBCs. They have also shown that this code achieves full rate for four transmit antennas and rate $\nicefrac{3}{4}$ for eight transmit antennas. Compared to other QOSTBCs proposed in the literature, MDC-QOSTBCs result in a marginal performance loss.\\

Owing to their reduced decoding complexity and good performance, we consider MDC-QOSTBCs to achieve a good rate-performance-complexity trade off. Consequently, this chapter will focus on this type of QOSTBCs. In Section \ref{s:MDC_QOSTBCs}, we explain the algebraic construction upon which MDC-QOSTBCs are built, as well as the full diversity condition for these codes. Section \ref{s:Orthogonalized_MDC_QOSTBCs} shows an approach used by the same authors in \cite{Yuen_DQOSTBC_2006} for differentially encoding MDC-QOSTBCs. This approach \emph{orthogonalize} the code matrix by imposing some conditions on the constellation. Such conditions will be shown to achieve degraded performance for high transmission rates.\\

Differential encoding --without orthogonalization-- for any QOSTBC remained unclear until Zhu and Jafarkhani published in \cite{Zhu_Jafarkhani_Diff_QOSTBCs_2005} one possible way out for using any QOSTBC in the differential domain. This differential transmission methodology will be explained in Section \ref{s:DQOSTBC}. The used QOSTBC in \cite{Zhu_Jafarkhani_Diff_QOSTBCs_2005} however requires pair-wise complex symbol decoding. Motivated by the good features of MDC-QOSTBCs, we desire to extend them to the differential framework using the differential encoding approach in \cite{Zhu_Jafarkhani_Diff_QOSTBCs_2005}. To the best of our knowledge, MDC-QOSTBCs have not been used in differential non-coherent systems for arbitrary complex signal constellation. In Section \ref{ss:QOSTBC_full_diversity}, we propose the differential version of MDC-QOSTBC for four transmit antennas that achieves full rate, full diversity and requires SCSD. Then in Section \ref{ss:QOSTBC_half_diversity}, a half-diversity differential MDC-QOSTBC that achieves SRSD with rectangular QAM constellation is proposed.
\section{Minimum Decoding Complexity QOSTBCs}
\label{s:MDC_QOSTBCs}
This section covers the construction and the properties of MDC-QOSTBCs proposed in \cite{MDC_QOSTBC_2004,Yuen_MDC_QOSTBC_2005,Yuen_MDC_QOSTBC_2005_V2} to serve all subsequent sections  of this chapter.
In \cite{MDC_QOSTBC_2004}, Yuen et al. have shown that the algebraic construction of an MDC-QOSTBC with even number of transmit antennas is based on extending an OSTBC that has half the number of transmit antennas and sends half the symbols in half the time slots. Namely, an MDC-QOSTBC that encodes $K$ symbols over $M$ transmit antennas in $T$ time slots is constructed from an OSTBC that encodes $\nicefrac{K}{2}$ symbols over $\nicefrac{M}{2}$ transmit antennas in $\nicefrac{T}{2}$ time slots. If the dispersion matrices of the $(\nicefrac{M}{2},\nicefrac{K}{2},\nicefrac{T}{2})$ OSTBC are denoted as $\underline{\V{U}}_i$ and $\underline{\V{Q}}_i$, then the dispersion matrices of the $(M,K,T)$ MDC-QOSTBC can be constructed by the following four mapping rules

\begin{equation}
\left.
\begin{aligned}
 \text{(i)}& \,\V{U}_i=\begin{bmatrix}\underline{\V{U}}_i&\V{0}\\\V{0}&\underline{\V{U}}_i\end{bmatrix}\,\,\,\,\,\,\,\,\,\,\:\,\,&\text{(ii)}\,\:\:\V{Q}_i=\begin{bmatrix}\V{0}&j\underline{\V{U}}_i\\j\underline{\V{U}}_i&\V{0}\end{bmatrix}\\
 \text{(iii)}&\, \V{U}_{i+\frac{K}{2}}=\begin{bmatrix}j\underline{\V{Q}}_i&\V{0}\\\V{0}&j\underline{\V{Q}}_i\end{bmatrix}\,\,\,\,\,\,&\text{(iv)}\,\V{Q}_{i+\frac{K}{2}}=\begin{bmatrix}\V{0}&\underline{\V{Q}}_i\\\underline{\V{Q}}_i&\V{0}\end{bmatrix}
\end{aligned}
\right. \hspace{1cm}1\leq i\leq\frac{K}{2}
\label{eq:MDC_construction}
\end{equation}

Using the properties of OSTBCs in (\ref{eq:OSTBC_disp_matrices_properties}) together with the above construction, the set of dispersion matrices $\V{U}_i$ and $\V{Q}_i$ of MDC-QOSTBCs can be shown to satisfy
\begin{equation}
\label{eq:MDC_QOSTBC_constraints}
\left.
\begin{aligned}
\text{(i)}\,\,&\V{U}_i\hr\V{U}_d&=&-\V{U}_d\hr\V{U}_i,\\
\text{(ii)}\,\,&\V{Q}_i\hr\V{Q}_d&=&-\V{Q}_d\hr\V{Q}_i\\
\text{(iii)}\,\,&\V{U}_i\hr\V{Q}_d&=&\,\V{Q}_d\hr\V{U}_i 
\end{aligned}
\right. \hspace{1cm}1\leq i\neq d\leq K
\end{equation}
Although the above MDC-QOSTBCs' properties look like the properties of OSTBCs in (\ref{eq:OSTBC_disp_matrices_properties}, ii-iii), (\ref{eq:MDC_QOSTBC_constraints}, iii) holds only for $i\neq d$. \\

In fact, for any QOSTBC to be able to decouple all symbols and achieve SCSD, its dispersion matrices must comply with the properties in (\ref{eq:MDC_QOSTBC_constraints}) \cite{Yuen_book}. Therefore, these properties are referred to as MDC-QO constraints. \\

The code rate of $(\nicefrac{M}{2},\nicefrac{K}{2},\nicefrac{T}{2})$ OSTBC is $\frac{K/2}{T/2}\eq\frac{K}{T}$ and that of $(M,K,T)$ MDC-QOSTBC is also $\frac{K}{T}$. In conclusion, an MDC-QOSTBC achieves the same code rate as the half-size OSTBC used to construct it. Therefore, an MDC-QOSTBC can achieve full rate for four transmit antennas and rate $\nicefrac{3}{4}$ for eight transmit antennas.\\

Based on the construction rules and the properties of an MDC-QOSTBC, its code matrix satisfies (see the proof in (\ref{eq:V_hr_V_MDC}))
\begin{equation}
\label{eq:V_hr_V_MDC_QOSTBC}
\V{V}\hr\V{V}=\frac{\alpha}{K}\V{I}_M+\frac{\beta}{K}\begin{bmatrix}\V{0}&\V{I}_{\frac{M}{2}}\\\V{I}_{\frac{M}{2}}&\V{0} \end{bmatrix},
\end{equation}
where
\begin{equation} 
 \label{eq:alpha_beta}
\alpha\eq\sum\limits_{i=1}^{K}{|x_i|^2},\,\,\beta\eq2\sum\limits_{i=1}^{\frac{K}{2}}{-x_i^R x_i^I+x_{i+\frac{K}{2}}^R x^I_{i+\frac{K}{2}}}.
\end{equation}
To study the diversity achieved by these codes, one needs to get an expression for the minimum determinant of the distance matrices $\V{D}_{ll'}, 0\leq l\neq l'\leq L-1$. Here, we omit the subscript $ll'$ for simplicity. Such an expression for the considered MDC-QOSTBCs was proved in (\ref{eq:det_min_MDC_QOSTBC}) to be
\begin{equation}
\label{eq:diversity_condition}
\det(\V{D}\hr\V{D})\Big|_{\text{min}}=\frac{1}{K}[(\Delta x^R)^2-(\Delta x^I)^2]^M
\end{equation}
where $\Delta x^R\text{ and }\Delta x^I$ represent the difference in the real and the imaginary components between two constellation points. This indicates that full diversity is achieved only when the absolute difference between the real parts of any two points in the constellation is not the same as the absolute difference between their imaginary parts.\\

Alamouti's code matrix in (\ref{eq:Alamouti}), whose $\underline{\V{U}}_i$ and $\underline{\V{Q}}_i$ dispersion matrices are

\begin{equation}
\underline{\V{U}}_1=\begin{bmatrix}1&0\\0&1\end{bmatrix},\,
\underline{\V{Q}}_1=\begin{bmatrix}1&0\\0&-1\end{bmatrix},\,
\underline{\V{U}}_2=\begin{bmatrix}0&1\\-1&0\end{bmatrix},\,
\underline{\V{Q}}_2=\begin{bmatrix}0&1\\1&0\end{bmatrix},
\end{equation}
can be extended using (\ref{eq:MDC_construction}) to construct the dispersion matrices of the four-transmit antenna MDC-QOSTBC as
\begin{equation*}
\begin{aligned}
\V{U}_1&=\left[\begin{array}{cc:cc}1&0&0&0\\0&1&0&0\\\hdashline0&0&1&0\\0&0&0&1\end{array}\right],\,
\V{Q}_1=\left[\begin{array}{cc:cc}0&0&j&0\\0&0&0&j\\\hdashline j&0&0&0\\0&j&0&0\end{array}\right],\,
\V{U}_2=\left[\begin{array}{cc:cc}0&1&0&0\\-1&0&0&0\\\hdashline0&0&0&1\\0&0&-1&0\end{array}\right],\,
\V{Q}_2=\left[\begin{array}{cc:cc}0&0&0&j\\0&0&-j&0\\\hdashline0&j&0&0\\-j&0&0&0\end{array}\right],\\
\V{U}_3&=\left[\begin{array}{cc:cc}j&0&0&0\\0&-j&0&0\\\hdashline0&0&j&0\\0&0&0&-j\end{array}\right],
\V{Q}_3=\left[\begin{array}{cc:cc}0&0&1&0\\0&0&0&-1\\\hdashline1&0&0&0\\0&-1&0&0\end{array}\right],
\V{U}_4=\left[\begin{array}{cc:cc}0&j&0&0\\j&0&0&0\\\hdashline0&0&0&j\\0&0&j&0\end{array}\right],
\V{Q}_4=\left[\begin{array}{cc:cc}0&0&0&1\\0&0&1&0\\\hdashline0&1&0&0\\1&0&0&0\end{array}\right],
\end{aligned}
\end{equation*}
and using the disperive form in (\ref{eq:dispersive_form2}), the full-rate MDC-QOSTBC code matrix for four transmit antennas becomes 

\begin{equation} 
\V{V}=\frac{1}{\sqrt{4}}
\left[
\begin{array}{cc:cc}
 x_1^R+jx_3^R & x_2^R+jx_4^R & -x_1^I+jx_3^I & -x_2^I+jx_4^I\\
-x_2^R+jx_4^R & x_1^R-jx_3^R & x_2^I+jx_4^I & -x_1^I-jx_3^I\\\hdashline
-x_1^I+jx_3^I & -x_2^I+jx_4^I & x_1^R+jx_3^R & x_2^R+jx_4^R \\
x_2^I+jx_4^I & -x_1^I-jx_3^I & -x_2^R+jx_4^R & x_1^R-jx_3^R \\ 
\end{array}
\right].
\label{eq:four_MDC_QOSTBC}
\end{equation}
Similarly extending the $(4,3,4)$ T-H code in (\ref{eq:T_H}), the code matrix of the $(M\eq8,K\eq6,T\eq8)$ rate $\nicefrac{3}{4}$ eight-transmit antenna MDC-QOSTBC becomes
\setlength{\arraycolsep}{2.5pt}
\begin{equation}
\hspace{-1cm}
\V{V}=\frac{1}{\sqrt{6}}
 \left[
\begin{array}{cccc:cccc}
  x_1^R+jx_4^R & x_2^R+jx_5^R & x_3^R+jx_6^R &0 &-x_1^I+jx_4^I & -x_2^I+jx_5^I & -x_3^I+jx_6^I & 0 \\
  -x_2^R+jx_5^R & x_1^R-jx_4^R & 0 & -x_3^R-jx_6^R & x_2^I+jx_5^I & -x_1^I-jx_4^I & 0 & x_3^I-jx_6^I\\
  -x_3^R+jx_6^R & 0 & x_1^R-jx_4^R & x_2^R+jx_5^R & x_3^I+jx_6^I & 0 & -x_1^I-jx_4^I & -x_2^I+jx_5^I \\
  0 & x_3^R-jx_6^R & -x_2^R+jx_5^R & x_1^R+jx_4^R & 0 & -x_3^I+jx_6^I & x_2^I+jx_5^I & -x_1^I+x_4^I\\\hdashline
  -x_1^I+jx_4^I & -x_2^I+jx_5^I & -x_3^I+jx_6^I & 0 & x_1^R+jx_4^R & x_2^R+jx_5^R & x_3^R+jx_6^R &0\\
  x_2^I+jx_5^I & -x_1^I-jx_4^I & 0 & x_3^I-jx_6^I & -x_2^R+jx_5^R & x_1^R-jx_4^R & 0 & -x_3^R-jx_6^R\\
  x_3^I+jx_6^I & 0 & -x_1^I-jx_4^I & -x_2^I+jx_5^I & -x_3^R+jx_6^R & 0 & x_1^R-jx_4^R & x_2^R+jx_5^R\\
  0 & -x_3^I+jx_6^I & x_2^I+jx_5^I & -x_1^I+x_4^I & 0 & x_3^R-jx_6^R & -x_2^R+jx_5^R & x_1^R+jx_4^R
\end{array}
\right].
\label{eq:eight_MDC_QOSTBC}
\end{equation}
\setlength{\arraycolsep}{5pt}
Interesting and useful insights can be observed from the MDC-QOSTBC code matrices as those shown in (\ref{eq:four_MDC_QOSTBC}) and (\ref{eq:eight_MDC_QOSTBC}). Clearly the construction defined in (\ref{eq:MDC_construction}) results in interleaving the real and imaginary components of the information symbols. Such interleaving is one possible way for achieving the single complex symbol decodability. Another observation is that the resulting code matrices have the so-called "ABBA" structure. In such a structure, the matrix is divided into four blocks, where each two diagonally opposite blocks are the same.  In fact, several QOSTBCs proposed in the literature posses such a structure. Moreover, each block matrix has the same form as the half-size OSTBC used to construct it. Namely, the block matrices in (\ref{eq:four_MDC_QOSTBC}) have an Alamouti structure shown in (\ref{eq:Alamouti}), and those in (\ref{eq:eight_MDC_QOSTBC}) have a T-H structure shown in (\ref{eq:T_H}).
\section{Orthogonalized Differential MDC-QOSTBCs}
\label{s:Orthogonalized_MDC_QOSTBCs}
The MDC-QOSTBC scheme described in Section \ref{s:MDC_QOSTBCs} was applied in \cite{Yuen_MDC_QOSTBC_2005,Yuen_MDC_QOSTBC_2005_V2} for coherent systems. However, several practical perspectives like fast channel variation and/or the requirement of low-complexity receivers may demand avoiding the use of channel estimation required by coherent systems. With OSTBCs, non-coherent detection was made possible through the differential encoding equations defined in (\ref{eq:Diff_encoding_unitary_OSTBC}) and (\ref{eq:Diff_encoding_non_unitary}). When considering QOSTBCs, it is not really easy to see how these codes can be encoded differentially. In particular, it is not easy to think of a differential encoding equation for a QOSTBC that maintains the properties of the code, and ensures a constant average power per transmit block. For this reason the authors of the MDC-QOSTBC scheme thought of \emph{orthogonalizing} the code as a way out to use it in the differential non-coherent domain. This is basically achieved by imposing some constraints on the signal constellation to make the code orthogonal, i.e. the resulting code is only \emph{conditionally} orthogonal and will be referred to as Orthogonalized MDC-QOSTBC (OMDC-QOSTBC).  This section will describe such an approach which was proposed in \cite{Yuen_DQOSTBC_2006} and will compare its error performance with the \emph{unconditionally} orthogonal STBCs that use arbitrary signal constellations.\\

It is worth mentioning that the full diversity criterion was proved to be the determinant (or rank) criterion for any STBC in coherent systems, but only for \emph{unitary} STBCs in non-coherent systems. Nevertheless, we use --without proof-- the same criterion for non-unitary and for quasi-orthogonal STBCs in the investigated non-coherent systems. The error rate curves will be shown to agree with our assumption that the rank criterion is still the full diversity criterion for non-unitary and QOSTBCs in non-coherent systems.

\subsection{Constellation Design}
For an $\!M\times\!M$ code matrix $\V{V}$ to be orthogonal, it must satisfy
\begin{equation*}
 \V{V}\hr\V{V}=a^2\V{I}_M,
\end{equation*}
for some scalar $a$. From (\ref{eq:V_hr_V_MDC_QOSTBC}), MDC-QOSTBCs can be made orthogonal by forcing $\beta$ in (\ref{eq:alpha_beta}) to be zero, i.e. requiring
\begin{equation}
 x_i^Rx_i^I\req x_{i+\frac{K}{2}}^Rx_{i+\frac{K}{2}}^I,\:\:\forall \,\,1\leq i\leq \frac{K}{2}.
\end{equation}
The easiest way for this to be satisfied is to ensure that all constellation points satisfy  
\begin{equation}
\label{eq:MDC_OSTBC_Condition1}
x^Rx^I\eq\nu, 
\end{equation}
where $x=x^R+jx^I$ is a point in the constellation being designed and $\nu$ is a constant. For a positive $\nu$, (\ref{eq:MDC_OSTBC_Condition1}) represents a hyperbola shown in Figure \ref{subfig:hyperbola}. Note that a negative $\nu$ can also be chosen resulting in an equivalent constellation design with a hyperbola in the other two quadrants of the complex space.\\

To satisfy the energy constraint in (\ref{eq:power_st}) or equivalently to satisfy $\E[\V{V}\hr\V{V}]\eq\V{I}_M$,  a normalization factor is used in the code matrices. In our work, such a factor assumes a unit average symbol energy. Therefore, the constellation points chosen from the loci of Figure \ref{subfig:hyperbola} must additionally satisfy
\begin{equation}
 \E[|x|^2]=1,
\label{eq:MDC_OSTBC_Condition2}
\end{equation}
which represents concentric circles of unit average power. The solutions to conditions (\ref{eq:MDC_OSTBC_Condition1}) and (\ref{eq:MDC_OSTBC_Condition2}) are the intersection points of the hyperbola with the concentric circles as shown in Figure \ref{subfig:hyperbola_circles} for the case of only two circles. The intersections are defined  by points $A_i$, $B_i$, $C_i$ and $D_i$, where $i$ is the circle number. The real and imaginary parts of these points are indicated by the  superscripts $R$ and $I$, respectively. For the same circle, the following is satisfied
\begin{equation}
 (A_i^R)^2+(A_i^I)^2=(B_i^R)^2+(B_i^I)^2
\label{eq:one_circle}
\end{equation}
Since $\nu=A_i^RA_i^I=B_i^RB_i^I$, then by subtracting $2\nu$ from both sides of (\ref{eq:one_circle}), we get
\begin{eqnarray}
 (A_i^R-A_i^I)^2&=&(B_i^R-B_i^I)^2\nonumber\\
 A_i^R-B_i^R&=&A_i^I-B_i^I\nonumber\\
\Delta^R&=&\Delta^I
\end{eqnarray}
which results in a zero determinant for the distance matrix in (\ref{eq:diversity_condition}). Therefore, points $A_i$ \& $B_i$ and similarly points $C_i$ \& $D_i$ of the same circle should not both exist in the constellation for the system to achieve full diversity. Additionally, for maximal constellation points separation, points $A$ and $C$ are chosen on one circle, then points $B$ and $D$ on the next circle, and so on resulting in the constellation shown in Figure \ref{subfig:Constellation_full_diversity} when two circles are considered. Since every circle contains two constellation points, the number of circles is half the alphabet size i.e. $\nicefrac{q}{2}$, and the condition in (\ref{eq:MDC_OSTBC_Condition2}) can be rewritten as
\begin{equation}
 \sum\limits_{i=1}^{\frac{q}{2}}{r_i^2}\req\frac{q}{2}
\label{eq:MDC_OSTBC_Condition2_ri}
\end{equation}

Next step is to maximize the coding gain by maximizing the minimum determinant in (\ref{eq:diversity_condition}). It is required to find the optimal choice of $\nu$ and of the circles radii $r_i$. The choice of $\nu$ governs the choice of the two angles $\theta_1$ and $\theta_2$ shown in Figure \ref{subfig:Constellation_full_diversity}. A clear proof in \cite{Yuen_DQOSTBC_2006} has shown that optimally $|\theta_1-\theta_2|=90^o$, or equivalently the optimal choice of $\nu$ is zero, making the constellation points lying on the x- or y-axis. Then the optimal choice of the circles radii $r_i,\,\,i\in\{1,...,\nicefrac{q}{2}\}$ was calculated. In \cite{Yuen_DQOSTBC_2006}, this optimization was done analytically for the 4-point constellation case, but only numerically for the 8-point constellation case since analytical optimization becomes more tedious in this case. The resulting optimal 4-point and 8-point constellations are shown in Figure \ref{fig:opt_constellation_OMDC_QOSTBC}.\\

These constellations can be used with the full-rate four-transmit antenna MDC-QOSTBC in (\ref{eq:four_MDC_QOSTBC}) or the $\frac{3}{4}$ rate eight-transmit antenna MDC-QOSTBC in (\ref{eq:eight_MDC_QOSTBC}). Thus for the four-transmit antenna case, the new code with the designed constellation is a full-rate full-diversity orthogonal STBC. This may sound as if this code violates the Hurwitz-Radon theorem, which states that for more than two transmit antennas, an STBC that achieves full rate, full diversity and orthogonality for all possible complex constellations can not exist. However, the theorem has not mentioned whether it is possible for such codes to exist for some \emph{specific} constellations (like the one used here). Therefore, the code used here with the designed constellation does not violate the Hurwitz-Radon theorem.

\begin{figure}
\vspace{-0.8cm}
 \centering
  \subfloat[hyperbola showing the loci of constellation points satisfying (\ref{eq:MDC_OSTBC_Condition1}).]{\label{subfig:hyperbola}\includegraphics[width=0.5\textwidth]{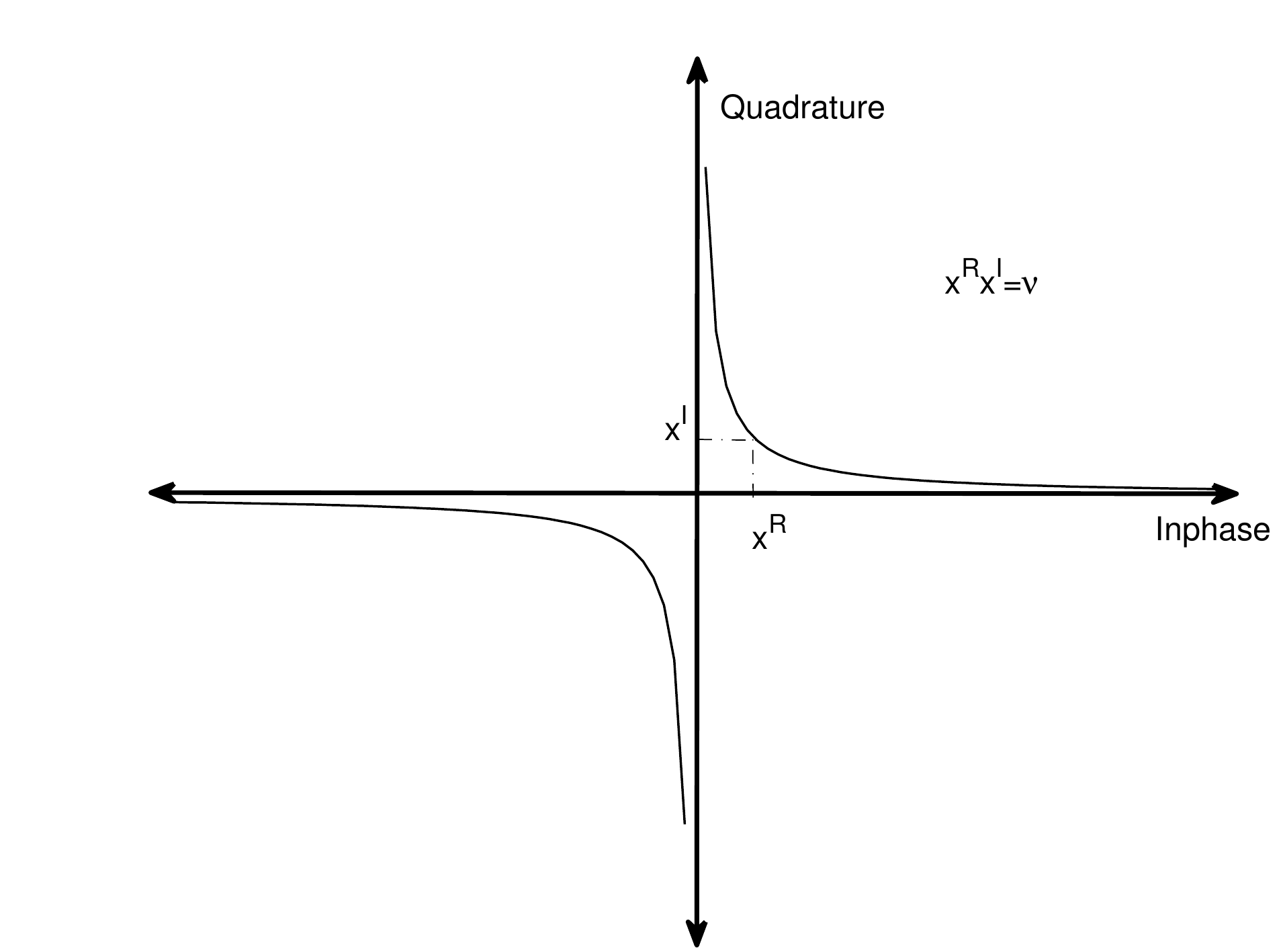}}
  \subfloat[Loci of constellation points satisfying (\ref{eq:MDC_OSTBC_Condition1}) and (\ref{eq:MDC_OSTBC_Condition2}).]{\label{subfig:hyperbola_circles}\includegraphics[width=0.49\textwidth]{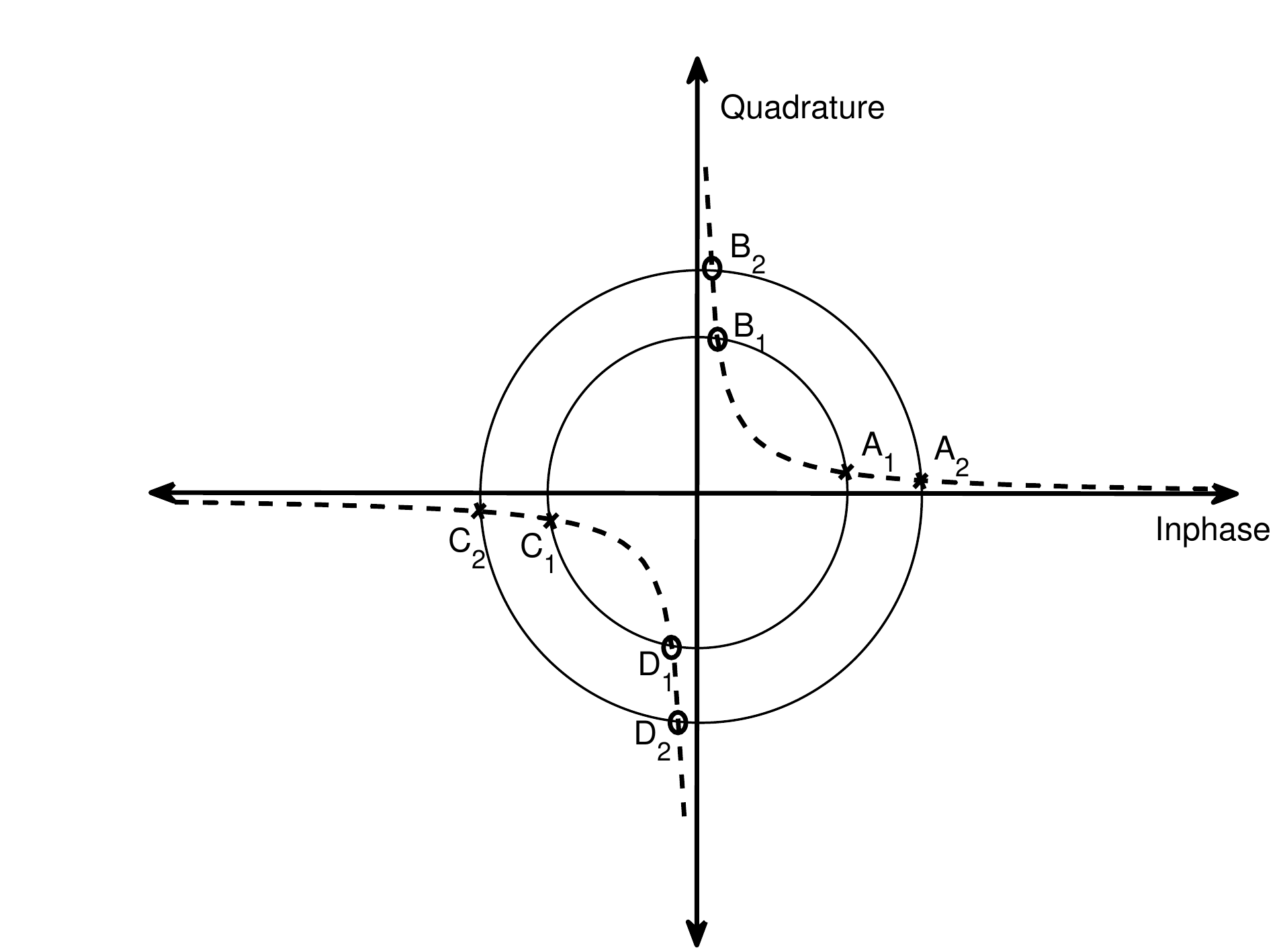}}\\\vspace{-0.2cm}
  \subfloat[A constellation that achieves full diversity for an MDC-QOSTBC]{\label{subfig:Constellation_full_diversity}\includegraphics[width=0.49\textwidth]{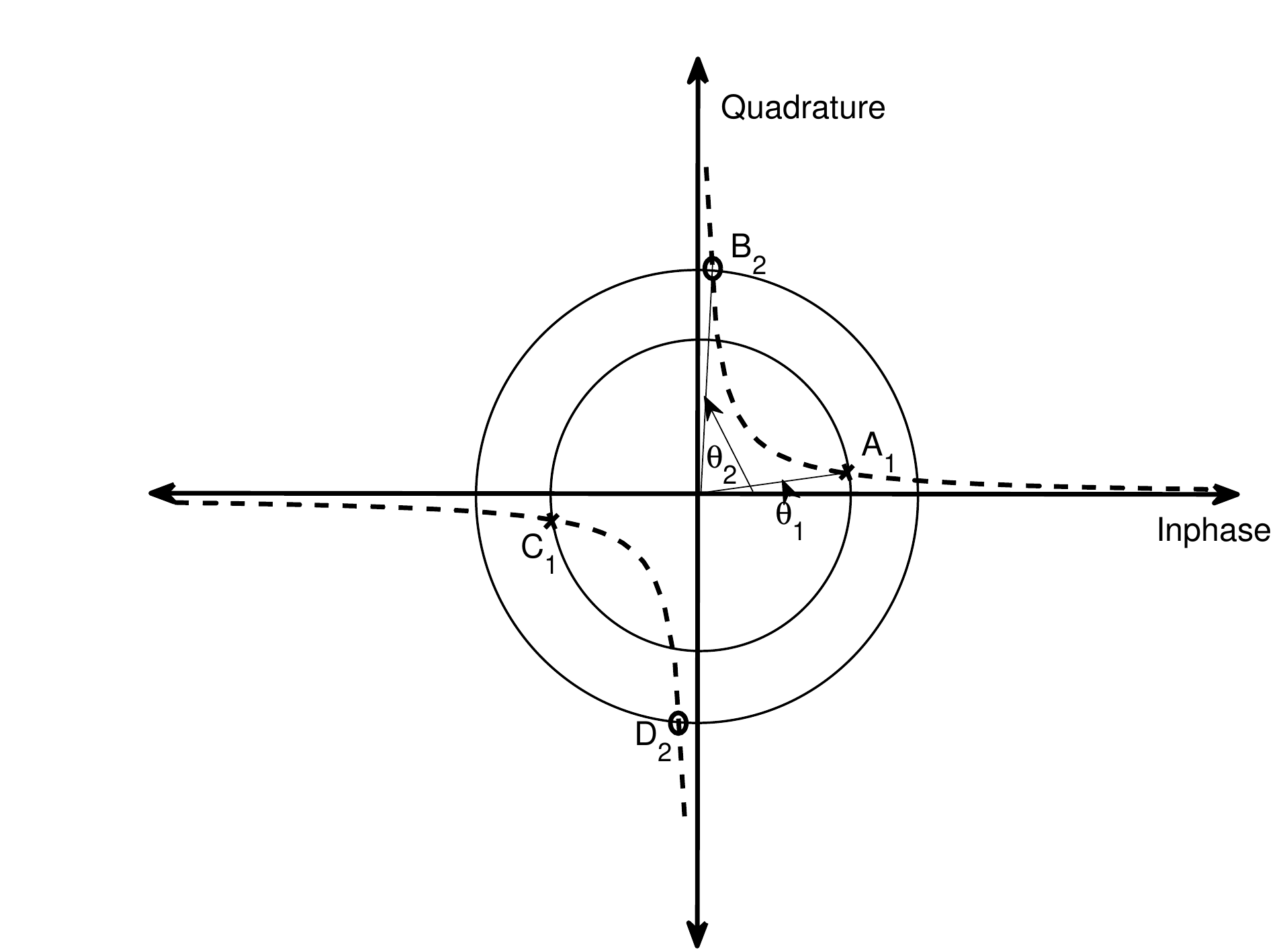}}
  \caption{Steps toward constellation design of OMDC-QOSTBC}
  \label{fig:constellation_loci}
\end{figure}

\begin{figure}
\vspace{-0.2cm}
 \centering
  \subfloat[Optimal 4-point constellation]{\label{subfig:4_point_constellation}\includegraphics[width=0.35\textwidth]{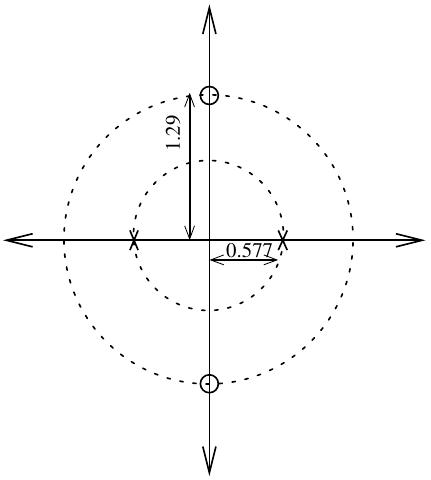}} \hspace{1cm}
\subfloat[Optimal 8-point constellation]{\label{subfig:8_point_constellation}\includegraphics[width=0.35\textwidth]{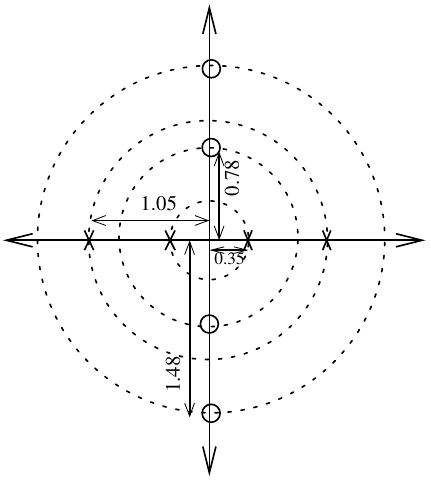}}
\caption{Optimal 4-point and 8-point constellations for the OMDC-QOSTBC}
\label{fig:opt_constellation_OMDC_QOSTBC}
\end{figure}
\subsection{Performance Analysis}
The MDC-QOSTBC with the constellation designed in the previous subsection is an orthogonal code with unequal energy for the constellation points. Thus the differential encoding in (\ref{eq:Diff_encoding_non_unitary}) and the non-coherent differential decoding in (\ref{eq:x_hat_non_unitary}) of non-unitary OSTBCs are applicable here. Simulations are performed for the four-transmit antenna code matrix in (\ref{eq:four_MDC_QOSTBC}) and the eight-transmit antenna code matrix in (\ref{eq:eight_MDC_QOSTBC}) using the 4-point and 8-point constellations shown in Figure \ref{fig:opt_constellation_OMDC_QOSTBC}. The channel used is a quasi-static flat Rayleigh fading channel.\\

For the four-transmit antenna case at $\unit[2]{bits/s/Hz}$, we compare in Figure \ref{fig:M_4_MDC_QOSTBC_vs_OSTBC_2bits} both the SER and the BER of the 4/8/8 PSK rate $\nicefrac{3}{4}$ OSTBC, the 4/8/8 QAM rate $\nicefrac{3}{4}$ OSTBC, the all 16-QAM rate $\nicefrac{1}{2}$ OSTBC and the full-rate OMDC-QOSTBC that uses the 4-point constellation shown in Figure \ref{subfig:4_point_constellation}. The rate $\nicefrac{1}{2}$ OSTBC uses a $4\!\times\!4$ code matrix that simply appends four Alamouti blocks, whereas the rate $\nicefrac{3}{4}$ OSTBC uses the T-H code matrix in (\ref{eq:T_H}).

\begin{figure}[htp]
 \centering
  \subfloat[SER]{\label{subfig:M_4_MDC_QOSTBC_vs_OSTBC_SER_2bits}\includegraphics[width=0.45\textwidth]{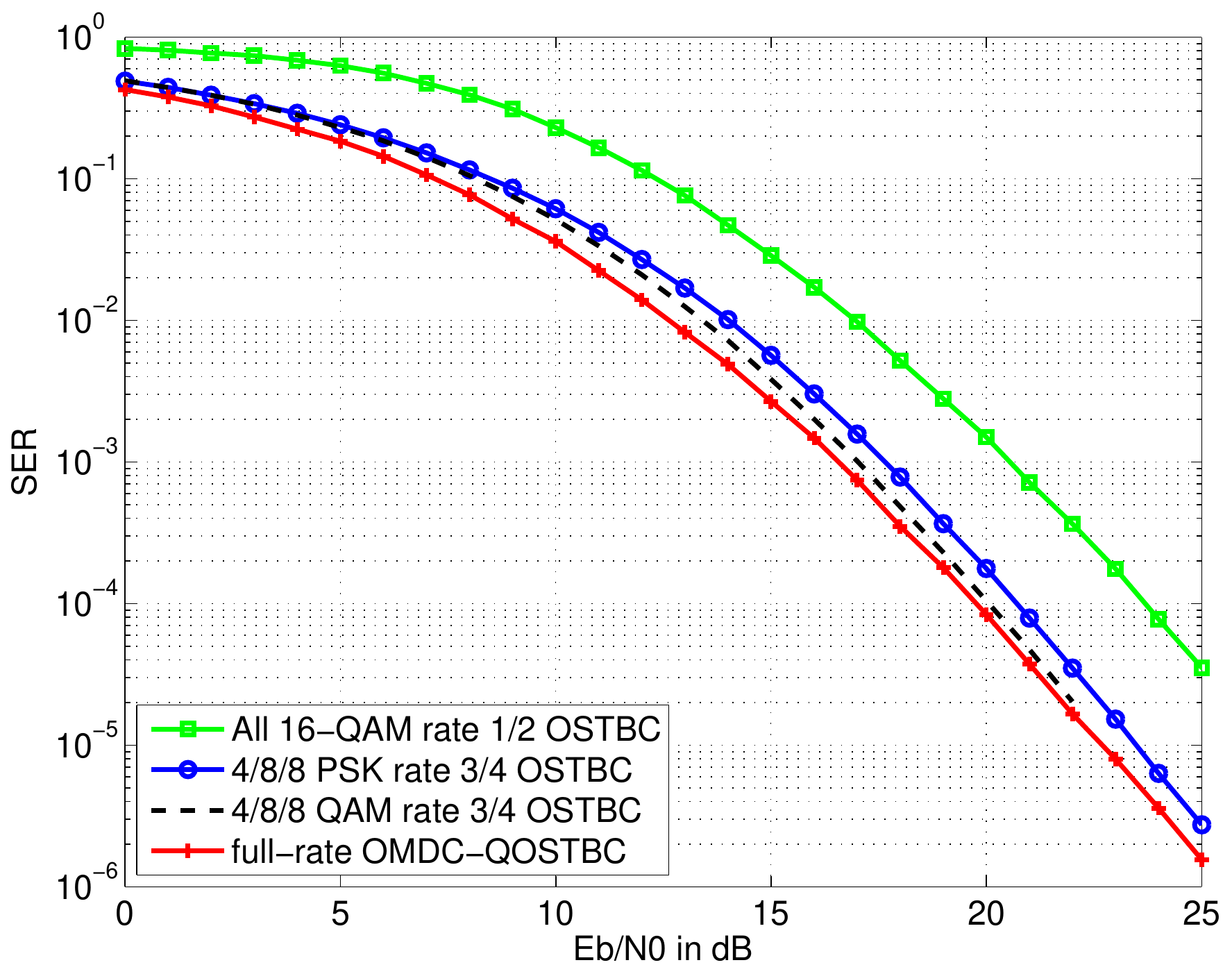}}
\subfloat[BER]{\label{subfig:M_4_MDC_QOSTBC_vs_OSTBC_BER_2bits}\includegraphics[width=0.45\textwidth]{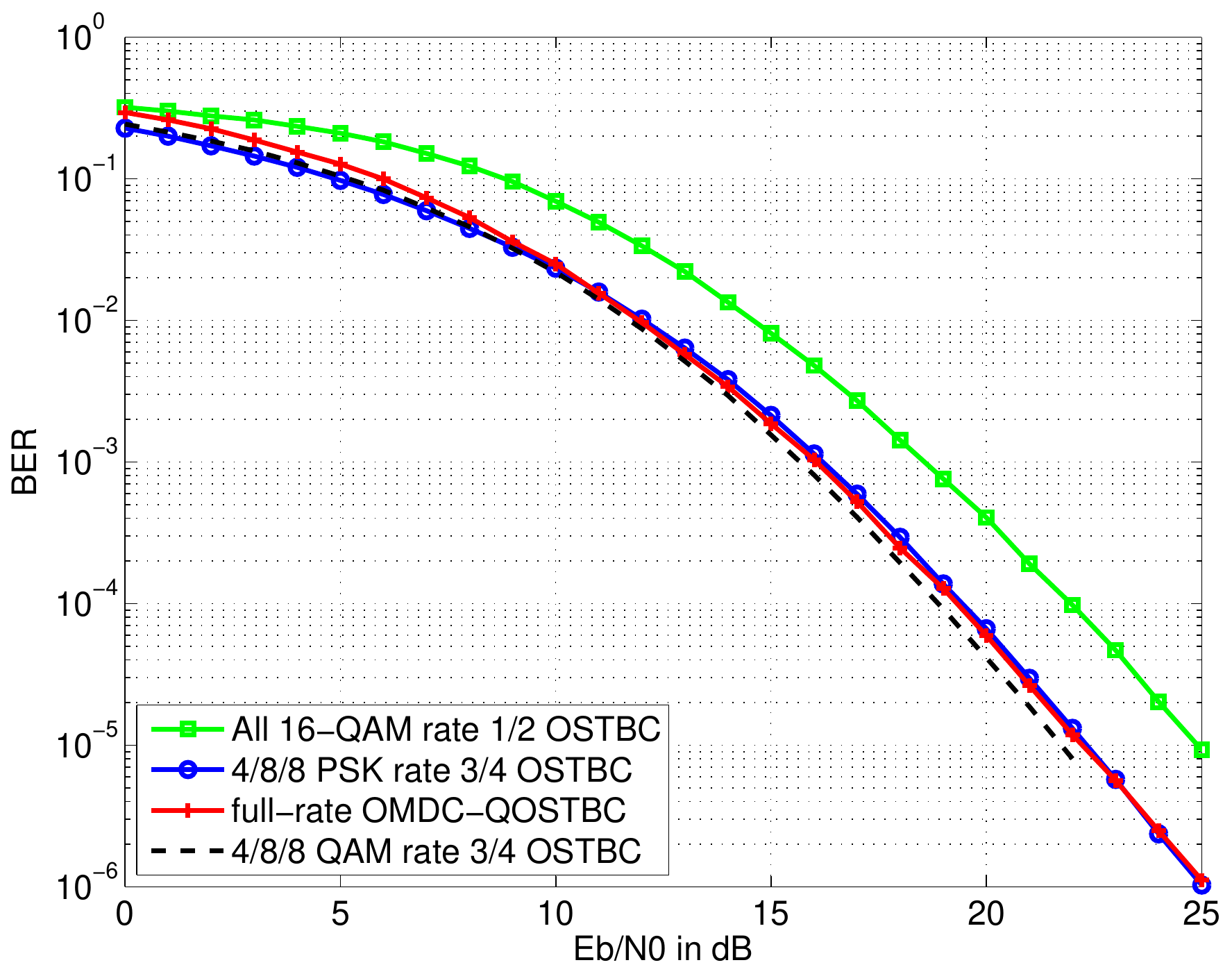}}
\caption{Comparing BER and SER of differential OMDC-QOSTBC with differential OSTBCs  for a $4\!\times\!1$ system at $\unit[2]{bits/s/Hz}$}
 \label{fig:M_4_MDC_QOSTBC_vs_OSTBC_2bits}
\end{figure}

The authors of the OMDC-QOSTBC scheme compared its performance only with the rate $\nicefrac{1}{2}$ OSTBC code that uses 16-QAM alphabet for all symbols. Clearly the performance of their proposed scheme is significantly better. However, it is also possible to use an OSTBC with a higher code rate (like the rate $\nicefrac{3}{4}$ T-H code) that requires a lower alphabet order and therefore potentially results in a better performance. Such a code however requires different alphabet sizes for the different symbols to achieve a spectral efficiency of $\unit[2]{bits/s/Hz}$. Here the alphabet size combination 4/8/8 is chosen. The 8-QAM used is the optimized one shown in Figure \ref{fig:Opt_8_QAM_OSTBC_constellation}. \\

Consider first the SER curves shown in Figure \ref{subfig:M_4_MDC_QOSTBC_vs_OSTBC_SER_2bits}, the OMDC-QOSTBC has the best performance of all other codes and clearly achieves full diversity. It achieves about $\unit[1]{dB}$ gain compared to the 4/8/8 PSK OSTBC, and only a slight improvement compared to the 4/8/8 QAM OSTBC, but it posses the advantage that all symbols are drawn from the same alphabet which simplifies the role of the Tx and the Rx. \\

The bit-to-symbol mapping used for all schemes is the gray coding. Nevertheless, the BER and the SER curves indicate that if a scheme shows a better SER performance compared to some other scheme, it will not necessarily show a better BER performance. For example, the SER performance of the OMDC-QOSTBC shows a $\unit[1]{dB}$ gain compared to that of the 4/8/8 PSK OSTBC, and yet they both show almost the same BER performance. One interpretation to such a behaviour is that the decoder of the OMDC-QOSTBC makes more correct symbol decisions than that of 4/8/8 PSK OSTBC, but a wrong decision in OMDC-QOSTBC results in more bit errors than a wrong decision in 4/8/8 PSK OSTBC. The number of bit errors that result from a symbol error is solely based on the bit-to-symbol mapping used as well as the constellation. The constellation of the OMDC-QOSTBC showed to be inferior to the standard PSK and QAM constellations which have less bit changes in the zone of the neighboring symbols for every symbol.\\

From the perspective of the end-to-end communication link, the BER is the actual measure of the link performance. From the BER curves, the 4/8/8 QAM OSTBC is marginally better in performance compared to the OMDC-QOSTBC. However, the search space of the OMDC-QOSTBC is $4\!\times\!4\eq16$ candidates and that of the 4/8/8 QAM code is $4\!+\!8\!+\!8\eq20$ candidates. Moreover, the OMDC-QOSTBC uses the same constellation for all symbols making the encoding and decoding simpler.  In conclusion, the OMDC-QOSTBC is so-far achieving the best performance-complexity trade off for a spectral efficiency of $\unit[2]{bits/s/Hz}$ in a four-transmit antenna system.\\

At a spectral efficiency of $\unit[3]{bits/s/Hz}$, Figure \ref{fig:M_4_MDC_QOSTBC_vs_OSTBC_3bits} shows the error performance of the OSTBC that uses 16-PSK symbols, the OSTBC that uses 16-QAM symbols and the OMDC-QOSTBC that uses the 8-point constellation in Figure \ref{subfig:8_point_constellation}. Through the SER curves, we see that the performance of the OMDC-QOSTBC is a little worse than that of the 16-QAM OSTBC indicating that the OMDC-QOSTBC scheme deteriorates for high transmission rates. This is because the constellation limits only two points on every circle --to achieve full diversity-- which does not utilize the complex space in the best way. When comparing the BER performance, the 16-QAM OSTBC has about $\unit[1]{dB}$ SNR advantage compared to the OMDC-QOSTBC. Furthermore, the decoding complexity of the 16-QAM OSTBC is less than that of the OMDC-QOSTBC. This is because the 16-QAM OSTBC can perform SRSD using the metric in (\ref{eq:decoder_rect_QAM}) which requires $4\!\times\!2\!\times\!3\eq24$ \footnote{calculated as 4 real alphabet size $\times$ 2 dimensions $\times$ 3 symbols per information block. When comparing the complexity of the different schemes, we compare the search space required for decoding one information block.} test candidates, whereas the OMDC-QOSTBC performs SCSD and requires $8\!\times\!4\eq32$ candidates. Thus, the 16-QAM OSTBC is superior in both complexity and performance to all the so-far used four-antenna systems at a spectral efficiency of $\unit[3]{bits/s/Hz}$.\\

\begin{figure}[htp]
 \centering
  \subfloat[SER]{\label{subfig:M_4_MDC_QOSTBC_vs_OSTBC_SER_3bits}\includegraphics[width=0.49\textwidth]{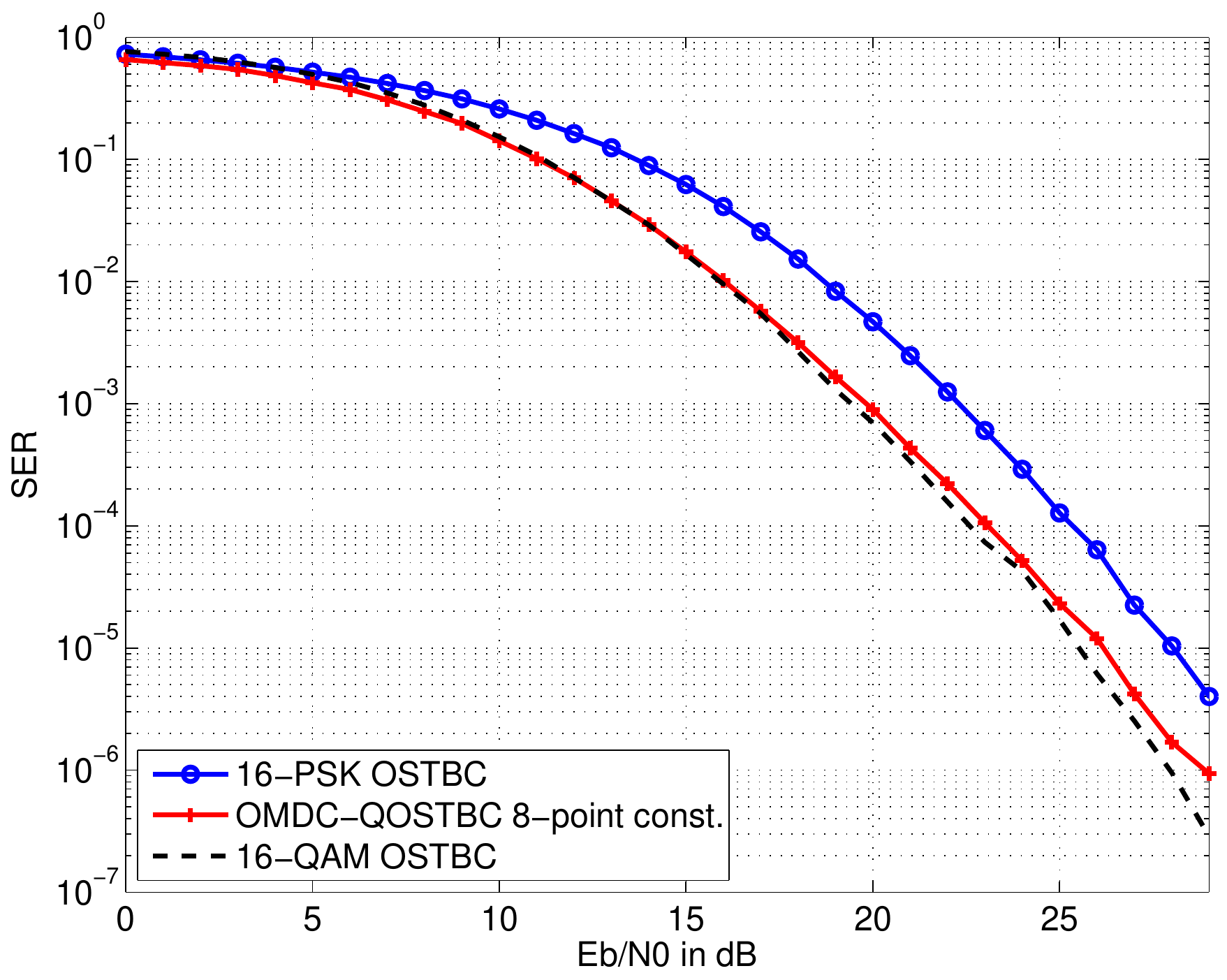}}
\subfloat[BER]{\label{subfig:M_4_MDC_QOSTBC_vs_OSTBC_BER_3bits}\includegraphics[width=0.49\textwidth]{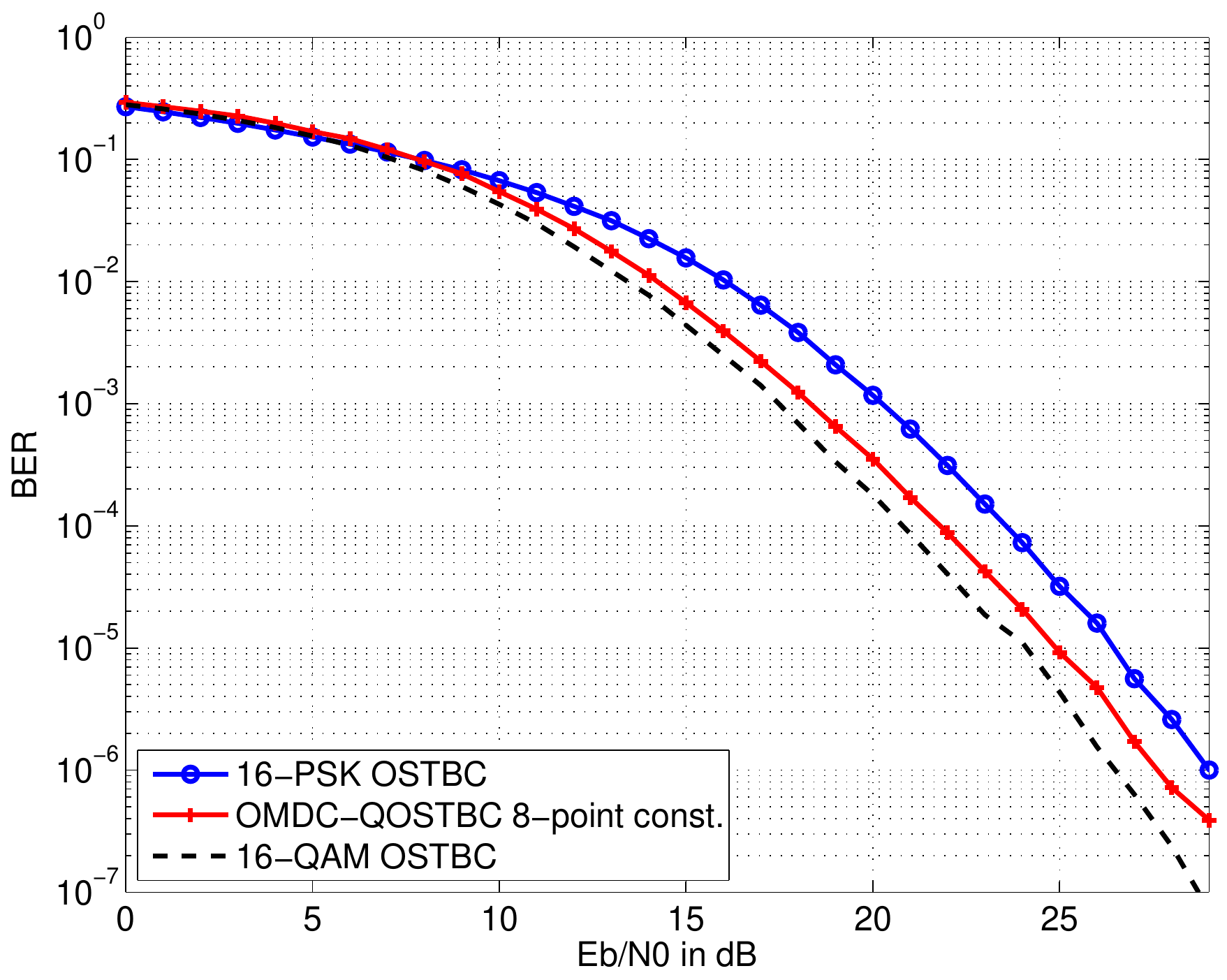}}
\caption{Comparing BER and SER of OMDC-QOSTBC with differential OSTBCs for a $4\!\times\!1$ system at $\unit[3]{bits/s/Hz}$}
 \label{fig:M_4_MDC_QOSTBC_vs_OSTBC_3bits}
\end{figure}
No previous simulations have been done for an eight-antenna system, so the performance of only the OMDC-QOSTBC is shown in Figure \ref{fig:M_8_0MDC_QOSTBC}. Using the 4-point constellation, a spectral efficiency of $\frac{3}{4}\times2\eq\unit[1.5]{bits/s/Hz}$ is achieved, and using the 8-point constellation, a spectral efficiency of $\frac{3}{4}\times3\eq\unit[2.25]{bits/s/Hz}$ is achieved.
\begin{figure}[!htp]
 \centering
  \includegraphics[width=0.45\textwidth]{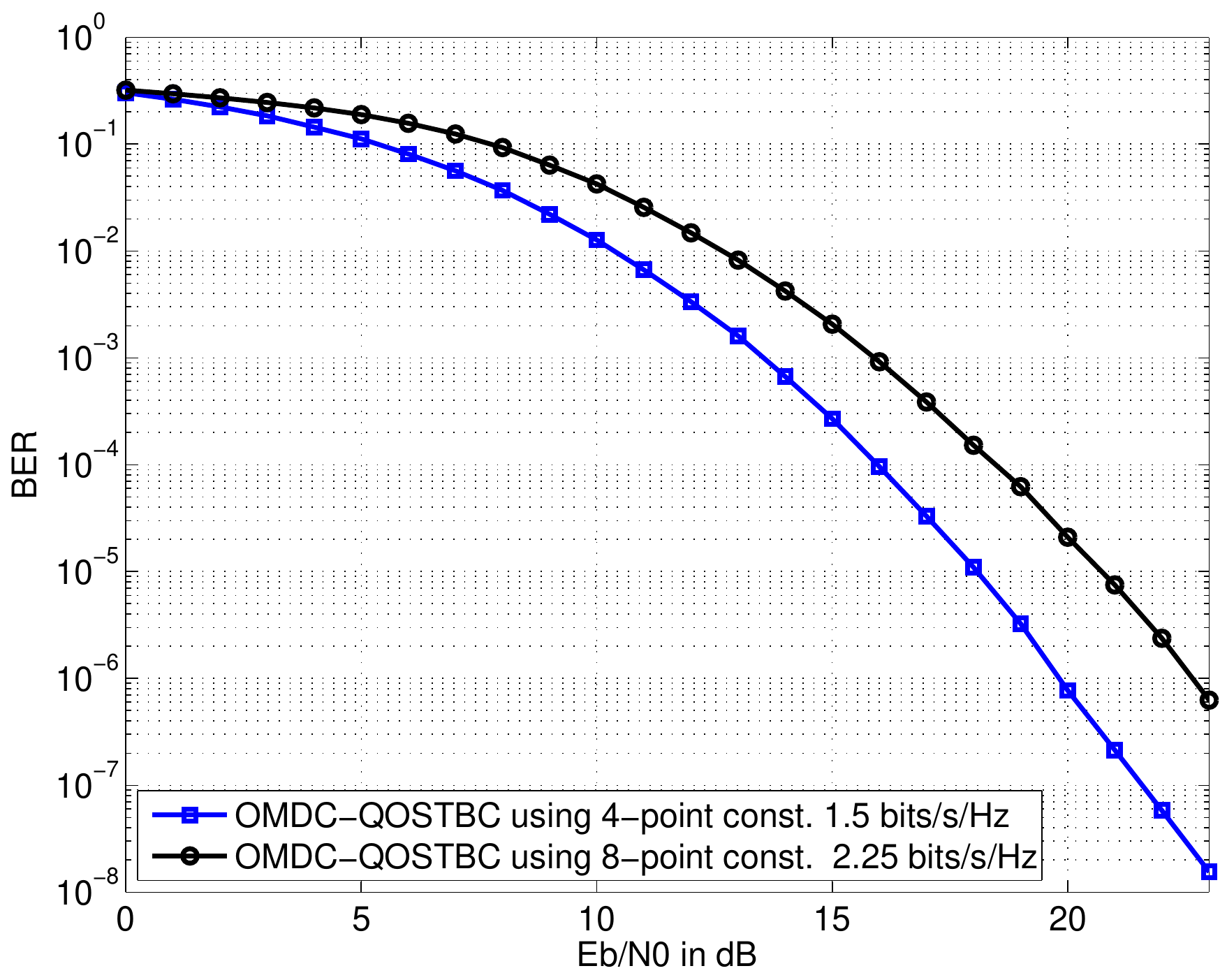} 
\caption{BER performance for an $8\!\times\!1$ differential system using OMDC-QOSTBC scheme}
\label{fig:M_8_0MDC_QOSTBC}
\end{figure}\\

In conclusion, this section has shown one possible way for using a QOSTBC code in the differential domain. The scheme is based on imposing some conditions on the constellation to make the code orthogonal. Here additionally more conditions have been imposed for the code to achieve full diversity. It has also been observed that although the scheme may show a better SER, this might not be the case for the BER. The interpretation for this is that the constellation used was designed to achieve orthogonality and full diversity, but not to achieve a distribution that reduces the bit changes within the zone containing the neighbouring symbols for every symbol. Such a condition that is achieved by default in the standard constellations like PSK and QAM.

\section{Differential QOSTBCs}
\label{s:DQOSTBC}
As discussed in the previous section, realizing differential encoding for QOSTBCs by forcing the constellation to achieve orthogonality results in performance degradation for high transmission rates. In this section we show a differential encoding approach for QOSTBCs proposed in \cite{Zhu_Jafarkhani_Diff_QOSTBCs_2005}. The section first describes such an approach for a general QOSTBC whose code matrix has an "ABBA" structure and then shows the special case of using MDC-QOSTBCs described in Section \ref{s:MDC_QOSTBCs} and how these codes reduce the decoding complexity. Two differential QOSTBCs are used, one achieves full diversity and performs single complex symbol decoding, and the other achieves half diversity and performs single real symbol decoding.
\subsection{Full-Diversity Differential MDC-QOSTBCs}
\label{ss:QOSTBC_full_diversity}
An $M\!\times\!N$ MIMO system in a Rayleigh block fading environment is described by the transmission equation
\begin{equation}
\label{eq:original_system}
\V{Y}_\tau\eq\sqrt{\rho}\,\V{S}_\tau\V{H}+\V{W}_\tau,
\end{equation}
where at block index $\tau$, matrix $\V{S}_\tau\in\mathbb{C}^{M\times M}$ is transmitted over the channel matrix $\V{H}\in\mathbb{C}^{M\times N}$ and corrupted by AWGN noise matrix $\V{W}_\tau\in\mathbb{C}^{M\times N}$ resulting in the received matrix $\V{Y}_\tau\in\mathbb{C}^{M\times N}$. The entries of $\V{H}$ and $\V{W}_\tau$ are i.i.d. complex Gaussian random variables with zero mean and unit variance. In the case when the transmit matrix has an "ABBA" structure as with MDC-QOSTBCs, (\ref{eq:original_system}) can be written as
\begin{equation}
\label{eq:Tx_matrix_2_halves}
 \begin{bmatrix}\V{Y}_\tau^1 \\\V{Y}_\tau^2 \end{bmatrix}=\sqrt{\rho} \begin{bmatrix}\V{A} & \V{B} \\\V{B} & \V{A} \end{bmatrix}\begin{bmatrix} \V{H}^1 \\ \V{H}^2\end{bmatrix} + \begin{bmatrix} \V{W}_\tau^1 \\ \V{W}_\tau^2 \end{bmatrix},
\end{equation}
where $\V{A}$ and $\V{B}$ $\in\mathbb{C}^{\frac{M}{2}\times\frac{M}{2}}$ and have a form of an OSTBC code matrix for $\frac{M}{2}$ transmit antennas. $\V{Y}_\tau^\kappa,\,\V{H}^\kappa,$ and $\V{W}_\tau^\kappa$ for $\kappa\eq1,2$ are $\in\mathbb{C}^{\frac{M}{2}\times N}$ and represent one half of the original matrices in (\ref{eq:original_system}). Multiplying (\ref{eq:Tx_matrix_2_halves}) from the left side by $\begin{bmatrix} \V{I}_{\frac{M}{2}} & \V{I}_{\frac{M}{2}}\\\V{I}_{\frac{M}{2}}& -\V{I}_{\frac{M}{2}}\end{bmatrix}$, we reside to an equivalent system derived as 
\begin{eqnarray}
  \begin{bmatrix} \V{I}_{\frac{M}{2}} & \V{I}_{\frac{M}{2}}\\\V{I}_{\frac{M}{2}}& -\V{I}_{\frac{M}{2}}\end{bmatrix}\begin{bmatrix}\V{Y}_\tau^1 \\\V{Y}_\tau^2 \end{bmatrix}&=&\sqrt{\rho}\begin{bmatrix} \V{I}_{\frac{M}{2}} & \V{I}_{\frac{M}{2}}\\\V{I}_{\frac{M}{2}}& -\V{I}_{\frac{M}{2}}\end{bmatrix} \begin{bmatrix}\V{A} & \V{B} \\\V{B} & \V{A} \end{bmatrix}\begin{bmatrix} \V{H}^1 \\ \V{H}^2\end{bmatrix} + \begin{bmatrix} \V{I}_{\frac{M}{2}} & \V{I}_{\frac{M}{2}}\\\V{I}_{\frac{M}{2}}& -\V{I}_{\frac{M}{2}}\end{bmatrix}\begin{bmatrix} \V{W}_\tau^1 \\ \V{W}_\tau^2 \end{bmatrix}\nonumber\\
\begin{bmatrix}\V{Y}_\tau^1+\V{Y}_\tau^2 \\ \V{Y}_\tau^1-\V{Y}_\tau^2 \end{bmatrix}&=&\sqrt{\rho}\begin{bmatrix} \V{A}+\V{B} & \V{A}+\V{B} \\ \V{A}-\V{B} & \V{B}-\V{A}\end{bmatrix}\begin{bmatrix} \V{H}^1 \\ \V{H}^2\end{bmatrix}+ \begin{bmatrix} \V{W}_\tau^1+\V{W}_\tau^2 \\ \V{W}_\tau^1-\V{W}_\tau^2 \end{bmatrix}\nonumber\\
\begin{bmatrix}\V{Y}_\tau^1+\V{Y}_\tau^2 \\ \V{Y}_\tau^1-\V{Y}_\tau^2 \end{bmatrix}&=&\sqrt{\rho}\begin{bmatrix} \V{A}+\V{B} & \V{0} \\ \V{0} & \V{A}-\V{B}\end{bmatrix}\begin{bmatrix} \V{I}_{\frac{M}{2}} & \V{I}_{\frac{M}{2}}\\\V{I}_{\frac{M}{2}}& -\V{I}_{\frac{M}{2}}\end{bmatrix}\begin{bmatrix} \V{H}^1 \\ \V{H}^2\end{bmatrix}+ \begin{bmatrix} \V{W}_\tau^1+\V{W}_\tau^2 \\ \V{W}_\tau^1-\V{W}_\tau^2 \end{bmatrix}\nonumber\\
\begin{bmatrix}\V{Y}_\tau^1+\V{Y}_\tau^2 \\ \V{Y}_\tau^1-\V{Y}_\tau^2 \end{bmatrix}&=&\sqrt{\rho}\begin{bmatrix} \V{A}+\V{B} & \V{0} \\ \V{0} & \V{A}-\V{B}\end{bmatrix}\begin{bmatrix} \V{H}^1+\V{H}^2 \\ \V{H}^1-\V{H}^2\end{bmatrix}+ \begin{bmatrix} \V{W}_\tau^1+\V{W}_\tau^2 \\ \V{W}_\tau^1-\V{W}_\tau^2 \end{bmatrix}\nonumber\\
\begin{bmatrix}\V{Y}_\tau^{1\E} \\ \V{Y}_\tau^{2\E} \end{bmatrix}&=&\sqrt{\rho}\,\begin{bmatrix} \V{S}_\tau^{1\E} & \V{0} \\ \V{0} & \V{S}_\tau^{2\E}\end{bmatrix}\begin{bmatrix} \V{H}^{1\E} \\ \V{H}^{2\E}\end{bmatrix}+ \begin{bmatrix} \V{W}_\tau^{1\E}\\ \V{W}_\tau^{2\E} \end{bmatrix}
\label{eq:subsystems}\\
\V{Y}_\tau^{\E}&=&\sqrt{\rho}\V{S}_\tau^{\E}\V{H}^{\E}+\V{W}_\tau^{\E}.
\label{eq:Equivalent_system}
\end{eqnarray}
Equation (\ref{eq:Equivalent_system}) represents a system that is mathematically equivalent to the original system in (\ref{eq:original_system}). Such an equivalent system encompasses two subsystems described in (\ref{eq:subsystems}) through the superscripts $1\E$ and $2\E$, where $\E$ stands for equivalent. The matrices of the equivalent subsystems are thus constructed from linear combinations of the submatrices of the original system. Namely,
\begin{equation}
\left.
\begin{aligned}
  \V{Y}_\tau^{\kappa\E}=\V{Y}_\tau^1\pm\V{Y}_\tau^2,\hspace{2cm}  &\V{S}_\tau^{\kappa\E}=\V{A}\pm\V{B},\\
 \V{H}^{\kappa\E}=\V{H}^1\pm\V{H}^2,\hspace{2cm} &\V{W}_\tau^{\kappa\E}=\V{W}_\tau^1\pm\V{W}_\tau^2
\end{aligned}
\right.\begin{aligned}
       &+ \,\text{for }\kappa=1\\
       &-\, \text{for }\kappa=2.
      \end{aligned}
\label{eq:linear_comb}
\end{equation}

In the differential domain, the information matrix in the original quasi-orthogonal system is also having an "ABBA" structure as
\begin{equation}
 \V{V}=\begin{bmatrix} \V{V}_1 & \V{V}_2 \\ \V{V}_2 & \V{V}_1\end{bmatrix},
\label{eq:matrix_V}
\end{equation}
where $\V{V}_1$ and $\V{V}_2$ have the form of an OSTBC for an $\frac{M}{2}$-antenna system. For example, in the four-transmit antenna case, they have an Alamouti structure as 
\begin{equation*}
 \left.\V{V}_1=\begin{bmatrix}v_1 & v_2\\ -v_2^* & v_1^*\end{bmatrix}, \right. \V{V}_2=\begin{bmatrix}v_3 & v_4\\ -v_4^* & v_3^*\end{bmatrix},
\end{equation*}
where $v_i$, $i=1,...,4$ are the information symbols to be encoded. Just as in (\ref{eq:linear_comb}), the information matrices in the equivalent subsystems are constructed as linear combinations of the original submatrices $\V{V}_1$ and $\V{V}_2$, i.e.
\begin{equation*}
\V{V}_{z_\tau}^{1\E}=\frac{1}{\sqrt{p_1}}\left[\V{V}_1+\V{V}_2\right],\,\, \V{V}_{z_\tau}^{2\E}=\frac{1}{\sqrt{p_2}}\left[\V{V}_1-\V{V}_2\right].
\end{equation*}
Due to this construction of the information matrices, the addressed scheme will be referred to as combined MDC-QOSTBC. The normalization factors $p_1$ and $p_2$ are used to ensure a constant average transmit power, with the constant chosen such that the total transmit power per time slot is on average $1$, making $\rho$ the average SNR per time slot at each receive antenna\footnote{$\rho$ is the average received SNR in both the original system in (\ref{eq:original_system}) and the equivalent system in (\ref{eq:Equivalent_system})}. Since matrices $\V{V}_1$ and $\V{V}_2$ are orthogonal, $\V{V}_{z_\tau}^{1\E}$  and $\V{V}_{z_\tau}^{2\E}$ are also orthogonal. Thus, they can be differentially encoded just as done for the non-unitary OSTBCs in (\ref{eq:Diff_encoding_non_unitary}). Next, we describe how the differential encoding of an "ABBA" QOSTBC can be realized by expressing it in terms of two equivalent \emph{orthogonal} subsystems, where each subsystem can be differentially encoded.
\begin{center} 
\textbf{Equivalent system} 
\end{center}
\begin{equation*}
 \V{Y}_\tau^{\E}=\sqrt{\rho}\,\V{S}_\tau^{\E}\V{H}^{\E}+\V{W}_\tau^{\E}
\end{equation*}
\begin{figure*}[!ht]
\vspace{-0.85cm}
\centering
 \scalebox{0.6}{\input{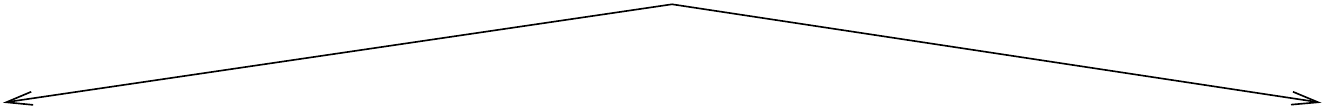_t}}
\end{figure*}
\vspace{-0.85cm}
 \begin{table*}[htp]
\begin{center}
\begin{tabular}{p{9cm} p{9cm}}
\textbf{Equivalent subsystem 1} & \textbf{Equivalent subsystem 2}\tn[5pt]
 $\V{Y}_\tau^{1\E}=\sqrt{\rho}\,\V{S}_\tau^{1\E}\V{H}^{1\E}+\V{W}_\tau^{1\E}$ &  $\V{Y}_\tau^{2\E}=\sqrt{\rho}\,\V{S}_\tau^{2\E}\V{H}^{2\E}+\V{W}_\tau^{2\E}$\tn[3pt]
\underline{Differential Encoding:} & \underline{Differential Encoding:}\tn[3pt]
$\V{S}_\tau^{1\E}=\large\frac{\V{V}_{z_\tau}^{1\E}\V{S}_{\tau-1}^{1\E}}{a_{\tau-1}^{1\E}}$ & $\V{S}_\tau^{2\E}=\large\frac{\V{V}_{z_\tau}^{2\E}\V{S}_{\tau-1}^{2\E}}{a_{\tau-1}^{2\E}}$\tn[5pt]
$\V{S}_{\tau}^{1\E\hr}\V{S}_{\tau}^{1\E}=\V{V}_{z_\tau}^{1\E\hr}\V{V}_{z_\tau}^{1\E}=(a_{\tau}^{1\E})^2\V{I}_{\frac{M}{2}}$ & $\V{S}_{\tau}^{2\E\hr}\V{S}_{\tau}^{2\E}=\V{V}_{z_\tau}^{2\E\hr}\V{V}_{z_\tau}^{2\E}=(a_{\tau}^{2\E})^2\V{I}_{\frac{M}{2}}$ \tn
$\V{V}_{z_\tau}^{1\E}=\frac{1}{\sqrt{p_1}}\left[\V{V}_1+\V{V}_2\right]$ & $\V{V}_{z_\tau}^{2\E}=\frac{1}{\sqrt{p_2}}\left[\V{V}_1-\V{V}_2\right]$\tn[2pt]
$p_1$ is defined such that  & $p_2$ is defined such that \tn
$\E[\V{V}_{z_\tau}^{1\E\hr}\V{V}_{z_\tau}^{1\E}]=\V{I}_{\frac{M}{2}}$ $\forall\,\tau$ & $\E[\V{V}_{z_\tau}^{2\E\hr}\V{V}_{z_\tau}^{2\E}]=\V{I}_{\frac{M}{2}}$ $\forall\,\tau$\tn
i.e. $\E[(a_{\tau}^{1\E})^2]\req1$ & i.e. $\E[(a_{\tau}^{2\E})^2]\req1$\tn[5pt]
\multicolumn{2}{c}{\hspace{-3cm}\underline{For a four-transmit antenna system}}\tn[5pt]
$\V{V}_{z_\tau}^{1\E}=\frac{1}{\sqrt{p_1}}\begin{bmatrix} (v_1+v_3) & (v_2+v_4)\\ -(v_2+v_4)^* & (v_1+v_3)^*\end{bmatrix}$ & $\V{V}_{z_\tau}^{2\E}=\frac{1}{\sqrt{p_2}}\begin{bmatrix} (v_1-v_3) & (v_2-v_4)\\-(v_2-v_4)^* & (v_1-v_3)^*\end{bmatrix}$\tn
$\V{V}_{z_\tau}^{1\E\hr}\V{V}_{z_\tau}^{1\E}=\frac{|v_1+v_3|^2+|v_2+v_4|^2}{p_1}\V{I}_2$ & $\V{V}_{z_\tau}^{2\E\hr}\V{V}_{z_\tau}^{2\E}=\frac{|v_1-v_3|^2+|v_2-v_4|^2}{p_2}\V{I}_2$\tn[2pt]
\multicolumn{2}{c}{\hspace{-3cm}
\hspace{1cm}}\tn[3pt]
$(a_{\tau}^{1\E})^2=\frac{\alpha_\tau+\beta_\tau}{p_1} \implies p_1\req\E[\alpha_\tau+\beta_\tau]$ & $(a_{\tau}^{2\E})^2=\frac{\alpha_\tau-\beta_\tau}{p_2} \implies p_2\req\E[\alpha_\tau-\beta_\tau]$. \tn
 \end{tabular}	
\end{center}
\end{table*}\\

The described differential scheme is valid for any QOSTBC with any number of transmitting antennas, under the condition that the code matrix has an "ABBA" structure constructed from square orthogonal submatrices. In this section we give a general description for $M$ transmit antennas, but a detailed description only for 4 transmit antennas. In this case the equivalent information matrices $\V{V}_{z_\tau}^{1\E}$ and $\V{V}_{z_\tau}^{2\E}$ have an Alamouti structure as shown above.
\subsubsection{Sub-Optimal Differential Decoder}
	Next, we show a sub-optimal differential decoder as the one used for non-unitary OSTBCs in (\ref{eq:sub_opt_Diff_Decoder}). Since the differential encoding is done in the equivalent subsystems, the differential decoding is also performed for each equivalent subsystem and combined from both subsystems to get the final metric. To start with, consider the first equivalent subsystem and let $\V{V}_l^{1\E}$ be its $l^{\text{th}}$ candidate matrix out of $L$ possible information matrices, then  the sub-optimal metric of the first subsystem is 
	\begin{eqnarray}
	\xi_{1\E}&=&\mini\limits_{l=0,...,L-1}\:\,\|\V{Y}_\tau^{1\E}-\frac{\V{V}_l^{1\E}}{a_{\tau-1}^{1\E}}\V{Y}_{\tau-1}^{1\E}\|_F^2\nonumber\\
	&=&\mini\limits_{l=0,...,L-1}\:\,\frac{\tr(\V{Y}_{\tau-1}^{1\E\hr}\overbrace{\V{V}_l^{1\E\hr}\V{V}_l^{1\E}}^{(a_{\tau}^{1\E})^2}\V{Y}_{\tau-1}^{1\E})}{(a_{\tau-1}^{1\E})^2}-\frac{2\Re\{\tr(\V{Y}_{\tau}^{1\E\hr}\V{V}_l^{1\E}\V{Y}_{\tau-1}^{1\E})\}}{a_{\tau-1}^{1\E}}
	\label{eq:sub_opt_metric_general}
	\end{eqnarray}
Using the dispersive form $ \V{V}_{z_\tau}^{\kappa\E}=\frac{1}{\sqrt{p_\kappa}}\sum\limits_{i=1}^{K}{\V{U}_i^{\kappa\E}v_i^R+j\V{Q}^{\kappa\E}_iv_i^I},\,\, \kappa=1,2$, where $v_i,\,\,i=1,...,K$ are the $K$ information symbols to be encoded for transmission at block index $\tau$, the metric of the first subsystem can thus be written as 
\begin{eqnarray*} \xi_{1\E}\!\!\!&=\!\!\!&\mini\limits_{v_i\in\mathcal{A}_i}\:\,\underbrace{\frac{\tr(\V{Y}_{\tau-1}^{1\E\hr}\V{Y}_{\tau-1}^{1\E})}{2\sqrt{\alpha_{\tau-1}+\beta_{\tau-1}}}}_{\tilde{y}_1}.\sqrt{p_1}.\frac{\alpha_{\tau}+\beta_{\tau}}{p_1}-\frac{1}{\sqrt{p_1}}\Re\{\tr\big(\V{Y}_{\tau-1}^{1\E}\V{Y}_{\tau}^{1\E\hr}(\sum\limits_{i=1}^{K}{\V{U}_i^{1\E}v_i^R+j\V{Q}_i^{1\E}v_i^I})\big)\}\\
\!\!\!&=\!\!\!&\mini\limits_{v_i\in\mathcal{A}_i}\:\,\tilde{y}_1(\sum\limits_{i=1}^{4}{|v_i|^2}+2\Re\{v_1v_3^*+v_2v_4^*\})-\sum\limits_{i=1}^{4}{\Re\{\tr(\V{Y}_{\tau-1}^{1\E}\V{Y}_{\tau}^{1\E\hr}\V{U}_i^{1\E})v_i^R+\tr(j\V{Y}_{\tau-1}^{1\E}\V{Y}_{\tau}^{1\E\hr}\V{Q}_i^{1\E})v_i^I\}}
\end{eqnarray*}
Due to the term $\Re\{v_1v_3^*\}$, symbols $v_1$ and $v_3$ must be jointly decoded and similarly due to the term $\Re\{v_2v_4^*\}$, symbols $v_2$ and $v_4$ should as well be jointly decoded. This is also the case with the metric of the second subsystem. Thus the final metric --which is the same as the one used in \cite{Zhu_Jafarkhani_Diff_QOSTBCs_2005}-- achieves pair-wise complex symbol decoding. However, if we instead define symbols $v_i, \:i=1,...,4$ to carry the base information symbols $x_i, \:i=1,...,4$ after interleaving their real and imaginary components as is done in the MDC-QOSTBC code matrix, we get
\begin{equation}
v_1= x_1^R+jx_3^R,\:v_2=x_2^R+jx_4^R,\:v_3=-x_1^I+jx_3^I,\:v_4=-x_2^I+jx_4^I,
\end{equation}
and thus matrix $\V{V}$ in (\ref{eq:matrix_V}) is the same as the MDC-QOSTBC code matrix in (\ref{eq:four_MDC_QOSTBC}). This interleaving of components results in
\begin{eqnarray*}
\Re\{v_1v_3^*\}&=&\Re\{(x_1^R+jx_3^R)(-x_1^I-jx_3^I)\}=-x_1^Rx_1^I+x_3^Rx_3^I\\
\Re\{v_2v_4^*\}&=&\Re\{(x_2^R+jx_4^R)(-x_2^I-jx_4^I)\}=-x_2^Rx_2^I+x_4^Rx_4^I.
\end{eqnarray*}
Thus, interleaving the real and imaginary components of the information symbols allows the decoder to decide on each symbol $x_i,\,i=1,...,4$ independently. In fact, this is the rationale behind the reduced complexity achieved by MDC-QOSTBCs. In this case, $\alpha_\tau$ and $\beta_\tau$ in (\ref{eq:alpha_tau_beta_tau}) become
\begin{equation}
 \alpha_\tau=\sum\limits_{i=1}^{4}{|x_i|^2},\:\:\beta_\tau=2\sum\limits_{i=1}^{2}{-x_i^Rx_i^I+x_{i+2}^Rx_{i+2}^I},
\label{eq:alpha_beta_combined_M_4_MDC_DQOSTBCs}
\end{equation}
which are the same as $\alpha$ and $\beta$ in (\ref{eq:alpha_beta}) that were defined for a coherent MDC-QOSTBC. Now, by redefining the dispersion matrices $\V{U}_i^{\kappa\E}$ and $\V{Q}_i^{\kappa\E}$ to operate on symbols $x_i$ instead of $v_i$, the equivalent code matrix can be redefined as
\begin{equation*}
 \V{V}_{z_\tau}^{\kappa\E}=\frac{1}{\sqrt{p_\kappa}}\sum\limits_{i=1}^{K}{\V{U}_i^{\kappa\E}x_i^R+j\V{Q}^{\kappa\E}_ix_i^I},\hspace{0.5cm} \kappa=1,2.
\end{equation*}
And thus the metric of the first  subsystem reduces to
\begin{equation*}
\begin{aligned} \xi_{1\E}&=\mini\limits_{x_i\in\mathcal{A}_i}\:\,\tilde{y}_1\left(\sum\limits_{i=1}^{4}{|x_i|^2}+2\sum\limits_{i=1}^{2}{-x_i^Rx_i^I+x_{i+2}^Rx_{i+2}^I}\right)\\
& -\sum\limits_{i=1}^{4}{\underbrace{\Re\{\tr(\V{Y}_{\tau-1}^{1\E}\V{Y}_{\tau}^{1\E\hr}\V{U}_i^{1\E})\}}_{\tilde{x}_{i,1}}x_i^R+\Re\{\tr(j\V{Y}_{\tau-1}^{1\E}\V{Y}_{\tau}^{1\E\hr}\V{Q}_i^{1\E})\}x_i^I},
\end{aligned}
\end{equation*}
and similarly, the metric of the second subsystem is
\begin{equation*}
\begin{aligned}
 \xi_{2\E}&=\mini\limits_{x_i\in\mathcal{A}_i}\:\,\tilde{y}_2\left(\sum\limits_{i=1}^{4}{|x_i|^2}-2\sum\limits_{i=1}^{2}{-x_i^Rx_i^I+x_{i+2}^Rx_{i+2}^I}\right)\\
& -\sum\limits_{i=1}^{4}{\underbrace{\Re\{\tr(\V{Y}_{\tau-1}^{2\E}\V{Y}_{\tau}^{2\E\hr}\V{U}_i^{2\E})\}}_{\tilde{x}_{i,2}}x_i^R+\Re\{\tr(j\V{Y}_{\tau-1}^{2\E}\V{Y}_{\tau}^{2\E\hr}\V{Q}_i^{2\E})\}x_i^I},
\end{aligned}
\end{equation*}
where $\tilde{y}_2\eq\frac{\tr(\V{Y}_{\tau-1}^{2\E\hr}\V{Y}_{\tau-1}^{2\E})}{2\sqrt{\alpha_{\tau-1}-\beta_{\tau-1}}}$. By writing down the code matrices of the equivalent subsystems, one may observe the following relations for the dispersion matrices
\begin{equation*}
\begin{aligned}
&j\V{Q}_i^{1\E}=-\V{U}_i^{1\E}\definedas-\V{U}_i, \hspace{0.3cm} &i=1,2 \hspace{2cm} &j\V{Q}_i^{2\E}=\V{U}_i^{2\E}=\V{U}_i, \hspace{0.3cm}& i=1,2\\
&j\V{Q}_i^{1\E}=\V{U}_i^{1\E}=\V{U}_i, &i=3,4 \hspace{2cm} &j\V{Q}_i^{2\E}=-\V{U}_i^{2\E}=-\V{U}_i, & i=3,4.
\end{aligned}
\end{equation*}
Consequently, the metric from the first subsystem becomes
\begin{equation*}
 \xi_{i,1\E}=
\begin{cases}\mini\limits_{x_i\in\mathcal{A}_i}\:\,\tilde{y}_1(|x_i|^2-2x_i^Rx_i^I)-\tilde{x}_{i,1}(x_i^R-x_i^I) &i=1,2\\
 \mini\limits_{x_i\in\mathcal{A}_i}\:\,\tilde{y}_1(|x_i|^2+2x_i^Rx_i^I)-\tilde{x}_{i,1}(x_i^R+x_i^I) & i=3,4
\end{cases}
\end{equation*}
and the metric from the second subsystem becomes
\begin{equation*}
\xi_{i,2\E}=
\begin{cases}
 \mini\limits_{x_i\in\mathcal{A}_i}\:\,\tilde{y}_2(|x_i|^2+2x_i^Rx_i^I)-\tilde{x}_{i,2}(x_i^R+x_i^I) &i=1,2\\
 \mini\limits_{x_i\in\mathcal{A}_i}\:\,\tilde{y}_2(|x_i|^2-2x_i^Rx_i^I)-\tilde{x}_{i,2}(x_i^R-x_i^I) &i=3,4.
\end{cases}
\end{equation*}
Combining the metrics of both subsystems, we get the final decision metric as
\begin{equation}
\label{eq:MDC_QOSTBC_decoder}
\boxed{
\hat{x}_i=
 \begin{cases}
\argmin\limits_{x_i\in\mathcal{A}_i}\:\,(\tilde{y}_1+\tilde{y}_2)|x_i|^2-2(\tilde{y}_1-\tilde{y}_2)x_i^Rx_i^I-x_i^R(\tilde{x}_{i,1}+\tilde{x}_{i,2})+x_i^I(\tilde{x}_{i,1}-\tilde{x}_{i,2}) &i=1,2\\
\argmin\limits_{x_i\in\mathcal{A}_i}\:\,(\tilde{y}_1+\tilde{y}_2)|x_i|^2+2(\tilde{y}_1-\tilde{y}_2)x_i^Rx_i^I-x_i^R(\tilde{x}_{i,1}+\tilde{x}_{i,2})-x_i^I(\tilde{x}_{i,1}-\tilde{x}_{i,2})&i=3,4\\
 \end{cases}
}
\end{equation}
This is the differential non-coherent version of the MDC-QOSTBC decoder, whose coherent version is shown in \cite[eq.(8)]{Yuen_MDC_QOSTBC_2005}. While \cite[eq.(8)]{Yuen_MDC_QOSTBC_2005} is an ML decoder, (\ref{eq:MDC_QOSTBC_decoder}) is only a sub-optimal decoder as the ML differential decoder for non-unitary ST codes loses the advantage of symbol decoupling as has been shown in (\ref{eq:ML_metric_non_unitary}).
Figure \ref{fig:DMDC_QOSTBC_system} shows a detailed description of the differential encoding and decoding of MDC-QOSTBCs for a $4\!\times\!1$ system.
\setlength{\arraycolsep}{1.5pt}
\begin{figure}[htp]
 \centering
\scalebox{0.9}{\input{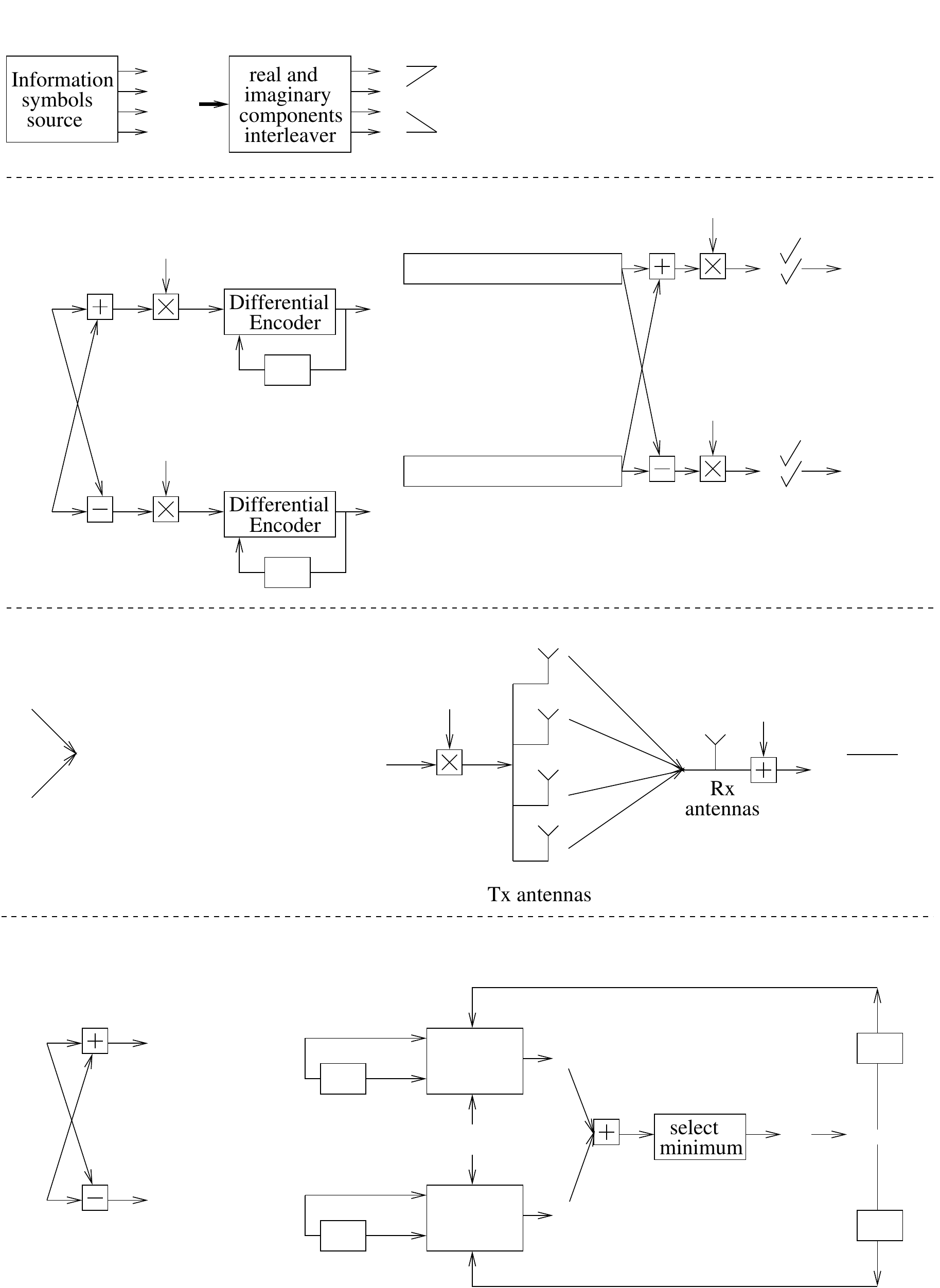_t}}
\caption{Differential encoding and decoding of a $4\!\times\!1$ MDC-QOSTBC.}
\label{fig:DMDC_QOSTBC_system}
\end{figure}
\setlength{\arraycolsep}{5pt} 

\subsubsection{Diversity order and constellation design}
As has been noted before, the diversity order of an STC MISO system was proved for coherent transmission to be the minimum rank of all distance matrices. For non-coherent systems however, the same criterion was only proved to hold for unitary STCs as shown in Section \ref{s:Design_Criteria}. No explicit proof for non-coherent systems shows that the rank criterion is still the diversity measure for non-unitary and for non-orthogonal STBCs. Nevertheless, we conjecture that at least for non-coherent systems that use non-unitary QOSTBCs with an "ABBA" construction, still the minimum rank of the distance matrices governs the diversity order of the system. Simulation results will show to agree with this conjecture.\\

To study the diversity order of the differential QOSTBC described in this section, one first needs to get a form for the \emph{actual} information matrix which can be derived from the information matrices of the equivalent subsystems. That is, if the actual transmit matrix at block index $\tau$ is $\V{S}_\tau$, and that at block index $\tau-1$ is $\V{S}_{\tau-1}$, then the actual information matrix is the one that satisfies
\begin{equation}
 \V{S}_{\tau}=\V{V}'_{z_\tau}\V{S}_{\tau-1}.
\end{equation}\\

In Section \ref{s:Diversity_proof_DQOSTBC}, the actual information matrix $\V{V}'_{z_\tau}$ was proved to have an "ABBA" structure with entries derived in (\ref{eq:actual_info_symbols_full_diversity}). Knowing the form of the actual information matrix, one can get the minimum determinant among all distance matrices $\Delta \V{V}'_{z_\tau}$. It is proved in (\ref{eq:coding_gain_combined_MDC_QOSTBC}) that the minimum determinant is
\begin{equation} 
\det(\Delta \V{V}')\Big|_{\text{min}}\propto \mini((\Delta x^R)^2-(\Delta x^I)^2)^2,
\label{eq:coding_gain_combined_MDC_QOSTBC2}
\end{equation}
where $\Delta x=\Delta x^R+j\Delta x^I$ is the difference between any two points in the constellation. (\ref{eq:coding_gain_combined_MDC_QOSTBC2}) is having the same form as the minimum determinant for coherent MDC-QOSTBC in (\ref{eq:diversity_condition}). Thus the full diversity condition for both coherent and differential MDC-QOSTBCs is the same. Namely, full diversity is achieved if the absolute difference between the real parts of any two constellation points is not the same as the absolute difference between their imaginary parts, i.e.
\begin{equation}
|\Delta x^R|\neq|\Delta x^I|.
\label{eq:full_diversity_condition_MDC_QOSTBC2}
\end{equation} 
When a constellation like rectangular QAM is used for the information symbols $x_i$, constellation rotation is needed for the above full diversity condition to be satisfied. There exist many angles of rotation that satisfy (\ref{eq:full_diversity_condition_MDC_QOSTBC2}) making the differential MDC-QOSTBC achieve full diversity. However, only  some of these angles maximize the minimum determinant (the coding gain) in (\ref{eq:coding_gain_combined_MDC_QOSTBC2}). Since the minimum determinant of both coherent and non-coherent MDC-QOSTBCs have the same form, then the rotational angle that maximizes the coding gain is the same for both. In \cite{Yuen_MDC_QOSTBC_2005_V2}, it was proved that the optimal angle of rotation for a rectangular QAM is $\frac{1}{2}\tan^{-1}(\frac{1}{2})=13.28^o$. So this can be applied on the 4, 16 and 64- QAM constellations. To visualize the rotated constellation, Figure \ref{fig:opt_16_QAM_MDC_QOSTBC} shows the 16-QAM alphabet rotated with the optimal angle $13.28^o$. Also shown is the bit-to-symbol gray mapping.\\

\begin{figure}
 \centering
\includegraphics[width=0.45\textwidth,height=0.2\textheight]{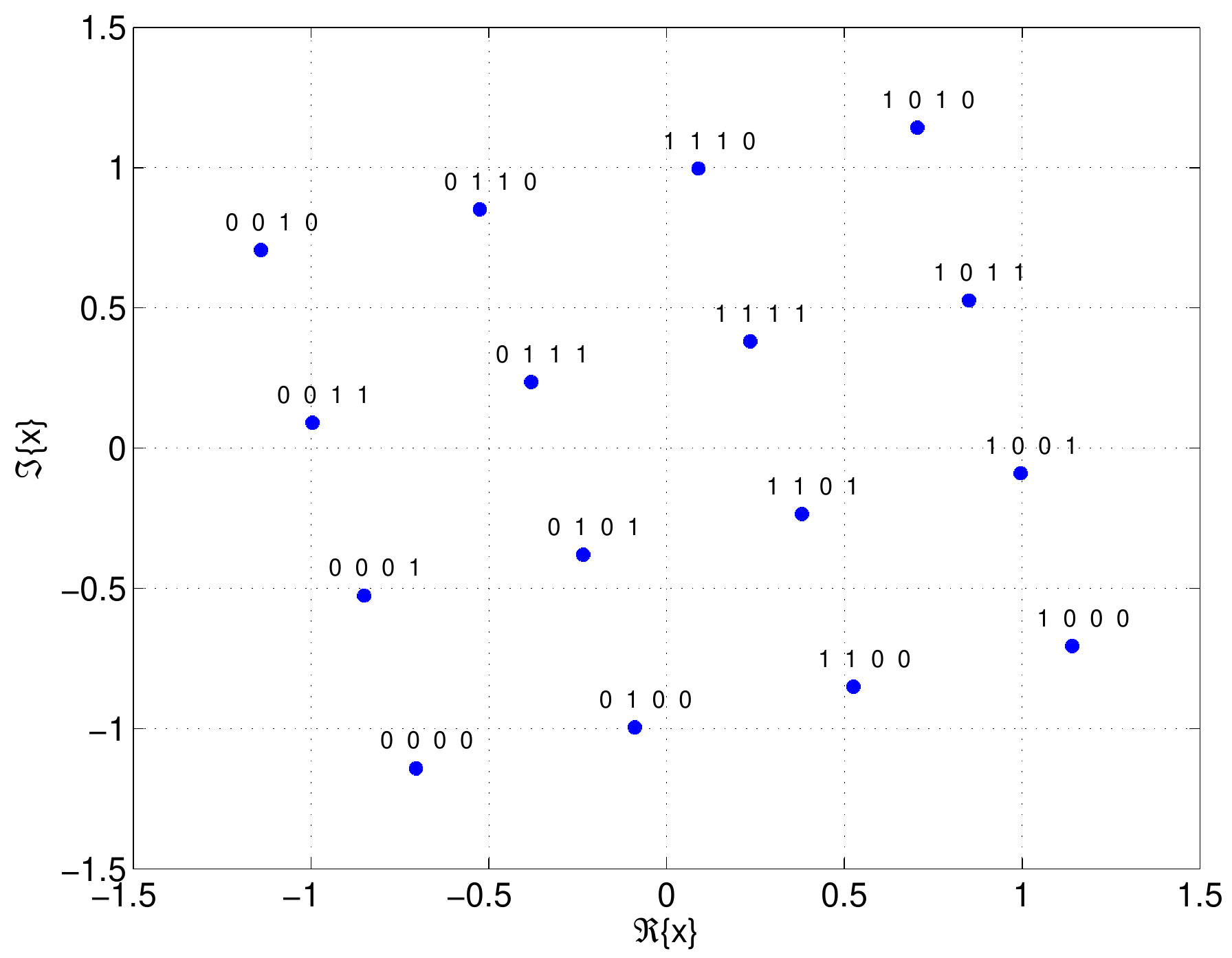}
\caption{Optimal rotated 16-QAM constellation for use with full-diversity coherent or differential MDC-QOSTBC.} 
\label{fig:opt_16_QAM_MDC_QOSTBC}
\end{figure}

For 8-QAM, a circular constellation is used and we searched for the optimal angle of rotation by applying the following procedure. Starting from the circular 8-QAM constellation shown in Figure \ref{subfig:8_QAM_MDC_QOSTBC_initial}. We consider three optimization parameters, namely $\theta_1$ and $\theta_2$ which are the angles of rotation for the inner and the outer circle, respectively and the third parameter $r$ is the ratio between the amplitude of both circles. The search span for $r$ is from 1.1 to 2.3 and that of $\theta_1$ and $\theta_2$ is from $0^o$ to $90^o$. For every ($r,\theta_1,\theta_2$) combination, the minimum determinant in (\ref{eq:coding_gain_combined_MDC_QOSTBC2}) is calculated. For every amplitude ratio $r$, we get the optimal combination ($\theta_1,\theta_2$) which has the maximum coding gain. For example at $r=1.37$, Figure \ref{subfig:opt_min_det_theta1_theta2} shows the minimum determinant as a function of both $\theta_1$ and $\theta_2$. As shown $\theta_{1,\text{opt}}$ and $\theta_{2,\text{opt}}$ are $12.73^o$ and $58.18^o$ or $77.27^o$ and $31.82^o$.\\

Figure \ref{subfig:max_min_det_thetas_opt} shows the maximum coding gain achieved as a function of $r$ with the optimal rotational angles for every $r$. As shown, several different values of $r$   result in close values of coding gain. Thus, error rate simulations have been carried out spanning $r$ from $1.3$ to $2.3$ with the optimal angles of rotation for every $r$. Figure \ref{subfig:SER_vs_r} shows the SER vs the amplitude ratio $r$ at $E_b/N_0=\unit[20]{dB}$. As shown the optimal choice of $r$ is in the range from $1.3$ to $1.4$.  We chose point $r\eq1.37$ which shows the absolute minimum SER. The corresponding optimal angles of rotation are shown in Figure \ref{subfig:opt_min_det_theta1_theta2} and chosen to be $\theta_1\eq12.73^o$ and $\theta_2\eq 58.18^o$. Figure \ref{subfig:opt_8_QAM_MDC_QOSTBC} shows the optimal 8-QAM constellation that achieves full diversity and maximum coding gain for the considered differential MDC-QOSTBC. Interestingly, the resulting optimal constellation can be viewed as a rectangular 8-QAM constellation with the same optimal angle of rotation $13.28^o$ derived in \cite{Yuen_MDC_QOSTBC_2005_V2} for a rectangular QAM constellation. \\

\begin{figure}
 \centering
\subfloat[initial 8-QAM constellation.]{\label{subfig:8_QAM_MDC_QOSTBC_initial}\scalebox{1}{\input{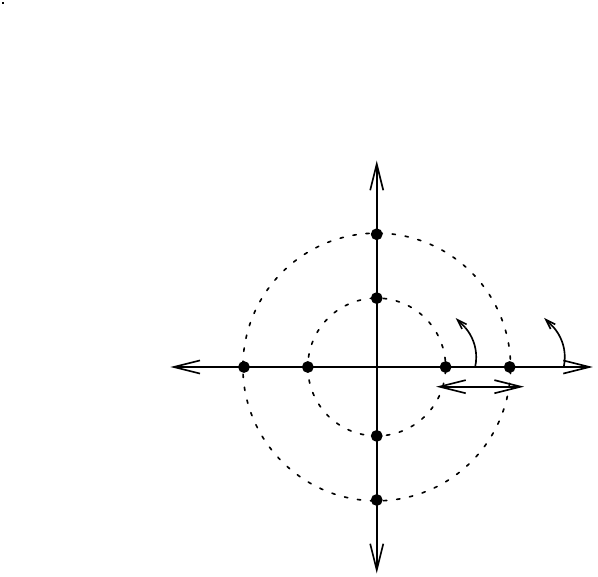_t}\hspace{0.5cm}}}
  \subfloat[MDC-QOSTBC coding gain as a function of $\theta_1$ and $\theta_2$ at $r=1.37$ for an 8-QAM constellation.]{\label{subfig:opt_min_det_theta1_theta2}\includegraphics[width=0.6\textwidth]{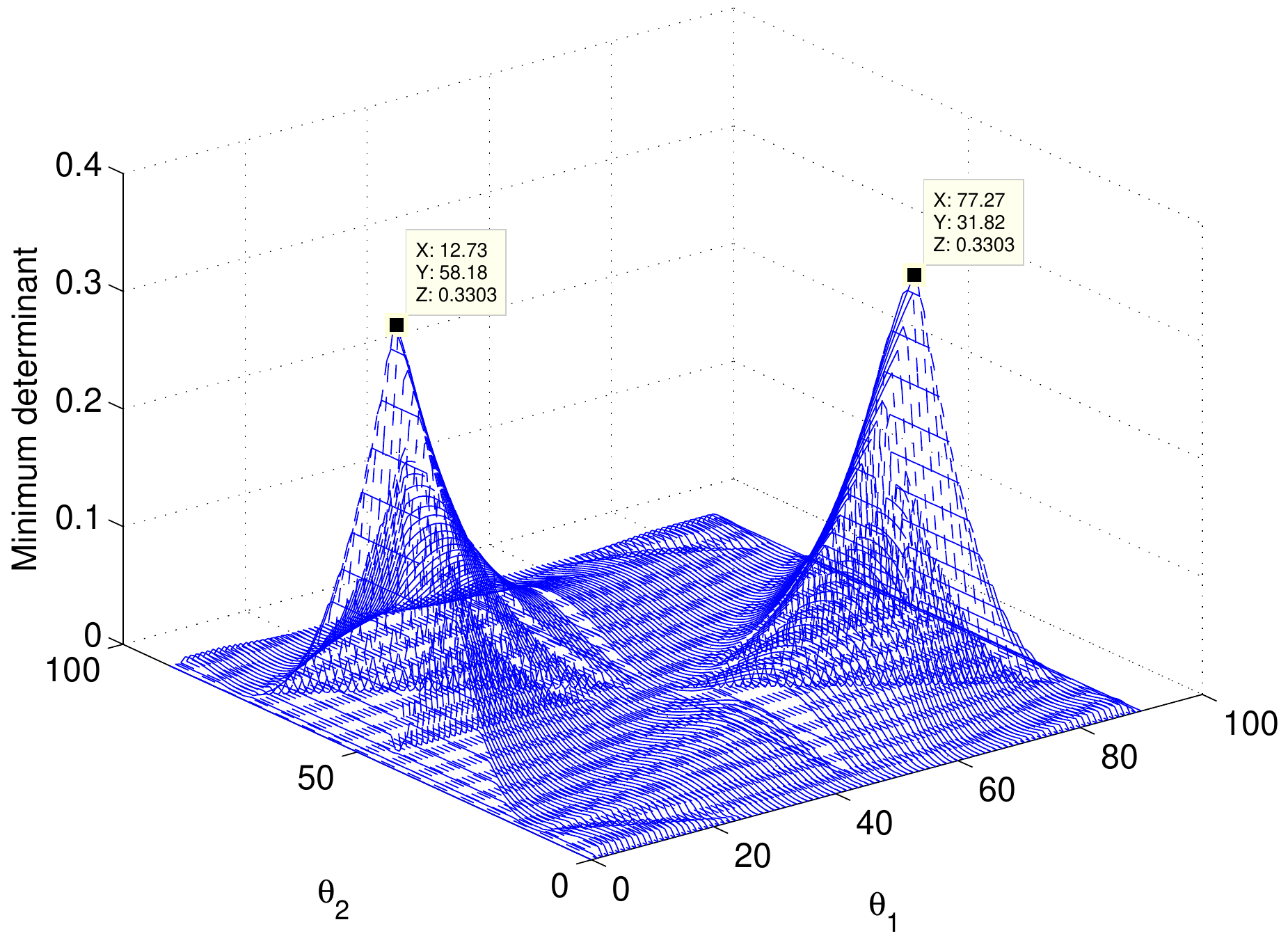}}\\
  \subfloat[MDC-QOSTBC coding gain at the optimal rotation angles vs. the amplitude ratio $r$ for an 8-QAM constellation.]{\label{subfig:max_min_det_thetas_opt}\includegraphics[width=0.49\textwidth]{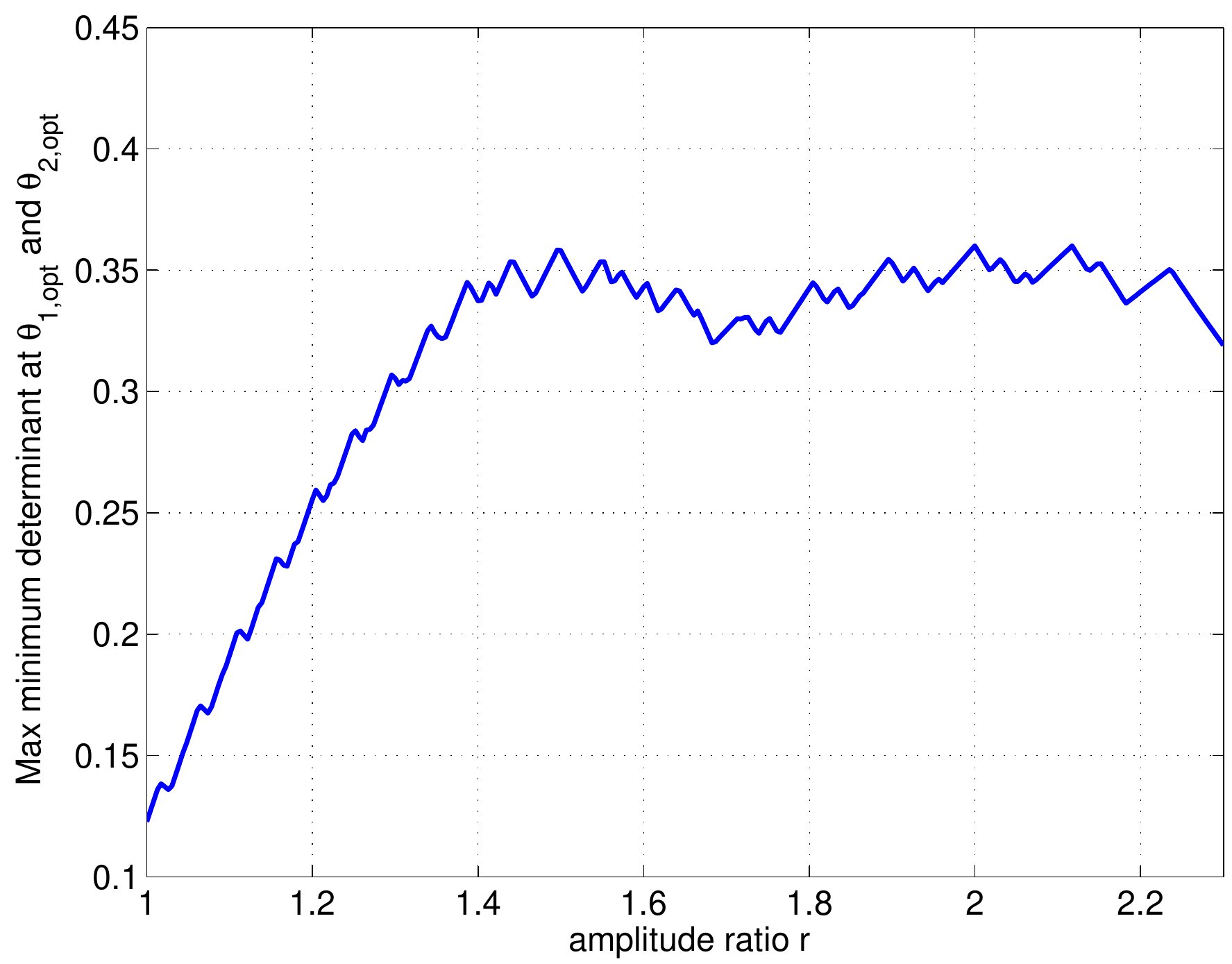}} \vspace{0.2cm}
  \subfloat[SER vs the amplitude ratio $r$ at optimal rotation angles for $E_b/N_0=20\,$dB]{\label{subfig:SER_vs_r}\includegraphics[width=0.49\textwidth]{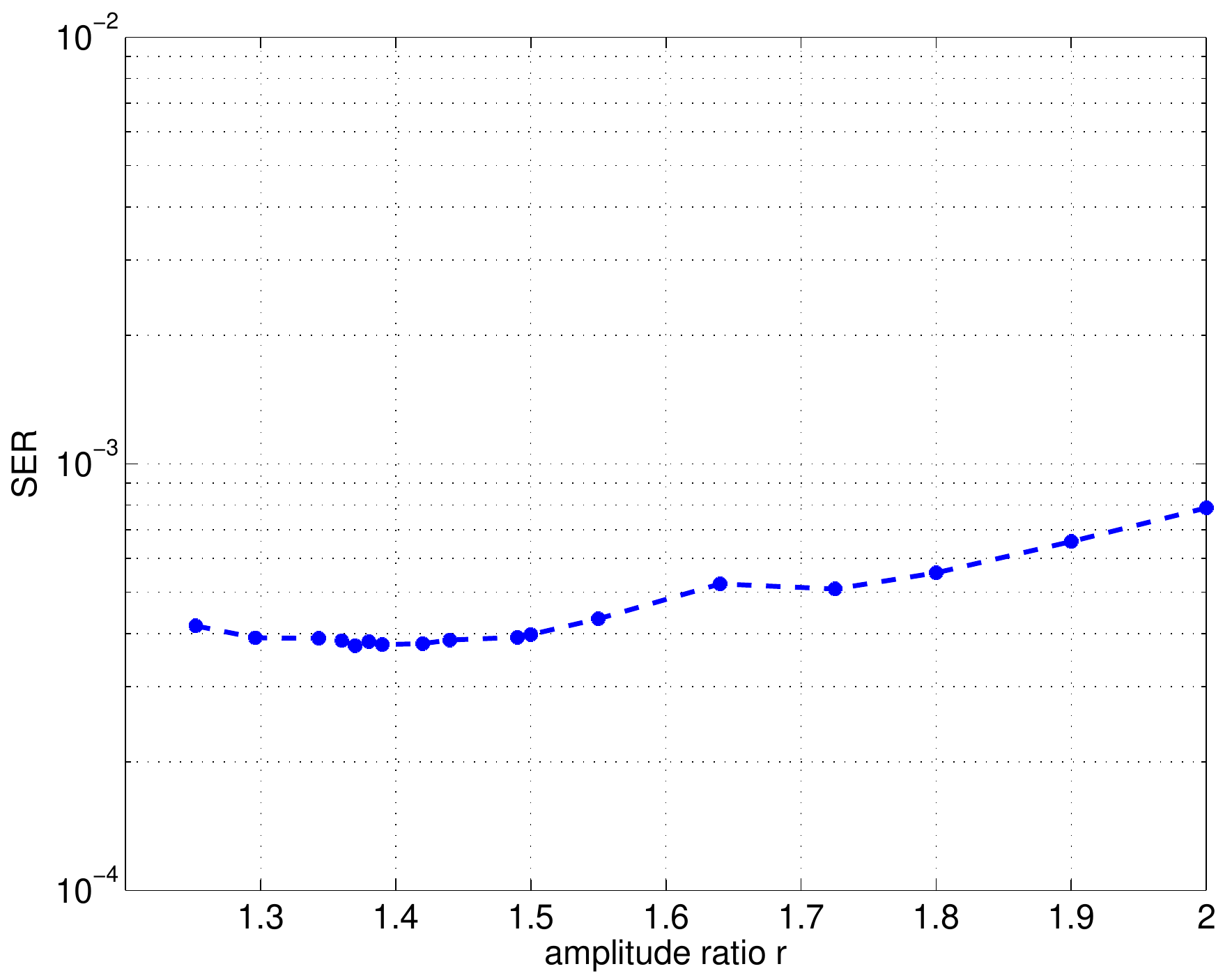}}\\
\subfloat[Optimal 8-QAM constellation at $r=1.37$, $\theta_1=12.73^o$ and $\theta_2=58.18^o$]{\label{subfig:opt_8_QAM_MDC_QOSTBC}\includegraphics[width=0.49\textwidth]{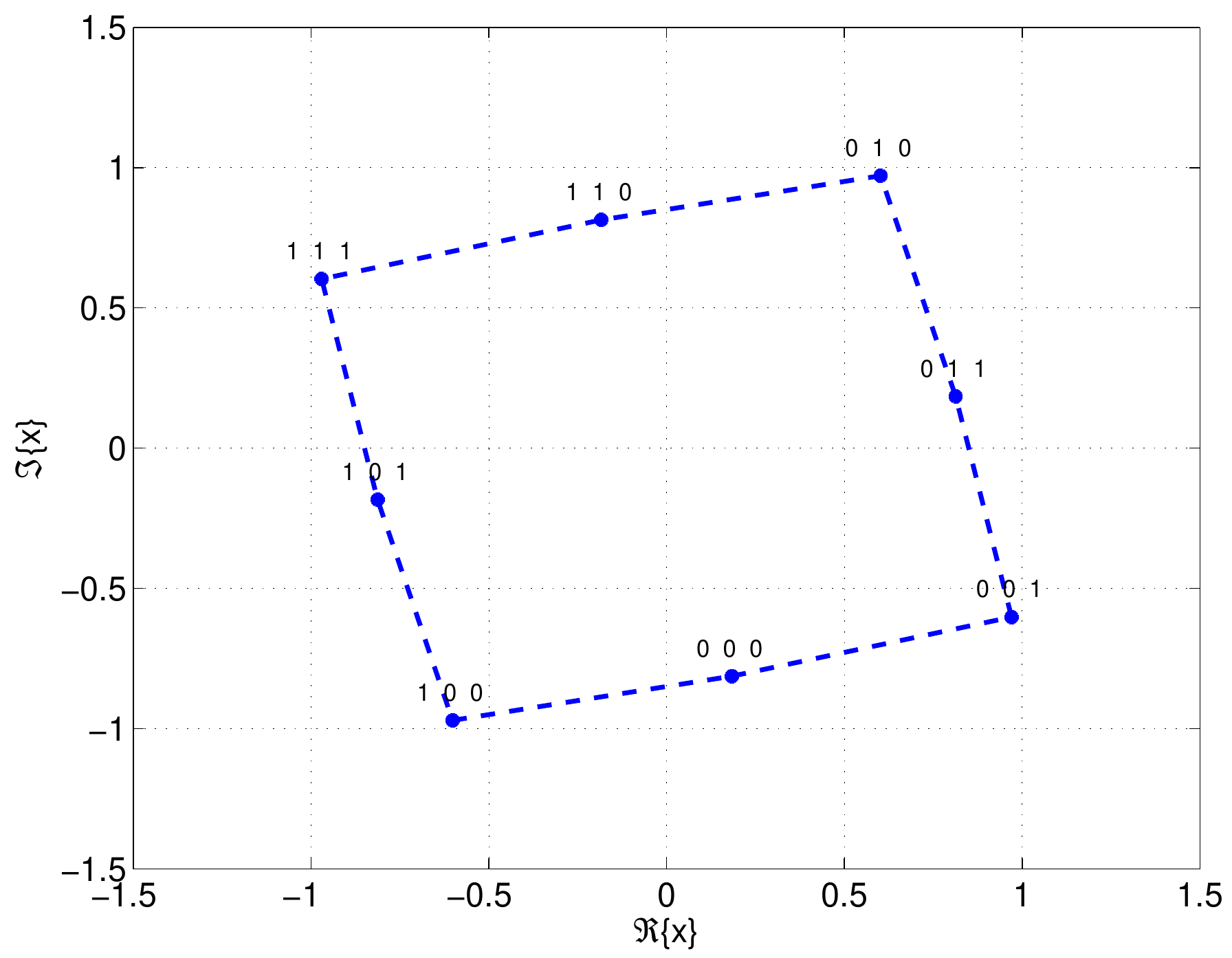}}
  \caption{8-QAM constellation design for MDC-QOSTBC.}
  \label{fig:8_QAM_MDC_QOSTBC}
\end{figure}

Finally, we evaluate the normalization constants $p_1$ and $p_2$. It has been shown above that
\begin{equation*}
 p_1\req\E[\alpha_\tau+\beta_\tau], \hspace{1cm} p_2\req\E[\alpha_\tau-\beta_\tau]
\end{equation*}
where $\alpha_\tau$ and $\beta_\tau$ are shown in (\ref{eq:alpha_beta_combined_M_4_MDC_DQOSTBCs}). Since all symbols $x_i$ $\forall$ $i=1,...,4$ are drawn from the same symmetric constellation, the average of $\beta_\tau$ vanishes, i.e.
\begin{equation*}
 \E[\beta_\tau]=\E\left[2\sum\limits_{i=1}^{2}{-x_i^Rx_i^I+x_{i+2}^Rx_{i+2}^I}\right]=0.
\end{equation*}
Therefore,
\begin{equation*}
 p_1=p_2=p=\E[\alpha_\tau]=\E\left[\sum\limits_{i=1}^{4}{|x_i|^2}\right]=4\E[|x|^2]=4.
\end{equation*}
The error rate performance of the addressed scheme will be shown in Section \ref{ss:DQOSTBC_performance_analysis}.\\

\subsection{Half Diversity Differential MDC-QOSTBCs}
\label{ss:QOSTBC_half_diversity}
As an attempt to reduce the receiver complexity further, this section shows an approach that achieves SRSD with rectangular QAM constellation while attaining only half diversity. The main difference between the currently addressed scheme compared to the scheme of the previous section lies in the construction of the information submatrices. In the previous section, linear combinations between the original submatrices $\V{V}_1$ and $\V{V}_2$ is made to get submatrices $\V{V}_{z_\tau}^{1\E}$ and $\V{V}_{z_\tau}^{2\E}$ of the equivalent subsystems. In this section, we do not use this linear combinations for the information matrices, thus we name the scheme un-combined MDC-QOSTBC. The information matrices are
 \begin{equation}
\begin{aligned}
\V{V}_{z_\tau}^{1\E}=&\frac{1}{\sqrt{p_1}}\V{V}_1=\frac{1}{\sqrt{p_1}}\begin{bmatrix}v_1 & v_2 \\-v_2^* & v_1^*\end{bmatrix}=\frac{1}{\sqrt{p_1}}\begin{bmatrix}(x_1^R+jx_3^R) & (x_2^R+jx_4^R)\\-(x_2^R+jx_4^R)^* & (x_1^R+jx_3^R)^*\end{bmatrix}\\
\V{V}_{z_\tau}^{2\E}=&\frac{1}{\sqrt{p_2}}\V{V}_2=\frac{1}{\sqrt{p_2}}\begin{bmatrix}v_3 & v_4 \\-v_4^* &v_3^*\end{bmatrix}=\frac{1}{\sqrt{p_2}}\begin{bmatrix}(-x_1^I+jx_3^I) & (-x_2^I+jx_4^I)\\-(-x_2^I+jx_4^I)^*  &(-x_1^I+jx_3^I)^*\end{bmatrix},
\end{aligned}
\label{eq:submatrices_no_comb}
\end{equation}
where $x_i$ for $i=1,...,4$ are the information bearing symbols to be encoded at block time index $\tau$.  Since matrices $\V{V}_{z_\tau}^{1\E}$ and $\V{V}_{z_\tau}^{2\E}$ are still orthogonal matrices, differential encoding is done exactly as shown in the previous section. Unlike in the previous section, the resulting system of the un-combined scheme is equivalent to a different system other than the one we started with. The resulting information matrices show that the first subsystem sends only the real parts of the 4 information symbols, whereas the second subsystem sends their imaginary parts.\\

From the submatrices in (\ref{eq:submatrices_no_comb}), the following holds;
\begin{equation}
 \left. 
\begin{aligned}
&\V{V}_{z_\tau}^{1\E\hr}\V{V}_{z_\tau}^{1\E}=\frac{1}{p_1}(|v_1|^2+|v_2|^2)\V{I}_{2}\\
&=\frac{1}{p_1}(|x_1^R+jx_3^R|^2+|x_2^R+jx_4^R|^2)\V{I}_2\\
&=\frac{1}{p_1}\sum\limits_{i=1}^{4}{(x_i^R)^2}\V{I}_2 \definedas (a_{\tau}^{1\E})^2\V{I}_2
\end{aligned}
\right. 
\hspace{1cm}
\begin{aligned}
&\V{V}_{z_\tau}^{2\E\hr}\V{V}_{z_\tau}^{2\E}=\frac{1}{p_2}(|v_3|^2+|v_4|^2)\V{I}_{2}\\
&=\frac{1}{p_2}(|-x_1^I+jx_3^I|^2+|-x_2^I+jx_4^I|^2)\V{I}_2\\
&=\frac{1}{p_2}\sum\limits_{i=1}^{4}{(x_i^I)^2}\V{I}_2 \definedas (a_{\tau}^{2\E})^2\V{I}_2.
\end{aligned}
\end{equation}
and the power normalization factors $p_1$ and $p_2$ are defined such that
\begin{equation}
\left.
\begin{aligned}
 &\E\left[(a_{\tau}^{1\E})^2\right]\req 1\\
&\implies p_1\req\E\left[\sum\limits_{i=1}^{4}{(x_i^R)^2}\right]\\
&=4\times\E[(x^R)^2]=4\times\frac{1}{2}=2
\end{aligned}
\right.
\hspace{1cm}
\begin{aligned}
 &\E\left[(a_{\tau}^{2\E})^2\right]\req 1\\
&\implies p_2\req\E\left[\sum\limits_{i=1}^{4}{(x_i^I)^2}\right]\\
&=4\times\E[(x^I)^2]=4\times\frac{1}{2}=2,
\end{aligned}
\end{equation}
as the average power of a constellation point is 1 which is equally distributed among the real and the imaginary components of the symbols making $p_1\eq p_2\eq  p\eq2$. Using (\ref{eq:submatrices_no_comb}), the information submatrices can be written in a dispersive form as $ \V{V}_{z_\tau}^{1\E}=\frac{1}{\sqrt{p}}\sum\limits_{i=1}^{4}{\V{U}_i^{1\E}}x_i^R,\,\, \V{V}_{z_\tau}^{2\E}=\frac{1}{\sqrt{p}}\sum\limits_{i=1}^{4}{j\V{Q}_i^{2\E}}x_i^I.$
\subsubsection{Sub-optimal Differential Decoder}
The same sub-optimal decoding approach used in the scheme of the previous section will be used here. Starting from (\ref{eq:sub_opt_metric_general}), the metric for the first subsystem can be derived as
\begin{eqnarray*}
 \xi_{1\E}&=&\mini\limits_{l=0,...,L-1}\:\,\frac{\tr(\V{Y}_{\tau-1}^{1\E\hr}\overbrace{\V{V}_l^{1\E\hr}\V{V}_l^{1\E}}^{(a_{\tau}^{1\E})^2}\V{Y}_{\tau-1}^{1\E})}{a_{\tau-1}^{1\E}}-2\Re\{\tr(\V{Y}_{\tau}^{1\E\hr}\V{V}_l^{1\E}\V{Y}_{\tau-1}^{1\E})\}\\
&=&\mini\limits_{x_i^R\in\mathcal{A}_i^R}\:\,\underbrace{\frac{\tr(\V{Y}_{\tau-1}^{1\E\hr}\V{Y}_{\tau-1}^{1\E})}{2\sqrt{\sum\limits_{i=1}^{4}{(x_{i,\tau-1}^R)^2}}}}_{\tilde{y}_1}.\sqrt{p}.\frac{\sum\limits_{i=1}^{4}{(x_{i}^R)^2}}{p}-\frac{1}{\sqrt{p}}\sum\limits_{i=1}^{4}{\underbrace{\Re\{\tr\big(\V{Y}_{\tau-1}^{1\E}\V{Y}_{\tau}^{1\E\hr}\V{U}_i^{1\E}\big)\}}_{\tilde{x}_i^R}x_i^R}\\
&=&\mini\limits_{x_i^R\in\mathcal{A}_i^R}\:\,\tilde{y}_1\sum\limits_{i=1}^{4}{(x_{i}^R)^2}-\sum\limits_{i=1}^{4}{\tilde{x}_i^Rx_i^R},
\end{eqnarray*}
thus the metric of the first subsystem decides on the real components of the symbols independently and similarly the metric of the second subsystem decides on the imaginary components of the symbols. The final metric is
\begin{equation}
 \boxed{
\begin{aligned}
\hat{x}_i^R=&\argmin\limits_{x_i^R\in\mathcal{A}_i^R}\:\,\tilde{y}_1(x_{i}^R)^2-\tilde{x}_i^Rx_i^R\\
\hat{x}_i^I=&\argmin\limits_{x_i^I\in\mathcal{A}_i^I}\:\,\tilde{y}_2(x_{i}^I)^2-\tilde{x}_i^Ix_i^I,
\end{aligned}}
\label{eq:un_combined_MDC_QOSTBC_decoder}
\end{equation}
where $\tilde{y}_2=\large\frac{\tr(\V{Y}_{\tau-1}^{2\E\hr}\V{Y}_{\tau-1}^{2\E})}{2\sqrt{\sum\limits_{i=1}^{4}{(x_{i,\tau-1}^I)^2}}}$, $\tilde{x}_i^I=\Re\{\tr\big(j\V{Y}_{\tau-1}^{2\E}\V{Y}_{\tau}^{2\E\hr}\V{Q}_i^{2\E}\big)\}$ and $\mathcal{A}_i^R$ \& $\mathcal{A}_i^I$ are respectively the alphabets from which $x_i^R$ and $x_i^I$ are drawn. For the information symbols at block index $\tau-1$, we use the notation $x_{i,\tau-1}$, whereas for the information symbols at block index $\tau$, the notation $x_i$ is used for consistency and simplicity.\\

To make use of the lower decoding complexity achieved by the un-combined differential MDC-QOSTBC, the constellation used should encode the real and imaginary components of the symbols independently. So we only use rectangular QAM constellation for this purpose. It was proved in section \ref{s:Diversity_proof_DQOSTBC} that the addressed un-combined differential MDC-QOSTBC with rectangular QAM achieves half diversity. Thus the price paid by the lower decoding complexity of this scheme is the reduced diversity order. The transmission scheme of the un-combined MDC-QOSTBC follows the same description in Figure \ref{fig:DMDC_QOSTBC_system} except that $\V{V}_{z_\tau}^{1\E}\eq\frac{1}{\sqrt{p_1}}\V{V}_1$ and $\V{V}_{z_\tau}^{2\E}\eq\frac{1}{\sqrt{p_2}}\V{V}_2$ and that the metric of each subsystem decides on one dimension of the information symbols, i.e. no metric combination between both subsystems is performed.\\

Due to the reduced diversity order, the slope of the error rate curves are expected to be lower than that of the full-diversity codes. This indicates a better performance in the low SNR range up to some SNR value. Hence, the un-combined MDC-QOSTBC with rectangular QAM constellation is of interest if the complexity is of more concern than the error performance, or if the operating SNR point is in the range below the intercept point with the full-diversity curve. Another case in which this scheme might be of interest is when a channel coding block is included in the system. In this case, good performance is needed in the low SNR range.
\subsection{Performance Analysis}
\label{ss:DQOSTBC_performance_analysis}

In this section, we show the error performance of both the full-diversity combined and the half-diversity un-combined differential MDC-QOSTBCs and compare them to all previously investigated schemes at spectral efficiencies 2, 3, 4 and 6 bits/s/Hz. Figures \ref{fig:M_4_at_2_bits_s_Hz}-\ref{fig:M_4_at_6_bits_s_Hz} show the BER curves and Tables \ref{tab:M_4_at_2_bits_s_Hz}-\ref{tab:M_4_at_6_bits_s_Hz} summarize the characteristics associated with every technique in terms of diversity order, code rate, error performance and complexity. For every spectral efficiency the schemes are ordered in the tables based on the error performance from the best to the worst. The scheme that achieves the best BER performance-complexity trade off is highlighted, except for rate $\unit[6]{bits/s/Hz}$ where the difference in complexity and performance is significant, so the operating SNR governs the decision on the most suitable scheme. To show the slope of a diversity order 2 curve, we included the Alamouti two transmit antenna curves that achieve a diversity order of 2. The complexity is measured by the search space (number of test candidates) for the decision of one information block. In case of SCSD, the search space is $L\eq$ alphabet size $\times$ number of symbols per information block and in case the symbols are drawn from different alphabets, the search space is the sum of the alphabet size of the $K$ information symbols. In SRSD decoders, the search space is measured as one-dimensional alphabet size $\times$ 2 $\times$ number of information symbols $K$.\\

In conclusion, this chapter covered the use of QOSTBCs in differential non-coherent systems. The used code for all investigated schemes is the MDC-QOSTBC. The main advantage this class of QOSTBC offers is the reduced decoding complexity from pair-wise to single complex symbol decoding. The rationale behind such complexity reduction is the interleaving of the real and imaginary components of the information symbols. In order to use MDC-QOSTBCs in differential systems, two main approaches were investigated. The first approach imposes limitations on the constellation to orthogonalize the code matrix and proved to have poor performance for high spectral efficiencies. The second approach treats the system as two orthogonal subsystems and differentially encode each subsystem. By taking linear combinations between the information matrices of both subsystems, one gets an equivalent system that achieves SCSD and can achieve full diversity with appropriate constellation rotation. Differential full-diversity combined MDC-QOSTBCs proved to achieve significant performance improvement compared to differential orthogonal STBCs. For a further reduction of the complexity, one may skip the linear combination step of the information submatrices to allow independent decision on the real and imaginary components of the symbols with the penalty of a reduced diversity order of two.

\begin{figure}[!ht]
\centering
\vspace{-0.1cm}
\includegraphics[width=0.55\textwidth]{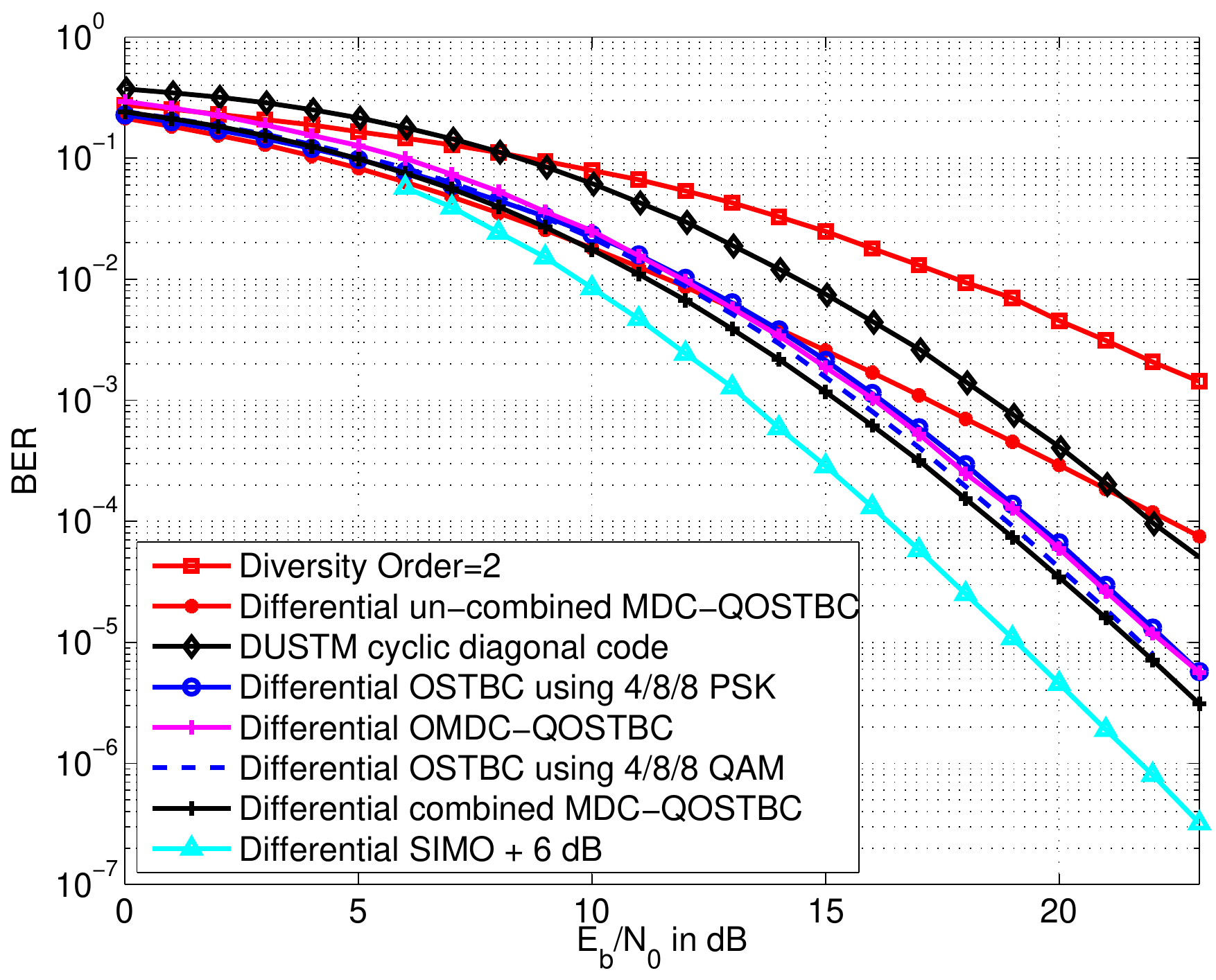}
 \caption{BER comparison for differential STC schemes in a $4\!\times\!1$ system at $\unit[2]{bits/s/Hz}$.}
\label{fig:M_4_at_2_bits_s_Hz}
\end{figure}

 \begin{table}[!ht]
\begin{center}
\caption{Comparison of differential STC schemes in a $4\!\times\!1$ system at $\unit[2]{bits/s/Hz}$.}
\label{tab:M_4_at_2_bits_s_Hz}
\begin{tabular}{|p{2cm}|p{2.5cm}|p{2cm}|p{6.5cm}|} \hline 
\textbf{Scheme} & \textbf{Constellation} & \textbf{Search space complexity} & \textbf{Comments} \tn\hline
\shadeRow Combined MDC-QOSTBC & 4-PSK rotated by $\theta=13.28^o$ & $4\!\times\!4=16$ & achieves full diversity and full rate.\tn\hline
OSTBC using $\text{T-H}$ code matrix & 4/8/8 QAM & $L=4+8+8=20$ & achieves full diversity and rate $\frac{3}{4}$. BER is slightly worse than combined MDC-QOSTBC. The scheme requires constellation change every 2 time slots at Tx and Rx.\tn\hline
OMDC-QOSTBC & Optimized 4-point constellation in Figure \ref{subfig:4_point_constellation} & $L=4\!\times\!4=16$ & achieves full rate and full-diversity. BER performance is worse than combined MDC-QOSTBC by about $\unit[0.8]{dB}$. \tn \hline
OSTBC using $\text{T-H}$ code matrix & 4/8/8 PSK & $L=4+8+8=20$ & full-diversity and rate $\frac{3}{4}$. BER is almost the same as OMDC-QOSTBC. The scheme requires constellation change every 2 time slots at Tx and Rx.\tn\hline
DUSTM cyclic diagonal code & bits mapped directly to matrices that belong to a group codebook &  $L\eq2^{\eta M}\eq2^{2\times 4}\eq256$ & achieves full diversity. BER is worse than combined MDC-QOSTBC by more than $\unit[3]{dB}$ and the complexity is significantly higher.\tn\hline
Un-combined MDC-QOSTBC & 4-PSK &  $L=2\times 2 \times 4=16$ & achieves half diversity and same complexity as combined MDC-QOSTBC, so the only advantage it offers compared to combined scheme is the simultaneously decision on the real and imaginary components of the symbols leading to a faster decoding.\tn\hline
 \end{tabular}	
\end{center}
\end{table}
\begin{figure}[!ht]
\centering
\includegraphics[width=0.6\textwidth]{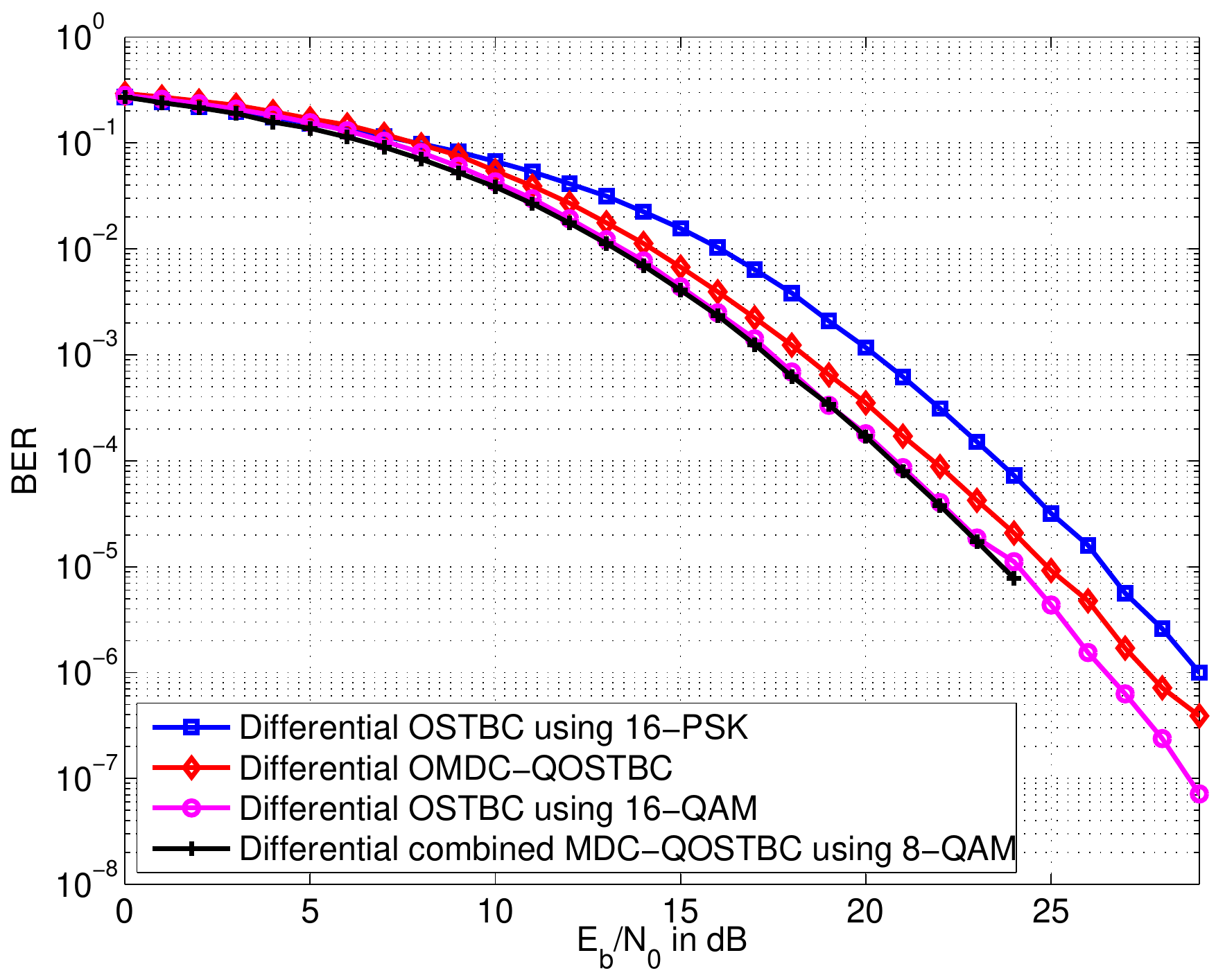}
 \caption{BER comparison for differential STC schemes in a $4\!\times\!1$ system at $\unit[3]{bits/s/Hz}$.}
\label{fig:M_4_at_3_bits_s_Hz}
\end{figure}
\begin{table}[!ht]
\begin{center}
\caption{Comparison of differential STC schemes in a $4\!\times\!1$ system at $\unit[3]{bits/s/Hz}$.}
\label{tab:M_4_at_3_bits_s_Hz}
\begin{tabular}{|p{2cm}|p{2.5cm}|p{2cm}|p{6.5cm}|}
    \hline 
\textbf{Scheme} & \textbf{Constellation} & \textbf{Search space complexity} & \textbf{Comments} \tn\hline
Combined MDC-QOSTBC & Optimized 8-QAM shown in Figure \ref{subfig:opt_8_QAM_MDC_QOSTBC}. & $L=8\times4=32$ & achieves full diversity  and full-rate. Placed first since its SER is $\unit[1]{dB}$ better than OSTBC with 16-QAM but both have same BER. \tn \hline
\shadeRow OSTBC using $\text{T-H}$ code matrix & 16-QAM &  $L=4\times2\times3=24$ & achieves full diversity and rate $\frac{3}{4}$, same BER as combined MDC-QOSTBC but with lower decoding complexity.\tn\hline
OMDC-QOSTBC & Optimized 8-point constellation in Figure \ref{subfig:8_point_constellation} & $L=8\times4=32$ & BER is worse than combined MDC-QOSTBC or OSTBC with 16-QAM by about $\unit[1]{dB}$. \tn\hline
OSTBC using $\text{T-H}$ code matrix & 16-PSK &  $L=16\times3=48$ & achieves full diversity and rate $\frac{3}{4}$. BER performance is worse than combined-MDC QOSTBC  or OSTBC with 16-QAM by about $\unit[2.5]{dB}$. \tn\hline
 \end{tabular}	
\end{center}
\end{table}

\begin{figure}[!ht]
\centering
\includegraphics[width=0.6\textwidth]{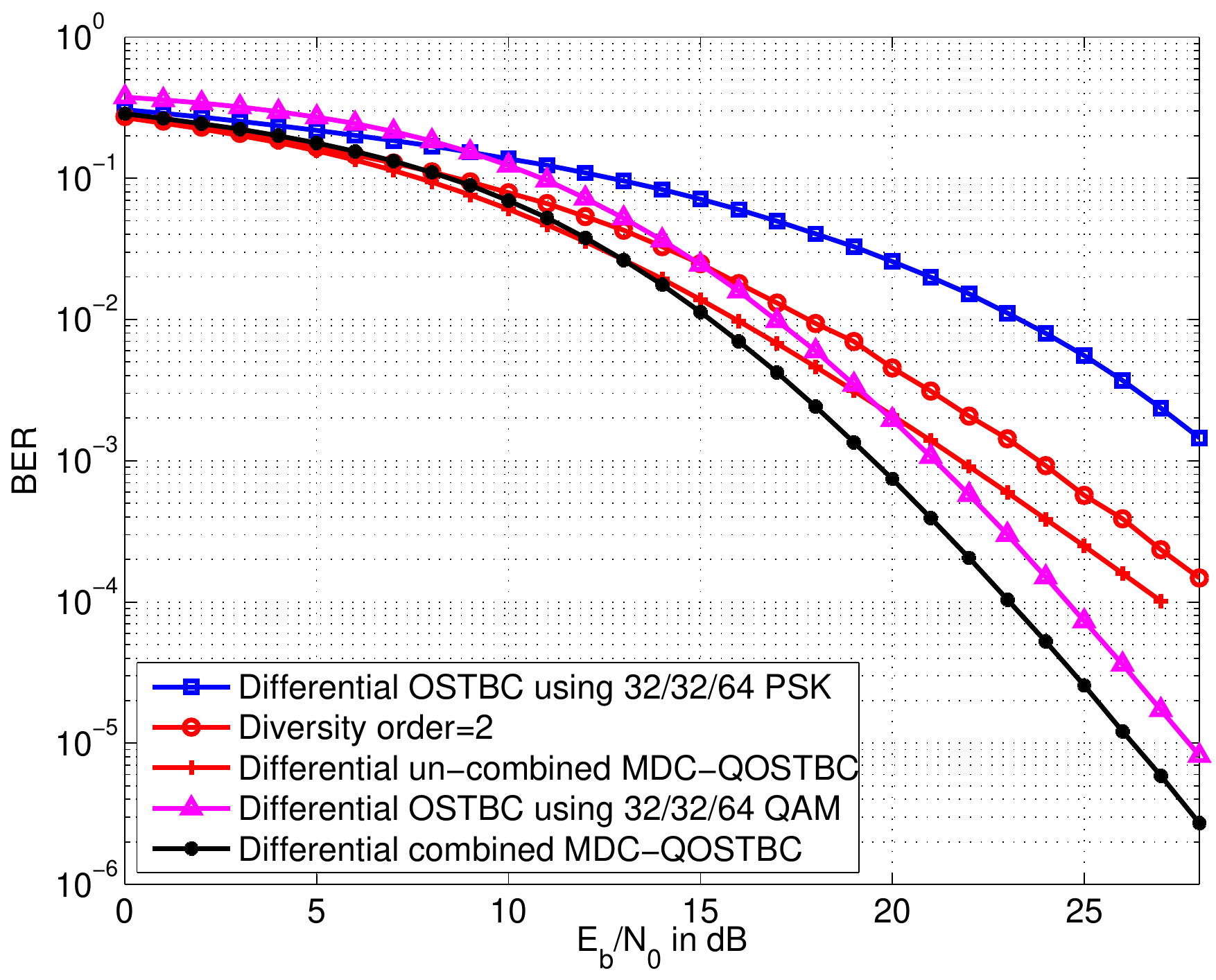}
 \caption{BER comparison for differential STC schemes in a $4\!\times\!1$ system at $\unit[4]{bits/s/Hz}$.}
\label{fig:M_4_at_4_bits_s_Hz}
\end{figure}
\begin{table}[htp]
\begin{center}
\caption{Comparison of differential STC schemes in a $4\!\times\!1$ system at $\unit[4]{bits/s/Hz}$.}
\label{tab:M_4_at_4_bits_s_Hz}
\vspace{-0.5cm}
\begin{tabular}{|p{2cm}|p{2.5cm}|p{2cm}|p{8cm}|}
    \hline 
\textbf{Scheme} & \textbf{Constellation} & \textbf{Search space complexity} & \textbf{Comments} \tn \hline
\shadeRow Combined MDC-QOSTBC & 16-QAM rotated by $\theta=13.28^o$ shown in Figure \ref{fig:opt_16_QAM_MDC_QOSTBC}. & $L=16\times4=64$ & achieves full diversity and full-rate. \tn\hline
OSTBC using $\text{T-H}$ code matrix & 32/32/64 QAM & $L=32+32+8\times2=80$ & achieves full diversity and rate $\frac{3}{4}$. BER is worse than combined MDC-QOSTBC by about $\unit[1.5]{dB}$. The scheme requires constellation change every 2 time slots at Tx and Rx.\tn\hline
Un-combined MDC-QOSTBC & 16-QAM & $L=4\times2\times4=32$ & achieves half diversity and full-rate. BER worse than Combined MDC-QOSTBC starting $E_b/N_0=\unit[14]{dB}$. The scheme sacrifices diversity for lower decoding complexity.\tn\hline
OSTBC using $\text{T-H}$ code matrix & 32/32/64 PSK & $L=32+32+64=128$ & achieves full diversity and rate $\frac{3}{4}$. BER is worse than combined MDC-QOSTBC by about $\unit[9]{dB}$ at BER=$10^{-3}$ and requires twice as much decoding search space. Additionally, the scheme requires constellation change every 2 time slots at Tx and Rx.\tn\hline
 \end{tabular}	
\end{center}
\end{table}
\begin{figure}[!ht]
\centering
\includegraphics[width=0.7\textwidth]{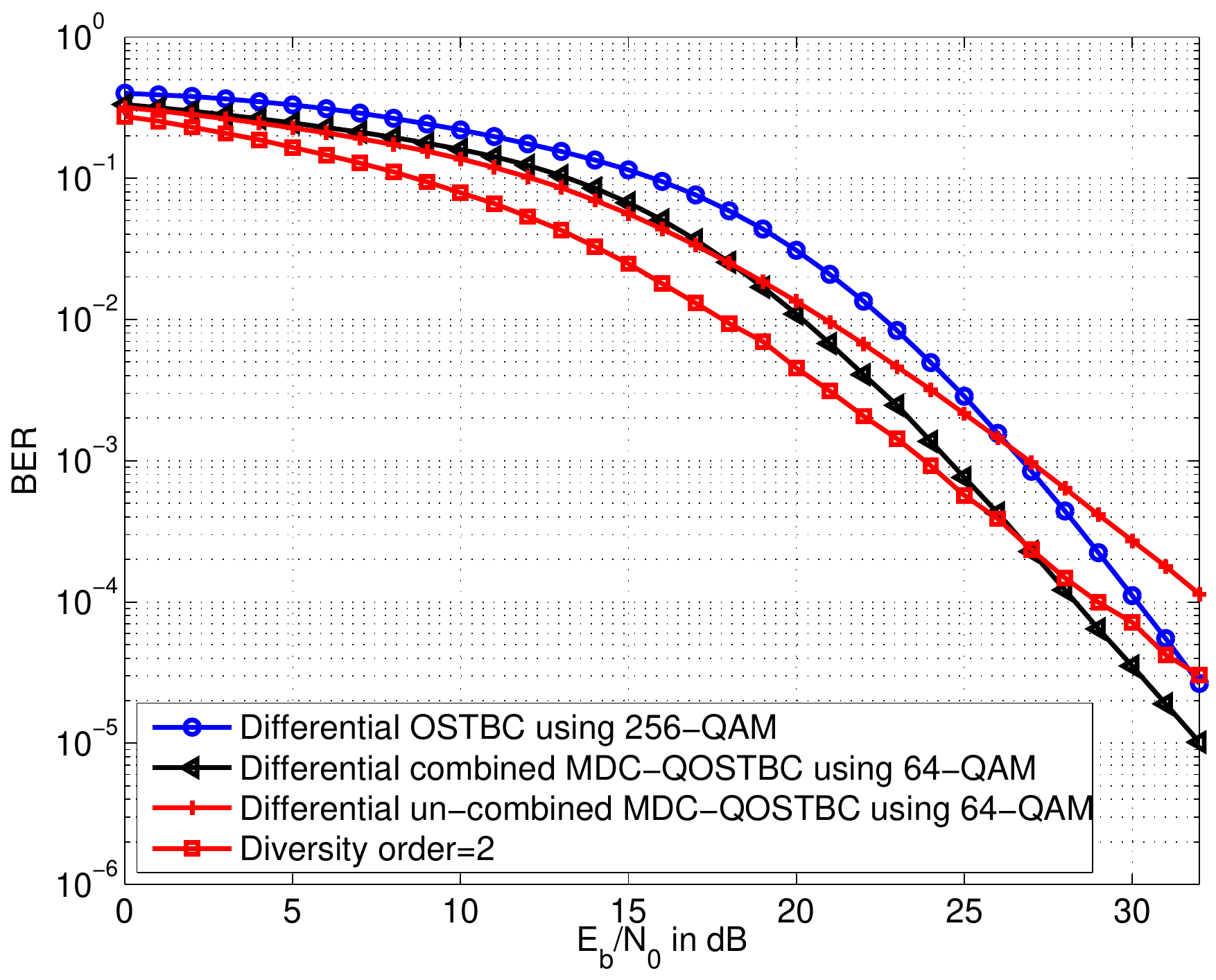}
 \caption{BER comparison for differential STC schemes in a $4\!\times\!1$ system at $\unit[6]{bits/s/Hz}$.}
\label{fig:M_4_at_6_bits_s_Hz}
\end{figure}

\begin{table}[htp]
\begin{center}
\caption{Comparison of differential STC schemes in a $4\!\times\!1$ system at $\unit[6]{bits/s/Hz}$.}
\label{tab:M_4_at_6_bits_s_Hz}
\begin{tabular}{|p{2cm}|p{2.5cm}|p{2cm}|p{7cm}|}
    \hline 
\textbf{Scheme} & \textbf{Constellation} & \textbf{Search space complexity} & \textbf{Comments} \tn \hline
Combined MDC-QOSTBC &64-QAM rotated by $\theta=13.28^o$. & $L=64\times4=256$ & achieves full diversity and full-rate. \tn\hline
OSTBC using $\text{T-H}$ code matrix & 256-QAM & $L=16\times2\times3=96$ & achieves full diversity and rate $\frac{3}{4}$. BER is worse than combined MDC-QOSTBC by about $\unit[2]{dB}$.\tn\hline
Un-combined MDC-QOSTBC & 64-QAM & $L=8\times2\times4=64$ & achieves half diversity and full-rate. BER worse than combined MDC-QOSTBC starting $E_b/N_0=\unit[20]{dB}$. The scheme sacrifices diversity for lower decoding complexity.\tn\hline
 \end{tabular}
\end{center}
\end{table}

\abbrev{QOSTBC}{Quasi-Orthogonal Space-Time Block Code}
\abbrev{MDC-QOSTBC}{Minimum Decoding Complexity Quasi-Orthogonal Space-Time Block Code}
\abbrev{OMDC-QOSTBC}{Orthogonalized MDC-QOSTBC}

\chapter{Conclusion}
\label{chap:conclusion}

In this thesis, we investigated the theory and applications of differential modulation schemes in single carrier wireless systems. In SISO systems, the use of DAPSK proved to achieve significant performance enhancement over DPSK especially for high spectral efficiencies. Through MSDD, the error floor associated with fast varying channels is removed.\\

With no channel knowledge at the transmitter nor at the receiver, differential space-time coding techniques proved to  achieve transmit diversity leading to faster decay of error rate curves. Having proved the design criteria for unitary ST codes in non-coherent systems, the simulations of all investigated schemes proved to comply with this criteria. After visiting orthogonal STBCs, the error rate curves showed to be still inferior to the best achievable performance limit. To this end, we studied quasi-orthogonal STBCs which promise higher code rate and potential performance improvement. With the goals of preserving the low complexity of OSTBCs while decoding with no CSI at the receiver, we proposed the use of MDC-QOSTBCs in the differential non-coherent domain. The proposed code proved to achieve significant performance improvement compared to OSTBCs for high spectral efficiencies.\\

Although a plethora of research work has been done on the topic of space-time codes, there are still open problems that need further investigation. The design of codes that achieve a code rate more than one is an interesting area of research. These codes may achieve better transmission quality for high data rate applications. For next generation wireless communications, providing good quality of service for high speed mobile stations motivates extending good coherent STBCs to the differential non-coherent domain to dispense the need of CSI. \\

To bridge the performance gap between coherent and non-coherent STBCs and to remove the error floor incurred in fast fading channels, MSDD can be extended to MIMO systems. The brute force search of ML MSDD leads to an exponential increase in the decoding complexity with the MSDD window length. Reduced search space through sub-optimal MSDD schemes like sphere decoding was proved in the literature to achieve near-ML performance. Thus, we suggest for future work the investigation of MSDD using sphere decoding with the differential MDC-QOSTBCs proposed in this thesis. Studying the performance of the addressed schemes in multipath environments is as well encouraged. Lastly, the research covered in this thesis consider single user systems, so we suggest extending the addressed schemes to incorporate multi users.

\cleardoublepage
\phantomsection
\addcontentsline{toc}{chapter}{\listfigurename}
\listoffigures
\cleardoublepage
\phantomsection
\addcontentsline{toc}{chapter}{\listtablename}
\listoftables

\appendix
\chapter{Fundamentals of Linear Algebra}
\label{App:math_formulas}
This appendix aims at defining some of the fundamental topics in linear Algebra. It specifically explains the notion of trace, rank, determinant, eigenvalues, and singular values of a matrix. Moreover, some of the useful properties of such entities are included and the relation between them is also provided.
\section{Properties of the Trace Operator}
\label{s:trace_properties}
The trace of a square matrix $\V{A}$ is defined as the sum of the elements of its main diagonal. Let $\V{A}$ be a square matrix defined as $\V{A}_{M\!\times\! M}=\{a_{mn}\},\:m,n\in\{1,...,M\}$, then the trace of $\V{A}$ is
\begin{equation}
\tr(\V{A})=\sum\limits_{m=1}^{M}{a_{mm}}.
\end{equation}
Based on the above definition, the following properties trivially hold. Let $\V{A}$ and $\V{B}$ be square matrices of the same dimension, and let $a$ be a complex scalar, then
\begin{eqnarray*}
\tr(a)&=&a\\
\tr(\V{A}+\V{B})&=&\tr(\V{A})+\tr(\V{B})\\
\tr(\V{A}^T)&=&\tr(\V{A})\\
\tr(\V{A^*})&=&\tr(\V{A})^*
\end{eqnarray*}
From the above properties, the following relationships also hold;
\begin{eqnarray*}
\tr(\V{A}\hr)&=&\tr(\V{A}^T)^*=\tr(\V{A})^*\\
\tr(\V{A}+\V{A}\hr)&=&\tr(\V{A})+\tr(\V{A})^*=2\,\Re\{\tr(\V{A})\}\\
\end{eqnarray*}
Moreover, the trace operator is invariant under cyclic permutations, i.e.
\begin{equation}
\label{eq:tr_permutation}
\tr(\V{ABC})=\tr(\V{CAB})=\tr(\V{BCA})
\end{equation}
\section{Properties of the Determinant Operator}
\label{s:Determinant}
The determinant is an operator that operates on a square matrix and results in a scalar. This section summarizes some useful properties of the determinant function. Let $\V{A}$ and $\V{B}$ be any two square matrices then the following holds;
\begin{eqnarray*}
\det(\V{AB})&=&\det(\V{A})\det(\V{B})\\
\det(\V{A}^{-1})&=&(\det(\V{A}))^{-1}\\
\det(\V{A}^{T})&=&\det(\V{A})\\
\det(\V{A}\hr)&=&\det((\V{A}^T)^*)=(\det(\V{A}))^*
\label{eq:Det_properties}
\end{eqnarray*}
Furthermore the determinant of diagonal matrices is the product of the elements along the main diagonal. So if $\V{A}$ is an $M\!\times\!M$ diagonal matrix of diagonal entries $a_m$, $m\in\{1,...,M\}$, then
\begin{equation*}
\det(\V{A})=\prod\limits_{m=1}^{M}{a_m}
\end{equation*}
An important information that a determinant of a matrix provides, is whether or not the matrix has an inverse. Explicitly, a matrix is invertible if and only if its determinant is non-zero.\\

Another very useful theorem that has been extensively used in this thesis to evaluate certain determinants is known as Sylvester's determinant theorem and it states the following;
\begin{theorem}
Let $\V{A}$ be an $M\!\times\!N$ matrix and $\V{B}$ an $N\!\times\!M$ matrix, then
\begin{equation*}
\det(\V{I}_M+\V{AB})=\det(\V{I}_{N}+\V{BA})
\end{equation*}
\end{theorem}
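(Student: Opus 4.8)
The plan is to prove the slightly more symmetric identity $\det(\V{I}_M-\V{A}\V{B})=\det(\V{I}_N-\V{B}\V{A})$ and then recover the stated result by replacing $\V{A}$ with $-\V{A}$, which turns both sides into $\det(\V{I}_M+\V{A}\V{B})$ and $\det(\V{I}_N+\V{B}\V{A})$ respectively. The central idea is to embed both quantities into a single $(M+N)\times(M+N)$ block matrix and to evaluate its determinant in two different ways, exploiting that a block-triangular matrix with identity-or-square diagonal blocks has determinant equal to the product of the determinants of its diagonal blocks. This avoids any appeal to eigenvalues or to density/continuity arguments, and works over any field, so it covers the complex matrices used throughout the thesis.

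The key object is $\V{K}=\begin{bmatrix}\V{I}_M & \V{A}\\ \V{B} & \V{I}_N\end{bmatrix}$, and the heart of the argument is the pair of block factorizations
\[
\begin{bmatrix}\V{I}_M & \V{A}\\ \V{B} & \V{I}_N\end{bmatrix}
=\begin{bmatrix}\V{I}_M & \V{0}\\ \V{B} & \V{I}_N\end{bmatrix}
 \begin{bmatrix}\V{I}_M & \V{A}\\ \V{0} & \V{I}_N-\V{B}\V{A}\end{bmatrix}
=\begin{bmatrix}\V{I}_M & \V{A}\\ \V{0} & \V{I}_N\end{bmatrix}
 \begin{bmatrix}\V{I}_M-\V{A}\V{B} & \V{0}\\ \V{B} & \V{I}_N\end{bmatrix}.
\]
First I would verify these two identities by a direct block multiplication (four short block computations each, where the cross terms $\pm\V{B}\V{A}$ and $\pm\V{A}\V{B}$ cancel against the corrected diagonal block). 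Taking determinants of the first factorization gives $\det(\V{K})=\det(\V{I}_M)\,\det(\V{I}_N-\V{B}\V{A})=\det(\V{I}_N-\V{B}\V{A})$, since the first factor is lower block-triangular with identity diagonal blocks and the second is upper block-triangular. Taking determinants of the second factorization gives $\det(\V{K})=\det(\V{I}_M-\V{A}\V{B})$ by the same reasoning. Equating the two expressions for $\det(\V{K})$ yields $\det(\V{I}_M-\V{A}\V{B})=\det(\V{I}_N-\V{B}\V{A})$, and the substitution $\V{A}\mapsto-\V{A}$ finishes the proof.

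There is no deep obstacle here; the theorem is elementary once the right block matrix is written down. The only point needing care is the justification of the block-triangular determinant rule $\det\!\begin{bmatrix}\V{P} & \V{Q}\\ \V{0} & \V{S}\end{bmatrix}=\det(\V{P})\det(\V{S})$ for square diagonal blocks $\V{P}$ and $\V{S}$, which I would either cite as a standard fact or establish quickly from the multiplicativity of the determinant together with the trivial factorization into $\begin{bmatrix}\V{P} & \V{0}\\ \V{0} & \V{I}\end{bmatrix}$ and $\begin{bmatrix}\V{I} & \V{P}^{-1}\V{Q}\\ \V{0} & \V{S}\end{bmatrix}$ when $\V{P}$ is invertible, and by a continuity/polynomial-identity remark to remove the invertibility assumption. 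The other mild pitfall is bookkeeping with the sign and with the mismatched block sizes $M$ and $N$, which the factorization above handles automatically because every diagonal block is either $\V{I}_M$, $\V{I}_N$, $\V{I}_M-\V{A}\V{B}$, or $\V{I}_N-\V{B}\V{A}$.
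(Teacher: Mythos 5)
Your proof is correct. The two block factorizations multiply out exactly as claimed (the cross terms $\pm\V{B}\V{A}$ and $\pm\V{A}\V{B}$ cancel against the modified diagonal blocks), each factor is block-triangular with square diagonal blocks, and the block-triangular determinant rule then gives $\det(\V{I}_M-\V{A}\V{B})=\det(\V{K})=\det(\V{I}_N-\V{B}\V{A})$; the substitution $\V{A}\mapsto-\V{A}$ yields the stated identity. Note that the paper itself offers no proof to compare against: it states Sylvester's determinant theorem as a known fact and immediately puts it to use, so your Schur-complement argument supplies a justification the thesis omits. Your approach is the standard one, it works over any field, and your handling of the mismatched block sizes $M$ and $N$ is clean because every diagonal block in your factorizations is genuinely square. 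The only loose end, which you already flag, is the block-triangular determinant rule itself; your suggested derivation via $\det\bigl(\begin{smallmatrix}\V{P}&\V{Q}\\\V{0}&\V{S}\end{smallmatrix}\bigr)=\det(\V{P})\det(\V{S})$ for invertible $\V{P}$ plus a polynomial-identity argument is fine, though here it is even easier since every first factor you use has identity diagonal blocks and unit determinant by direct cofactor expansion.
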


\section{Rank of a Matrix}
\label{rank}
The rank of an $M\!\times\!N$ matrix $\V{A}$ is defined as the number of linearly independent rows or columns of $\V{A}$ which is at most the minimum of both dimensions $M$ and $N$, namely
\begin{equation}
\rank(\V{A})\leq\min(M,N).
\end{equation}
A matrix whose rank is the maximum achievable is said to have full rank, otherwise it is rank deficient. Full rank matrices have non-zero determinant and therefore are invertible, where as rank deficient matrices have zero determinant and therefore are non-invertible (singular).\\

For a diagonal matrix $\V{A}$, if there exist $r$ non-zero elements on the main diagonal, then there exist $r$ linearly independent vectors in $\V{A}$, therefore $r$ is the rank of $\V{A}$. Consequently, the rank of a diagonal matrix is the number of non-zero elements on its main diagonal.\\

Another useful property of the rank operator is that it is invariant under multiplication from any side by a full rank square matrix. Let $\V{U}$ and $\V{V}$ be full rank matrices of dimension $M\!\times\!M$ and $N\!\times\!N$, respectively, and let $\V{A}$ be an arbitrary $M\!\times\!N$ matrix, then the following holds
\begin{equation}
\label{eq:rank_unitary}
\rank(\V{UA})=\rank(\V{AV})=\rank(\V{A})
\end{equation}
\section{Eigenvalue Decomposition and Singular Value Decomposition}
\label{s:EVD_SVD}
This section provides the basic theory of eigenvalues and singular values as one of the important concepts in the field of linear Algebra. Two well-known matrix decompositions are defined, namely the eigenvalue decomposition and the singular value decomposition. The section also includes some of the useful properties of eigenvalues and singular values that have been used in this thesis.
\subsection{Eigenvalue Decomposition}
\label{ss:EVD}
When a square matrix $\V{A}$ acts on vector $\V{x}$, it may change its magnitude or its direction or both. If only the magnitude of $\V{x}$ is changed by a factor $\lambda$, this can be described as
\begin{equation}
\label{eq:Eigen_value_problem}
\V{A}\V{x}=\lambda\V{x}
\end{equation}
where $\lambda$ is in general positive or negative complex scalar, so the direction of $\V{x}$ might be reversed in case $\lambda$ is negative. The special vectors which keep their direction unchanged (or possibly reversed) after being acted upon by matrix $\V{A}$ are known as the eigenvectors of $\V{A}$, and the special factors of magnitude change of the corresponding eigenvectors are known as the eigenvalues of $\V{A}$. On the other hand, if $\V{A}$ acts on a non-eigenvector $\V{x}$, the output vector $\V{Ax}$ points in a direction other than that of $\V{x}$ or $-\V{x}$.\\

"Eigen" is a German prefix that means "own" or "Characteristic". This indicates that eigenvectors and eigenvalues are of characteristic and unique nature to a matrix. An $M\!\times\!M$ matrix $\V{A}$ can have at most $M$ non-zero eigenvalues, and it can be decomposed as 
\begin{equation}
\label{eq:EVD}
\V{A}=\V{X\Lambda X}^{-1}
\end{equation}
where $\V{X}$ is an $M\!\times\!M$ matrix whose columns are the eigenvectors of $\V{A}$, and $\V{\Lambda}$ is a diagonal matrix whose diagonal contains the corresponding eigenvalues of $\V{A}$, namely $\lambda_i,\, i\in\{1,...,M\}$. Such a decomposition is known as Eigen-Value Decomposition (EVD) and is defined only for square matrices. Nevertheless, not all square matrices can be eigen-decomposed. Only matrices whose all eigenvectors are linearly independent can be eigen-decomposed, because otherwise matrix $\V{X}$ will be rank deficient, and therefore will not be invertible. It is obvious to see that for diagonal matrices, $\V{X}$ is an identity matrix, and therefore the elements on the main diagonal of $\V{A}$ are themselves the eigenvalues.\\

Using the properties of the determinant operator defined in Section \ref{s:Determinant}. The following theorem holds;
\begin{theorem}
\label{theo:det_EV}
Let $\V{A}$ be an $M\!\times\!M$ matrix which is eigen-decomposable. Then the determinant of $\V{A}$ is the product of its eigenvalues.
\end{theorem}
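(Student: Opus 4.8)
The plan is to reduce the statement directly to the eigenvalue decomposition together with the multiplicative property of the determinant, both of which are already available in the excerpt. Since $\V{A}$ is assumed eigen-decomposable, I would first invoke (\ref{eq:EVD}) to write $\V{A}=\V{X}\V{\Lambda}\V{X}^{-1}$, where $\V{X}$ collects the (linearly independent) eigenvectors and $\V{\Lambda}=\diag(\lambda_1,\dots,\lambda_M)$ holds the eigenvalues. The key point I would flag at this stage is that eigen-decomposability guarantees $\V{X}$ is of full rank and hence invertible, so $\det(\V{X})\neq 0$; this is exactly what makes the cancellation in the next step legitimate.

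Next I would apply the determinant to both sides and use the product rule $\det(\V{A}\V{B})=\det(\V{A})\det(\V{B})$ from Section \ref{s:Determinant}, extended to three factors, to obtain
\begin{equation*}
\det(\V{A})=\det(\V{X}\V{\Lambda}\V{X}^{-1})=\det(\V{X})\,\det(\V{\Lambda})\,\det(\V{X}^{-1}).
\end{equation*}
Then, using $\det(\V{X}^{-1})=(\det(\V{X}))^{-1}$ from the same section, the factors $\det(\V{X})$ and $\det(\V{X}^{-1})$ cancel (here $\det(\V{X})\neq 0$ is used), leaving $\det(\V{A})=\det(\V{\Lambda})$.

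Finally I would invoke the fact, also recorded in Section \ref{s:Determinant}, that the determinant of a diagonal matrix equals the product of its diagonal entries. Applying this to $\V{\Lambda}$ gives
\begin{equation*}
\det(\V{A})=\det(\V{\Lambda})=\prod_{i=1}^{M}\lambda_i,
\end{equation*}
which is the claimed identity. Honestly, there is no real obstacle in this argument; it is a short chain of previously established determinant properties. The only subtlety worth stating explicitly is the invertibility of $\V{X}$, since the proof silently relies on $\det(\V{X})$ being a nonzero scalar when cancelling it against $\det(\V{X}^{-1})$, and this is precisely guaranteed by the eigen-decomposability hypothesis.
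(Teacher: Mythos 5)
Your proof is correct and follows exactly the same route as the paper: apply the eigenvalue decomposition $\V{A}=\V{X}\V{\Lambda}\V{X}^{-1}$, use the multiplicativity of the determinant together with $\det(\V{X}^{-1})=(\det(\V{X}))^{-1}$ to cancel the eigenvector factors, and read off $\det(\V{\Lambda})$ as the product of the diagonal entries. Your explicit remark that eigen-decomposability guarantees $\det(\V{X})\neq 0$ is a small clarification the paper leaves implicit, but otherwise the arguments coincide.
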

\begin{proof}
\begin{equation}
 \begin{aligned}
  \V{A}&=\V{X\Lambda X}^{-1}\\
\det(\V{A})&=\det(\V{X\Lambda X}^{-1})=\det(\V{X})\det(\V{\Lambda})\det(\V{X}^{-1})\\
&=\det(\V{X})\det(\V{\Lambda})\frac{1}{\det(\V{X})}=\det(\V{\Lambda})=\prod\limits_{i=1}^{M}{\lambda_i}
 \end{aligned}
\label{eq:det_properties}
\end{equation}
\end{proof}
with $\lambda_i$ being the $i^{\text{th}}$ eigenvalue of $\V{A}$.\\

Using the properties of the trace operator defined in Section \ref{s:trace_properties}. The following theorem holds;
\begin{theorem}
Let $\V{A}$ be an $M\!\times\!M$ matrix which is eigen-decomposable. Then the trace of $\V{A}$ is the sum of its eigenvalues.
\end{theorem}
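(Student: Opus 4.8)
The plan is to mirror the proof of the determinant theorem (Theorem~\ref{theo:det_EV}) that immediately precedes this statement, exploiting the cyclic-permutation invariance of the trace established in~(\ref{eq:tr_permutation}). First I would invoke the eigenvalue decomposition hypothesis to write $\V{A}=\V{X}\V{\Lambda}\V{X}^{-1}$, where $\V{X}$ collects the (linearly independent) eigenvectors and $\V{\Lambda}=\diag(\lambda_1,\dots,\lambda_M)$ holds the eigenvalues on its main diagonal.

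Next I would apply the trace operator to both sides and use the cyclic permutation property. The key step is
\begin{equation*}
\tr(\V{A})=\tr(\V{X}\V{\Lambda}\V{X}^{-1})=\tr(\V{X}^{-1}\V{X}\V{\Lambda})=\tr(\V{\Lambda}),
\end{equation*}
where the middle equality is exactly the cyclic invariance in~(\ref{eq:tr_permutation}) applied with the identification $\V{X}\leftrightarrow\V{A}$, $\V{\Lambda}\leftrightarrow\V{B}$, $\V{X}^{-1}\leftrightarrow\V{C}$, and the last equality uses $\V{X}^{-1}\V{X}=\V{I}_M$.

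Finally, since $\V{\Lambda}$ is diagonal with entries $\lambda_i$, the definition of the trace gives $\tr(\V{\Lambda})=\sum_{i=1}^{M}{\lambda_i}$, which completes the argument. I would therefore present the whole thing as the short chain
\begin{equation*}
\tr(\V{A})=\tr(\V{X}\V{\Lambda}\V{X}^{-1})=\tr(\V{\Lambda}\V{X}^{-1}\V{X})=\tr(\V{\Lambda})=\sum\limits_{i=1}^{M}{\lambda_i},
\end{equation*}
with $\lambda_i$ the $i^{\text{th}}$ eigenvalue of $\V{A}$.

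There is essentially no hard obstacle here: the decomposability hypothesis guarantees $\V{X}$ is invertible (as already noted in the surrounding text), so $\V{X}^{-1}\V{X}=\V{I}_M$ is legitimate, and every remaining ingredient—the cyclic permutation identity and the diagonal-trace formula—has been stated earlier in the appendix. The only point worth being careful about is making sure the cyclic rotation is applied in a valid order (rotating $\V{X}^{-1}$ to the front, or equivalently $\V{X}$ to the back), which is permitted because all three factors are square and conformable.
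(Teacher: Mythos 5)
Your proposal is correct and matches the paper's own proof essentially verbatim: both write $\V{A}=\V{X\Lambda X}^{-1}$, apply the cyclic-permutation invariance of the trace from~(\ref{eq:tr_permutation}) to obtain $\tr(\V{A})=\tr(\V{\Lambda}\V{X}^{-1}\V{X})=\tr(\V{\Lambda})$, and conclude with $\tr(\V{\Lambda})=\sum_{i=1}^{M}\lambda_i$. There is nothing to add or correct.
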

\begin{proof}
\begin{eqnarray*}
\V{A}&=&\V{X\Lambda X}^{-1}\\
\tr(\V{A})&=&\tr(\V{X\Lambda X}^{-1})=\tr(\V{\Lambda}\V{X}^{-1}\V{X})=\tr(\V{\Lambda})=\sum\limits_{i=1}^{M}{\lambda_i}
\end{eqnarray*}
\end{proof}
The next theorem relates the eigenvalues of a matrix to its rank. 
\begin{theorem}
\label{theo:rank_EV}
The rank of a matrix is the number of its non-zero eigenvalues.
\end{theorem}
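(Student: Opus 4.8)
The plan is to leverage the eigenvalue decomposition together with the rank-invariance property established in Section~\ref{rank}. Following the convention of the two preceding theorems in this section, I would carry out the argument under the assumption that $\V{A}$ is eigen-decomposable, so that by (\ref{eq:EVD}) I can write $\V{A}=\V{X}\V{\Lambda}\V{X}^{-1}$, where $\V{X}$ collects the linearly independent eigenvectors as its columns and $\V{\Lambda}$ is the diagonal matrix of eigenvalues $\lambda_i$, $i\in\{1,\dots,M\}$.

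First I would observe that, since the eigenvectors are linearly independent, $\V{X}$ is a full-rank $M\!\times\!M$ matrix, and hence so is $\V{X}^{-1}$. This is precisely the ingredient needed to apply the rank-invariance property (\ref{eq:rank_unitary}), which states that multiplying an arbitrary matrix from either side by a full-rank square matrix leaves its rank unchanged. Applying this twice gives the chain $\rank(\V{A})=\rank(\V{X}\V{\Lambda}\V{X}^{-1})=\rank(\V{\Lambda})$.

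Next I would invoke the fact already recorded in Section~\ref{rank} that the rank of a diagonal matrix equals the number of non-zero entries on its main diagonal. Since the diagonal entries of $\V{\Lambda}$ are exactly the eigenvalues of $\V{A}$, this immediately identifies $\rank(\V{\Lambda})$ with the number of non-zero eigenvalues, which closes the argument.

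The main subtlety --- rather than a genuine obstacle --- lies in the diagonalizability assumption: the step $\rank(\V{A})=\rank(\V{\Lambda})$ relies entirely on $\V{X}$ being invertible, which breaks down for defective (non-diagonalizable) matrices, where the number of non-zero eigenvalues may strictly differ from the rank. Since the matrices to which this theorem is actually applied in the thesis --- the squared distance matrices $\V{D}\hr\V{D}$, which are Hermitian positive semi-definite and therefore always unitarily diagonalizable --- always satisfy this hypothesis, the restriction is harmless here, and I would simply state it explicitly at the outset of the proof to keep the statement precise.
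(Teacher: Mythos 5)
Your proof is correct and follows essentially the same route as the paper's: write $\V{A}=\V{X}\V{\Lambda}\V{X}^{-1}$, use the invertibility of $\V{X}$ together with the rank-invariance property (\ref{eq:rank_unitary}) to conclude $\rank(\V{A})=\rank(\V{\Lambda})$, and count the non-zero diagonal entries. Your explicit flagging of the diagonalizability hypothesis is a welcome precision that the paper leaves implicit, but it does not change the argument.
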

\begin{proof}
Since both $\V{X}$ and $\V{X}^{-1}$ are full rank square matrices by the definition of EVD. Then it follows from (\ref{eq:rank_unitary}) that
\begin{equation}
\rank(\V{A})=\rank(\V{X\Lambda X}^{-1})=\rank(\V{\Lambda})
\end{equation}
Since the rank of $\V{\Lambda}$ is the number of non-zero elements (eigenvalues) on its main diagonal, therefore it follows that the rank of a matrix $\V{A}$ is the number of its non-zero eigenvalues.
\end{proof}
\subsection{Singular Value Decomposition}
\label{ss:SVD}
In this subsection, another useful matrix decomposition is defined. For this we need to first define the unitary property of matrices.
\begin{definition}
A square $M\!\times\!M$ matrix $\V{A}$ is said to be unitary if its inverse is the same as its conjugate transpose, namely
\begin{equation*}
\V{A}^{-1}=\V{A}\hr.
\end{equation*} Therefore,
\begin{equation*} 
\V{U}\hr\V{U}=\V{UU}\hr=\V{U}^{-1}\V{U}=\V{I}_{M}.
\end{equation*}
\end{definition}
Singular Value Decomposition (SVD) is a matrix factorization which applies not only for square matrices as EVD but also for rectangular matrices. Let $\V{A}$ be $\in\mathbb{C}^{M\!\times\!N}$, SVD is defined as
\begin{equation}
\V{A}=\V{U\Sigma V}\hr,
\label{eq:SVD}
\end{equation}
where $\V{U}\in\mathbb{C}^{M\!\times\!M}$, $\V{V}\in\mathbb{C}^{N\!\times\!N}$ and both are unitary matrices. The columns of $\V{U}$ are orthonormal basis vectors known as left-singular vectors, where as the columns of $\V{V}$ are  orthonormal basis vectors known as right-singular vectors. $\V{\Sigma}$ is an $M\!\times\!N$ diagonal matrix whose diagonal entries are known as the singular values of $\V{A}$, and they are non-negative real numbers. Matrix $\V{A}$ has at most $\min(M,N)$ singular values, which are denoted as $\sigma_i,\:i\in\{1,...,\min(M,N)\}$. Unlike EVD, any matrix can be singular-decomposed.\\

There exist some relationships between the singular values and the eigenvalues of matrices. Here only one relation will be shown in the next theorem. However, before describing such a relation, it is useful to first define the notion of \emph{similar} matrices as follows
\begin{definition}
Two $M\!\times\! M$ matrices $\V{A}$ and $\V{B}$ are said to be similar if
\begin{equation*}
\V{B}=\V{QA}\V{Q}^{-1}
\end{equation*}
for some invertible $M\!\times\!M$ matrix $\V{Q}$. Similar matrices have the same rank, determinant, trace and same eigenvalues.
\end{definition}
This can be easily verified using respectively the rank property in (\ref{eq:rank_unitary}), the first two determinant properties in (\ref{eq:det_properties}), the trace property in (\ref{eq:tr_permutation}), and the definition of the EVD, namely
\begin{eqnarray*}
\V{A}&=&\V{X\Lambda X}^{-1}\\
\V{B}&=&\V{QA}\V{Q}^{-1}=\underbrace{\V{QX}}_{\V{X}'}\V{\Lambda}\underbrace{\V{X}^{-1}\V{Q}^{-1}}_{\V{X}'^{-1}}.
\end{eqnarray*}
Now it is possible to define the desired relation between eigenvalues and singular values as shown in the following theorem.
\begin{theorem}
\label{theo:EV_SV2}
The eigenvalues of matrix $\V{A}\hr\V{A}$ or $\V{AA}\hr$ are the square of the singular values of matrix $\V{A}$
\end{theorem}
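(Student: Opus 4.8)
The plan is to invoke the Singular Value Decomposition of $\V{A}$ and then reduce the claim to the already-established fact that similar matrices share the same eigenvalues. First I would write $\V{A}=\V{U\Sigma V}\hr$ exactly as in (\ref{eq:SVD}), where $\V{U}$ and $\V{V}$ are unitary and $\V{\Sigma}$ is the (possibly rectangular) diagonal matrix carrying the singular values $\sigma_i$ of $\V{A}$ on its main diagonal. Since any matrix admits such a decomposition, no restriction on $\V{A}$ is required.

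Next I would form $\V{A}\hr\V{A}$ directly. Using $(\V{U\Sigma V}\hr)\hr=\V{V}\V{\Sigma}\hr\V{U}\hr$ together with the unitary property $\V{U}\hr\V{U}=\V{I}_M$ from the definition given just above, the product collapses to
\[
\V{A}\hr\V{A}=\V{V}\V{\Sigma}\hr\V{U}\hr\V{U}\V{\Sigma}\V{V}\hr=\V{V}(\V{\Sigma}\hr\V{\Sigma})\V{V}\hr=\V{V}(\V{\Sigma}\hr\V{\Sigma})\V{V}^{-1},
\]
where the last equality uses $\V{V}\hr=\V{V}^{-1}$. This exhibits $\V{A}\hr\V{A}$ as a matrix similar to $\V{\Sigma}\hr\V{\Sigma}$ (taking $\V{Q}=\V{V}$ in the definition of similarity). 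Because similar matrices have the same eigenvalues, the eigenvalues of $\V{A}\hr\V{A}$ are precisely the diagonal entries of $\V{\Sigma}\hr\V{\Sigma}$; and since $\V{\Sigma}$ is diagonal with entries $\sigma_i$, those diagonal entries are exactly $\sigma_i^2$. This yields $\lambda_i(\V{A}\hr\V{A})=\sigma_i^2(\V{A})$, as desired.

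The $\V{A}\V{A}\hr$ case is handled symmetrically: the same substitution gives $\V{A}\V{A}\hr=\V{U}(\V{\Sigma}\V{\Sigma}\hr)\V{U}^{-1}$, so $\V{A}\V{A}\hr$ is similar to the diagonal matrix $\V{\Sigma}\V{\Sigma}\hr$, whose nonzero diagonal entries are again the $\sigma_i^2$. I expect no real obstacle in this argument; the only point demanding care is the rectangular bookkeeping. When $M\neq N$, the two products $\V{\Sigma}\hr\V{\Sigma}$ and $\V{\Sigma}\V{\Sigma}\hr$ have different sizes and differ only by a block of zero eigenvalues, so I would explicitly note that the two matrices share the \emph{same} set of nonzero eigenvalues $\sigma_i^2$ and are padded with zeros as dictated by their dimensions, consistent with the rank relations used elsewhere in this appendix.
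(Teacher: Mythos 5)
Your proposal is correct and follows essentially the same route as the paper's own proof: apply the SVD $\V{A}=\V{U\Sigma V}\hr$, collapse $\V{A}\hr\V{A}$ to $\V{V}(\V{\Sigma}\hr\V{\Sigma})\V{V}^{-1}$, and invoke the fact that similar matrices share eigenvalues. Your extra remark on the rectangular case (the zero-padding difference between $\V{\Sigma}\hr\V{\Sigma}$ and $\V{\Sigma}\V{\Sigma}\hr$) is a small refinement beyond the paper, which states its proof for square $\V{A}$, but it does not change the argument.
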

\begin{proof}
Let $\V{A}$ be an $M\!\times\!M$ matrix
\begin{eqnarray*}
\V{A}&=&\V{U\Sigma V}\hr\\
\V{A}\hr\V{A}&=&\V{V\Sigma}\hr\underbrace{\V{U}\hr\V{U}}_{\V{I}_M}\V{\Sigma V}\hr\\
&=&\V{V}(\V{\Sigma}\hr\V{\Sigma})\V{V}\hr\\
&=&\V{V}(\V{\Sigma}\hr\V{\Sigma})\V{V}^{-1}
\end{eqnarray*}
where $\V{V}\hr=\V{V}^{-1}$ since $\V{V}$ is unitary. In the last line we see that matrices $\V{A}\hr\V{A}$ and $\V{\Sigma}\hr\V{\Sigma}$ are \emph{similar} matrices, and therefore they have the same eigenvalues, namely
\begin{equation*}
\lambda_m(\V{A}\hr\V{A})=\lambda_m(\V{\Sigma}\hr\V{\Sigma})\:\:\:\forall\: m\in\{1,...,M\}
\end{equation*}
Furthermore, since $\V{\Sigma}\hr\V{\Sigma}$ is a diagonal matrix, therefore its eigenvalues are the elements on the main diagonal which are the square of the singular values of $\V{A}$. This concludes the theorem that the eigenvalues of $\V{A}\hr\V{A}$ are the same as the square of the singular values of $\V{A}$, i.e.
\begin{equation}
\label{eq:EV_SV2}
\lambda_m(\V{A}\hr\V{A})=\sigma_m^2(\V{A})\:\:\:\forall\: m\in\{1,...,M\}
\end{equation}
A similar proof holds for $\V{AA}\hr$.
\end{proof}

In the following theorem, the relation between the rank of a matrix and its singular values is shown.
\begin{theorem}
\label{theo:rank_SV}
The rank of matrix $\V{A}$ is the number of non-zero singular values of $\V{A}$.
\end{theorem}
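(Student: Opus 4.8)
The plan is to mirror exactly the argument already used for the eigenvalue analogue in Theorem \ref{theo:rank_EV}, but with the Singular Value Decomposition playing the role of the EVD. The decisive advantage of SVD here is that it applies to \emph{any} matrix $\V{A}\in\mathbb{C}^{M\!\times\!N}$, not just square ones, so the resulting statement holds in full generality. First I would invoke the SVD in (\ref{eq:SVD}), writing $\V{A}=\V{U}\V{\Sigma}\V{V}\hr$, where $\V{U}$ and $\V{V}$ are unitary and $\V{\Sigma}$ is the (generally rectangular) diagonal matrix carrying the singular values $\sigma_i$ on its main diagonal.

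The central step is to observe that both $\V{U}$ and $\V{V}\hr$ are square full-rank matrices. Indeed, a unitary matrix is square by definition, and it is invertible since its inverse is its conjugate transpose; hence it has full rank. This is the only point that genuinely needs to be checked, and it is the heart of the matter: once it is in hand, the rank-invariance property (\ref{eq:rank_unitary})---which guarantees that left- or right-multiplication by a full-rank square matrix leaves the rank unchanged---can be applied twice to give
\begin{equation*}
\rank(\V{A})=\rank(\V{U}\V{\Sigma}\V{V}\hr)=\rank(\V{\Sigma}).
\end{equation*}

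Finally I would close the argument by appealing to the fact, established in Section \ref{rank}, that the rank of a diagonal matrix equals the number of non-zero entries on its main diagonal. Since those entries are precisely the singular values of $\V{A}$, it follows that $\rank(\V{\Sigma})$, and therefore $\rank(\V{A})$, equals the number of non-zero singular values of $\V{A}$, which is the claim.

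I do not anticipate a serious obstacle: the proof is a three-line chain of reductions. The only subtlety worth stating explicitly---so that the reader sees why (\ref{eq:rank_unitary}) is applicable---is the verification that unitary matrices are full-rank and square; everything else is a direct quotation of results already proved in this appendix, and the structure is identical to the EVD-based Theorem \ref{theo:rank_EV}.
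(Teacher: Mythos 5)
Your proposal follows exactly the paper's own argument: apply the SVD, note that $\V{U}$ and $\V{V}\hr$ are unitary and hence full-rank square so that (\ref{eq:rank_unitary}) gives $\rank(\V{A})=\rank(\V{\Sigma})$, and then count the non-zero diagonal entries of $\V{\Sigma}$. The proof is correct and no gap remains.
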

\begin{proof}
Using the SVD of $\V{A}$,
\begin{eqnarray*}
\V{A}&=&\V{U\Sigma V}\hr\\
\rank(\V{A})&=&\rank(\V{U\Sigma V}\hr)
\end{eqnarray*}
Since both $\V{U}$ and $\V{V}\hr$ are unitary and therefore full rank square matrices, then it follows from (\ref{eq:rank_unitary}) that
\begin{equation*}
\rank(\V{A})=\rank(\V{\Sigma})
\end{equation*}
Since the rank of $\V{\Sigma}$ is the number of non-zero elements (singular values) on its main diagonal, therefore it follows that the rank of a matrix is the number of its non-zero singular values.
\end{proof}
Next is another theorem that uses the SVD to prove the rank of some matrix.
\begin{theorem}
\label{theo:rank_AA_A}
The rank of matrix $\V{A}\hr\V{A}$ is the same as the rank of matrix $\V{A}$, i.e.
\begin{equation}
\label{eq:rank_AA_A}
\rank(\V{A}\hr\V{A})=\rank(\V{A})
\end{equation}
\end{theorem}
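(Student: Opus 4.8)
The plan is to prove this via the singular value decomposition, exactly paralleling the argument already used for Theorem \ref{theo:rank_SV}. First I would write the SVD of $\V{A}$ as $\V{A}=\V{U\Sigma V}\hr$ with $\V{U}$ and $\V{V}$ unitary, and compute
\begin{equation*}
\V{A}\hr\V{A}=\V{V\Sigma}\hr\V{U}\hr\V{U\Sigma V}\hr=\V{V}(\V{\Sigma}\hr\V{\Sigma})\V{V}\hr,
\end{equation*}
where the middle cancellation uses $\V{U}\hr\V{U}=\V{I}$ from the definition of a unitary matrix. Since $\V{V}$ and $\V{V}\hr$ are full-rank square matrices, the rank-invariance property (\ref{eq:rank_unitary}) immediately gives $\rank(\V{A}\hr\V{A})=\rank(\V{\Sigma}\hr\V{\Sigma})$.

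The next step is to read off the rank of the diagonal matrix $\V{\Sigma}\hr\V{\Sigma}$. Its diagonal entries are precisely $\sigma_m^2(\V{A})$, so by the fact that the rank of a diagonal matrix equals the number of its non-zero diagonal entries (established in Section \ref{rank}), $\rank(\V{\Sigma}\hr\V{\Sigma})$ is the number of non-zero $\sigma_m^2$. Here I would note the one point that actually requires a remark: because the singular values are non-negative real numbers, $\sigma_m^2=0$ if and only if $\sigma_m=0$, so squaring neither creates nor destroys zeros. Hence the count of non-zero $\sigma_m^2$ equals the count of non-zero singular values of $\V{A}$.

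Finally I would close the chain by invoking Theorem \ref{theo:rank_SV}, which states that the number of non-zero singular values of $\V{A}$ is exactly $\rank(\V{A})$. Assembling the equalities yields
\begin{equation*}
\rank(\V{A}\hr\V{A})=\rank(\V{\Sigma}\hr\V{\Sigma})=\#\{m:\sigma_m(\V{A})\neq0\}=\rank(\V{A}),
\end{equation*}
which is the claim. The identical argument applied to $\V{AA}\hr=\V{U}(\V{\Sigma}\V{\Sigma}\hr)\V{U}\hr$ covers that variant as well. There is no genuine obstacle in this proof; the only subtlety worth flagging explicitly is the non-negativity of the singular values that makes the squaring step rank-preserving, since everything else follows mechanically from results already proven in this appendix.
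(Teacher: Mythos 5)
Your proof is correct and follows essentially the same route as the paper's: both use the SVD to reduce $\rank(\V{A}\hr\V{A})$ to $\rank(\V{\Sigma}\hr\V{\Sigma})$ via unitary invariance of rank, count the non-zero diagonal entries, and close the argument with Theorem \ref{theo:rank_SV}. Your explicit remark that squaring preserves zeros because singular values are non-negative is a small clarification the paper leaves implicit, but it does not change the structure of the argument.
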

\begin{proof}
if $\V{A}$ has the SVD as $\V{A}=\V{U\Sigma V}\hr$, then 
\begin{eqnarray*}
\V{A}&=&\V{U\Sigma V}\hr\\
\V{A}\hr\V{A}&=&\V{V\Sigma}\hr\underbrace{\V{U}\hr\V{U}}_{\V{I}_M}\V{\Sigma V}\hr\\
&=&\V{V}(\V{\Sigma}\hr\V{\Sigma})\V{V}\hr\\
&=&\V{V}(\V{\Sigma}\hr\V{\Sigma})\V{V}^{-1}
\end{eqnarray*}
where $\V{V}\hr=\V{V}^{-1}$ since $\V{V}$ is unitary. In the last line we see that matrices $\V{A}\hr\V{A}$ and $\V{\Sigma}\hr\V{\Sigma}$ are \emph{similar} matrices, and therefore they have the same rank. Now, the rank of $\V{\Sigma}\hr\V{\Sigma}$ is the number of non-zero square singular values of $\V{A}$ which is the same as the number of non-zero singular values of $\V{A}$ which is from theorem \ref{theo:rank_SV} the rank of $\V{A}$. Therefore the rank of $\V{A}\hr\V{A}$ equals the rank of $\V{A}$.
\end{proof}


\abbrev{EVD}{Eigen-Value Decomposition}
\abbrev{SVD}{Singular Value Decomposition}

\chapter{Diversity Proofs}
\label{App:Lengthy_Proofs}
\section{Full Diversity of OSTBCs}
\label{s:full_diversity_OSTBCs}
Consider an OSTBC defined by the dispersive form
\begin{equation}
\label{eq:dispersive_form2_again}
 \V{V}=
\frac{1}{\sqrt{p}}\sum\limits_{i=1}^{K}{\V{U}_ix_i^R+j\V{Q}_ix_i^I}
\end{equation}
Using the properties of dispersion matrices for OSTBCs
\begin{eqnarray*}
&\V{U}_i\hr\V{U}_i=\V{I}_M, \: \V{Q}_i\hr\V{Q}_i=\V{I}_M & 1\leq i\leq K\\
&\V{U}_i\hr\V{U}_d=-\V{U}_d\hr\V{U}_i, \:\: \V{Q}_i\hr\V{Q}_d=-\V{Q}_d\hr\V{Q}_i &  1\leq u\neq i\leq K\\
&\V{U}_i\hr\V{Q}_d=\V{Q}_d\hr\V{U}_i & 1\leq u,i\leq  K,
\end{eqnarray*}
any OSTBC can be proved to achieve full diversity. As defined in Section \ref{s:Design_Criteria}, a STC achieves full diversity if the distance matrix $\V{D}_{ll'}$ or equivalently the squared distance matrix $\V{D}_{ll'}\hr\V{D}_{ll'}$ between any two code matrices $\V{V}_{l}$ and $\V{V}_{l'}\:\forall\:l\neq l'\in\{0,...,L\}$ is of full rank. 
Using the dispersive form of the code matrix $\V{V}$ in (\ref{eq:dispersive_form2_again}), the distance matrix $\V{D}_{ll'}$ can be written as 
\begin{equation*}
 \V{D}_{ll'}=\frac{1}{\sqrt{p}}\sum\limits_{i=1}^{K}{\V{U}_i(x_{i,l}^R-x_{i,l'}^R)+j\V{Q}_i(x_{i,l}^I-x_{i,l'}^I)},
\end{equation*}
where code $\V{V}_l$ carries symbols $x_{i,l}=x_{i,l}^R+jx_{i,l}^I$, and $\V{V}_{l'}$ carries symbols $x_{i,l'}=x_{i,l'}^R+jx_{i,l'}^I$
and the squared distance matrix $\V{D}_{ll'}\hr\V{D}_{ll'}$ is
\begin{eqnarray}
  \V{D}_{ll'}\hr\V{D}_{ll'}&=&\frac{1}{p}\bigg[\sum\limits_{i=1}^{K}{\V{U}_i\hr\underbrace{(x_{i,l}^R-x_{i,l'}^R)}_{\Delta x_i^R}-j\V{Q}_i\hr\underbrace{(x_{i,l}^I-x_{i,l'}^I)}_{\Delta x_i^I} }\bigg]\bigg[\sum\limits_{d=1}^{K}{\V{U}_d\underbrace{(x_{d,l}^R-x_{d,l'}^R)}_{\Delta x_d^R}+j\V{Q}_d\underbrace{(x_{d,l}^I-x_{d,l'}^I)}_{\Delta x_d^I}}\bigg]\nonumber\\
&=&\frac{1}{p}\bigg[\sum\limits_{i=1}^{K}\underbrace{\V{U}_i\hr\Delta x_i^R-j\V{Q}_i\hr\Delta x_i^I}_{\text{first block}}\bigg]\bigg[\sum\limits_{d=1}^{K}\underbrace{\V{U}_d\Delta x_d^R+j\V{Q}_d\Delta x_d^I}_{\text{second block}}\bigg]
\label{eq:V_hr_V_disp}
\end{eqnarray}
The resulting terms of the above multiplication can be divided into same-index terms and different-index terms. Same index terms result when both blocks are at the same index, let it be $i$, and will have the form
\begin{eqnarray}
\label{eq:same_index_terms}
 &\frac{1}{p}\bigg[\V{U}_i\hr\V{U}_i(\Delta x_i^R)^2+\V{Q}_i\hr\V{Q}_i(\Delta x_i^I)^2 +\Delta x_i^R\Delta x_i^I\underbrace{(-j\V{Q}_i\hr\V{U}_i+j\V{U}_i\hr\V{Q}_i)}_{\text{cancel out}}\bigg]\nonumber\\
&=\frac{((\Delta x_i^R)^2+(\Delta x_i^I)^2)}{p}\V{I}_{M}.
\end{eqnarray}
For different-index terms, consider together the addition of the term resulting from the multiplication of the first block at index $i$ with the second block at index $d$, and the term resulting from the multiplication of the first block at index $d$ with the second block at index $i$, namely
\begin{eqnarray}
\label{eq:different_index_terms}
&\frac{1}{p}\bigg[(\V{U}_i\hr\Delta x_i^R-j\V{Q}_i\hr\Delta x_i^I)(\V{U}_d\Delta x_d^R+j\V{Q}_d\Delta x_d^I)+(\V{U}_d\hr\Delta x_d^R-j\V{Q}_d\hr\Delta x_d^I)(\V{U}_i\Delta x_i^R+j\V{Q}_i\Delta x_i^I)\bigg]\nonumber\\
&=\frac{1}{p}\bigg[\underbrace{(\V{U}_i\hr\V{U}_d+\V{U}_d\hr\V{U}_i)}_{\text{cancel out}}\Delta x_i^R\Delta x_d^R+\underbrace{(\V{Q}_i\hr\V{Q}_d+\V{Q}_d\hr\V{Q}_i)}_{\text{cancel out}}\Delta x_i^I\Delta x_d^I\nonumber\\
&+\Delta x_d^R\Delta x_i^I(\underbrace{-j\V{Q}_i\hr\V{U}_d+j\V{U}_d\hr\V{Q}_i}_{\text{cancel out}})+\Delta x_i^R\Delta x_d^I(\underbrace{j\V{U}_i\hr\V{Q}_d-j\V{Q}_d\hr\V{U}_i}_{\text{cancel out}})\bigg]
\end{eqnarray}
Therefore, all different-index terms cancel out, and matrix $\V{D}_{ll'}\hr\V{D}_{ll'}$ becomes
\begin{equation}
\label{eq:Dll_dash_OSTBC}
 \V{D}_{ll'}\hr\V{D}_{ll'}=\frac{\bigg(\sum\limits_{i=1}^{K}{((\Delta x_i^R)^2+(\Delta x_i^I)^2)}\bigg)}{p}\V{I}_{M}.
\end{equation}\\

Since code matrices $\V{V}_l$ and $\V{V}_{l'}$ are different, it follows that at lease one symbol differs making the summation in (\ref{eq:Dll_dash_OSTBC}) never zero. Matrix  $\V{D}_{ll'}\hr\V{D}_{ll'}$ is therefore a full rank matrix for any arbitrary complex constellation, which proves that any OSTBC unconditionally achieves full diversity.
\section{Properties of MDC-QOSTBCs}
\label{s:Properties_of_MDC-QOSTBCs}
In this section we derive some properties of the MDC-QOSTBC used in Chapter \ref{chap:QOSTBCs}. We first derive an expression for $\V{V}\hr\V{V}$, with $\V{V}$ being the code matrix of an $(M,K,T)$ MDC-QOSTBC. Then we conclude an expression for the minimum determinant of the distance matrix which governs the diversity order of the code. \\

Since both OSTBCs and any QOSTBC that satisfies the MDC-QO constraints in (\ref{eq:MDC_QOSTBC_constraints})-- will  be referred to as general MDC-QOSTBC -- share these constraints for unequal indices $i$ and $d$, we will refer to the previous section in deriving the form of $\V{V}\hr\V{V}$. Note that although in the previous section the derivation was done for the distance matrix $\V{D}_{ll'}$, the equations still have the same form if they were instead derived  for the code matrix $\V{V}$. The only difference is in replacing the difference symbols $\Delta x_i$ by $x_i$. \\

By grasping the same form of (\ref{eq:V_hr_V_disp}), we can write $\V{V}\hr\V{V}$ as 
\begin{equation}
\label{eq:V_hr_V_disp_QO}
  \V{V}\hr\V{V}=\frac{1}{p}\bigg[\sum\limits_{i=1}^{K}\underbrace{\V{U}_i\hr x_i^R-j\V{Q}_i\hr x_i^I}_{\text{first block}}\bigg]\bigg[\sum\limits_{d=1}^{K}\underbrace{\V{U}_d x_d^R+j\V{Q}_d x_d^I}_{\text{second block}}\bigg]\nonumber\\
\end{equation}
If we do the same analysis for same-index terms and different-index terms as done in the previous section, we conclude the following. For different-index terms, all OSTBC properties used in (\ref{eq:different_index_terms}) still hold for a general MDC-QOSTBC, therefore the different-index terms ($i\neq d$) cancel out. However, for the same-index terms, no specific properties are defined for a general MDC-QOSTBC and therefore, all we can say is that their code matrices satisfy the following
\begin{equation}
\label{eq:V_hr_V_MDC_general}
 \V{V}\hr\V{V}=\frac{1}{p}\sum\limits_{i=1}^{K}{\V{U}_i\hr\V{U}_i( x_i^R)^2+\V{Q}_i\hr\V{Q}_i( x_i^I)^2 + x_i^R x_i^I(-j\V{Q}_i\hr\V{U}_i+j\V{U}_i\hr\V{Q}_i)}.
\end{equation}

On the other hand, for the used MDC-QOSTBC based on the code construction in (\ref{eq:MDC_construction}), it can be verified that these codes possess the following two additional properties
\begin{equation}
\begin{aligned}
 &\text{(iv)}\,\,\V{U}_i\hr\V{U}_i=\V{I}_M,\,\V{Q}_i\hr\V{Q}_i=\V{I}_M \hspace{1cm}1\leq i\leq K\\
 &\text{(v)}\,\,\V{U}_i\hr\V{Q}_i=-\V{Q}_i\hr\V{U}_i=\V{Q}_{i+\frac{K}{2}}\hr\V{U}_{i+\frac{K}{2}}=-\V{U}_{i+\frac{K}{2}}\hr\V{Q}_{i+\frac{K}{2}}=j\begin{bmatrix}\V{0}&\V{I}_{\frac{M}{2}}\\\V{I}_{\frac{M}{2}}&\V{0} \end{bmatrix} \hspace{0.2cm}1\leq i\leq\frac{K}{2}
\end{aligned}.
\end{equation}
Using a similar proof like in (\ref{eq:p_eq_K}), it is easy to see that here also $p=K$ holds. Now using the additional two properties back in (\ref{eq:V_hr_V_MDC_general}), we get

\begin{eqnarray}
\label{eq:V_hr_V_MDC}
 \V{V}\hr\V{V}&=&\frac{1}{K}\sum\limits_{i=1}^{K}{|x_i|^2\V{I}_{M} + 2j\,x_i^R x_i^I\V{U}_i\hr\V{Q}_i}\nonumber\\
&=&\frac{1}{K}\big[\underbrace{\sum\limits_{i=1}^{K}{|x_i|^2}}_{\alpha}\V{I}_{M} + \underbrace{2\sum\limits_{i=1}^{\frac{K}{2}}{-x_i^R x_i^I+x_{i+\frac{K}{2}}^R x^I_{i+\frac{K}{2}}}}_{\beta}\begin{bmatrix}\V{0}&\V{I}_{\frac{M}{2}}\\\V{I}_{\frac{M}{2}}&\V{0} \end{bmatrix}\Big]\nonumber\\
&=&\frac{\alpha}{K}\V{I}_M+\frac{\beta}{K}\begin{bmatrix}\V{0}&\V{I}_{\frac{M}{2}}\\\V{I}_{\frac{M}{2}}&\V{0} \end{bmatrix}
\end{eqnarray}
where $\alpha\eq\sum\limits_{i=1}^{K}{|x_i|^2}$, and $\beta\eq2\sum\limits_{i=1}^{\frac{K}{2}}{-x_i^R x_i^I+x_{i+\frac{K}{2}}^R x^I_{i+\frac{K}{2}}}$. \\

Next, we derive the full diversity condition for MDC-QOSTBCs. To study diversity, one needs to evaluate the minimum determinant of the distance matrix $\V{D}$. Similar to the code matrix which satisfies (\ref{eq:V_hr_V_MDC}), the distance matrix satisfies
\begin{equation}
\label{eq:D_hr_D_MDC_QOSTBC}
\V{D}\hr\V{D}=\frac{\alpha'}{K}\V{I}_M+\frac{\beta'}{K}\begin{bmatrix}\V{0}&\V{I}_{\frac{M}{2}}\\\V{I}_{\frac{M}{2}}&\V{0} \end{bmatrix}
\end{equation}
where 
\begin{equation}
\label{eq:alpha_beta_dash}
\alpha'=\sum\limits_{i=1}^{K}{|\Delta x_i|^2}, \,\,\beta'\eq2\sum\limits_{i=1}^{\frac{K}{2}}{-\Delta x_i^R \Delta x_i^I+\Delta x_{i+\frac{K}{2}}^R \Delta x^I_{i+\frac{K}{2}}},
 \end{equation}
with $\Delta x_i=\Delta x_i^R+\Delta x_i^I$ is the difference between the symbols at index $i$ in the code matrix. The determinant of $\V{D}\hr\V{D}$ can be derived from (\ref{eq:D_hr_D_MDC_QOSTBC}) to be \cite{Yuen_MDC_QOSTBC_2005_V2}
\begin{equation}
 \det(\V{D}\hr\V{D})\Big|_{\text{min}}=\frac{1}{K}[(\alpha' + \beta')(\alpha' - \beta')]^{\frac{M}{2}}.
\end{equation}
The minimum determinant occurs when the distance matrix is as sparse as possible which is the case when only one symbol is different between the pair of code matrices considered. This results in dropping the summation and index $i$ in (\ref{eq:alpha_beta_dash}) making
\begin{equation*}
 \begin{aligned}
  \alpha'+\beta'=(\Delta x^R)^2+(\Delta x^I)^2-2\Delta x^R\Delta x^I=(\Delta x^R-\Delta x^I)^2\\
  \alpha'-\beta'=(\Delta x^R)^2+(\Delta x^I)^2+2\Delta x^R\Delta x^I=(\Delta x^R+\Delta x^I)^2\\
 \end{aligned}
\end{equation*}
reducing the minimum determinant to
\begin{equation}
\label{eq:det_min_MDC_QOSTBC}
 \det(\V{D}\hr\V{D})\Big|_{\text{min}}=\frac{1}{K}[(\Delta x^R)^2-(\Delta x^I)^2]^M
\end{equation}
In order to achieve full diversity, (\ref{eq:det_min_MDC_QOSTBC}) should be made non-zero. In words, for MDC-QOSTBCs to achieve full diversity, the absolute difference between the real parts between any two points in the constellation should not be the same as the absolute difference between their imaginary parts. To achieve optimal coding gain, (\ref{eq:det_min_MDC_QOSTBC}) should be maximized.

\section{Diversity Order of Differential MDC-QOSTBCs}
\label{s:Diversity_proof_DQOSTBC}
In this section, it is aimed to derive the diversity order of the differential MDC-QOSTBCs used in section \ref{s:DQOSTBC}.  We first derive the actual information matrix from the information matrices of the equivalent subsystems. Then we get an expression for the minimum determinant of the distance matrix which is the difference between two possible actual information matrices. This minimum determinant governs the diversity order of the system and the conditions needed for full diversity. The following derivation holds for both combined and un-combined differential QOSTBCs proposed in Section \ref{ss:QOSTBC_full_diversity} and  \ref{ss:QOSTBC_half_diversity} when four transmit antennas are used.\\

When constructing the matrices of the equivalent subsystems, the combined scheme of Section \ref{ss:QOSTBC_full_diversity} takes linear combinations of symbols $v_1,...,v_4$, while the un-combined scheme in Section \ref{ss:QOSTBC_half_diversity} does not. So to make the following derivation valid for both schemes, we define symbols $c_1,c_2$ as the entries of the information matrix $\V{V}_{z_\tau}^{1\E}$ in the first equivalent subsystem, and $c_3,c_4$ as the entries of the information matrix $\V{V}_{z_\tau}^{2\E}$ in the second equivalent subsystem, i.e.
\begin{equation}
 \V{V}_{z_\tau}^{1\E}=\frac{1}{\sqrt{p_1}}\begin{bmatrix}c_1 & c_2 \\ -c_2^* & c_1^*\end{bmatrix}, \V{V}_{z_\tau}^{2\E}=\frac{1}{\sqrt{p_2}}\begin{bmatrix}c_3 & c_4 \\ -c_4^* & c_3^*\end{bmatrix}.
\end{equation}\\

In the un-combined scheme $c_i=v_i$ $\forall\:i=1,...,4$, whereas in the combined one
\begin{equation}
 c_1=v_1+v_3, \,c_2=v_2+v_4,\, c_3=v_1-v_3,\,c_4=v_2-v_4.
\end{equation}
The differential encoding equations of the two subsystems can be expanded as
\begin{equation*}
\begin{aligned}
\V{S}_\tau^{1\E}&=\frac{\V{V}_{z_\tau}^{1\E}\V{S}_{\tau-1}^{1\E}}{a_{\tau-1}^{1\E}}\\
\begin{bmatrix}(s_{1,\tau}+s_{3,\tau}) & (s_{2,\tau}+s_{4,\tau})\\-(s_{2,\tau}+s_{4,\tau})^* & (s_{1,\tau}+s_{3,\tau})^*\end{bmatrix}&=\frac{1}{\sqrt{p_1}\,a_{\tau-1}^{1\E}}\begin{bmatrix}c_1 & c_2 \\ -c_2^* & c_1^*\end{bmatrix}\begin{bmatrix}(s_{1,\tau-1}+s_{3,\tau-1}) & (s_{2,\tau-1}+s_{4,\tau-1})\\-(s_{2,\tau-1}+s_{4,\tau-1})^* & (s_{1,\tau-1}+s_{3,\tau-1})^*\end{bmatrix}
\end{aligned}
\end{equation*}
 and
\begin{equation*}
\begin{aligned}
\V{S}_\tau^{2\E}&=\frac{\V{V}_{z_\tau}^{2\E}\V{S}_{\tau-1}^{2\E}}{a_{\tau-1}^{2\E}}\\
\begin{bmatrix}(s_{1,\tau}-s_{3,\tau}) & (s_{2,\tau}-s_{4,\tau})\\-(s_{2,\tau}-s_{4,\tau})^* & (s_{1,\tau}-s_{3,\tau})^*\end{bmatrix}&=\frac{1}{\sqrt{p_2}\,a_{\tau-1}^{1\E}}\begin{bmatrix}c_3 & c_4 \\ -c_4^* & c_3^*\end{bmatrix}\begin{bmatrix}(s_{1,\tau-1}-s_{3,\tau-1}) & (s_{2,\tau-1}-s_{4,\tau-1})\\-(s_{2,\tau-1}-s_{4,\tau-1})^* & (s_{1,\tau-1}-s_{3,\tau-1})^*\end{bmatrix},
\end{aligned}
\end{equation*}
where $s_{i,\tau}$ for $i=1,...,4$ are the signals contained in the actual transmit matrix $\V{S}_\tau$. To get $s_{i,\tau}$ in terms of $s_{i,\tau-1}$ for $i=1,...,4$, consider the equations of the first row of $\V{S}_\tau^{1\E}$ and $\V{S}_\tau^{2\E}$
\begin{eqnarray}
 \label{eq:s_s_1_1}s_{1,\tau}+s_{3,\tau}&=\frac{1}{\sqrt{p_1}\,a_{\tau-1}^{1\E}}\left(c_1(s_{1,\tau-1}+s_{3,\tau-1})-c_2(s_{2,\tau-1}+s_{4,\tau-1})^*\right)\\
\label{eq:s_s_1_2}s_{1,\tau}-s_{3,\tau}&=\frac{1}{\sqrt{p_2}\,a_{\tau-1}^{2\E}}\left(c_3(s_{1,\tau-1}-s_{3,\tau-1})-c_4(s_{2,\tau-1}-s_{4,\tau-1})^*\right)\\
\label{eq:s_s_1_3}s_{2,\tau}+s_{4,\tau}&=\frac{1}{\sqrt{p_1}\,a_{\tau-1}^{1\E}}\left(c_1(s_{2,\tau-1}+s_{4,\tau-1})+c_2(s_{1,\tau-1}+s_{3,\tau-1})^*\right)\\
\label{eq:s_s_1_4}s_{2,\tau}-s_{4,\tau}&=\frac{1}{\sqrt{p_2}\,a_{\tau-1}^{2\E}}\left(c_3(s_{2,\tau-1}-s_{4,\tau-1})+c_4(s_{1,\tau-1}-s_{3,\tau-1})^*\right)
\end{eqnarray}
It has been proved that $p_1=p_2=p$ for both schemes. Adding and subtracting (\ref{eq:s_s_1_1}) and (\ref{eq:s_s_1_2}) as well as (\ref{eq:s_s_1_3}) and (\ref{eq:s_s_1_4}), one can get $s_{i,\tau}$ in terms of $s_{i,\tau-1}$, $i=1,...,4$ as follows
\begin{equation}
\begin{aligned}
 s_{1,\tau}&=\Bigg[\overbrace{\frac{1}{2\sqrt{p}}\left(\frac{c_1}{a_{\tau-1}^{1\E}}+\frac{c_3}{a_{\tau-1}^{2\E}}\right)}^{v_1'}s_{1,\tau-1}-\overbrace{\frac{1}{2\sqrt{p}}\left(\frac{c_2}{a_{\tau-1}^{1\E}}+\frac{c_4}{a_{\tau-1}^{2\E}}\right)}^{v_2'}s_{2,\tau-1}^*\\
&+\overbrace{\frac{1}{2\sqrt{p}}\left(\frac{c_1}{a_{\tau-1}^{1\E}}-\frac{c_3}{a_{\tau-1}^{2\E}}\right)}^{v_3'}s_{3,\tau-1}-\overbrace{\frac{1}{2\sqrt{p}}\left(\frac{c_2}{a_{\tau-1}^{1\E}}-\frac{c_4}{a_{\tau-1}^{2\E}}\right)}^{v_4'}s_{4,\tau-1}^*\Bigg]
\end{aligned},
\label{eq:actual_info_syms_derivation}
\end{equation}
i.e.
\begin{equation*}
 s_{1,\tau}=v_1's_{1,\tau-1}-v_2's_{2,\tau-1}^*+v_3's_{3,\tau-1}-v_4's_{4,\tau-1}^*.
\end{equation*}
Similarly, the following relations can as well be concluded;
\begin{eqnarray*}
 s_{2,\tau}&=&v_1's_{2,\tau-1}+v_2's_{1,\tau-1}^*+v_3's_{4,\tau-1}+v_4's_{3,\tau-1}^*\\
s_{3,\tau}&=&v_1's_{3,\tau-1}-v_2's_{4,\tau-1}^*+v_3's_{1,\tau-1}-v_4's_{2,\tau-1}^*\\
s_{4,\tau}&=&v_1's_{4,\tau-1}+v_2's_{3,\tau-1}^*+v_3's_{2,\tau-1}+v_4's_{1,\tau-1}^*.
\end{eqnarray*}
Using the above relations, the following can be verified
\begin{eqnarray}
 \left[\begin{array}{cc:cc}s_{1,\tau} & s_{2,\tau} & s_{3,\tau} & s_{4,\tau}\\
        -s_{2,\tau}^* & s_{1,\tau}^* & -s_{4,\tau}^* & s_{3,\tau}^*\\ \hdashline
	s_{3,\tau} & s_{4,\tau} & s_{1,\tau} & s_{2,\tau}\\ 
	-s_{4,\tau}^* & s_{3,\tau}^* & -s_{2,\tau}^* & s_{1,\tau}^* 
\end{array}\right]&\eq&
 \left[\begin{array}{cc:cc}v_1' & v_2' & v_3' & v_4'\\
        -v_2'^* & v_1'^* & -v_4'^* & v_3'^*\\ \hdashline
	v_3' & v_4' & v_1' &v_2'\\ 
	-v_4'^* & v_3'^* & -v_2'^* & v_1'^* 
\end{array}\right]
 \left[\begin{array}{cc:cc}s_{1,\tau-1} & s_{2,\tau-1} & s_{3,\tau-1} & s_{4,\tau-1}\\
        -s_{2,\tau-1}^* & s_{1,\tau-1}^* & -s_{4,\tau-1}^* & s_{3,\tau-1}^*\\ \hdashline
	s_{3,\tau-1} & s_{4,\tau-1} & s_{1,\tau-1} & s_{2,\tau-1}\\ 
	-s_{4,\tau-1}^* & s_{3,\tau-1}^* & -s_{2,\tau-1}^* & s_{1,\tau-1}^* 
\end{array}\right]\nonumber\\
\V{S}_{\tau}&\eq&\V{V}'_{z_\tau}\V{S}_{\tau-1}\label{eq:actual_diff_transmission}
\end{eqnarray}
From the last relation, matrix $\V{V}'_{z_\tau}$ is the information matrix of the \emph{actual} differential QOSTBC system. As shown, all matrices in the actual system also have an "ABBA" structure. From (\ref{eq:actual_info_syms_derivation}), the actual information symbols are
\begin{equation}
\boxed{
 \begin{aligned}
  v_1'=&\frac{1}{2\sqrt{p}}\left(\frac{c_1}{a_{\tau-1}^{1\E}}+\frac{c_3}{a_{\tau-1}^{2\E}}\right),\hspace{1cm}
  v_2'=&\frac{1}{2\sqrt{p}}\left(\frac{c_2}{a_{\tau-1}^{1\E}}+\frac{c_4}{a_{\tau-1}^{2\E}}\right)\\
  v_3'=&\frac{1}{2\sqrt{p}}\left(\frac{c_1}{a_{\tau-1}^{1\E}}-\frac{c_3}{a_{\tau-1}^{2\E}}\right),\hspace{1cm}
  v_4'=&\frac{1}{2\sqrt{p}}\left(\frac{c_2}{a_{\tau-1}^{1\E}}-\frac{c_4}{a_{\tau-1}^{2\E}}\right).
 \end{aligned}
\label{eq:actual_info_syms}}
\end{equation}\\
\subsubsection{Diversity Order of the Differential Combined MDC-QOSTBC}
In the combined MDC-QOSTBC scheme in Section \ref{ss:QOSTBC_full_diversity}, symbols $c_i$ are
\begin{equation}
\begin{aligned}
 c_1&=v_1+v_3=(x_1^R-x_1^I)+j(x_3^R+x_3^I) \hspace{1cm} &c_2=&v_2+v_4=(x_2^R-x_2^I)+j(x_4^R+x_4^I)\\
c_3&=v_1-v_3=(x_1^R+x_1^I)+j(x_3^R-x_3^I) \hspace{1cm} &c_4=&v_2-v_4=(x_2^R+x_2^I)+j(x_4^R-x_4^I).
\end{aligned}
\end{equation}
Thus the actual information symbols in (\ref{eq:actual_info_syms}) become
\begin{equation}
 \begin{aligned}
  v_1'&=\frac{1}{2\sqrt{p}}\left(\frac{(x_1^R-x_1^I)+j(x_3^R+x_3^I)}{a_{\tau-1}^{1\E}}+\frac{(x_1^R+x_1^I)+j(x_3^R-x_3^I)}{a_{\tau-1}^{2\E}}\right)\\
  v_2'&=\frac{1}{2\sqrt{p}}\left(\frac{(x_2^R-x_2^I)+j(x_4^R+x_4^I)}{a_{\tau-1}^{1\E}}+\frac{(x_2^R+x_2^I)+j(x_4^R-x_4^I)}{a_{\tau-1}^{2\E}}\right)\\
 v_3'&=\frac{1}{2\sqrt{p}}\left(\frac{(x_1^R-x_1^I)+j(x_3^R+x_3^I)}{a_{\tau-1}^{1\E}}-\frac{(x_1^R+x_1^I)+j(x_3^R-x_3^I)}{a_{\tau-1}^{2\E}}\right)\\
 v_4'&=\frac{1}{2\sqrt{p}}\left(\frac{(x_2^R-x_2^I)+j(x_4^R+x_4^I)}{a_{\tau-1}^{1\E}}-\frac{(x_2^R+x_2^I)+j(x_4^R-x_4^I)}{a_{\tau-1}^{2\E}}\right).
 \end{aligned}
\label{eq:actual_info_symbols_full_diversity}
\end{equation}\\
We consider all information symbols $x_i$, $i=1,...,4$ to be drawn from the same constellation.
To get the minimum determinant of the distance matrix, one considers the least change between two information matrices.  First, for one information matrix, consider the worst case when $a_{\tau-1}^{1\E}\eq a_{\tau-1}^{2\E}\eq a_{\tau-1}$ which can happen in case $\beta_{\tau-1}$ in (\ref{eq:alpha_beta_combined_M_4_MDC_DQOSTBCs})
is zero. In this case the actual information symbols $v_i'$ reduce to
\begin{equation*}
 \begin{aligned}
  v_1'&=\frac{1}{\sqrt{p}a_{\tau-1}}(x_1^R+jx_3^R) \hspace{1cm} &v_2'&=\frac{1}{\sqrt{p}a_{\tau-1}}(x_2^R+jx_4^R)\\
  v_3'&=\frac{1}{\sqrt{p}a_{\tau-1}}(-x_1^I+jx_3^I) \hspace{1cm} &v_4'&=\frac{1}{\sqrt{p}a_{\tau-1}}(-x_2^I+jx_4^I).
 \end{aligned}
\end{equation*}
and the elements of the distance matrix are
\begin{equation*}
 \begin{aligned}
  \Delta v_1'&=\frac{1}{\sqrt{p}a_{\tau-1}}(\Delta x_1^R +j\Delta x_3^R) \hspace{1cm} &\Delta v_2'&=\frac{1}{\sqrt{p}a_{\tau-1}}(\Delta x_2^R +j\Delta x_4^R)\\
  \Delta v_3'&=\frac{1}{\sqrt{p}a_{\tau-1}}(-\Delta x_1^I +j\Delta x_3^I) \hspace{1cm} &\Delta v_4'&=\frac{1}{\sqrt{p}a_{\tau-1}}(-\Delta x_2^I +j\Delta x_4^I).
  \end{aligned}
\end{equation*}\\

In getting the minimum determinant, consider the change of only one information symbol, let it be $x_1$ without loss of generality. In this case, the elements of the distance matrix are
\begin{equation}
 \begin{aligned}
  \Delta v_1'&=\frac{\Delta x_1^R}{\sqrt{p}\,a_{\tau-1}}, \hspace{0.3cm} \Delta v_2'=0,\hspace{0.3cm} \Delta v_3'&=\frac{-\Delta x_1^I}{\sqrt{p}\,a_{\tau-1}},\hspace{0.3cm} \Delta v_4'=0.
  \end{aligned}
\label{eq:delta_v1}
\end{equation}\\

Similar to the information matrix $\V{V}'$ in (\ref{eq:actual_diff_transmission}), the distance matrix also has an "ABBA" structure with elements $\Delta v_i'$, $i=1,...,4$. When only $\Delta v_1'$ and $\Delta v_3'$ are non-zero, the distance matrix becomes
\begin{equation}
 \Delta \V{V}'=\begin{bmatrix}\begin{array}{cc:cc}
\Delta v_1' & 0 & \Delta v_3' & 0\\
0 & \Delta v_1'^* &  0 & \Delta v_3'^*\\ \hdashline
\Delta v_3' & 0 & \Delta v_1' & 0\\
0 & \Delta v_3'^* & 0 & \Delta v_1'^*
\end{array}\end{bmatrix},
\label{eq:distance_matrix_MDC_QOSTBC}
\end{equation}
whose determinant can be obtained through Gaussian elimination by reducing the matrix to an upper triangular form then multiplying the elements of the main diagonal resulting in
\begin{eqnarray}
 \det(\Delta \V{V}')\Big|_{\text{min}}&=&\mini (\Delta v_1'^2- \Delta v_3'^2)(\Delta v_1'^{*2}- \Delta v_3'^{*2})\\
&=&\mini (\Delta v_1'^2- \Delta v_3'^2)^2 \propto \mini((\Delta x^R)^2-(\Delta x^I)^2)^2,
\label{eq:coding_gain_combined_MDC_QOSTBC}
\end{eqnarray}
Thus the differential combined MDC-QOSTBC achieves full diversity if $|\Delta v_1'|\neq|\Delta v_3'|$, which from (\ref{eq:delta_v1}) is equivalent to 
\begin{equation}
 |\Delta x^R|\neq|\Delta x^I|,
\label{eq:full_diversity_condition_MDC_QOSTBC}
\end{equation}
for any two constellation points whose difference is $\Delta x=\Delta x^R+j\Delta x^I$.

\subsubsection{Diversity Order of the Differential Un-combined MDC-QOSTBC}
For the un-combined MDC-QOSTBC scheme of Section \ref{ss:QOSTBC_half_diversity}, symbols $c_i$ are
\begin{equation}
\begin{aligned}
 c_1&=v_1=x_1^R+jx_3^R \hspace{1cm} &c_2=&v_2=x_2^R+jx_4^R\\
 c_3&=v_3=-x_1^I+jx_3^I  &c_4=&v_4=-x_2^I+jx_4^I.
\end{aligned}
\end{equation}\\

Thus the actual information symbols in (\ref{eq:actual_info_syms}) become
\begin{equation}
 \begin{aligned}
  v_1'&=\frac{1}{2\sqrt{p}}\left(\frac{x_1^R+jx_3^R}{a_{\tau-1}^{1\E}}+\frac{-x_1^I+jx_3^I}{a_{\tau-1}^{2\E}}\right),\hspace{1cm}
  v_2'&=\frac{1}{2\sqrt{p}}\left(\frac{x_2^R+jx_4^R}{a_{\tau-1}^{1\E}}+\frac{-x_2^I+jx_4^I}{a_{\tau-1}^{2\E}}\right)\\
  v_3'&=\frac{1}{2\sqrt{p}}\left(\frac{x_1^R+jx_3^R}{a_{\tau-1}^{1\E}}-\frac{-x_1^I+jx_3^I}{a_{\tau-1}^{2\E}}\right),\hspace{1cm}
  v_4'&=\frac{1}{2\sqrt{p}}\left(\frac{x_2^R+jx_4^R}{a_{\tau-1}^{1\E}}-\frac{-x_2^I+jx_4^I}{a_{\tau-1}^{2\E}}\right),
 \end{aligned}
\label{eq:actual_info_syms_half_div}
\end{equation}
where $(a_{\tau}^{1\E})^2=\frac{1}{p}\sum\limits_{i=1}^{4}{(x_i^R)^2}$ and $(a_{\tau}^{2\E})^2=\frac{1}{p}\sum\limits_{i=1}^{4}{(x_i^I)^2}$. Since in the rectangular QAM used, $x_i^R$ and $x_i^I$ are drawn from the same one-dimensional alphabet, it can happen that $a_{\tau}^{1\E}\eq a_{\tau}^{2\E}$. so in (\ref{eq:actual_info_syms_half_div}), consider the worst case scenario w.r.t. diversity when $a_{\tau-1}^{1\E}\eq a_{\tau-1}^{2\E}\eq a_{\tau-1}$. In this case the actual information symbols reduce to
\begin{equation}
 \begin{aligned}
  v_1'&=\frac{1}{2\sqrt{p}\,a_{\tau-1}}\big((x_1^R-x_1^I)+j(x_3^R+x_3^I)\big),\hspace{1cm}
  v_2'&=\frac{1}{2\sqrt{p}\,a_{\tau-1}}\big((x_2^R-x_2^I)+j(x_4^R+x_4^I)\big),\\
  v_3'&=\frac{1}{2\sqrt{p}\,a_{\tau-1}}\big((x_1^R+x_1^I)+j(x_3^R-x_3^I)\big),\hspace{1cm}
  v_4'&=\frac{1}{2\sqrt{p}\,a_{\tau-1}}\big((x_2^R+x_2^I)+j(x_4^R-x_4^I)\big), 
 \end{aligned}
\label{eq:actual_info_syms_half_div2}
\end{equation}
In getting the minimum determinant, consider the change of only one information symbol, let it be $x_1$ without loss of generality. In this case, the elements of the distance matrix are
\begin{equation*}
 \begin{aligned}
  \Delta v_1'&=\frac{\Delta x_1^R-\Delta x_1^I}{\sqrt{p}\,a_{\tau-1}}, \hspace{0.3cm} \Delta v_2'=0,\hspace{0.3cm} \Delta v_3'&=\frac{\Delta x_1^R+\Delta x_1^I}{\sqrt{p}\,a_{\tau-1}},\hspace{0.3cm} \Delta v_4'=0.
  \end{aligned}
\end{equation*}
Similar to the previous section, the distance matrix has the same form as in (\ref{eq:distance_matrix_MDC_QOSTBC}) and its minimum determinant is 
\begin{equation}
 \det(\Delta \V{V}')\Big|_{\text{min}}=\mini\,(\Delta v_1'^2- \Delta v_3'^2)^2.
\label{eq:coding_gain_un_combined_MDC_QOSTBC}
\end{equation}
In the used rectangular QAM constellation, two constellation symbols can vary in only the real or the imaginary part. In this case $|\Delta v_1'|\eq|\Delta v_3'|$ resulting in a zero minimum determinant and the distance matrix will be
\begin{equation}
 \Delta \V{V}'=\begin{bmatrix}\begin{array}{cc:cc}
\Delta v_1' & 0 & \Delta v_1' & 0\\
0 & \Delta v_1' &  0 & \Delta v_1'\\ \hdashline
\Delta v_1' & 0 & \Delta v_1' & 0\\
0 & \Delta v_1' & 0 & \Delta v_1'
\end{array}\end{bmatrix},
\label{eq:distance_matrix_MDC_QOSTBC_rank2}
\end{equation}
which is of rank 2. Thus the diversity order achieved by the un-combined differential MDC-QOSTBC is 2, hence the name half-diversity.

\cleardoublepage
\phantomsection
\addcontentsline{toc}{chapter}{\bibname}
\bibliographystyle{ieeetr}
\bibliography{references} 

\end{document}